\long\def\comment#1{}
\newcommand{\red}{\color{red}}
\newcommand{\blue}{\color{blue}}
\let\emptyset\varnothing
\def\wh{\widehat}
\def\##1\#{\begin{align}#1\end{align}}
\def\$#1\${\begin{align*}#1\end{align*}}
\let\cite\citet
\let\hat\widehat
\let\tilde\widetilde
\newcommand{\bbeta}{\bm{\beta}}
\newcommand{\cP}{\mathcal{P}}
\newcommand{\bE}{\mathbb{E}}
\newcommand{\tb}{\boldsymbol{\widetilde{\beta}}}
\newcommand{\tx}{\boldsymbol{\widetilde{x}}}
\newcommand{\tX}{\boldsymbol{\widetilde{X}}}
\newcommand{\bb}{\boldsymbol{\beta}}
\newcommand{\bx}{\boldsymbol{x}}
\newcommand{\ba}{\boldsymbol{\alpha}}
\newcommand{\LG}{\boldsymbol{L}_{\mathcal{G}}}
\newcommand{\bz}{\boldsymbol{z}}
\newcommand{\bO}{\boldsymbol{O}}
\newcommand{\bX}{\boldsymbol{X}}
\newcommand{\cL}{\mathcal{L}}
\newcommand{\tC}{\widetilde{C}}
\newcommand{\ocL}{\overline{\mathcal{L}}}
\newcommand{\ob}{\overline{\bb}}
\newcommand{\oba}{\overline{\ba}}
\newcommand{\oa}{\overline{\alpha}}
\newcommand{\bc}{\overline{\boldsymbol{c}}}
\newcommand{\eb}{\mathbf{e}}
\newcommand{\xb}{\mathbf{x}}
\newcommand{\be}{\bm{e}}
\newcommand{\bw}{\bm{w}}
\newcommand{\Xb}{\mathbf{X}}
\newcommand{\cE}{\mathcal{E}}
\newcommand{\cG}{\mathcal{G}}
\newcommand{\cM}{\mathcal{M}}
\newcommand{\cO}{\mathcal{O}}
\newcommand{\cT}{{\mathcal{T}}}
\newcommand{\PP}{\mathbb{P}}
\newcommand{\RR}{\mathbb{R}}
\newcommand{\balpha}{\bm{\alpha}}
\newcommand{\bOmega}{\bm{\Omega}}
\newcommand{\argmin}{\mathop{\mathrm{argmin}}}
\newcommand{\sign}{\mathop{\mathrm{sign}}}
\begin{document}

\title{Uncertainty Quantification of MLE for Entity Ranking
with Covariates}

\author{\name Jianqing Fan \email jqfan@princeton.edu \\
\name Jikai Hou \email jikaih@princeton.edu \\
       \addr Department of Operations Research and Financial Engineering\\
       Princeton University\\
    Princeton, NJ, United States\\
\name Mengxin Yu \email mengxiny@wharton.upenn.edu \\
       \addr Department of Statistics and Data Science, the Wharton School\\
      University of Pennsylvania\\
    Philadelphia, PA, United States
}

\editor{My editor}

\maketitle

\begin{abstract}
	We study statistical estimation and inference for the ranking problems based on pairwise comparisons with additional covariate information.  
In specific, in this paper, we study a Covariate-Assisted Ranking Estimation (CARE) model in a systematic way, that extends the well-known Bradley-Terry-Luce (BTL) model by incorporating the covariate information. We impose natural identifiability conditions, derive the statistical rates for the MLE under a sparse comparison graph, and obtain its asymptotic distribution. Moreover, we validate our theoretical results through large-scale numerical studies.
\end{abstract}

\begin{keywords}
  {High-Dimensional Inference}, {Entity ranking}, {Ranking with covariates}, {Uncertainty quantification}, {Maximum likelihood estimator}.
\end{keywords}

\section{Introduction}

Ranking plays an essential role in many real-world applications. For example, it is crucial in individual choice \citep{luce2012individual}, psychology \citep{thurstone1927method, thurstone2017law}, recommendation systems \citep{baltrunas2010group, li2019estimating}, and many others. The ranked items such as sports teams \citep{massey1997statistical, turner2012bradley}, scientific journals \citep{stigler1994citation}, web pages  \citep{dwork2001rank}, election candidates \citep{plackett1975analysis}, or even movies  \citep{harper2015movielens} will not only illustrate their qualities but also affect people's future choices.  Thus, the ranking problem  has been extensively studied in statistics, machine learning, operations research, etc.; see, for example, \citep{hunter2004mm,richardson2006beyond,jang2018top,chen2019spectral,chen2022partial,chen2022optimal,liu2022lagrangian} for more details. 

Among various  models for the ranking problem, the most well-known one is the Bradley-Terry-Luce (BTL) model \citep{bradley1952rank,luce2012individual}, which assumes the existence of scores $\{\theta_i^*\}_{i=1}^{n}$ of $n$ compared items such that the preference between item $i$ and item $j$ is given by 
$$
	\mathbb{P}(\textrm{item $j$ is preferred over $i$}) = e^{\theta_j^*} / (e^{\theta_i^*} + e^{\theta_j^*}),  \qquad \mbox{for } (i,j)\in[n]\times[n].
$$
The underlying assumption of this BTL model is the scores of compared items are fixed and do not explicitly use their attributes. However, in many real-world applications, covariate information often exists and this heterogeneity needs to be incorporated.
For example, US News and Times Higher Education consider many characteristics of universities, such as international research reputation, teaching quality, the ratio between students and professors, and citations to conduct global university rankings. In addition, in NBA basketball competitions, the final rank of a team is also affected by its underlying attributes, such as the ability to defend,  make a three-point shot, etc. 


Thus, a crucial question still remains open:
    \begin{quote}
      ``\emph{Can one design a provably efficient mechanism for ranking by incorporating features of compared items and conduct associated high-dimensional statistical inference?}"
\end{quote}

 To this end, we follow the idea from related literature \citep{turner2012bradley,li2022bayesian}, by incorporating feature information of items into the BTL model, and call the model the Covariate-Assisted Ranking Estimation (CARE) model. Specifically,  we address covariate heterogeneity by assuming the underlying score (ability) of the $i$-th item is given by $\alpha_i^*+\bx_i^\top \bbeta^*$, where $\bx_i^\top \bbeta^*$ captures the covariate effect and $\alpha_i^*$ is the intrinsic score that cannot be explained by the covariate.
In this case, the outcome of pairwise comparison is modeled as  
 $$
 	\mathbb{P}(\textrm{item $j$ is preferred over $i$}) = \frac{e^{\alpha_j^*+\bx_j^\top\bbeta^*} }{ e^{\alpha_i^*+\bx_i^\top\bbeta^*} + e^{\alpha_j^*+\bx_j^\top\bbeta^*}}.
$$ 

We do not assume that all pairs are compared.  Rather, each pair is selected at random for comparison. In specific, we let
the underlying comparison graph be a realization of the Erdős-Rényi random graph with edge probability $p$. 
In addition, once a pair is selected, they are compared $L$ times. This can be two teams matching $L$ times or two universities ranked $L$ times by experts. In this work, we consider the fixed design in the sense that the randomness only comes from results of comparisons.


There are several challenges in studying statistical inference for our CARE model. First, our model incorporates feature information into the original BTL model, not only the underlying scores $\{\alpha_i^*\}_{i=1}^{n}$ but also $\bbeta^*$ shall be estimated and analyzed in a novel way. This also gives rise to the issue of identifiability.  Second, given consistent estimators, it remains open to quantifying these key components' uncertainty. 
Most existing work focuses more on deriving statistical rates of convergence for those underlying scores via various estimators in the BTL model to achieve specific rank recoveries such as top-K and partial recovery \citep{chen2019spectral,chen2022partial}. There are few results established for the inference  of the BTL model \citep{simons1999asymptotics,han2020asymptotic,gao2021uncertainty,liu2022lagrangian}, letting alone the uncertainty quantification for the more general BTL model with covariates (CARE model).

In our work, we resolve the first challenge by designing a novel constrained maximum likelihood estimator (MLE) $( \hat \ba_M, \hat \bbeta_M)$
which efficiently estimates the underlying scores $\{\alpha_i^*\}_{i=1}^{n}$ and $\bbeta^*$.  With some proper initialization, the MLE can be solved by simply running the projected gradient descent algorithm.  By leveraging the `leave-one-out' technique \citep{chen2019spectral}, we prove that the statistical rate of convergence of the MLE of the intrinsic scores $\{\alpha_i^*\}_{i=1}^{n}$ and overall scores $\{\alpha_i ^* +\mathbf{x}_i^\top\bbeta^*\}_{i=1}^{n}$ in $\ell_{\infty}$-norm are of order $\mathcal{O}(\sqrt{\log n/npL}),$ and $\mathcal{O}(\sqrt{(d+1)\log n/npL})$, respectively, in which $d$ is the dimension of the observed covariates. These statistical rates reduce to the standard minimax rates for estimating the BTL model when no covariate exists,  \citep{chen2019spectral}. 

To take on the second challenge, namely, depicting the asymptotic distribution of the MLE,  we first approximate the MLE by the minimizer of the quadratic approximation of our joint likelihood function, whose uncertainty is easier to depict. The critical difficulty lies in quantifying this approximation error. To tackle this issue, we then utilize the `leave-one-out' technique and derive \emph{novel proofs}, which is valid under the minimal sample complexity up to logarithm terms.   In  a more specific BTL model, the seminal works by \cite{gao2021uncertainty} and \cite{fan2022ranking} (when considering pairwise comparison) leverage the minimizers of the more restricted diagonal quadratic approximations of their marginal likelihoods to approximate the MLE. They capture the approximation errors based on a `leave-two-out' technique. 
In contrast, in this work, we utilize the minimizer of the quadratic approximation of the joint likelihood to approximate the MLE, and achieve a tighter approximation error than \cite{gao2021uncertainty,fan2022ranking}.

Finally, we conduct numerical experiments to corroborate our theory.  The performance of the model is also convincingly illustrated by an analysis of the data on Pokemon competitions. From the perspective of stock selection and return prediction, our proposed covariate-assisted BTL model (CARE) outperforms the original BTL model in many aspects.

To summarize, the contributions of this work are  of multiple folds. First,  we study a Covariate-Assisted Ranking Estimation (CARE) model in a systematic way that extends the well-known Bradley-Terry-Luce (BTL) model by incorporating the covariate information.
Specifically, we derive $\ell_{\infty}$- and $\ell_2$- statistical rates for the MLE of $\{\alpha_i^*\}_{i=1}^{n}$ and $\bbeta^*$, respectively.  Moreover, we also conduct uncertainty quantification for our MLE, where we improve the approximation errors given in existing works and derive more general asymptotic results. Furthermore,  our results hold even on the sparse comparison graph, i.e. the probability of pairwise comparison $p \asymp  1/n$ up to logarithm terms, with minimal sample complexity. Finally, we illustrate our methods via  large-scale numerical studies on  synthetic and real data. Numerical results lend further support to our proposed CARE model over the original BTL model.

\subsection{Prior Arts}

Ranking problems based on pairwise comparison for parametric and  non-parametric models have received much attention. For  the BTL model,   \cite{hunter2004mm} studies its variants and establishes theoretical properties using a minorization-maximization algorithm. \cite{chen2015spectral} use a two-step method to study the BTL model, which is provably optimal in terms of sample complexity. \cite{jang2016top} leverage  the spectral method to recover the top-K items with only minimal samples.  In addition, \cite{negahban2012iterative} propose an iterative rank aggregation algorithm named Rank Centrality to recover the underlying scores of the BTL model in optimal $\ell_2$- statistical rate. In the sequel, \cite{chen2019spectral} derive both $\ell_2$- and $\ell_{\infty}$- optimal statistical rates of those underlying scores and prove that  the regularized MLE and spectral method are both optimal for recovering top-K items when the conditional number is a constant. Furthermore, \cite{chen2022partial} proves that for partial recovery, MLE is optimal, but the spectral method is not when we have a general conditional number. It is worth noting that the aforementioned prior arts mainly focus on studying the parametric BTL model. There is also a series of works that studies specific non-parametric variants of the BTL model. For instance, \cite{shah2017simple} develop a counting-based algorithm to recover top-$K$ ranked items under the nonparametric stochastically transitive model. For more details on the non-parametric comparison models, see \cite{shah2016stochastically,shah2017simple,chen2017competitive,pananjady2017worst} and the references therein.

Going beyond the pairwise comparison, there also exist other works that study ranking problems using $M$-way comparisons $(M\ge 2)$. The first well-known model is the Plackett-Luce model and its variants \citep{plackett1975analysis,guiver2009bayesian,cheng2010label,hajek2014minimax,maystre2015fast,szorenyi2015online,jang2018top,fan2022ranking}. For instance, a closely related work is \cite{jang2018top}, who study the Plackett-Luce model under a uniform hyper-graph. They divide $M$-way compared data into pairs and utilize the spectral method to derive the $\ell_{\infty}$- statistical rate of underlying scores. They further provide a lower bound for sample complexity to recover top-K items in the Plackett-Luce model.  Another well-known model is the Thurstone model \citep{thurstone1927method}, which admits the Plackett-Luce model as a particular case; see \cite{thurstone1927method,guiver2009bayesian,hajek2014minimax,vojnovic2016parameter,jin2020rank} for more details. 

The aforementioned literature mainly focuses on non-asymptotic statistical consistency results for the underlying scores of compared items through various ranking frameworks. However, the limiting distributional results for ranking models still remain highly under-explored. There are several results on the asymptotic distributions for the ranking scores in the BTL model. For instance, \cite{simons1999asymptotics} derive the asymptotic normality of the MLE of the BTL model in the scenario where all pairs of comparison are fully conducted (i.e., $p=1$).   \cite{han2020asymptotic} further extend the results  to the regime  where the comparison graph (Erdős-Rényi random graph) is dense but not fully connected, i.e., $p \gtrsim n^{-1/10}$. In addition, recently, \cite{liu2022lagrangian} propose a Lagrangian debiasing method to derive asymptotic distribution for ranking scores, where they allow sparse comparison graph $p \asymp  1/n$ but require comparison times $L$ to be larger than $n^2.$
 Moreover, \cite{gao2021uncertainty} utilize a `leave-two-out' trick to derive asymptotic distributions for ranking scores with optimal sample complexity in the regime where the comparison graph is sparse, i.e., $p \asymp  1/n$ up to logarithm terms. 
 
All aforementioned models and methods mainly study the estimation and uncertainty quantification for ranking models without considering any individual feature information. Yet, covariate data exist in most applications, and this results in additional challenges in technical derivations and computation. Although there exists some other literature that also considers incorporating covariate information into the ranking problems \citep{guo2018experimental,schafer2018dyad,zhao2022learning,chau2023spectral,finch2022introduction}, their motivations, model settings, methodology, and theoretical contributions are different from us. In specific, most of them assume the underlying scores of all compared items are fully explained by covariates without studying the effects of individual intrinsic scores (i.e., no $\alpha_i^*$). In addition, we allow the comparisons to be realized through (sparse) comparison graphs, which take on extra challenges. Moreover, in terms of the theoretical contribution, most of them only establish the $\ell_2$- statistical rates for estimating $\bbeta^*$ whereas we not only obtain $\ell_{\infty}$-and $\ell_2$- statistical rates for estimators of $\{\alpha_i\}_{i=1}^n$ and $\bbeta^*$ but quantify their uncertainty as well.
In addition, even though our CARE model is related to \cite{turner2012bradley,li2022bayesian}, we propose a systematic model estimation and inference framework. In contrast, these previous works only formulate ranking with covariates intuitively and do not discuss methodological implementations or theoretical guarantees.

Therefore, this paper takes up this challenge by presenting a systematic framework for model estimation and uncertainty quantification of our CARE model over a random comparison graph. Notably, this framework admits all previous advancements made on the BTL model, which do not incorporate covariate information as special cases. 

\subsection{Notation}
We introduce some useful notations before proceeding. We denote by $[M] = \left\{1,2,\dots, M\right\}$ for any positive integer $M$. For any vector $\mathbf{u}$ and $q\ge 0$, we use $\|\mathbf{u}\|_{\ell_q}$ to represent the vector $\ell_q$ norm of $\mathbf{u}$. In addition, the inner product $\langle \mathbf{u}, \mathbf{v} \rangle $ between any pair of vectors $\mathbf{u}$ and $\mathbf{v}$ is defined as the Euclidean inner product $\mathbf{u}^\top\mathbf{v}$. For vector $\mathbf{u}\in \mathbb{R}^m$ and index $i\in [m]$, we denote by $\mathbf{u}_{-i}$ the vector we get by deleting the $i$-th element in $\mathbf{u}$. For any given matrix $\mathbf{X}\in\mathbb{R}^{d_1\times d_2}$, we use $\|\mathbf{X}\|$, $\|\mathbf{X}\|_{F}$, $\|\mathbf{X}\|_{*}$ and $\|\mathbf{X}\|_{2,\infty}$ to represent the operator norm, Frobenius norm, nuclear norm and two-to-infinity norm of matrix $\mathbf{X}$ respectively. Moreover, we use  $\mathbf{X}\succcurlyeq 0$ or $\mathbf{X}\preccurlyeq 0$ to denote positive semidefinite or negative semidefinite
of matrix $\mathbf{X}$.
Moreover, we use the notation $a_n\lesssim b_n$ or $a_n = O(b_n)$ for non-negative sequences $\left\{a_n\right\}$ and $\left\{b_n\right\}$ if there exists a constant $\nu_1$ such that $a_n\leq \nu_1 b_n$. We use the notation $a_n\gtrsim b_n$ for non-negative sequences $\left\{a_n\right\}$ and $\left\{b_n\right\}$ if there is a constant $\nu_2$ such that $a_n\geq \nu_2 b_n$. For simplicity, we define function $\displaystyle\phi(t) = {e^t}/({e^t+1})$. We write $a_n\asymp b_n$ if $a_n\lesssim b_n$ and $b_n\lesssim a_n$. For matrix $\boldsymbol{A}$, we denote by $\boldsymbol{A}^{+}$ its pseudoinverse \citep{banerjee1973generalized}. 

\subsection{Roadmap}

	{The remaining paper is organized as follows. We describe the problem formulation for our BTL model with covariate  and derive the corresponding maximum likelihood estimators for its involved parameters in \S \ref{sec:algo}. In \S \ref{Consistency}, we establish  the statistical estimation results for our MLE. In \S\ref{Inference}, we further conduct uncertainty quantification for the obtained MLE. In \S\ref{Experiments}, we corroborate our theoretical results by conducting large-scale numerical studies via both synthetic and real dataset.

	}

\section{Model Formulation} \label{sec:algo}
In this section, we introduce the Covariate-Assisted Ranking Estimation (CARE) model, which incorporates covariate information into the BTL model.
In the traditional BTL model \citep{bradley1952rank,luce2012individual}, it is assumed that  each item $i\in[n]$ has a latent score $\theta_i^*$  and the outcomes of comparisons  are modeled as the realizations of the Bernoulli trials:
\begin{align}\label{comp}
    \mathbb{P}(\text {item } j \text { is preferred over item } i)=\frac{e^{\theta_j^*}}{e^{\theta_j^*}+e^{\theta_i^*}},\quad \forall 1\leq i\neq j \leq n.
\end{align}
It is worth mentioning that the function $\exp(\cdot)$ in \eqref{comp} can be replaced by any increasing differentiable functions.

In many applications, one observes individual features $\bx_i\in\mathbb{R}^d$ and would like to incorporate them for conducting more accurate ranking. As an extension of the parameterization $ \exp(\theta_i^*)$ \citep{chen2019spectral,chen2022partial},  we model the scores $\exp(\theta_i^*)$ as $\exp\left(\alpha_i^*+\bx_i^\top\bb^*\right)$ for $1\leq i\leq n$. The linear term $\bx_i^\top\bb^*$ captures the part of the scores explained by the variables $\bx_i$   and $\alpha_i^*$ represents the intrinsic score  that can not be explained by the covariate $\bx_i$.  
This leads to modeling the outcomes of comparisons as the Bernoulli trials with probabilities
\begin{align}
    \mathbb{P}(\text {item } j \text { is preferred over item } i)=\frac{e^{\alpha_j^*+\bx_j^\top \bb^*}}{e^{\alpha_i^*+\bx_i^\top \bb^*}+e^{\alpha_j^*+\bx_j^\top \bb^*}} ,\quad \forall 1\leq i\neq j \leq n. \label{CARE}
\end{align}
We call this model Covariate Assisted Ranking Estimation (CARE) model.

We do not assume that all pairs are compared, but  only those in the comparison graph $\mathcal{G}=(\mathcal{V}, \mathcal{E})$. Here  $\mathcal{V} := \{1,2,\dots, n\}$ and $\mathcal{E}$ represent the collections of vertexes ($n$ items) and edges, respectively.  More specifically, $(i,j)\in \mathcal{E}$ if and only if item $i$ and item $j$ are compared. Throughout our paper,  the comparison graph is assumed to follow the Erdős-Rényi random graph $\mathcal{G}_{n,p}$ \citep{erdos1960evolution} where each edge appears independently with probability $p$.  In short, items i and j with $(i,j)\in[n]\times[n]$ are compared at random with probability $p$.

In addition, for any $(i,j) \in \mathcal{E}$,  we observe $L$  independent and identically distributed realizations from the Bernoulli random variables
\begin{align*}
   P\Big( y_{i, j}^{(l)} = 1\Big)  = \frac{e^{\alpha_j^*+\bx_j^\top \bb^*}}{e^{\alpha_i^*+\bx_i^\top \bb^*}+e^{\alpha_j^*+\bx_j^\top \bb^*}}.
\end{align*}


Let $\tx_i = \left(\boldsymbol{e}_i^\top, \bx_i^\top\right)^\top$ and $\tb = \left(\ba^\top,\bb^\top\right)^\top$, where $\{\boldsymbol{e}_i\}_{i=1}^n$ stand for the canonical basis vectors in $\mathbb{R}^n$ and $\ba = (\alpha_1,\alpha_2,\dots,\alpha_n)^\top\in\mathbb{R}^n.$ 
Then, the log-likelihood function conditioned on $\mathcal{G}$ is given by
\begin{align*}
    L\cdot \sum_{(i, j) \in \mathcal{E}, i>j}\left\{y_{j, i} \log \frac{e^{\tx_i^\top\tb}}{e^{\tx_i^\top\tb}+e^{\tx_j^\top\tb}}+\left(1-y_{j, i}\right) \log \frac{e^{\tx_j^\top\tb}}{e^{\tx_i^\top\tb}+e^{\tx_j^\top\tb}}\right\}.
\end{align*}
where $ y_{i,j} = \frac{1}{L}\sum_{l=1}^L y_{i,j}^{(l)}$ is a sufficient statistic.
{The identifiability question arises naturally since we over-parametrized the problem.  To remedy this issue, we restrict the parameter space of $\tb$ onto some constrained set $\Theta$ with a natural interpretation.  
 In specific, we denote $\bar{\bx}_i=[1,\bx_i^\top]^\top, \forall i\in[n],$ let $\bar{\bX}=[\bar{\bx}_1,\cdots,\bar{\bx}_n]^\top\in \RR^{n\times (d+1)}$ and consider the constrained set $\Theta=\{(\balpha,\bbeta):\bar{\bX}^\top\balpha=\mathbf{0}\}$. Throughout the paper, we assume that $\bar{\bX}$ has rank $d+1$. Under these identifiability constraints, if the true parameter vector $\tb^*=\left(\alpha_1^*,\alpha_2^*,\dots,\alpha_n^*,\bb^{*\top}\right)^\top = \left(\ba^{*\top},\bb^{*\top}\right)^\top\in \Theta$, the identifiability condition implies that $\sum_{i=1}^n \alpha_i^* = 0$ and $\bX^\top\balpha^*=0$ with $\mathbf{X}=[\bx_1,\cdots,\bx_n]^\top\in \RR^{n\times d}$.  {It admits clear interpretation:  $\bX \bbeta^*$ represents the scores that be captured by the covariates, whereas the $\balpha^*\in \RR^n$ represents the residual scores (or equivalently intrinsic scores) that can not be explained by the involved features (i.e., it is orthogonal to the linear space spanned by the columns of covariates).} {Next, we prove the identifiable property of $\Theta$ rigoriously in Proposition \ref{prop_iden}. }
{
\begin{proposition}\label{prop_iden}
	CARE model Eq.~\eqref{CARE}  with parameter space $\Theta = \{ (\ba,\bb):\bar{\bX}^\top\ba = \boldsymbol{0}   \}$ is identifiable.
\end{proposition}

}


We denote by $\boldsymbol{Z} \in\mathbb{R}^{(n+d)\times (d+1)}$ a matrix by padding   $\boldsymbol{0} \in \mathbb{R}^{d\times (d+1)}$ matrix to $\bar{\bX}$, i.e. $\boldsymbol{Z}_{1:n,\cdot} = \bar{\bX}$ and $\boldsymbol{Z}_{n+1:n+d,\cdot} = \boldsymbol{0}$. As a result, $\Theta$ can be also written as $\{\tb\in \mathbb{R}^{n+d}:\boldsymbol{Z}^\top\tb = \boldsymbol{0}\}$.  Denote by $\cP = \boldsymbol{I}-\boldsymbol{Z}(\boldsymbol{Z}^\top\boldsymbol{Z})^{-1}\boldsymbol{Z}^\top$ the projection matrix onto space $\Theta$.
}

Given the aforementioned identifiable condition, we consider the following constrained maximum likelihood estimator (MLE)
\begin{align}
    \tb_M = \argmin_{\tb\in\Theta} \mathcal{L}(\tb),  \label{MLEun}
\end{align}
where
\begin{align}
	\mathcal{L}(\tb): 
	&=\sum_{(i, j) \in \mathcal{E}, i>j}\left\{-y_{j, i}\left(\tx_i^\top\tb-\tx_j^\top\tb\right)+\log \left(1+e^{\tx_i^\top\tb-\tx_j^\top\tb}\right)\right\}.\label{nll}
\end{align}
Note that when there is no covariate $\{\bx_i\}_{i=1}^{n}$, we have $\Theta=\{\balpha\in \RR^n,\mathbf{1}^\top \balpha=\mathbf{0}\}$ and the scores are identifiable up to a constant shift. Therefore, our formulation includes those studies of the BTL model without covariate information as special cases \citep{chen2019spectral}.

The inference question arises naturally if some covariates can explain the underlying scores, namely if some or all components of $\bbeta^*$ are statistically significant.  Similarly, one might ask if the covariates are adequate for determining the underlying scores by testing whether some or all components of $\balpha^*$ are zero.  In general, we would expect the variations among the components of $\balpha^*$ to be smaller than the original scores $\{\theta_i^*\}_{i=1}^n$. This enables us to improve data predictions by shrinking or regularizing the estimate of $\balpha^*.$ 

{In the following context, we rescale $\bx_i$ to $\bx_i / K$, where $K>0$ is a positive number such that $\| \bx_i\|_2 \leq \sqrt{(d+1)/n}$ for all $\bx_i$ after the transformation.
The likelihood function, prediction and the column space spanned by $\bar{\bX}$ are not affected by the scaling  but this normalization facilitates us with scaling issues in the technical derivations.  Therefore, the content that follows will be based on the scaled  data and parameters.}

\section{Rate of Convergence of Maximum Likelihood Estimator}\label{Consistency}

In this section, we show the statistical consistency results for the maximum likelihood estimator $\tb_M$ in   \eqref{MLEun}.  Before proceeding to the main results, we begin by introducing several key assumptions on the design matrix. First, we assume the projected matrix $\cP_{\bar{\bX}} := \bar{\bX}(\bar{\bX}^\top\bar{\bX})^{-1}\bar{\bX}^\top$ satisfies the following incoherence condition.
\begin{assumption}\label{ass0}[Incoherence Condition]
Assume that there exists a positive constant $c_0$ such that 
\begin{align*}
   \|\cP_{\bar{\bX}}\|_{2,\infty}=\|\bar{\bX}(\bar{\bX}^\top\bar{\bX})^{-1}\bar{\bX}^\top\|_{2,\infty}\le c_0\sqrt{(d+1)/n}.
\end{align*}
\end{assumption}

To demonstrate the rationality behind Assumption \ref{ass0}, we first note that the $\|\cP_{\bar{\bX}}\|_{F}^2\leq d+1$. Therefore, a sufficient condition for this assumption to hold is when the rows of $\cP_{\bar{\bX}}$ are nearly balanced, 
with row sum of squares all of the order $(d+1)/n$ or smaller. When there does not exist the covariate (i.e. $\bar{\bX}=\mathbf{1}$), we have $\cP_{\bar{\bX}}=\mathbf{1}\mathbf{1}^\top/n.$ In this scenario, this assumption holds automatically with $c_0=1.$ The following results of this paper are established under this incoherence condition.

We next introduce a key assumption on the covariates $\bx_i$  which guarantees a well-behaved landscape of the loss function as well as good statistical properties of the MLE estimator.  In specific, we put the following assumption on $\boldsymbol{\Sigma} = \sum_{i>j}(\tx_i-\tx_j)(\tx_i-\tx_j)^\top$.  

\begin{assumption}\label{ass1}
Assume that there exist positive constants $c_1$ and $c_2$ such that 
\begin{align*}
    c_2n\leq \lambda_{\text{min}, \perp}(\boldsymbol{\Sigma})\leq \Vert \boldsymbol{\Sigma}\Vert\leq c_1n,
\end{align*}
where $\Vert \boldsymbol{\Sigma}\Vert$ is the operator norm of $\boldsymbol{\Sigma}$ and  
\begin{align*}
    \lambda_{\text{min}, \perp}(\boldsymbol{\Sigma}):=\min \left\{\mu |\boldsymbol{z}^\top \boldsymbol{\Sigma}\boldsymbol{z}\geq \mu\Vert \boldsymbol{z}\Vert_2^2\text{ for all }\boldsymbol{z}\in \Theta \right\}.
\end{align*}
\end{assumption}

In this Assumption \ref{ass1}, we assume that $\displaystyle\boldsymbol{\Sigma}$ is well-behaved in all directions inside our parameter space $\Theta$, namely, both of its largest and smallest eigenvalues are of order $n$. This assumption follows directly after we rescale the $\|\bx_i\|_2$ such that $\| \bx_i\|_2 \leq \sqrt{(d+1)/n}$ for all $\bx_i,i\in [n]$.
When there is no covariate (i.e., $d=0$), then $\mathbf{\Sigma}=\sum_{i>j}(\mathbf{e}_i-\mathbf{e}_j)(\mathbf{e}_i-\mathbf{e}_j)^\top$ and Assumption \ref{ass1} holds naturally with $c_1 = c_2 = 1$ \citep{chen2019spectral}. 

We next introduce the condition number of this problem as 
\begin{align*}
	\kappa_1 :=\frac{\max_i w_i^*}{\min_i w_i^*} = \exp\left( \max_{i,j\in [n]}\left(\alpha_i^*+\bx_i^\top \bb^*-\alpha_j^*-\bx_j^\top \bb^* \right)\right),
\end{align*}
where $w_i^* = \exp(\alpha_i^*+\bx_i^\top \bb^*)$,
which extends the condition number in \cite{chen2019spectral} when there does not exist covariate. With all aforementioned assumptions in hand, we next present our first main theorem on the statistical rates of convergence of the MLE $\tb_M$.    Recall that we assume that the true parameter vector $\tb^*=\left(\alpha_1^*,\alpha_2^*,\dots,\alpha_n^*,\bb^{*\top}\right)^\top = \left(\ba^{*\top},\bb^{*\top}\right)^\top\in \Theta$, without loss of generality.

\begin{theorem}\label{noregularization}
	(Rate of convergence)
Suppose $np > c_p\log n$ for some $c_p>0$ and $d+1< n, (d+1)\log n\lesssim np$.  Consider $L \leq  c_4\cdot n^{c_5}$ for any absolute constants $c_4,c_5>0$. Let $\tb_M = (\wh{\ba}_M^\top,\wh{\bb}_M^\top)^\top$ be the solution of the MLE given in \eqref{MLEun}.  Then with probability at least $1-O(n^{-6})$, we have
\begin{align*}
\Vert \wh\ba_M-\ba^*\Vert_\infty&\lesssim \kappa_1^2\sqrt{\frac{(d+1)\log n}{npL}}; \qquad 
\left\Vert\wh \bb_M-\bb^*\right\Vert_2 \lesssim\kappa_1\sqrt{\frac{\log n}{pL}};\\
\left\Vert\tX\tb_M-\tX\tb^*\right\Vert_\infty&\lesssim \kappa_1^2 \sqrt{\frac{(d+1)\log n}{npL}}; \qquad 
 \frac{\left\Vert e^{\tX\tb_M}-e^{\tX\tb^*}\right\Vert_\infty}{\left\Vert e^{\tX\tb^*}\right\Vert_\infty} \lesssim \kappa_1^2 \sqrt{\frac{(d+1)\log n}{npL}},
\end{align*}
where $\tX = [\tx_1,\tx_2,\dots,\tx_n]^\top$.  
\end{theorem}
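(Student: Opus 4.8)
The plan is to follow the two-stage program that is standard for $\ell_2$/$\ell_\infty$ guarantees of likelihood-based estimators in the BTL literature, adapted to the covariate setting: first obtain the $\ell_2$ rate from restricted strong convexity of the (convex) negative log-likelihood together with a bound on the gradient at the truth, and then bootstrap to the $\ell_\infty$ rate through a leave-one-out construction. To begin I would record the Hessian
$$\nabla^2\mathcal{L}(\tb) = \sum_{(i,j)\in\mathcal{E},\, i>j} \phi'\!\big((\tx_i-\tx_j)^\top\tb\big)(\tx_i-\tx_j)(\tx_i-\tx_j)^\top,$$
and observe that, because the overall scores are separated by at most $\log\kappa_1$, one has $\phi'(\cdot)\gtrsim \kappa_1^{-1}$ on the relevant range. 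Combining this with a concentration bound for the Erdős–Rényi adjacency (valid once $np\gtrsim\log n$) and Assumption~\ref{ass1}, I would show that on an event of probability $1-O(n^{-6})$ the Hessian restricted to $\Theta$ obeys $\lambda_{\min,\perp}(\nabla^2\mathcal{L})\gtrsim np/\kappa_1$ while $\|\nabla^2\mathcal{L}\|\lesssim np$.

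Next I would bound the gradient at the truth. Since $\E[y_{j,i}]=\phi\big((\tx_i-\tx_j)^\top\tb^*\big)$, the vector $\nabla\mathcal{L}(\tb^*)$ is a sum over edges of independent, mean-zero contributions with per-edge variance $\asymp 1/L$, so a Bernstein/matrix-concentration argument yields $\|\nabla\mathcal{L}(\tb^*)\|_2\lesssim n\sqrt{p\log n/L}$. Restricted strong convexity then gives
$$\|\tb_M-\tb^*\|_2\;\lesssim\; \frac{\|\nabla\mathcal{L}(\tb^*)\|_2}{\lambda_{\min,\perp}(\nabla^2\mathcal{L})}\;\lesssim\;\kappa_1\sqrt{\frac{\log n}{pL}},$$
and since $\wh\bb_M-\bb^*$ is a subvector of $\tb_M-\tb^*$ this immediately delivers the claimed bound on $\|\wh\bb_M-\bb^*\|_2$.

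For the $\ell_\infty$ control of $\wh\ba_M$ I would run a leave-one-out argument. For each $m\in[n]$ define $\tb^{(m)}$ to be the constrained minimizer of a modified loss in which every comparison incident to vertex $m$ is replaced by its conditional expectation, so that $\tb^{(m)}$ is statistically independent of the data touching $m$. The argument has two pieces: (i) a stability bound $\|\tb_M-\tb^{(m)}\|_2\lesssim \kappa_1^2\sqrt{(d+1)\log n/(npL)}$, obtained by comparing the optimality conditions of the two problems and invoking strong convexity, where the incoherence condition (Assumption~\ref{ass0}, controlling $\|\cP_{\bar{\bX}}\|_{2,\infty}$) keeps the perturbation of vertex $m$'s data from propagating too strongly into the shared $\bb$ direction; and (ii) a scalar concentration bound on $|\alpha^{(m)}_m-\alpha_m^*|$ exploiting the independence between $\tb^{(m)}$ and the edges at $m$. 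Writing $|\wh\alpha_{M,m}-\alpha_m^*|\le \|\tb_M-\tb^{(m)}\|_\infty+|\alpha^{(m)}_m-\alpha_m^*|$ and maximizing over $m$ (which contributes the $\sqrt{\log n}$ and, through the covariate dimension, the $\sqrt{d+1}$ factors) yields $\|\wh\ba_M-\ba^*\|_\infty\lesssim \kappa_1^2\sqrt{(d+1)\log n/(npL)}$. To make this self-consistent I would organize it as a bootstrapping/continuity loop: assume the target $\ell_\infty$ bound when controlling the leave-one-out distances, then verify the same bound is reproduced.

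Finally, the bound on $\|\tX\tb_M-\tX\tb^*\|_\infty$ follows from $\tx_m^\top\tb=\alpha_m+\bx_m^\top\bb$, whence $|\tx_m^\top(\tb_M-\tb^*)|\le |\wh\alpha_{M,m}-\alpha_m^*|+\|\bx_m\|_2\|\wh\bb_M-\bb^*\|_2$; the rescaling $\|\bx_m\|_2\le\sqrt{(d+1)/n}$ makes the second term of the same order as the first. The relative prediction error then comes from the mean value theorem, $|e^{a}-e^{b}|=e^{\xi}|a-b|$ with $\xi$ between the two overall scores, and since these differ by at most $\log\kappa_1$ one has $e^{\xi}\lesssim\|e^{\tX\tb^*}\|_\infty$, so the ratio is controlled by $\|\tX\tb_M-\tX\tb^*\|_\infty$ up to constants. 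I expect the main obstacle to be part (i) of the leave-one-out step: unlike the covariate-free BTL analysis, perturbing the data at a single vertex here simultaneously moves the intrinsic score $\alpha_m$ and the shared coefficient $\bb$, and disentangling these two effects — controlling the cross terms through the projection $\cP$ and the incoherence condition while tracking the correct powers of $\kappa_1$ and the extra factor $\sqrt{d+1}$ — is the delicate part of the argument.
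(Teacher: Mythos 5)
Your outline assembles the right ingredients (gradient concentration, a restricted-strong-convexity constant of order $np/\kappa_1$, and a leave-one-out device for the $\ell_\infty$ bound), but there is a genuine gap at the very first step, and it propagates. The inequality $\lambda_{\min,\perp}(\nabla^2\mathcal{L}(\tb))\gtrsim np/\kappa_1$ does \emph{not} hold for all $\tb\in\Theta$: since $\phi'$ decays exponentially, it holds only on a neighborhood of $\tb^*$ of the form $\{\|\ba-\ba^*\|_\infty\le C_1,\ \|\bb-\bb^*\|_2\le C_2\sqrt{n/(d+1)}\}$ (this is exactly Lemma \ref{lb}). To run the argument $\|\tb_M-\tb^*\|_2\lesssim\|\nabla\mathcal{L}(\tb^*)\|_2/\lambda_{\min,\perp}$ you must first know that the segment from $\tb^*$ to $\tb_M$ stays in that neighborhood, i.e.\ you already need an $\ell_\infty$-type bound on $\wh\ba_M-\ba^*$. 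The obvious repair --- deduce $\|\ba_M-\ba^*\|_\infty\le\|\tb_M-\tb^*\|_2\lesssim\kappa_1\sqrt{\log n/(pL)}$ and check it is $O(1)$ --- fails precisely in the sparse regime the theorem targets ($p\asymp\log n/n$, $L\asymp 1$), where $\kappa_1\sqrt{\log n/(pL)}$ is not $O(1)$. So the $\ell_2$ bound needs the $\ell_\infty$ bound, which in your plan is derived from the $\ell_2$ bound; the ``bootstrapping/continuity loop'' you invoke for the leave-one-out step does not break this circle, because for a single fixed minimizer there is no parameter along which to induct.

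The paper resolves this by changing the object of analysis: it first studies the ridge-regularized problem \eqref{MLE}, whose global $\lambda$-strong convexity yields a crude a priori bound (Lemma \ref{lem3}), and then runs projected gradient descent initialized \emph{at} $\tb^*$, maintaining four bounds \eqref{inductionA}--\eqref{inductionD} ($\ell_2$ error, leave-one-out proximity, leave-one-out $\ell_\infty$ error, and $\ell_\infty$ error) simultaneously by induction over iterations; the induction base is trivial because $\tb^0=\tb^*$, and each iterate provably remains in the strong-convexity region. Only afterwards is the unregularized constrained MLE recovered, by showing a locally constrained minimizer $\tb_{\text{con}}$ is within $O(\kappa_1\sqrt{\log n/(n^2pL)})$ of $\tb_\lambda$, hence interior to its constraint set, hence equal to $\tb_M$ (\S\ref{regularizedtoriginal}). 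If you want to avoid the regularization/iterate machinery you would need a genuinely different localization argument; as written, your proof of the $\ell_2$ rate --- and therefore of everything downstream --- does not close.
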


Recall that we rescaled the covariates such that $\max_{i\in [n]}\Vert\bx_i\Vert_2\leq\sqrt{(d+1)/n}$.  This scaling has an impact on the definition of $\bbeta^*$ and influences its rate.  
This explains why  $\hat{\bbeta}_M$ converges slower {and does not depend on $d$}.   However, {when we view $\bx_i^\top\bb$ as a whole, the estimation rate is $\sqrt{(d+1)\log n/npL}$, and} this does not impact the estimation of the individual score, as shown in Theorem~\ref{noregularization}.

\begin{remark}\label{bestratewecanhope}
  Following a similar proof, it holds that $$ \frac{\Vert \wh\ba_M-\ba^*\Vert_2}{\Vert \ba^*\Vert_2}\lesssim \kappa_1^2\sqrt{\frac{(d+1)\log n}{npL}},\qquad \,\,\frac{\left\Vert e^{\tX\tb_M}-e^{\tX\tb^*}\right\Vert_2}{\left\Vert e^{\tX\tb^*}\right\Vert_2}\lesssim \kappa_1^2 \sqrt{\frac{(d+1)\log n}{npL}},$$ which are the relative $\ell_2$-statistical rates of the intrinsic scores $\alpha_i^*,i\in[n]$ and overall  scores $\alpha_i^*+\bx_i^\top\bb^*,i\in [n]$, respectively.  Combining this relative statistical rate in $\ell_2$-norm with that in $\ell_{\infty}$-norm mentioned in Theorem \ref{noregularization}, we conclude that the estimation errors of latent scores and overall scores spread out across all items.
\end{remark}
\begin{remark}
{We note that we are further able to get rid of the factor $\sqrt{d+1}$ in the statistical rates of $\|\hat\balpha_M-\hat\balpha^*\|_{\infty}$ ( $\|\hat\balpha_M-\hat\balpha^*\|_{2}$ follows directly). This involves analyzing the non-asymptotic expansion of the $\hat\balpha_M$, and the details will be discussed in the following section.}
\end{remark}

\section{Uncertainty Quantification of MLE}\label{Inference}
Most existing works on ranking mainly study the first-order asymptotic behavior of their estimators \citep{hunter2004mm,chen2015spectral,jang2016top,shah2017simple,chen2019spectral}.   Deriving limiting distributional results in ranking models is important for uncertainty quantification, especially when covariate information is incorporated into the ranking problem, are not studied in detail.  

This section is devoted to understanding the sampling variability of the MLE $\tb_M$ under the CARE model. Directly studying the asymptotic behavior of $\tb_M$ is very challenging. To address this issue, we approximate $\tb_M$ by considering 
\begin{align}\label{eq11}
	\ob := \argmin_{\tb\in\Theta} \ocL(\tb),
\end{align}
here $\ocL(\tb)$ is the quadratic expansion of the loss function $\cL(\tb)$ around $\tb^*$ given by
\begin{align}\label{quadratic_expan}
    \ocL(\tb) = \cL(\tb^*)+\left(\tb-\tb^*\right)^\top\nabla \cL(\tb^*)+\frac{1}{2}\left(\tb-\tb^*\right)^\top\nabla^2 \cL(\tb^*)\left(\tb-\tb^*\right).
\end{align}
According to this definition, $\ob$ can also be given by the following linear equations
\begin{align*}
\left\{ \begin{array}{ll}
&\mathcal{P}\nabla\cL(\tb^*)+\mathcal{P}\nabla^2\cL(\tb^*)\left(\ob-\tb^*\right) = \boldsymbol{0};  \\
&\mathcal{P}\ob = \ob.
\end{array}
\right.
\end{align*}
{Here $\mathcal{P}$ represents the linear projection onto space $\Theta.$}
Observe that $\ob$ serves as a candidate approximator of $\tb_M$ whose uncertainty is easier to quantify according to Berry-Esseen theorem \citep{berry1941accuracy,esseen1942liapunov}. 
It is not a statistical estimator but an auxiliary random variable that we used for the technical proof.
{It is worth mentioning that when there is no covariate, our linear expansion reduces to $\overline{\boldsymbol{\beta}}\in \mathbb{R}^{n}=[\nabla^2 \cL(\widetilde{\bbeta}^*)]^{+}\nabla \cL(\widetilde{\bbeta}^*),$ which is equal to the expansion in \cite{gao2021uncertainty} up to the off-diagonal terms in $\nabla^2 \cL(\widetilde{\bbeta}^*).$}

The critical difficulty falls in proving that the difference $\Delta\tb:= \tb_M-\ob$ is negligible compared to $\ob-\tb^*$ under certain conditions. 
To accommodate this, we derive novel proofs by leveraging the `leave-one-out'  \citep{chen2019spectral,chen2021spectral} technique to control the approximation error $\Delta\tb := \tb_M-\ob$ in $\ell_2$-norm and $\Delta\ba := \Delta\tb_{1:n}$ in $\ell_\infty$-norm.  The upper bounds are summarized in the following Theorem \ref{inferenceresidual}.  
 
\begin{theorem}\label{inferenceresidual}
(Approximation error)
Under the assumptions of Theorem \ref{noregularization}, if $\kappa_1^2\sqrt{(d+1)\log n/npL} \le c$ and $\kappa_1^2\left(\sqrt{{(d+1)}/{np}}+{\log n}/{np}\right) \leq c$ for some fixed constant $c>0$, we have 
	\begin{align*}
		\widehat{\balpha}_M = \overline{\balpha} + \Delta \balpha \text{ and } 	\widetilde{\bbeta}_M = \overline{\bbeta}+ \Delta \widetilde{\bbeta},
	\end{align*} 
and with probability at least $1-O(n^{-5}),$ it holds that 
\begin{align*}
    \left\Vert \Delta \tb \right\Vert_2&\lesssim\kappa_1^4\frac{(d+1)^{0.5}\log n}{\sqrt{n}pL},\\
        \left\Vert\Delta\ba\right\Vert_\infty &\lesssim\kappa_1^6\frac{(d+1)\log n}{np L}+\frac{\kappa_1^4}{np}\sqrt{\frac{(d+1)\log n}{L}}\left(\sqrt{d+1}+\frac{\log n}{\sqrt{np}}\right).
\end{align*}
\end{theorem}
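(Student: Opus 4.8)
The plan is to convert the two defining properties of $\tb_M$ and $\ob$ into a single linear system for $\Delta\tb := \tb_M-\ob$. Since $\tb_M$ minimizes $\cL$ over $\Theta$, its first-order condition reads $\cP\nabla\cL(\tb_M)=\boldsymbol{0}$, while $\ob$ satisfies $\cP\nabla\cL(\tb^*)+\cP\nabla^2\cL(\tb^*)(\ob-\tb^*)=\boldsymbol{0}$ by construction. Introducing the exact second-order Taylor remainder of the gradient,
\begin{align*}
\br := \nabla\cL(\tb_M)-\nabla\cL(\tb^*)-\nabla^2\cL(\tb^*)(\tb_M-\tb^*),
\end{align*}
and subtracting the two conditions yields the governing identity $\cP\nabla^2\cL(\tb^*)\,\Delta\tb = -\cP\br$. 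Because $\tb_M,\ob\in\Theta$, we have $\Delta\tb\in\Theta$, so the Hessian can be inverted on the subspace $\Theta$ and the whole problem reduces to bounding the remainder $\br$.

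\textbf{The $\ell_2$ bound.} First I would lower-bound the curvature on $\Theta$. Since $|\tx_i^\top\tb^*-\tx_j^\top\tb^*|\le\log\kappa_1$, the logistic weights obey $\phi'(\tx_i^\top\tb^*-\tx_j^\top\tb^*)\gtrsim\kappa_1^{-1}$, and combining the concentration of $\bSigma_{\mathcal{G}}:=\sum_{(i,j)\in\mathcal{E}}(\tx_i-\tx_j)(\tx_i-\tx_j)^\top$ around $p\bSigma$ with Assumption \ref{ass1} gives $\lambda_{\mathrm{min},\perp}(\nabla^2\cL(\tb^*))\gtrsim pn/\kappa_1$ with high probability. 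Testing the governing identity against $\Delta\tb\in\Theta$ then produces $\|\Delta\tb\|_2\lesssim(\kappa_1/pn)\|\cP\br\|_2$. To bound $\br$ I would use its mean-value form $\br=\sum_{(i,j)\in\mathcal{E}}c_{ij}\,u_{ij}(\tx_i-\tx_j)$, where $u_{ij}:=(\tx_i-\tx_j)^\top(\tb_M-\tb^*)$ and $|c_{ij}|\lesssim|u_{ij}|$ by Lipschitzness of $\phi'$. A Cauchy--Schwarz split across edges gives $\|\br\|_2\lesssim\big(\max_{(i,j)}|u_{ij}|\big)\big((\tb_M-\tb^*)^\top\bSigma_{\mathcal{G}}(\tb_M-\tb^*)\big)^{1/2}\|\bSigma_{\mathcal{G}}\|^{1/2}$, where $\max_{(i,j)}|u_{ij}|\le 2\|\tX\tb_M-\tX\tb^*\|_\infty$ is supplied directly by Theorem \ref{noregularization}, while the quadratic form $(\tb_M-\tb^*)^\top\bSigma_{\mathcal{G}}(\tb_M-\tb^*)$ is controlled through the basic inequality for the MLE together with the $\ell_2$ estimation rate of Remark \ref{bestratewecanhope}. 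Assembling these factors, with $\|\bSigma_{\mathcal{G}}\|\lesssim pn$, delivers the stated $\|\Delta\tb\|_2\lesssim\kappa_1^4(d+1)^{1/2}\log n/(\sqrt{n}\,pL)$.

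\textbf{The $\ell_\infty$ bound via leave-one-out.} The naive chain $\|\Delta\ba\|_\infty\le\|\Delta\tb\|_2$ is too lossy once $L$ is small, so I would decouple each coordinate. Writing $[\Delta\tb]_m=-\be_m^\top(\nabla^2\cL(\tb^*))^{+}\cP\br$, the difficulty is that $\br$ depends on all comparison outcomes, including those of the edges incident to item $m$, destroying the independence needed for a sharp coordinatewise bound. To repair this I would introduce leave-one-out surrogates $\tb_M^{(m)}$ and $\ob^{(m)}$, defined exactly as $\tb_M,\ob$ but with every comparison involving item $m$ replaced by its conditional mean, so that both surrogates are independent of the outcomes at item $m$. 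The argument then has two legs: (i) a \emph{stability} estimate showing $\|\tb_M-\tb_M^{(m)}\|_2$ and $\|\ob-\ob^{(m)}\|_2$ are of smaller order, obtained by feeding the curvature lower bound above into the optimality conditions of the two pairs of problems; and (ii) a \emph{concentration} estimate in which, after substituting the surrogates, the leading contribution to $[\Delta\tb]_m$ becomes a sum over the edges at $m$ of independent, mean-zero terms, to which a Bernstein bound applies and yields the savings over the $\ell_2$ rate. Propagating the leave-one-out perturbation through $(\nabla^2\cL(\tb^*))^{+}$ while retaining two-to-infinity control of this inverse is the main obstacle; it is precisely where the incoherence Assumption \ref{ass0} and the sparse-graph concentration (valid down to $p\asymp\log n/n$) enter, and it is what produces the two-term bound $\|\Delta\ba\|_\infty\lesssim\kappa_1^6(d+1)\log n/(npL)+\kappa_1^4(np)^{-1}\sqrt{(d+1)\log n/L}\,(\sqrt{d+1}+\log n/\sqrt{np})$.
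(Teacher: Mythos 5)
Your $\ell_2$ argument is essentially the paper's: the same two first-order conditions, the same governing identity $\cP\nabla^2\cL(\tb^*)\Delta\tb=-\cP\bR$ with $\bR$ the Taylor remainder of the gradient, the same curvature lower bound $\lambda_{\text{min},\perp}(\nabla^2\cL(\tb^*))\gtrsim np/\kappa_1$ from Lemma \ref{lb}, and a remainder bound of the same size (the paper bounds $\Vert\bR\Vert_2\lesssim np\,\Vert\tb_M-\tb^*\Vert_c\Vert\tb_M-\tb^*\Vert_2$ via the Lipschitz continuity of $\phi'$ and $\Vert\sum_{(i,j)\in\mathcal{E}}(\tx_i-\tx_j)(\tx_i-\tx_j)^\top\Vert\lesssim np$; your Cauchy--Schwarz split across edges is the same estimate in different clothing, and the inputs you need are supplied by Theorem \ref{noregularization} rather than Remark \ref{bestratewecanhope}). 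This half is fine.

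The $\ell_\infty$ half has a genuine gap. You reduce $[\Delta\tb]_m$ to $-\be_m^\top(\nabla^2\cL(\tb^*))^{+}\cP\bR$ and then require row-wise (two-to-infinity) control of the pseudo-inverse Hessian, which you yourself flag as ``the main obstacle'' but never resolve. Nothing in the paper supplies it: Assumption \ref{ass0} concerns only the projection $\cP_{\bar{\bX}}$ onto the covariate column space, not the inverse of a Laplacian-type matrix on a sparse Erd\H{o}s--R\'enyi graph, and the crude bound $\Vert\be_m^\top(\nabla^2\cL(\tb^*))^{+}\Vert_2\le\Vert(\nabla^2\cL(\tb^*))^{+}\Vert\lesssim\kappa_1/(np)$ merely reproduces the $\ell_2$ rate. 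The paper's proof avoids inverting the Hessian coordinatewise altogether: it uses that $\wh\alpha_{M,i}$ minimizes the univariate marginal $\cL|_{\tb_{M,-i}}$ and $\oa_i$ minimizes $\ocL|_{\ob_{-i}}$ (Propositions \ref{oaproposition} and \ref{alphaproposition}), introduces the intermediate minimizer $\oa_i'$ of $\ocL|_{\tb_{M,-i}}$, and obtains the explicit identity $\oa_i'-\oa_i=\bigl(\sum_{j\ne i}(\tb_{M,j}-\ob_j)(\nabla^2\cL(\tb^*))_{i,j}\bigr)/(\nabla^2\cL(\tb^*))_{i,i}$, which involves only the $i$-th row and the diagonal entry of the Hessian itself, with $(\nabla^2\cL(\tb^*))_{i,i}\gtrsim np/\kappa_1$. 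The leave-one-out estimators $\tb_M^{(i)},\ob^{(i)}$ then enter to make the dominant part of the numerator a sum of terms independent of the edges at $i$ (so Bernstein applies), exactly as you intend, but without ever touching $(\nabla^2\cL(\tb^*))^{+}$ rowwise. You also omit the self-bounding step: the resulting bound for $|\oa_i'-\oa_i|$ carries the term $\kappa_1^2\bigl(\sqrt{(d+1)/np}+(\log n)/np\bigr)\Vert\wh\ba_M-\oba\Vert_\infty$ on the right-hand side, and the hypothesis $\kappa_1^2\bigl(\sqrt{(d+1)/np}+(\log n)/np\bigr)\le c$ exists precisely to absorb it; your plan never indicates where that assumption is used.
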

\begin{remark}
	The assumptions $\kappa_1^2\sqrt{(d+1)\log n/npL} = \mathcal{O}(1)$ and $\displaystyle \kappa_1^2\left(\sqrt{(d+1)/{np}}+{\log n}/{np}\right)\leq c$ for some fixed constant $c>0$  are mild.  When $d$ and $\kappa_1$ are bounded,  they hold when $p\gtrsim \log n/n$.   This matches the lower bound of the sampling probability $p$ to ensure the connectivity of the Erdős-Rényi random graph, which is a necessary requirement for item ranking. Besides, $\kappa_1^2\sqrt{(d+1)\log n/npL} = \mathcal{O}(1)$ is also required by the consistency results of our estimator according to Theorem \ref{noregularization}.
\end{remark}

 \begin{remark}
{Given the non-asymptotic expansion, and approximation error  $\Delta\balpha:=\hat\balpha_M-\bar{\bbeta}_{1:n}$ presented in Theorem \ref{inferenceresidual}, we are able to achieve a tighter $\ell_{\infty}$ statistical error bound for $\|\hat\balpha_M-\balpha^*\|_{\infty}.$ Specifically, under the assumptions of Theorems 4 and 7, as long as
 \begin{align*}    \kappa_1^5(d+1)\sqrt{\frac{\log n}{npL}}+\kappa_1^3\sqrt{\frac{(d+1)}{np}}\left(\sqrt{d+1}+\frac{\log n}{\sqrt{np}}\right)\lesssim1.
 \end{align*}
Then with probability at least $1-O(n^{-5})$ we have 
	\begin{align*}
		\left\Vert \widehat{\balpha}_M - \balpha^* \right\Vert_\infty&\lesssim \kappa_1\sqrt{\frac{\log n}{npL}}.
	\end{align*}}
\end{remark}
Next, we utilize Berry-Essen theorem \citep{berry1941accuracy,esseen1942liapunov} to derive the asymptotic distribution of the linear combinations of $\tb_M$, respectively.   Since it holds for any linear combinations, the result applies to any finite-dimensional distribution of $\tb_M$.

\begin{theorem}\label{betainference}
	(Asymptotic normality of MLE)
Given $\boldsymbol{c}\in \mathbb{R}^{n+d}$, let $\bc = \mathcal{P}\boldsymbol{c}$ be the projection of $\boldsymbol{c}$ onto linear space $\Theta$. Under the assumptions of Theorem \ref{inferenceresidual}, we have the following decomposition 
\begin{align*}
    \sqrt{L}\left(\boldsymbol{c}^\top\tb_M-\boldsymbol{c}^\top\tb^*\right) = \sqrt{L}\left(\boldsymbol{c}^\top\tb_M-\boldsymbol{c}^\top\ob\right) + \sqrt{L}\left(\boldsymbol{c}^\top\ob-\boldsymbol{c}^\top\tb^*\right),
\end{align*}
where 
\begin{align*}
    \left|\frac{\sqrt{L}\left(\boldsymbol{c}^\top\tb_M-\boldsymbol{c}^\top\ob\right)}{\sqrt{\bc^\top\left[\mathcal{P}\nabla^2\cL(\tb^*)\mathcal{P}\right]^{+}\bc}}\right|\lesssim& \left[\kappa_1^6\frac{(d+1)\log n}{\sqrt{np L}}+\kappa_1^4\sqrt{\frac{(d+1)\log n}{np}}\left(\sqrt{d+1}+\frac{\log n}{\sqrt{np}}\right)\right]\frac{\Vert\boldsymbol{c}_{1:n}\Vert_1}{\Vert \bc\Vert_2} \\
    &+\kappa_1^4\frac{(d+1)^{0.5}\log n}{\sqrt{pL}}\frac{\Vert \boldsymbol{c}_{n+1:n+d}\Vert_2}{\Vert \bc\Vert_2},
\end{align*}
 with probability exceeding $1-O(n^{-5})$ (randomness comes from $\mathcal{G}$ and $y_{i,j}^{(l)}$) and
\begin{align*}
\sup_{x\in\mathbb{R}}\left|\mathbb{P}\left(\frac{\sqrt{L}\left(\boldsymbol{c}^\top\ob-\boldsymbol{c}^\top\tb^*\right)}{\sqrt{\bc^\top\left[\mathcal{P}\nabla^2\cL(\tb^*)\mathcal{P}\right]^{+}\bc}}\leq x\bigg| \mathcal{G}\right)-\mathbb{P}\left(\mathcal{N}(0,1)\leq x\right)\right| \lesssim\frac{\kappa_1}{\sqrt{npL}},
\end{align*}
with probability exceeding $1-O(n^{-10})$ (randomness comes from $\mathcal{G}$). Combining the approximation error and asymptotic distribution together, and by taking all randomness into consideration, we further obtain
\begin{align*}
&\sup_{x\in\mathbb{R}}\left|\mathbb{P}\left(\frac{\sqrt{L}\left(\boldsymbol{c}^\top\tb_M-\boldsymbol{c}^\top\tb^*\right)}{\sqrt{\bc^\top\left[\mathcal{P}\nabla^2\cL(\tb^*)\mathcal{P}\right]^{+}\bc}}\leq x\right)-\mathbb{P}\left(\mathcal{N}(0,1)\leq x\right)\right|\\ \lesssim&\frac{\kappa_1}{\sqrt{npL}}+\left[\kappa_1^6\frac{(d+1)\log n}{\sqrt{np L}}+\kappa_1^4\sqrt{\frac{(d+1)\log n}{np}}\left(\sqrt{d+1}+\frac{\log n}{\sqrt{np}}\right)\right]\frac{\Vert\boldsymbol{c}_{1:n}\Vert_1}{\Vert \bc\Vert_2} \\
&+\kappa_1^4\frac{(d+1)^{0.5}\log n}{\sqrt{pL}}\frac{\Vert \boldsymbol{c}_{n+1:n+d}\Vert_2}{\Vert \bc\Vert_2}+\frac{1}{n^5}.
\end{align*}
\end{theorem}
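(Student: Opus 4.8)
The plan is to prove the three displayed bounds in turn: the first is an approximation error that reduces to Theorem~\ref{inferenceresidual} plus a spectral lower bound on the normalizer; the second is a conditional Berry--Esseen bound for the linearized statistic; and the third glues them together by a smoothing argument. Throughout I abbreviate $\bH:=\mathcal{P}\nabla^2\cL(\tb^*)\mathcal{P}$ and use that $\mathcal{P}$ is block diagonal (the identity on the $\bb$-block and $\boldsymbol{I}_n-\cP_{\bar{\bX}}$ on the $\ba$-block), so that $\tb_M,\ob,\tb^*\in\Theta$ and hence $\boldsymbol{c}^\top\tb_M=\bc^\top\tb_M$, etc. Writing $\Delta\tb=\tb_M-\ob$, the first numerator equals $\sqrt{L}\,\boldsymbol{c}^\top\Delta\tb=\sqrt{L}\big((\Delta\tb)_{1:n}\text{ paired with }\boldsymbol{c}_{1:n}+(\Delta\tb)_{n+1:n+d}\text{ paired with }\boldsymbol{c}_{n+1:n+d}\big)$.

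For the first bound I would apply H\"older's inequality, $|\boldsymbol{c}^\top\Delta\tb|\le\|\boldsymbol{c}_{1:n}\|_1\|\Delta\ba\|_\infty+\|\boldsymbol{c}_{n+1:n+d}\|_2\|(\Delta\tb)_{n+1:n+d}\|_2$, and substitute the two estimates of Theorem~\ref{inferenceresidual} for $\|\Delta\ba\|_\infty$ and for $\|(\Delta\tb)_{n+1:n+d}\|_2\le\|\Delta\tb\|_2$. It then remains to lower bound the normalizer: since the logistic weights are at most $1/4$ and $\|\boldsymbol{\Sigma}\|\lesssim n$ by Assumption~\ref{ass1}, a matrix-Bernstein concentration over $\mathcal{G}$ gives $\|\bH\|\lesssim np$ with high probability, whence the smallest nonzero eigenvalue of $\bH^{+}$ is $\gtrsim 1/(np)$ and $\bc^\top\bH^{+}\bc\gtrsim\|\bc\|_2^2/(np)$. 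Multiplying and dividing by $\sqrt{np}$ reproduces exactly the two claimed terms scaled by $\|\boldsymbol{c}_{1:n}\|_1/\|\bc\|_2$ and $\|\boldsymbol{c}_{n+1:n+d}\|_2/\|\bc\|_2$.

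For the distributional bound I would first solve the normal equations for $\ob$. As $\ob-\tb^*\in\Theta$ they give $\bH(\ob-\tb^*)=-\mathcal{P}\nabla\cL(\tb^*)$, hence $\ob-\tb^*=-\bH^{+}\nabla\cL(\tb^*)$, so $\sqrt{L}\,\bc^\top(\ob-\tb^*)=-L^{-1/2}\sum_{(i,j)\in\mathcal{E},\,i>j}\sum_{l=1}^{L}\{\phi_{ij}-y_{j,i}^{(l)}\}\,\bc^\top\bH^{+}(\tx_i-\tx_j)$ with $\phi_{ij}=\phi((\tx_i-\tx_j)^\top\tb^*)$, a sum of conditionally (on $\mathcal{G}$) independent, mean-zero, bounded summands. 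The crucial identity is the variance computation: since $\nabla^2\cL(\tb^*)=\sum_{(i,j)\in\mathcal{E},\,i>j}\phi_{ij}(1-\phi_{ij})(\tx_i-\tx_j)(\tx_i-\tx_j)^\top$ and $\bH^{+}\bc\in\Theta$, one obtains $\operatorname{Var}\big(\sqrt{L}\,\bc^\top(\ob-\tb^*)\mid\mathcal{G}\big)=\bc^\top\bH^{+}\nabla^2\cL(\tb^*)\bH^{+}\bc=\bc^\top\bH^{+}\bc$, so the statistic is standardized to conditional variance one. The Berry--Esseen theorem then bounds the conditional Kolmogorov distance by a constant times the Lyapunov ratio, which I control by $L^{-1/2}\max_{(i,j)}|\bc^\top\bH^{+}(\tx_i-\tx_j)|/\sqrt{\bc^\top\bH^{+}\bc}$. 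A Cauchy--Schwarz step in the $\bH^{+}$-inner product reduces this to the leverage quantity $\max_{(i,j)}(\tx_i-\tx_j)^\top\bH^{+}(\tx_i-\tx_j)\le\|\bH^{+}\|\max_{(i,j)}\|\tx_i-\tx_j\|_2^2\lesssim\kappa_1/(np)$, where I use $\lambda_{\min,\perp}(\bH)\gtrsim np/\kappa_1$ (the logistic weights are $\gtrsim1/\kappa_1$ on the score window of width $\log\kappa_1$, times $\lambda_{\min,\perp}(\boldsymbol{\Sigma})\gtrsim n$, with concentration) and $\|\tx_i-\tx_j\|_2^2\lesssim1$ after rescaling. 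This gives the conditional rate $\kappa_1/\sqrt{npL}$ on the high-probability graph event.

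Finally, the combined bound is a standard smoothing argument. Writing $T,Z,R$ for the standardized total, linear, and remainder statistics, Part one gives $\mathbb{P}(|R|>\delta)=O(n^{-5})$ with $\delta$ the first displayed bound, and Part two gives $\sup_x|\mathbb{P}(Z\le x\mid\mathcal{G})-\Phi(x)|\lesssim\kappa_1/\sqrt{npL}$ on a graph event of probability $1-O(n^{-10})$. Sandwiching $\{T\le x\}$ between $\{Z\le x-\delta\}$ and $\{Z\le x+\delta\}$ on $\{|R|\le\delta\}$, integrating the conditional bound over $\mathcal{G}$, and using $|\Phi(x\pm\delta)-\Phi(x)|\le\delta/\sqrt{2\pi}$ yields $\sup_x|\mathbb{P}(T\le x)-\Phi(x)|\lesssim\kappa_1/\sqrt{npL}+\delta+O(n^{-5})$, which is precisely the stated sum. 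The main obstacle is the second step: establishing the exact conditional standardization and, above all, the spectral control of the random restricted Hessian $\bH$ on $\Theta$ — both the operator-norm upper bound needed in Part one and the smallest-eigenvalue/leverage bound $\lambda_{\min,\perp}(\bH)\gtrsim np/\kappa_1$ needed in Part two — uniformly in the sparse regime $p\asymp\log n/n$, where the incoherence Assumption~\ref{ass0}, Assumption~\ref{ass1}, and matrix-Bernstein concentration together do the heavy lifting.
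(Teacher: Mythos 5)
Your proposal is correct and follows essentially the same route as the paper: Hölder plus Theorem \ref{inferenceresidual} and the lower bound $\bc^\top[\mathcal{P}\nabla^2\cL(\tb^*)\mathcal{P}]^{+}\bc\gtrsim \|\bc\|_2^2/(np)$ for the approximation term; solving the normal equations to write $\bc^\top\ob-\bc^\top\tb^*$ as a sum of conditionally independent bounded summands with conditional variance exactly $L^{-1}\bc^\top[\mathcal{P}\nabla^2\cL(\tb^*)\mathcal{P}]^{+}\bc$ (the paper's Proposition \ref{pro3}) and applying Berry--Esseen with the spectral bounds of Lemmas \ref{eigenlem} and \ref{lb}; and the same $\{|R|\le\delta\}$ sandwiching to combine the two. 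The only (harmless) deviations are that you control the Lyapunov ratio via Cauchy--Schwarz in the $[\mathcal{P}\nabla^2\cL(\tb^*)\mathcal{P}]^{+}$-inner product, which in fact yields the slightly sharper $\sqrt{\kappa_1/(npL)}$ where the paper bounds $|\boldsymbol{v}^\top\mathcal{P}(\tx_i-\tx_j)|$ by operator norms to get $\kappa_1/\sqrt{npL}$, and that you derive the variance identity directly rather than through the paper's covariance manipulation.
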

In Theorem \ref{betainference}, we first obtain the distributional guarantee of $\frac{\sqrt{L}\left(\boldsymbol{c}^\top\ob-\boldsymbol{c}^\top\tb^*\right)}{\sqrt{\bc^\top\left[\mathcal{P}\nabla^2\cL(\tb^*)\mathcal{P}\right]^{+}\bc}}$,
 conditioned on the comparison graph $\mathcal{G}$, in the sense that we only take the randomness of $y_{i,j}^{(l)}$ into consideration. These results are stronger than the distributional guarantees, which make use of all the randomness from $\mathcal{G}$ and $y_{i,j}^{(l)}$,  by the dominated convergence theorem.  Besides, combining with the approximation results, we further derive distributional guarantees for linear combinations of $\tilde{\bb}_M$ by taking all the randomness of into consideration. The asymptotic variance of $\mathbf{c}^\top \tb_M$ is $\bc^\top[\mathcal{P}\nabla^2\cL(\tb^*)\mathcal{P}]^{+}\bc$ as presented in Theorem \ref{betainference}, where $[\mathcal{P}\nabla^2\cL(\tb^*)\mathcal{P}]^{+}$ is exactly the inverse of the fisher information (conditioned on graph $\cG$) that is projected into the space $\Theta$.
 

Moreover, recall that we have scaled the covariates to satisfy $\max_{i\in [n]}\Vert\bx_i\Vert_2\leq\sqrt{(d+1)/n}$ in the data preprocessing step. Then, if $\displaystyle\kappa_1, {\Vert \boldsymbol{c}_{1:n}\Vert_1}/{\Vert \bc\Vert_2}$ and $\displaystyle{\sqrt{n}\Vert\boldsymbol{c}_{n+1:n+d}\Vert_2}/({\sqrt{d+1}\Vert \bc\Vert_2})$ are bounded, the asymptotic normality holds when
\begin{align}\label{sample_c}
\max\left\{\frac{(d+1)\log n}{\sqrt{npL}}, \frac{(d+1)\sqrt{\log n}}{\sqrt{np}}, \frac{\log n\sqrt{(d+1)\log n}}{np}\right\} = o(1).
\end{align} 
in the sense that it allows the comparison graph to be sparse (when $\displaystyle p\asymp {1}/{n}$ up to logarithmic terms) when the covariate dimension is bounded. This admits all existing uncertainty quantification results for the BTL model without covariates as special cases. 

Finally, we comment on the condition $\displaystyle\max\{\left\Vert \boldsymbol{c}_{1:n}\right\Vert_1 , \sqrt{{n}/{(d+1)}}\left\Vert \boldsymbol{c}_{n+1:n+d}\right\Vert_2\}\lesssim \Vert\bc\Vert_2$.  This is only a mild requirement.  For instance, when $d=0$ and $\mathbf{c}\in \RR^n$ is sparse (the original BTL model), this inequality holds naturally. In addition, the comparison of preference ratings is another significant illustration that meets the requirement. In specific, for testing $H_0:\tx_i^\top\tb^*\leq \tx_j^\top\tb^*\text{ v.s. } H_a:\tx_i^\top\tb^* >\tx_j^\top\tb^*$, we choose $\boldsymbol{c} = \tx_i-\tx_j\in \RR^{n+d}$. In this scenario, the the condition is met since $\max\{\left\Vert \boldsymbol{c}_{1:n}\right\Vert_1 , \sqrt{{n}/{d+1}}\Vert \boldsymbol{c}_{n+1:n+d}\Vert_2\} = 2,$  and $\Vert \bc\Vert_2 = \Vert\boldsymbol{c}\Vert_2 \asymp \sqrt{2 + 2(d+1)/n}$.  

An important corollary of Theorem \ref{betainference} is the limiting distribution of $\hat \alpha_{M,S_k}$, where $S_k$ is any subset over $[n]$ with size $k<\infty.$ The corresponding theoretical property is summarized in the following Corollary \ref{alphainference}.
{ 
\begin{corollary}\label{alphainference}
Assume the assumptions of Theorem \ref{inferenceresidual} hold. Then for any fixed $k\in [n]$, as long as 
\begin{align*}
	\kappa_1^6\frac{\sqrt{k}(d+1)\log n}{\sqrt{np L}}+\kappa_1^4\sqrt{\frac{k(d+1)\log n}{np}}\left(\sqrt{d+1}+\frac{\log n}{\sqrt{np}}\right) = o(1),
\end{align*}
 we have
\begin{align*}
	\sqrt{L}\left(\left[\mathcal{P}\nabla^2\mathcal{L}(\widetilde{\bbeta}^*)\mathcal{P}\right]^{+}_{S_k, S_k}\right)^{-1/2}(\widehat{\balpha}_{M, S_k} - \balpha_{S_k}^*)\leadsto \mathcal{N}\left(\boldsymbol{0}, \mathbf{I}_k\right),
\end{align*}
where $S_k$ is any subset over $[n]$ with size $k.$
\end{corollary}
}

Although our results are established under a more general setting than the existing literature on the BTL model, which does not involve the covariates, our specific result in Corollary \ref{alphainference} with $d=0$ can still compare favorably.   Compared with \cite{liu2022lagrangian}, we need  a much smaller sample complexity to establish the asymptotic normality. Specifically, they require $\displaystyle {n\log n}/{\sqrt{L}}+{\log n}/{\sqrt{pL}}=o(1)$ to derive the asymptotic normality. This condition  requires $L \gg n^2$. In contrast,  we allow $L=\cO(1)$ and our requirement for the sample complexity is minimax optimal up to logarithm  terms \citep{negahban2012iterative,chen2019spectral}. Moreover, \cite{liu2022lagrangian} use the Lagrangian debiasing method to derive the estimators, which involves an additional tuning parameter. 
Compared with \cite{han2020asymptotic}, we allow sparse compare graphs ($p\gtrsim 1/n$ by ignoring logarithm terms), whereas they require a much denser comparison graph ($\displaystyle p\gtrsim {1}/{n^{1/10}}$)  than ours. 

We now compare our results with \cite{gao2021uncertainty} and \cite{fan2022ranking} (under the pairwise comparison regime). As the analysis in these two works does not incorporate the condition number $\kappa_1$ or covariate, we will consider the regime $\kappa_1 = O(1)$ and $d = 0$ in our theorems for comparison. First of all, both papers show that the asymptotic normality holds even for the sparse regime $\displaystyle p\asymp {1}/{n}$, up to a logarithmic order. However, the choice of approximators and  approximation errors are very different.  Instead of using the Taylor expansion $\ocL(\cdot)$ given by Eq.~\eqref{quadratic_expan}, \cite{gao2021uncertainty,fan2022ranking} consider the following quadratic approximation:
\begin{align*}
   \ocL_{\text{diag}}(\tb) = {\cL(\tb^*)}+\left(\tb-\tb^*\right)^\top\nabla \cL(\tb^*)+\frac{1}{2}\left(\tb-\tb^*\right)^\top\left(\textbf{diag}\nabla^2 \cL(\tb^*)\right)\left(\tb-\tb^*\right),
\end{align*}
which only keeps the diagonal part of $\nabla^2 \cL(\tb^*)$, and define the approximator $\ob_{\text{diag}} = \argmin\ocL_{\text{diag}}(\tb)$. In addition, it is worth noting that their approach, which only keeps the diagonal of the Hessian matrix to handle the approximation error, cannot be applied directly to our setting (with covariates) due to several reasons. {  Firstly, results in \cite{gao2021uncertainty} without involving covariates imply that for any pair of indices \(i, j\) within the sample space \([n]\), the estimators \(\hat\alpha_{M,i}\) and \(\hat\alpha_{M,j}\) (corresponds to $\hat\theta_i,\hat\theta_j$ in their paper) are asymptotically independent, which justifies their utilization of a quadratic approximation for a single variable while keeping others fixed. In our model that incorporates covariates, the corresponding estimators \(\widehat{\alpha}_{M, i}\) and \(\widehat{\alpha}_{M, j}\) do not exhibit asymptotic independence. Secondly, the approach in \cite{gao2021uncertainty} overlooks the off-diagonal elements of the Hessian matrix \(\nabla^2 \mathcal{L}(\tilde{\mathbf{\beta}}^*)\), which is an acceptable approximation when no covariates are involved since these off-diagonal entries are negligible compared with asymptotic variances. However, in our study, after we take the covariates into consideration, this approximation no longer holds. As a result, an expansion similar to [(2.10), \citep{gao2021uncertainty}] would not give a valid expansion in our case.} Therefore, this motivates us to drive our own method presented above. 

In terms of the approximation errors, they show that with high probability
\begin{align}
    \left|\frac{\sqrt{L}\left(\wh\alpha_{M,k}-\ob_{\text{diag},k}\right)}{\sqrt{\left(\left[\mathcal{P}\nabla^2\cL(\tb^*)\mathcal{P}\right]^{+}\right)_{k,k}}}\right|\lesssim \sqrt{\frac{\log n}{np}}+\frac{(\log n)^3}{(np)^2}+\left(\frac{\log n}{npL}\right)^{1/4}+\frac{(\log n)^{7/4}}{(np)^{5/4}L^{1/4}},\label{gaorate}
\end{align}
while we prove that for our approximation $\ob$, with high probability for $k\in [n]$ it holds that
\begin{align}
    \left|\frac{\sqrt{L}\left(\wh\alpha_{M,k}-\oa_k\right)}{\sqrt{\left(\left[\mathcal{P}\nabla^2\cL(\tb^*)\mathcal{P}\right]^{+}\right)_{k,k}}}\right|\lesssim \sqrt{\frac{\log n}{np}}+ \frac{(\log n)^{1.5}}{np}.\label{ourrate}
\end{align}
When $L = o(np/\log n)$, the term $\displaystyle\left({\log n}/{npL}\right)^{1/4}$ dominates the right-hand side of \eqref{gaorate} and hence  also dominates the error rate given by \eqref{ourrate} as long as $np \gtrsim  (\log n)^2$.  In other words, our approximation error is an order of magnitude smaller than that in \cite{gao2021uncertainty} and \cite{fan2022ranking} (when considering the pairwise comparison regime).  This holds true for the common case where $L\asymp 1$.  We next explain the underlying rationale.
{ In scenarios where the likelihood function does not contain covariates, the off-diagonal elements of the Hessian matrix, though in smaller order compared to the asymptotic variance, still contribute to bias in the remainder term beyond the non-asymptotic expansion. This inherent bias, which emerges when examining marginal likelihoods, is more pronounced than using joint likelihood. Our approach, which employs joint likelihood, effectively accounts for these off-diagonal elements, thereby mitigating their impact and resulting in a more accurate approximation.}

Besides investigating the asymptotic behavior of $\hat\alpha_i,i\in[n],$ studying the asymptotic property for $\hat\beta_j,j\in[d]$ is another crucial task as it depicts whether some covariates have any power for explaining latent scores. We deduce these from the Theorem \ref{inferenceresidual} and Theorem \ref{betainference}, and summarize them together with a refined $\ell_2$-upper bound of $\| \wh\bb_M-\bb^*\|_2$ in Corollary \ref{cor:inferenceb}.
\begin{corollary}\label{cor:inferenceb}
		Under the assumptions of Theorem \ref{inferenceresidual} and Theorem \ref{betainference}, we first obtain a refined upper bound for $\|\wh\bb_M-\bb^*\|_2$, namely
		\begin{align}\label{refinedbetabound}
			{	\left\Vert\wh \bb_M-\bb^*\right\Vert_2 \lesssim 
				\sqrt{\frac{(d+1)}{n}}\cdot \max\Bigg\{ \kappa_1^4\frac{\log n}{pL}, \kappa_1\sqrt{\frac{\log n}{{pL}}}\Bigg\}.}
		\end{align}
	In addition, we also achieve the distributional results for every $\hat\beta_j,j\in [d],$
	\begin{align*}
	&\sup_{x\in\mathbb{R}}\left|\mathbb{P}\left(\frac{\sqrt{L}\left(\boldsymbol{e}_{n+j}^\top\tb_M-\boldsymbol{e}_{n+j}^\top\tb^*\right)}{\sqrt{\bar{\be}_{n+j}^\top\left[\mathcal{P}\nabla^2\cL(\tb_M)\mathcal{P}\right]^{+}\bar{\be}_{n+j}}}\leq x\right)-\mathbb{P}\left(\mathcal{N}(0,1)\leq x\right)\right|\\ \lesssim&\frac{\kappa_1}{\sqrt{npL}}+\kappa_1^4\frac{(d+1)^{0.5}\log n}{\sqrt{pL}}+\frac{1}{n^5}.
	\end{align*}
	Therefore, when $(d+1)^{0.5}\log n/ (pL)\rightarrow0$, for any $j\in [d]$, we have
	\begin{align*}
\mathbb{P}\left( \beta_j^*\in [C_L(\tb_M),C_U(\tb_M)]\right)=1-\alpha,
	\end{align*}
	where $[C_L(\tb_M),C_U(\tb_M)]=[\boldsymbol{e}_{n+j}^\top\tb_M- \frac{\sqrt{\bar{\be}_{n+j}^\top\left[\mathcal{P}\nabla^2\cL(\tb_M)\mathcal{P}\right]^{+}\bar{\be}_{n+j}}z_{{\alpha/2}}}{\sqrt{L}}, \boldsymbol{e}_{n+j}^\top\tb_M+ \frac{\sqrt{\bar{\be}_{n+j}^\top\left[\mathcal{P}\nabla^2\cL(\tb_M)\mathcal{P}\right]^{+}\bar{\be}_{n+j}}z_{{\alpha/2}}}{\sqrt{L}}]$
	with  $z_{\alpha/2}$ being the upper $\alpha/2$-quantile of the standard Gaussian distribution.
	\end{corollary}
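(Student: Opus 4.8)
The three claims in Corollary~\ref{cor:inferenceb} are, respectively, a sharpened $\ell_2$ rate, a Berry--Esseen bound for a single coordinate $\beta_j$, and the confidence interval obtained by inverting it. My strategy is to treat each as a specialization of the machinery already in place: Theorem~\ref{noregularization} (consistency), Theorem~\ref{inferenceresidual} (approximation error $\Delta\tb=\tb_M-\ob$), and Theorem~\ref{betainference} (asymptotic normality of $\boldsymbol{c}^\top\tb_M$). Write $H:=\nabla^2\cL(\tb^*)$ and note the block structure induced by $\tx_i=(\be_i^\top,\bx_i^\top)^\top$, giving an $n\times n$ Laplacian-type block $H_{\alpha\alpha}$, a $d\times d$ covariate block $H_{\beta\beta}$, and coupling $H_{\alpha\beta}$. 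A useful structural fact is that $\mathcal{P}=\boldsymbol{I}-\boldsymbol{Z}(\boldsymbol{Z}^\top\boldsymbol{Z})^{-1}\boldsymbol{Z}^\top$ is block-diagonal, $\mathcal{P}=\diag(\boldsymbol{I}_n-\cP_{\bar{\bX}},\boldsymbol{I}_d)$, since $\boldsymbol{Z}$ vanishes on its last $d$ rows; in particular $\mathcal{P}\be_{n+j}=\be_{n+j}$, so the $\bbeta$-directions already lie in $\Theta$, i.e. $\bar\be_{n+j}=\be_{n+j}$.

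\textbf{Refined $\ell_2$ bound.} I would decompose $\wh\bb_M-\bb^*=(\Delta\tb)_{n+1:n+d}+(\ob-\tb^*)_{n+1:n+d}$. The first piece is dominated by Theorem~\ref{inferenceresidual}, $\|(\Delta\tb)_{n+1:n+d}\|_2\le\|\Delta\tb\|_2\lesssim\kappa_1^4(d+1)^{1/2}\log n/(\sqrt{n}pL)=\sqrt{(d+1)/n}\,\kappa_1^4\log n/(pL)$, which is exactly the first entry of the maximum in \eqref{refinedbetabound}. For the second piece I would use $\ob-\tb^*=-[\mathcal{P} H\mathcal{P}]^+\mathcal{P}\nabla\cL(\tb^*)$ together with the logistic Fisher-information identity $\Cov(\nabla\cL(\tb^*)\mid\mathcal{G})=L^{-1}H$ (as $\phi'=\phi(1-\phi)$ is the per-trial Bernoulli variance), giving $\Cov(\ob-\tb^*\mid\mathcal{G})=L^{-1}[\mathcal{P} H\mathcal{P}]^+$ via $A^+AA^+=A^+$. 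Since $\be_{n+j}\in\Theta$, summing diagonal entries yields $\E[\|(\ob-\tb^*)_{n+1:n+d}\|_2^2\mid\mathcal{G}]=L^{-1}\tr\big(([\mathcal{P} H\mathcal{P}]^+)_{\beta\beta}\big)$.

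\textbf{The crux: a Schur-complement spectral bound.} The heart of the argument is to show $\tr\big(([\mathcal{P} H\mathcal{P}]^+)_{\beta\beta}\big)\lesssim d\kappa_1/(np)$, which after multiplying by $L^{-1}$ and taking square roots gives $\sqrt{(d+1)/n}\,\kappa_1^{1/2}/\sqrt{pL}$, dominated by the second entry of the maximum. I would identify the $\beta\beta$-block of the pseudoinverse with $S^{-1}$, where $S=H_{\beta\beta}-H_{\beta\alpha}\Pi(\Pi H_{\alpha\alpha}\Pi)^+\Pi H_{\alpha\beta}$ is the Schur complement from minimizing $\tb^\top H\tb$ over the free $\ba\in\Theta_0:=\{\ba:\bar{\bX}^\top\ba=\boldsymbol{0}\}$, with $\Pi=\boldsymbol{I}_n-\cP_{\bar{\bX}}$. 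The key inequality is $\lambda_{\min}(S)\ge\lambda_{\text{min},\perp}(H)$: for unit $\bb$ the minimizer $(\ba^\star,\bb)$ lies in $\Theta$ with $\|(\ba^\star,\bb)\|_2\ge1$, so $\bb^\top S\bb=(\ba^\star,\bb)^\top H(\ba^\star,\bb)\ge\lambda_{\text{min},\perp}(H)$. Graph concentration, $\phi'\gtrsim1/\kappa_1$, and Assumption~\ref{ass1} give $\lambda_{\text{min},\perp}(H)\gtrsim np/\kappa_1$ with high probability (already established in the proof of Theorem~\ref{noregularization}), whence $\tr(S^{-1})\le d/\lambda_{\min}(S)\lesssim d\kappa_1/(np)$. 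A sub-Gaussian concentration of the quadratic form $\|(\ob-\tb^*)_{n+1:n+d}\|_2^2$ about its conditional mean (the summands of $\nabla\cL(\tb^*)$ are independent and bounded given $\mathcal{G}$) upgrades the moment bound to high probability, the $\log n$ and extra powers of $\kappa_1$ supplying the slack in \eqref{refinedbetabound}. Bounding the sum of the two pieces by twice the maximum finishes \eqref{refinedbetabound}.

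\textbf{Coordinate distribution and interval.} For the second claim I would apply Theorem~\ref{betainference} with $\boldsymbol{c}=\be_{n+j}$. Since $\mathcal{P}\be_{n+j}=\be_{n+j}$, we get $\boldsymbol{c}_{1:n}=\boldsymbol{0}$ and $\|\bc\|_2=1$, so the bracketed factor carrying $\|\boldsymbol{c}_{1:n}\|_1$ vanishes identically, leaving exactly $\kappa_1^4(d+1)^{1/2}\log n/\sqrt{pL}$ from the $\|\boldsymbol{c}_{n+1:n+d}\|_2$-term, the Berry--Esseen term $\kappa_1/\sqrt{npL}$, and the $n^{-5}$ tail. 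The only extra step is replacing $\nabla^2\cL(\tb^*)$ by $\nabla^2\cL(\tb_M)$ in the studentizing denominator: using consistency of $\tb_M$ (Theorem~\ref{noregularization}) and smoothness of $\phi'$, I would show $\bar\be_{n+j}^\top[\mathcal{P}\nabla^2\cL(\tb_M)\mathcal{P}]^+\bar\be_{n+j}=(1+o(1))\,\bar\be_{n+j}^\top[\mathcal{P}\nabla^2\cL(\tb^*)\mathcal{P}]^+\bar\be_{n+j}$, a Slutsky-type perturbation leaving the limit law intact. Under the maintained sample-complexity assumptions together with $pL\to\infty$, the entire right-hand side is $o(1)$, so the studentized statistic converges to $\mathcal{N}(0,1)$; inverting it, with $\boldsymbol{e}_{n+j}^\top\tb^*=\beta_j^*$ and $\boldsymbol{e}_{n+j}^\top\tb_M=\hat\beta_{M,j}$, produces the stated $(1-\alpha)$ interval.

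\textbf{Main obstacle.} The genuinely new ingredient is the second entry of \eqref{refinedbetabound}: extracting the extra $\sqrt{(d+1)/n}$ saving for the covariate block. This hinges on transferring the transverse lower bound $\lambda_{\text{min},\perp}(H)\gtrsim np/\kappa_1$, valid on all of $\Theta$, down to the much smaller $d\times d$ Schur complement $S$, and on converting the resulting conditional-variance estimate into a high-probability bound. The variance plug-in in the last step is the only other non-routine point; everything else is bookkeeping on top of Theorems~\ref{inferenceresidual} and \ref{betainference}.
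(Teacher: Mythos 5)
Your proposal is correct and, at its core, follows the same route as the paper: decompose $\wh\bb_M-\bb^*$ into the approximation error $(\Delta\tb)_{n+1:n+d}$ (controlled by Theorem~\ref{inferenceresidual}) plus $(\ob-\tb^*)_{n+1:n+d}$, use the conditional-variance identity $\Var[\,\ob\mid\mathcal{G}]=L^{-1}[\mathcal{P}\nabla^2\cL(\tb^*)\mathcal{P}]^{+}$ (the paper's Proposition~\ref{pro3}), bound the trace of its $\beta\beta$-block by $(d+1)\kappa_1/(np)$, upgrade to high probability at the cost of $\kappa_1\log n$, and obtain the coordinate CLT by plugging $\boldsymbol{c}=\boldsymbol{e}_{n+j}$ (with $\mathcal{P}\boldsymbol{e}_{n+j}=\boldsymbol{e}_{n+j}$, hence $\boldsymbol{c}_{1:n}=\boldsymbol{0}$) into Theorem~\ref{betainference}.

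The one place you diverge is the step you call the crux. You bound $\tr\bigl(([\mathcal{P}\nabla^2\cL(\tb^*)\mathcal{P}]^{+})_{\beta\beta}\bigr)$ by identifying that block with the inverse of a Schur complement $S$ on $\Theta=\Theta_0\times\mathbb{R}^d$ and proving $\lambda_{\min}(S)\ge\lambda_{\text{min},\perp}(\nabla^2\cL(\tb^*))$ via the variational characterization $\bb^\top S\bb=\min_{\ba\in\Theta_0}(\ba,\bb)^\top H(\ba,\bb)$. This is valid, but the paper gets the same bound in one line: a principal submatrix of a positive semidefinite matrix satisfies $\lambda_{\max}\bigl(([\mathcal{P}H\mathcal{P}]^{+})_{\beta\beta}\bigr)\le\lambda_{\max}\bigl([\mathcal{P}H\mathcal{P}]^{+}\bigr)\le 1/\lambda_{\text{min},\perp}(\nabla^2\cL(\tb^*))$, so the trace of the $d\times d$ block is at most $d$ times that. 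Your Schur-complement detour buys nothing extra here (both give $\lesssim d\kappa_1/(np)$), and it carries the additional burden of justifying the block-inverse formula for a pseudoinverse restricted to the constrained subspace; the interlacing argument avoids this entirely. For the high-probability upgrade, the paper uses per-coordinate Hoeffding plus the observation that $\sum a_{i,j}^{(l)2}\lesssim\kappa_1\Var[X\mid\mathcal{G}]$ (its Lemma~\ref{varianceconcentration}) and a union bound over the $d$ coordinates, rather than a quadratic-form (Hanson--Wright-type) concentration; both yield the same $\sqrt{\kappa_1\log n\cdot\tr(\Var)}$ bound. Finally, your explicit Slutsky step for replacing $\nabla^2\cL(\tb^*)$ by $\nabla^2\cL(\tb_M)$ in the studentizer is a point the paper leaves implicit, and is a worthwhile addition.
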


{ In Corollary \ref{cor:inferenceb}, we first obtain a refined upper bound of $\|\wh\bb_M-\bb^*\|_2$, and we next explain the rationality behind this.  In Theorem  \ref{noregularization}, a rough upper bound for $\|\wh\bb_M-\bb^*\|_2$  is obtained via concentration since no precise distributional results are involved in that stage.
		Given the statistical rates derived in Theorem \ref{noregularization}, we then analyze the non-asymptotic approximation and distribution of $\wh\bb_M,$ in Theorems  \ref{inferenceresidual} and \ref{betainference}, respectively. Finally, based on these distributional results, a refined upper bound for $\|\wh\bb_M-\bb^*\|_2$ is achieved. 
		
		If  we let $\kappa_1,d=\cO(1)$, one observes that the final upper bound  of $\|\wh\bb_M-\bb^*\|_2$ involves both rates $ \frac{1}{\sqrt{n}pL}$ and $\frac{1}{\sqrt{npL}},$ by ignoring logarithm terms. We conjecture the term $\frac{1}{\sqrt{n}pL},$ which comes from the approximation error ($\|\wh\bb_M-\ob_{n+1:n+d}\|_2$) in Theorem \ref{inferenceresidual}, can be improved to $\frac{1}{npL}$. In this case,  $\|\wh\bb_M-\bb^*\|_2$ should be bounded by the rate of  $\cO(\sqrt{\frac{1}{npL}})$ (the same order as the variance of $\wh\bb_M$). The numerical studies in \S\ref{synthetic1} validates this conjecture by showing that the rates of $\|\wh\bb_M-\bb^*\|_2$ is proporation to $\frac{1}{\sqrt{pL}}$ after we fix $n$ and $d$. However, improving the approximation error is highly non-trivial and needs more complex theoretical analysis.  Therefore, we will leave this as our future work. {It is worth mentioning that the assumption $(d+1)^{0.5}\log n/ (pL)\rightarrow0$ is only required while doing inference for $\beta_j^*,j\in [d]$. In all the previously mentioned theorems and corollaries, we do not need this condition and allow $L=1$. }

	Besides the refined $\ell_2$-bounds, the asymptotic distribution for each $\hat\beta_j,j\in[d]$ and the $(1-\alpha)$- confidence interval for $\beta_j^*$ are also derived in Corollary \ref{cor:inferenceb}. This will enable us to determine each covariate's significance in real data studies.

		 \iffalse  The condition $pL\rightarrow \infty$ can also be relaxed to $npL\rightarrow \infty$ when the aforementioned approximation error $\|\wh\bb_M-\ob_{n+1:n+d}\|_2$ is improved to $\frac{1}{npL}.$\fi }

\section{Numerical Results}\label{Experiments}
In this section, we conduct numerical experiments using synthetic and real data to validate our theories. In \S \ref{synthetic1} and \S \ref{synthetic2}, we leverage synthetic data to corroborate the statistical rates given in \S \ref{Consistency} and distributional results given in \S \ref{Inference}, respectively. In addition, in \S \ref{real_data_pokeman}, we illustrate further our model and methods by using the mutual funds holding data.

\subsection{Rate of Convergence}\label{synthetic1}
We begin with the data generation process.  Throughout the synthetic data experiments, we set $n$ to be $200$ and $d$ to be $5$. The covariates are generated independently with $(\bx_i)_j\sim \text{Uniform}[-0.5, 0.5]$ for all $i\in [n], j\in [d]$. For matrix $\boldsymbol{X} = [\bx_1,\bx_2,\dots,\bx_n]^\top\in \RR^{n\times d}$,  its columns are then normalized such that they have mean $0$ and standard deviation $1$. Next, we scale $\bx_i$ by $\bx_i / K$ so that $ \max_{i\in [n]}\Vert\bx_i\Vert_2/K = \sqrt{(d+1)/{n}}$.  We generate $\breve{\ba}\in \mathbb{R}^n$ by sampling its entries independently from $\text{Uniform}[0.5, \log (5)-0.5]$. Also, a $\breve{\bb}\in\mathbb{R}^d$ is generated uniformly from the hypersphere $\displaystyle\{\bb:\Vert \bb\Vert_2 = 0.5\sqrt{{n}/{(d+1)}}\}$. Then we project $(\breve{\ba}^\top,\breve{\bb}^\top)^\top$ onto linear space $\Theta$ and let it be $\tb^*$. In this way, we ensure $\kappa_1\leq 5$.  

To validate the statistical rates in Theorem \ref{noregularization}, we use the above method to generate the covariates $\bx_i$ and $\tb^*$ three times.   This gives us three different instances of the covariates  $\bx_i$ and the parameter $\tb^*$.  For each given instance, we consider $6$ different $(p, L)$ pairs, which are listed below.
\begin{table}[h]
\centering
\begin{tabular}{c| c c c c c c} 
 \hline
 $p$ & 1 & 0.5 & 0.222 & 0.625 & 0.4 & 0.278 \\ [0.5ex] 
 \hline
  $L$ & 50 & 25 & 25 & 5 & 5 & 5 \\
 \hline
\end{tabular}
\end{table}

For each $(p,L)$ pair, comparion graph $\mathcal{E}$, $\{y_{i,j}^{(\ell)},l\in[L],(i,j)\in\mathcal{E}\}$ is generated and the MLE $\tb_M$ is calculated based on the available data. This process is repeated $200$ times, and the averaged $\Vert \wh\ba_M-\ba^*\Vert_\infty$ and $\Vert \wh\bb_M-\bb^*\Vert_2/\Vert \bb^*\Vert_2$, as well as their associated standard deviations, are recorded. The results are depicted in Figure \ref{estimationerror} for each of the three instances. Note that $\Vert \wh\ba_M-\ba^*\Vert_\infty$ and $\Vert \wh\bb_M-\bb^*\Vert_2/\Vert \bb^*\Vert_2$ are nearly proportional to $\displaystyle{1}/{\sqrt{pL}}$, lending further support of the results in Theorem \ref{noregularization}. The results are insensitive to three different instances, as expected.
\begin{figure}[ht]
\begin{center}
\includegraphics[width=15cm]{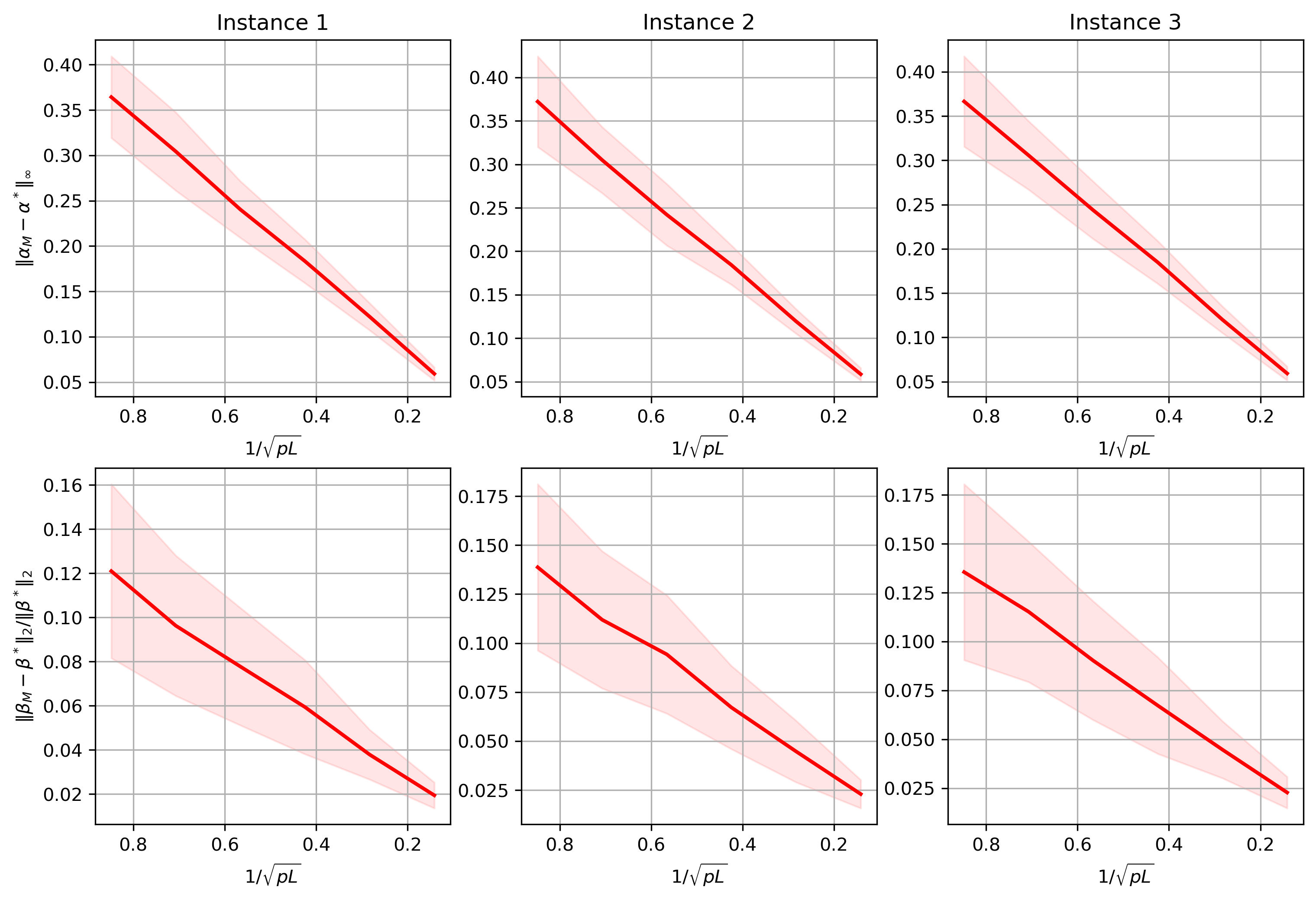}
\end{center}
\caption{Statistical rates of $\Vert \wh\ba_M-\ba^*\Vert_\infty$ and $\Vert \wh\bb_M-\bb^*\Vert_2/\Vert \bb^*\Vert_2$ for three simulated instances (realization of simulated models). The solid red lines and light areas represent the averaged $\Vert \wh\ba_M-\ba^*\Vert_\infty$, $\Vert \wh\bb_M-\bb^*\Vert_2/\Vert \bb^*\Vert_2$ and their associated standard errors based on 200 Monte Carlo simulations.}
\label{estimationerror}
\end{figure}

\subsection{Distributional Results}\label{synthetic2}
We employ the same method given in \S \ref{synthetic1} to generate the covariates $\bx_i$ and $\tb^*$ once and fix them throughout the simulation.  Letting the effective sample size $\displaystyle n_a = {n}/[(d+1)\log n]$ (here $n=200$ and $d=5$), we choose the $6$ pairs $(p,L)$  with $p = 1.25 / n_a$ or $p= 2/n_a$ and $L \in \left\{2,6,20\right\}$.  For each $(p, L)$ pair,  the graph $\mathcal{E}$ and data $\{y_{i,j}^{(\ell)},l\in[L],(i,j)\in\mathcal{E}\}$ are generated $250$ times and the MLEs $\tb_M$ for all simulations are recorded.  Figure \ref{alpha1QQ} presents the Q-Q plots for checking the normality of $(\wh\ba_M)_1$, the first component of  $\wh\ba_M$.    The results show that  $(\wh\ba_M)_1$ follows closely the normal distribution.

\begin{figure}[ht]
\begin{center}
\includegraphics[width=15cm]{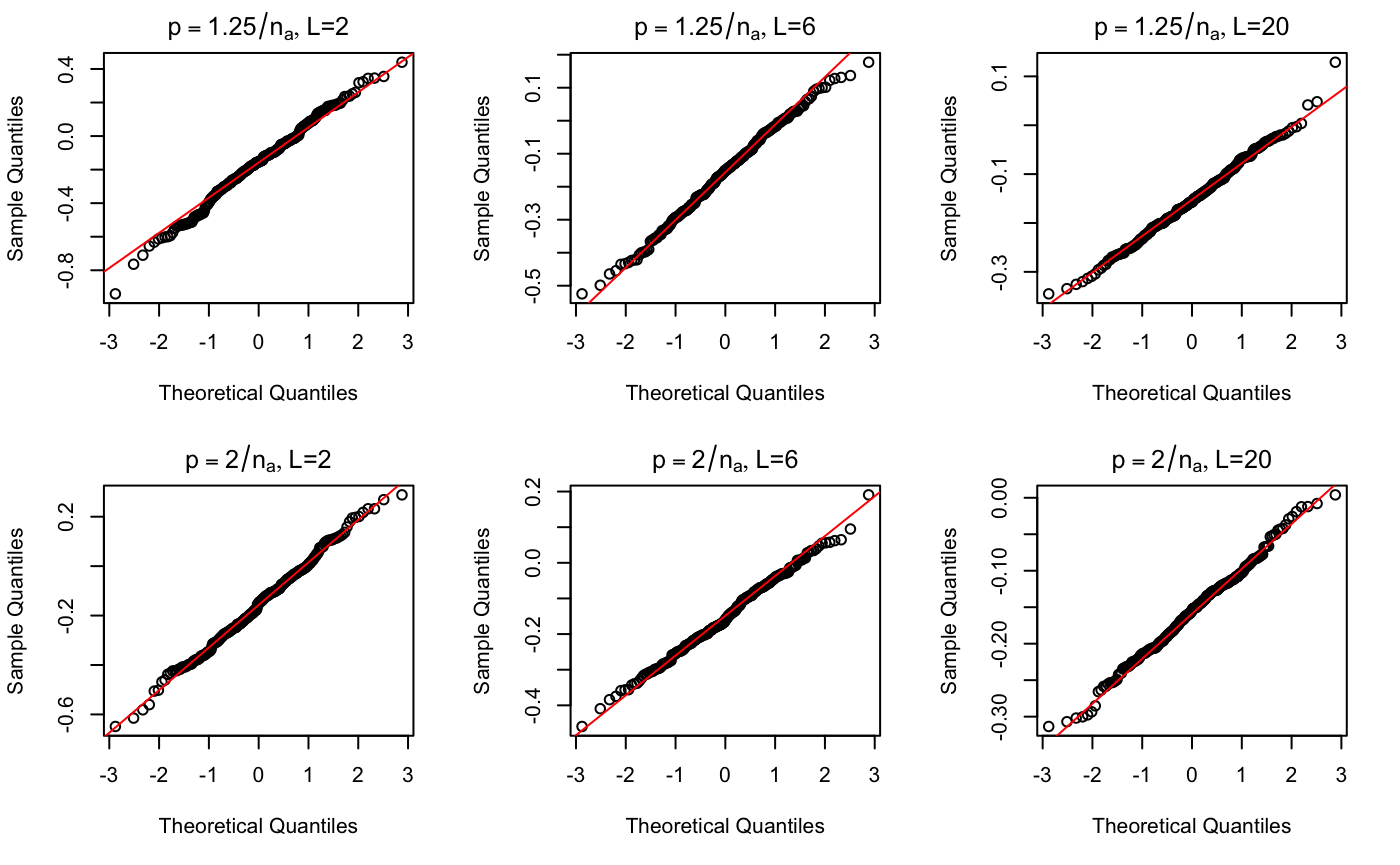}
\end{center}
\caption{Q-Q Plots for checking the normality of $(\wh\ba_M)_1$ based on 250 simulations.} 
\label{alpha1QQ}
\end{figure}

In addition to  checking the asymptotic normality, we now verify the asymptotic variance of our estimator. As an illustration, we consider the linear combination $\boldsymbol{c}^\top\tb_M$, where $\boldsymbol{c} = \boldsymbol{e}_1+\boldsymbol{e}_{201}$ and $\boldsymbol{e}_i$ is the $i$-th vector from the standard basis of $\mathbb{R}^{205}$.   Based on 250 simulations with  $(p,L) = (1.25/n_a, 2)$ and $(p,L) = (2/n_a, 20)$, the histograms of the following two standardized random variables are plotted:
\begin{align}\label{AB}
    A = \frac{\sqrt{L}\left(\boldsymbol{c}^\top\tb_M-\boldsymbol{c}^\top\tb^*\right)}{\sqrt{\bc^\top\left[\mathcal{P}\nabla^2\cL(\tb^*)\mathcal{P}\right]^{+}\bc}} \text{ and } B = \frac{\sqrt{L}\left(\boldsymbol{c}^\top\tb_M-\boldsymbol{c}^\top\tb^*\right)}{\sqrt{\bc^\top\left[\mathcal{P}\nabla^2\cL(\tb_M)\mathcal{P}\right]^{+}\bc}}.
\end{align}
This uses the asymptotic theory with plug-in asymptotic variance using the true and estimated parameters,
where $\bc = P_\Theta(\boldsymbol{c})$ is the projection of $\boldsymbol{c}$ onto linear space $\Theta$. 
Figure \ref{alpha1beta1variance} shows that the histograms closely follow the standard Gaussian density. The first row of Figure \ref{alpha1beta1variance} is presented in the regime with $(p,L) = (1.25/n_a, 2)$. It holds that, even when the sample size is very small, the two density plots are still very close to the standard Gaussian density. The second row of Figure \ref{alpha1beta1variance} is drawn based on the setting where $(p,L) = (2/n_a, 20)$. In this case, the density plots of $A$ and $B$ are more stable and close to the standard Gaussian curve. These results, in turn, support our theoretical results in Theorem \ref{betainference}.

\begin{figure}[ht]
\begin{center}
\includegraphics[width=12cm]{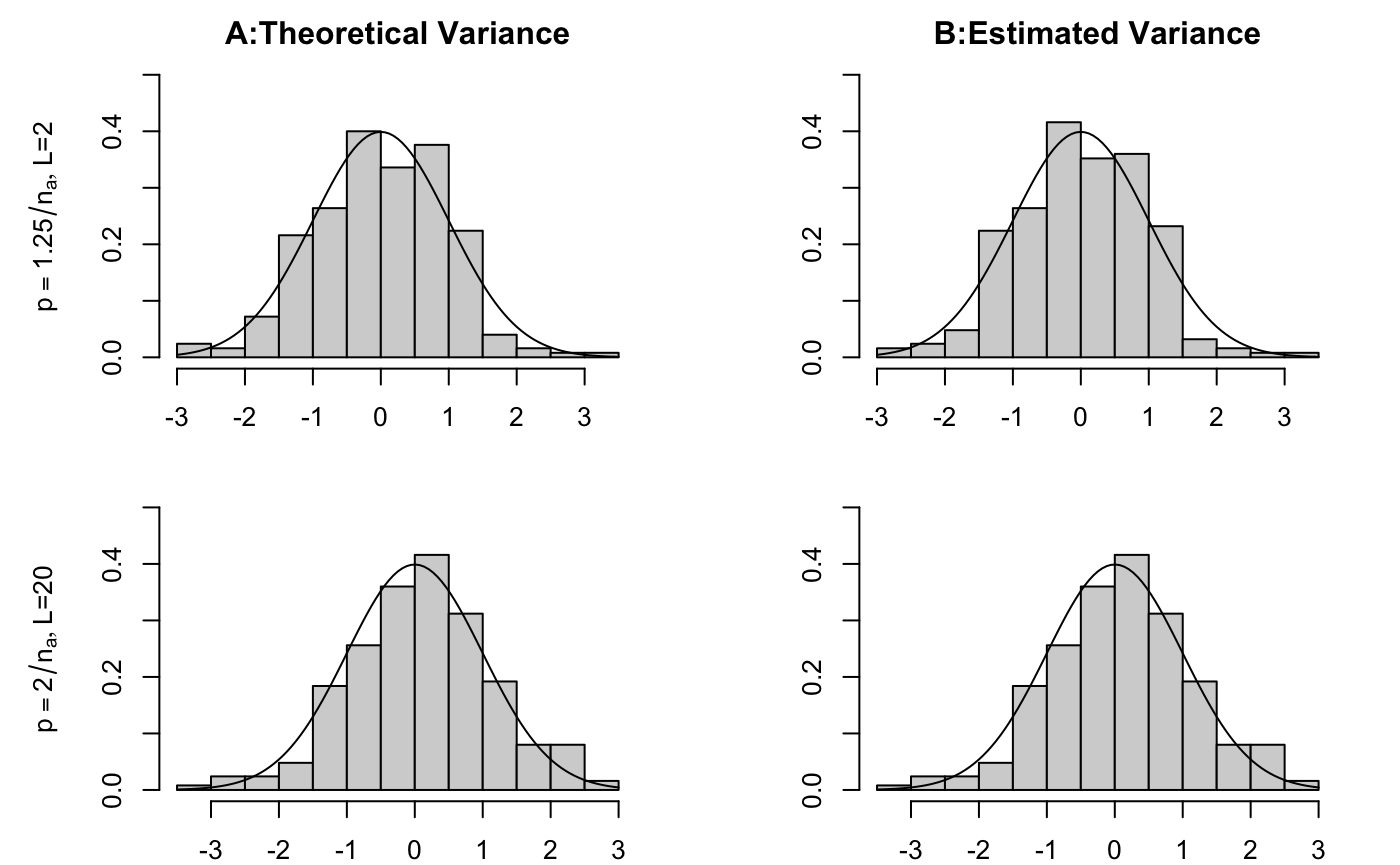}
\end{center}
\caption{Histograms of standardized random variables $A$ and $B$ in \eqref{AB} along with the density of the standardized Gaussian random variable. The first row is based on $(p,L) = (1.25/n_a, 2)$ and the second row is based on $(p,L) = (2/n_a, 20)$.}
\label{alpha1beta1variance}
\end{figure}
\subsection{Comparison with BTL Model Without Covariates}

{In this section, we conduct a series of simulations to compare the results with the model without using covariates in terms of the proportion of information being explained, prediction accuracy, and sensitivity analysis. 
\begin{itemize}

\item We show the first merit of our proposed method, in terms of the proportion of information being explained.

We record the following quantity $1 - \left\|\widehat{\balpha}_M\right\|_2^2 / \|\widehat{\balpha}_M +\bX\widehat{\bbeta}_M \|_2^2$, which quantifies the proportion of information being explained by the covariates. We consider the same $(p,L)$ pair presented in Section 5.1. For each $(p, L)$ pair, we generate the comparison graph $\mathcal{E}$, the comparison results $\{y_{i,j}^{(\ell)},l\in[L],(i,j)\in\mathcal{E}\}$ and solve the MLE $\widetilde{\bbeta}_M$ for $200$ times. We report the mean and standard deviation of the concerned quantity for each $(p, L)$ pair based on these $200$ repetitions. The results are presented in the following Table 1.

\begin{table}[h]\label{proportion_exp}
	\begin{center}
		\begin{tabular}[!htbp]{ p{2.5cm}|p{6cm}}
			$(p, L)$ & $1 - \left\|\widehat{\balpha}_M\right\|_2^2 / \|\widehat{\balpha}_M +\bX\widehat{\bbeta}_M \|_2^2$      \\
			\hline
			$(1, 50)$ &$0.370\pm 0.008$   \\ 
			$(0.5, 25)$ &$0.390\pm 0.016$    \\ 
			$(0.222, 25)$ &$0.415\pm 0.023$  \\
            $(0.625, 5)$ &$0.449\pm 0.028$  \\ 
			$(0.4, 5)$ &$0.486\pm 0.030$ \\ 
			$(0.278, 5)$ &$0.531\pm 0.032$  
		\end{tabular}
	\end{center}
	\caption{Mean and standard deviation of $1 - \left\|\widehat{\balpha}_M\right\|_2^2 / \|\widehat{\balpha}_M +\bX\widehat{\bbeta}_M \|_2^2$ for each $(p, L)$ pair based on $200$ repetitions.}
\end{table}

We conclude from Table 1, involving the covariates helps reduce the magnitude of unexplained information for all settings. Therefore, this further helps in making out-of-sample predictions, and we will discuss this point in the following step. 

\item  Second, we present the prediction performance of our model and compare it with BTL model without covariates. 
We generate a new set of covariates $[\bz_1,\bz_2,\dots,\bz_n]^\top$ where $\bz_{i}\in \mathbb{R}^{d}$ falls in the same column space of $\Xb$. 
With these new covariates, for any $i\neq j$, the out-of-sample probability $ \mathbb{P}\{\text {item } j \text { is preferred over item } i\}$ becomes 
\begin{align*}
    p_{i, j}^* = \frac{e^{\alpha_j^*+\bz_j^\top \bbeta^*}}{e^{\alpha_i^*+\bz_i^\top \bbeta^*}+e^{\alpha_j^*+\bz_j^\top \bbeta^*}}.
\end{align*}
Using the covariate information, the probability predicted by our model is
\begin{align*}
    \widehat{p}_{i, j}^{c} = \frac{e^{\widehat{\alpha}_{M, j}+\bz_j^\top \widehat{\bbeta}_{M}}}{e^{\widehat{\alpha}_{M, i}+\bz_i^\top \widehat{\bbeta}_{M}}+e^{\widehat{\alpha}_{M, j}+\bz_j^\top \widehat{\bbeta}_{M}}}.
\end{align*}
However, when one does not take the covariates information into consideration, the best one can do is to use the estimated score $\widehat{\theta}_i$ under the original BTL model, as \cite{chen2019spectral, gao2021uncertainty} did. In this case, the predicted probability is
\begin{align*}
    \widehat{p}_{i, j}^{nc} = \frac{e^{\widehat{\theta}_{j}}}{e^{\widehat{\theta}_{i}}+e^{\widehat{\theta}_{j}}}.
\end{align*}
In Table 2 we present the mean square error $\sum_{i<j}(\widehat{p}_{i,j}^{c} - p_{i,j}^*)^2$ and $\sum_{i<j}(\widehat{p}_{i,j}^{nc} - p_{i,j}^*)^2$ for the six $(p, L)$ pairs we mentioned above. The results show the mean and standard deviation of the mean square error calculated over $200$ repetitions. As we can see from Table 2, the estimator $\widehat{p}_{i, j}$, which takes the covariate information into consideration, performs much better than the one without covariate information.

\begin{table}[h]
	\begin{center}
		\begin{tabular}[!htbp]{ p{2.5cm}|p{5cm}p{5cm}}
			$(p, L)$ &  With covariates &  Without covariates   \\
			\hline
			$(1, 50)$ &$0.970\pm 0.094$ &  $110.325 \pm 1.377$ \\ 
			$(0.5, 25)$ &$3.909\pm 0.394$ & $113.453 \pm 3.086$    \\ 
			$(0.222, 25)$ &$8.975\pm 0.940$ & $118.796\pm 4.293$ \\
            $(0.625, 5)$ &$15.537\pm 1.572$ & $125.149\pm 5.441$ \\ 
			$(0.4, 5)$ &$24.874\pm 2.718$ & $134.796 \pm 7.432$ \\ 
			$(0.278, 5)$ &$36.487\pm 4.243$ & $146.286\pm 9.961$  
		\end{tabular}
	\end{center}

 \caption{Mean and standard deviation of mean square error $\sum_{i<j}(\widehat{p}_{i,j}^{c} - p_{i,j}^*)^2$ and $\sum_{i<j}(\widehat{p}_{i,j}^{nc} - p_{i,j}^*)^2$ under each setting based on $200$ simulations. }
\end{table}


\item Lastly, we test the sensitivity of our model by violating the linearity assumption of our model. And subsequently, we also compare its performance to that of the BTL model without covariates. 

We modify the underlying model to be
\begin{align*}
    p_{i, j}^* = \frac{e^{\alpha_j^*+\bw_j^\top \bbeta^* + g(\bw_j)}}{e^{\alpha_i^*+\bw_i^\top \bbeta^* + g(\bw_i)}+e^{\alpha_j^*+\bw_j^\top \bbeta^* + g(\bw_j)}},
\end{align*}
where $\bw_i = \bx_i$ when we fit the model and $\bw_i = \bz_i$ when we make prediction. Here $g(\cdot)$ is a nonlinear function. In our experiment, we fix $d = 5$ and let 
\begin{align}
    g(\bw_i) = c (0.1 (\bw_i)_1(\bw_i)_2 + 0.2(\bw_i)_2(\bw_i)_3+0.3 (\bw_i)_4(\bw_i)_5), \label{nonlinearity}
\end{align}
and $c$ is chosen to be $20$ and $100$ to accommodate different levels of non-linearity. With covariate information, the winning probability is still predicted as
\begin{align*}
    \widehat{p}_{i, j}^{c} = \frac{e^{\widehat{\alpha}_{M, j}+\bz_j^\top \widehat{\bbeta}_{M}}}{e^{\widehat{\alpha}_{M, i}+\bz_i^\top \widehat{\bbeta}_{M}}+e^{\widehat{\alpha}_{M, j}+\bz_j^\top \widehat{\bbeta}_{M}}}.
\end{align*}
while the predicted probability of the original BTL model is 
\begin{align*}
    \widehat{p}_{i, j}^{nc} = \frac{e^{\widehat{\theta}_{j}}}{e^{\widehat{\theta}_{i}}+e^{\widehat{\theta}_{j}}}.
\end{align*}
In Tables 3 and 4 we present the results when $c$ is chosen to be $20$ and $100$. We again consider the mean square error $\sum_{i<j}(\widehat{p}_{i,j}^{c} - p_{i,j}^*)^2$ and $\sum_{i<j}(\widehat{p}_{i,j}^{nc} - p_{i,j}^*)^2$ as the metric. The experiments are repeated $200$ times for each $(p, L)$ pair and we report the mean and standard deviation.

\begin{table}[h]
	\begin{center}
		\begin{tabular}[!htbp]{ p{2.5cm}|p{5cm}p{5cm}}
			$(p, L)$ &  With covariates &  Without covariates   \\
			\hline
			$(1, 50)$ &$4.160 \pm 0.271$ &  $201.799 \pm 2.056$ \\ 
			$(0.5, 25)$ &$7.173 \pm 0.625$ & $204.962 \pm 3.729$    \\ 
			$(0.222, 25)$ &$12.411 \pm 1.203$ & $209.903 \pm 6.010$ \\
            $(0.625, 5)$ &$18.871 \pm 1.885$ & $216.580 \pm 7.871$ \\ 
			$(0.4, 5)$ &$27.797 \pm 2.896$ & $225.813 \pm 11.310$ \\ 
			$(0.278, 5)$ &$39.372 \pm 4.024$ & $237.060 \pm 11.710$  
		\end{tabular}
	\end{center}
 \caption{Mean and standard deviation of mean square error $\sum_{i<j}(\widehat{p}_{i,j}^{c} - p_{i,j}^*)^2$ and $\sum_{i<j}(\widehat{p}_{i,j}^{nc} - p_{i,j}^*)^2$  under each setting based on $200$ simulations. The level of non-linearity in \eqref{nonlinearity} is chosen to be $c = 20$.}
 \label{c=20}
\end{table}

\begin{table}[h]
	\begin{center}
		\begin{tabular}[!htbp]{ p{2.5cm}|p{5cm}p{5cm}}
			$(p, L)$ &  With covariates &  Without covariates   \\
			\hline
			$(1, 50)$ &$49.074 \pm 0.960$ &  $102.936 \pm 1.438$ \\ 
			$(0.5, 25)$ &$51.679 \pm 1.986$ & $105.463 \pm 2.870$    \\ 
			$(0.222, 25)$ &$56.958 \pm 3.026$ & $111.269 \pm 4.317$ \\
            $(0.625, 5)$ &$63.858 \pm 4.076$ & $117.639 \pm 5.670$ \\ 
			$(0.4, 5)$ &$72.375 \pm 5.574$ & $125.958 \pm 7.556$ \\ 
			$(0.278, 5)$ &$84.260 \pm 7.122$ & $138.484 \pm 9.488$  
		\end{tabular}
	\end{center}
 \caption{Mean and standard deviation of mean square error $\sum_{i<j}(\widehat{p}_{i,j}^{c} - p_{i,j}^*)^2$ and $\sum_{i<j}(\widehat{p}_{i,j}^{nc} - p_{i,j}^*)^2$ under each setting based on $200$ simulations. The level of non-linearity in \eqref{nonlinearity} is chosen to be $c = 100$.}
 \label{c=100}
\end{table}
From Table 3 and 4 we can see that our model consistently performs better than the original BTL model when different levels of non-linearity exist.
\end{itemize}
}

\subsection{Application to Pokemon Challenge Dataset}\label{real_data_pokeman}

{We apply the proposed method to study the Pokemon Challenge Dataset. The original dataset can be found at \url{https://www.kaggle.com/c/intelygenz-pokemon-challenge/data}. This dataset records the pairwise competition records among 800 Pokemon, whose covariate information is also recorded. This dataset contains $50000$ competition results, each competition takes place between two pokemons and has one winner. 

Our experiments mainly focus on predicting the ability of the mega evolved pokemons. We think that a mega evolved pokemon has the same intrinsic ability (same $\alpha_i$) as pre-evolutionary pokemon, and the mega evolution may only change the covariates $\bx$. We have $48$ mega evolved pokemons in this dataset, and we randomly select $28$ of them to test our predictive performance.  Among these remaining $800-28 = 772$ pokemons for training purposes, 
we select the largest connected component of their comparison graph. Eventually we have $757$ pokemons left for training. For each pokemon, we select $\log(\emph{Attack})$, $\log(\emph{HP})$, \emph{Mega or not} as their covariates. Here \emph{Attack} and \emph{HP} denote the ability to attack and durability, respectively. The variable \emph{Mega or not} takes binary value and represents whether this pokemon is mega evolved or not. We optimize the likelihood of our CARE model in \eqref{nll} using training data and record $\hat\balpha_M$ and $\hat\bbeta_M$. 

We first investigate the statistical significance of these 3 variables we are interested in. This amounts to testing the following hypothesis testing problems for each feature:
\begin{align*}
    H_0: \beta_i^* = 0\quad  \text{ v.s. } \quad H_a: \beta_i^* \neq 0,\; i\in [3].
\end{align*}
The test statistics are given by $\beta_{M, i} / \sqrt{([\cP\nabla^2\cL(\tb_M)\cP]^+)_{n+i, n+i}}$ for all $i\in [3]$ and the corresponding p-values are calculated via the asymptotic normality results in \S\ref{Inference}. The results are depicted in Table \ref{pvalues}, and these three variables are all statistically significant.

\begin{table}[h]
	\begin{center}
		\begin{tabular}[!htbp]{ p{2.5cm}|p{5cm}p{5cm}}
			 &  Estimate &  p-value   \\
			\hline
			Attack &$2.743$ &  $<1e-5$ \\ 
			HP &$3.759$ & $<1e-5$    \\ 
			Mega or not &$1.603$ & $<1e-5$
		\end{tabular}
	\end{center}
 \caption{P-values for the regression coefficients}
 \label{pvalues}
\end{table}

We then evaluate the competitions performance of the $28$ mega evolved pokemons in the test sample, whose pre-evolutionary versions are the training data. We predict the score of an evolved pokemon as $\hat\theta^{\text{predicted}}:=\textsf{SOFT}(\hat\alpha_{M, \text{pe}}, \tau_{\text{pe}})+\boldsymbol{z}^\top \hat\bbeta_M,$ where $\hat\alpha_{M, \text{pe}}$ is the estimated intrinsic score of the pre-evolutionary version and $\bz$ is the new covariates of this mega evolved pokemon. Here we apply a soft-thresholding to the $\ba$ part with $\textsf{SOFT}(x, \tau) = \sign(x) \cdot\max\{|x| - \tau, 0\}$ and $\tau_i = \Phi^{-1}(1 - 0.025 / 757)\cdot \sqrt{([\cP\nabla^2\cL(\tb_M)\cP]^+)_{i, i}}$ for each pokemon $i$ in the training dataset. This corresponds to set those estimates $\hat\alpha_i$ that are statistically indifferent from
zero to zero. The use of significant level $0.025 / 757$ is to control the familywise false positive rates at a level of $0.05$. 

In order to formalize the metric we used to test our procedure, we introduce the following notation first. Let $\mathcal{T}_2$ be set of the pokemons we want to predict and let $\mathcal{T}_1$ be set of the pokemons in the training data who have competitions with pokemons in $\mathcal{T}_2$. Given any two pokemons $i$ and $j$, we let $D_{i, j}$ be the competitions between $i$ and $j$ and define $y_{i, j} = \sum_{y\in D_{i, j}}y / |D_{i, j}|$. For each pokemon $i$ from $\mathcal{T}_1$, we further define the estimated score as $\hat\theta^{\text{estimated}}_i:=\textsf{SOFT}(\hat\alpha_{M, i}, \tau_i)+\bx_i^\top \wh\bb_M$. We use the following quantities $MSE_c$ and $Logistic_c$ as measures of the prediction loss for pokemons in $\cT_2$ when we have covariate information
\begin{align*}
    MSE_{c} =& \sum_{i\in \cT_1, j \in \cT_2}\left(\frac{e^{\hat\theta^{\text{predicted}}_j}}{e^{\hat\theta^{\text{estimated}}_i}+ e^{\hat\theta^{\text{predicted}}_j}} - y_{i,j}\right)^2 + \sum_{i, j \in \cT_2}\left(\frac{e^{\hat\theta^{\text{predicted}}_j}}{e^{\hat\theta^{\text{predicted}}_i}+ e^{\hat\theta^{\text{predicted}}_j}} - y_{i, j}\right)^2,\\
    Logistic_{c} =& -\sum_{i\in \cT_1, j \in \cT_2} \left(y_{i, j}\log\left(\frac{e^{\hat\theta^{\text{predicted}}_j}}{e^{\hat\theta^{\text{estimated}}_i}+ e^{\hat\theta^{\text{predicted}}_j}}\right)+(1-y_{i, j})\log\left(\frac{1}{e^{\hat\theta^{\text{estimated}}_i}+ e^{\hat\theta^{\text{predicted}}_j}}\right)\right) \\
    &-\sum_{i, j \in \cT_2}\left(y_{i, j}\log\left(\frac{e^{\hat\theta^{\text{predicted}}_j}}{e^{\hat\theta^{\text{estimated}}_i}+ e^{\hat\theta^{\text{predicted}}_j}}\right)+(1-y_{i, j})\log\left(\frac{1}{e^{\hat\theta^{\text{estimated}}_i}+ e^{\hat\theta^{\text{predicted}}_j}}\right)\right).
\end{align*}
As a comparison, for the original BTL model, we let $\widehat{\theta}^{\text{BTL}}$ be the estimated score under the original BTL model. Since there no covariate information is involved, for pokemons from $\mathcal{T}_2$, their $\hat\theta^{\text{predicted}}$ are set to be the estimated scores of their pre-evolutionary versions. In addition, for pokemons from $\mathcal{T}_1$, their scores $\hat\theta^{\text{estimates}}$ are set to be $\hat\theta^{\text{BTL}}$ estimated via the training data. The loss for our method and original BTL model are reported in Table \ref{prediction}. As we can see, our model achieves a significant improvement over the original BTL model.

\begin{table}[h]
	\begin{center}
		\begin{tabular}[!htbp]{p{3cm} |p{5cm}p{5cm}}
			   Loss&With covariates &  Without covariates   \\
			\hline
			
			MSE&$268.93$ & $302.04$  \\
            Logistic& $752.10$ & $875.45$
		\end{tabular}
	\end{center}
 \caption{Prediction errors for the model with covariates and without using covariates, respectively}
 \label{prediction}
\end{table}
}

\section{Conclusion and Discussion}
In this paper, we study a Covariate-Assisted Ranking Estimation (CARE) model systematically. This allows us to incorporate the covariate information of compared items into the ranking framework, which includes the standard BTL model as a particular case. We derive the minimal sample complexity required for statistical consistency and uncertainty quantification for MLE based on novel proof techniques and illustrate the theory and methods using  the mutual fund holding dataset. The empirical results lend further support to the CARE model over the classical BTL model.

There are a few future directions worth exploring. First, it is worth extending the idea of incorporating covariates into a more general ranking framework, such as the Plackett-Luce or nonparametric models, under a more general comparison graph. In contrast, our work only studies the BTL model with the Erdős-Rényi comparison graph. Second, it would be interesting if some structure assumptions exist on the parameters $\{\alpha_i^*\}_{i=1}^n$ and $\bb^*$, such as sparsity. In this scenario, one shall leverage certain regularizers on $\balpha$ and $\bb$ in the likelihood function to achieve a solution that generalizes well. Third, except for the covariate, one may also incorporate time information into the ranking framework as in many real applications, the underlying scores of compared items change over time. Lastly, in our paper, we consider the scenario where the underlying score of the $i$-th item is given by $\alpha_i^*+\mathbf{x}_i^\top\bb^*,i\in[n],$ in the sense that the overall score of the $i$-th item is the summation of its intrinsic score $\{\alpha_i^*\}_{i=1}^n$ and its covariate times one specific evaluation criterion $\bb$. It would be interesting if we do a ranking based on data evaluated from multiple sources. In specific, suppose that we have $L$ users and $n$ items and the score of the $i$-th item, $i\in[n]$ evaluated by the $\ell$-th person, $\ell\in [L],$ is $\alpha_i^*+\mathbf{x}_i^\top\bb_{\ell}$.  It would be interesting to derive novel statistical estimation and uncertainty quantification principles for ranking models under this setting. We will leave these open problems for future research.

\acks{Research supported by the NSF grants DMS-2052926, DMS-2053832, DMS-2210833, and ONR  N00014-22-1-2340.}


\newpage

\appendix
\begin{center}
{ \textbf{\LARGE Supplementary materials to ``Uncertainty Quantification of MLE for Entity Ranking
		with Covariates"}}
	\end{center}
\section{Proof Outline of Estimation Results}\label{EstimationOutline}
In this section, we first present the proof outline for Theorem \ref{noregularization}. The detailed proof of Theorem \ref{noregularization} is presented in \S \ref{regularizedtoriginal}. 

 To understand statistical error of  $\tb_M$, we begin with analyzing the regularized MLE $\tb_\lambda$ 
\begin{align}
    \tb_\lambda = \argmin\limits_{\tb\in \Theta} \mathcal{L}_\lambda(\tb),\label{MLE}
\end{align}
where $\mathcal{L}_\lambda(\tb) := \mathcal{L}(\tb)+\frac{\lambda}{2}\Vert \tb\Vert_2^2$ for $\lambda > 0,$ and then make connections with $\tb_M$ by a properly chosen $\lambda$. 
{The introduction of \(\ell_2\)- regularization as a intermediate step is essential for ensuring the MLE fall in a bounded area around the ground truth through this regularized MLE. Therefore, strong convexity of the loss holds in this bounded area.
Prevailing methodologies for examining the BTL model (\cite[Theorem 6]{chen2019spectral} and \cite[Lemma 8.5]{chen2022partial}) also relies on this $\ell_2$ regularization to ensure strong convexity of the loss.}

Before proceeding, we introduce the following two quantities $\kappa_2$ and $\kappa_3$ indicating the difficulty of recovering $\tb^*$
\begin{align*}
    \kappa_2 := \max_{i\in [n]}|\alpha_i^*|, \quad \kappa_3 := \frac{\left\Vert \tb^*\right\Vert_2}{\sqrt{n(d+1)}}.
\end{align*}
The regularized MLE solves a strong convex problem whose  estimation error bounds are derived in Theorem \ref{mainthm} below. 

\begin{theorem}\label{mainthm}
Suppose $np > c_p\log n$ for some $c_p>0$ and $d< n, d\log n\lesssim np$. We consider $L \leq  c_4\cdot n^{c_5}$ for any absolute constants $c_4,c_5>0$ and$$\displaystyle\lambda = c_{\lambda}\min\left\{\frac{\kappa_1}{\kappa_2},\frac{1}{\kappa_3\sqrt{d+1}}\right\}\sqrt{\frac{np\log n}{L}}$$ for some $c_\lambda>0$. Let $\tb_\lambda = (\wh\ba_\lambda^\top,\wh\bb_\lambda^\top)^\top$ be the solution of the regularized MLE Eq.~\eqref{MLE}. Then with probability at least $1-O(n^{-6})$, we have
\begin{align*}
\Vert \wh\ba_\lambda-\ba^*\Vert_\infty&\lesssim \kappa_1^2\sqrt{\frac{(d+1)\log n}{npL}}; \quad
\left\Vert \wh\bb_\lambda-\bb^*\right\Vert_2 \lesssim\kappa_1\sqrt{\frac{\log n}{pL}};\\
\left\Vert\tX\tb_\lambda-\tX\tb^*\right\Vert_\infty&\lesssim \kappa_1^2 \sqrt{\frac{(d+1)\log n}{npL}}; \quad
 \frac{\left\Vert e^{\tX\tb_\lambda}-e^{\tX\tb^*}\right\Vert_\infty}{\left\Vert e^{\tX\tb^*}\right\Vert_\infty} \lesssim \kappa_1^2 \sqrt{\frac{(d+1)\log n}{npL}},
\end{align*}
where $\tX = [\tx_1,\tx_2,\dots,\tx_n]^\top$.
\end{theorem}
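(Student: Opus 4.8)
The plan is to exploit that the regularized objective $\cL_\lambda(\tb)=\cL(\tb)+\tfrac{\lambda}{2}\Vert\tb\Vert_2^2$ is strongly convex on $\Theta$, so that $\tb_\lambda$ is its unique minimizer there, and then to run a two-stage argument: first obtain the $\ell_2$ rate from a restricted strong-convexity lower bound on the Hessian together with a concentration bound for the projected gradient at $\tb^*$, and then bootstrap to the $\ell_\infty$ guarantees through the leave-one-out device of \cite{chen2019spectral}. The regularization parameter $\lambda$ is tuned so that the bias it introduces matches the stochastic error, which is exactly why it carries the factor $\min\{\kappa_1/\kappa_2,\,1/(\kappa_3\sqrt{d+1})\}$.

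\textbf{Curvature and gradient.} First I would record
\begin{align*}
\nabla\cL(\tb) &= \sum_{(i,j)\in\mathcal{E},\,i>j}\bigl\{\phi\bigl((\tx_i-\tx_j)^\top\tb\bigr)-y_{j,i}\bigr\}(\tx_i-\tx_j),\\
\nabla^2\cL(\tb) &= \sum_{(i,j)\in\mathcal{E},\,i>j}\phi'\bigl((\tx_i-\tx_j)^\top\tb\bigr)(\tx_i-\tx_j)(\tx_i-\tx_j)^\top,
\end{align*}
with $\phi'(t)=\phi(t)(1-\phi(t))$. The deterministic engine is that whenever $\max_{i,j}|(\tx_i-\tx_j)^\top\tb|\lesssim\log\kappa_1$ one has $\phi'\gtrsim\kappa_1^{-1}$, so on $\Theta$ the weighted edge sum dominates $\kappa_1^{-1}$ times the unweighted one. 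A matrix concentration bound for $\mathcal{G}_{n,p}$, valid once $np\gtrsim\log n$, shows that $\sum_{(i,j)\in\mathcal{E}}(\tx_i-\tx_j)(\tx_i-\tx_j)^\top$ concentrates around $p\boldsymbol{\Sigma}$ uniformly on $\Theta$; combined with Assumption \ref{ass1} this yields $\lambda_{\min,\perp}\bigl(\nabla^2\cL(\tb)\bigr)\gtrsim np/\kappa_1$ with high probability for all $\tb$ near $\tb^*$, hence $\cL_\lambda$ is $\gtrsim np/\kappa_1$-strongly convex on $\Theta$.

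\textbf{The $\ell_2$ rate.} Conditional on $\mathcal{G}$ the summands of $\nabla\cL(\tb^*)$ are independent and mean zero, since $\E[y_{j,i}]=\phi((\tx_i-\tx_j)^\top\tb^*)$, and each has variance of order $1/L$; a Bernstein argument then controls $\Vert\mathcal{P}\nabla\cL(\tb^*)\Vert_2$. Feeding this and the strong-convexity modulus into the first-order condition $\mathcal{P}\bigl[\nabla\cL(\tb_\lambda)+\lambda\tb_\lambda\bigr]=\mathbf{0}$, and charging the regularization term $\lambda\mathcal{P}\tb^*$ — which under the stated $\lambda$ is of the same order as the stochastic fluctuation — yields $\Vert\tb_\lambda-\tb^*\Vert_2\lesssim\kappa_1^2\sqrt{(d+1)\log n/(pL)}$. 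Restricting to the covariate block and undoing the $\sqrt{(d+1)/n}$ rescaling of the $\bx_i$ then gives $\Vert\wh\bb_\lambda-\bb^*\Vert_2\lesssim\kappa_1\sqrt{\log n/(pL)}$.

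\textbf{Leave-one-out and the $\ell_\infty$ rate; the main obstacle.} To sharpen the coordinatewise bound on $\wh\ba_\lambda-\ba^*$, I would for each $m\in[n]$ introduce an auxiliary estimator $\tb_\lambda^{(m)}$ solving the same program with every comparison incident to item $m$ replaced by its conditional mean $\phi(\cdot)$, so that $\tb_\lambda^{(m)}$ is independent of $\{y_{m,j}\}_j$. The bound then splits as $|\wh\alpha_{\lambda,m}-\alpha_m^*|\le|\be_m^\top(\tb_\lambda-\tb_\lambda^{(m)})|+|\be_m^\top(\tb_\lambda^{(m)}-\tb^*)|$, the first term being a stability estimate obtained by subtracting the two optimality conditions and inverting the restricted Hessian, the second concentrating by independence. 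Combining the resulting $\ell_\infty$ bound with the Lipschitz contribution $|\bx_m^\top(\wh\bb_\lambda-\bb^*)|\le\Vert\bx_m\Vert_2\Vert\wh\bb_\lambda-\bb^*\Vert_2$ controls $\Vert\tX\tb_\lambda-\tX\tb^*\Vert_\infty$, and the multiplicative score bound follows from $|e^{t}-1|\lesssim|t|$ together with $e^{\tx_i^\top\tb^*}\le\Vert e^{\tX\tb^*}\Vert_\infty$ and the $\kappa_1$-control of the exponents. I expect this leave-one-out step to be the hard part, for two reasons particular to the covariate model: the loading $\bb^*$ is a global parameter shared across all items, so the perturbation $\tb_\lambda-\tb_\lambda^{(m)}$ must be tracked simultaneously in its $\alpha_m$-coordinate and in its $\bb$-block; and the identifiability constraint $\bar\bX^\top\balpha=\mathbf{0}$ entangles $\balpha$ with the covariates through $\mathcal{P}$, so the independence that the leave-one-out exploits is only approximate and must be rescued using the incoherence of Assumption \ref{ass0} (the factor $\sqrt{(d+1)/n}$) to decouple item $m$ from the remaining design. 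It is exactly this interaction that produces the extra $(d+1)$ factor and the higher powers of $\kappa_1$ in the final $\ell_\infty$ rates.
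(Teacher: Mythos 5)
Your outline assembles the right ingredients --- the Bernstein bound on $\Vert\mathcal{P}\nabla\cL_\lambda(\tb^*)\Vert_2$, the matrix-Chernoff concentration of $\sum_{(i,j)\in\mathcal{E}}(\tx_i-\tx_j)(\tx_i-\tx_j)^\top$ around $p\boldsymbol{\Sigma}$, the $\phi'\gtrsim\kappa_1^{-1}$ curvature estimate, and a leave-one-out decomposition for the $\ell_\infty$ rate --- and all of these appear in the paper. But the second step contains a circularity that the paper's proof is specifically structured to avoid. The restricted strong convexity modulus of order $np/\kappa_1$ (Lemma \ref{lb}) holds only on the set where $\Vert\ba-\ba^*\Vert_\infty$ and $\sqrt{(d+1)/n}\,\Vert\bb-\bb^*\Vert_2$ are bounded by constants; to feed it into the first-order condition at $\tb_\lambda$ you must already know that the entire segment $[\tb^*,\tb_\lambda]$ lies in that set, i.e., you need (a version of) the $\ell_\infty$ conclusion before you can derive the $\ell_2$ one. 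The only localization available without it is the $\lambda$-strong convexity of the ridge term, which gives $\Vert\tb_\lambda-\tb^*\Vert_2\le 2\Vert\nabla\cL_\lambda(\tb^*)\Vert_2/\lambda\lesssim\max\{\kappa_2/\kappa_1,\kappa_3\sqrt{d+1}\}\sqrt{n}$ (Lemma \ref{lem3}) --- far too weak to certify $\Vert\wh\ba_\lambda-\ba^*\Vert_\infty=O(1)$. Your phrase ``for all $\tb$ near $\tb^*$'' is precisely the unproved step.

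The paper breaks this circle by never analyzing $\tb_\lambda$ directly: it runs projected gradient descent initialized at $\tb^0=\tb^*$, constructs leave-one-out sequences $\tb^{t,(m)}$ along the iterates, and proves four coupled bounds --- on $\Vert\tb^t-\tb^*\Vert_2$, on $\max_m\Vert\tb^t-\tb^{t,(m)}\Vert_2$, on $\max_m|\alpha_m^{t,(m)}-\alpha_m^*|$, and on $\Vert\ba^t-\ba^*\Vert_\infty$ --- by simultaneous induction on $t$ (Lemmas \ref{induction1}--\ref{induction4}); the induction hypothesis at step $t$ is exactly what certifies the curvature needed at step $t+1$. The conclusion for $\tb_\lambda$ is then recovered from the geometric convergence $\Vert\tb^T-\tb_\lambda\Vert_2\le\rho^T\Vert\tb^0-\tb_\lambda\Vert_2$ after $T=n^5$ iterations (Lemma \ref{lem5}). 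If you prefer your ``static'' leave-one-out minimizers $\tb_\lambda^{(m)}$, you would still need some induction or continuity device to localize them and $\tb_\lambda$ jointly; as written the proposal supplies none. A secondary point: the direct computation with modulus $np/\kappa_1$ and gradient size $\sqrt{n^2p\log n/L}$ gives the $\ell_2$ rate $\kappa_1\sqrt{\log n/(pL)}$ (the paper's bound (A)), not $\kappa_1^2\sqrt{(d+1)\log n/(pL)}$, and the bound on $\wh\bb_\lambda$ is simply the restriction of that $\ell_2$ bound to the covariate block --- no ``undoing of the rescaling'' is involved.
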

Theorem \ref{mainthm} presents the statistical rates of the regularized MLE in \eqref{MLE}. Before presenting the formal proof for Theorem \ref{mainthm}, in the following several subsections  \ref{app:preliminary}, \ref{app:gradient} and \ref{leaveoneout} , we focus on establishing its associated building blocks.


{Given that our final objective is to establish the statistical rates of the Maximum Likelihood Estimator (MLE) applied to the unregularized loss function \eqref{MLEun}, we present a formal proof of Theorem \ref{noregularization} in \S \ref{regularizedtoriginal}, leveraging the insights provided by Theorem \ref{mainthm}.}
 
 \subsection{Preliminaries and Basic Results}\label{app:preliminary}
In this subsection, we study the theoretical properties of the gradient and Hessian of the loss function in \eqref{MLE}.   Their expressions are given by 
\begin{align}
    \nabla \mathcal{L}(\tb) &= \sum_{(i,j)\in\mathcal{E}, i>j}\left\{-y_{j,i}+\frac{e^{\tx_i^\top \tb}}{e^{\tx_i^\top \tb}+e^{\tx_j^\top \tb}}\right\}(\tx_i-\tx_j), \label{formulagradient} \\
    \nabla^2 \mathcal{L}(\tb) &= \sum_{(i,j)\in\mathcal{E}, i>j}\frac{e^{\tx_i^\top \tb}e^{\tx_j^\top \tb}}{\left(e^{\tx_i^\top \tb}+e^{\tx_j^\top \tb}\right)^2}(\tx_i-\tx_j)(\tx_i-\tx_j)^\top.\label{formulaHessian}
\end{align}
The gradient of $\cL(\tb)$ at $\tb^*$ is controlled by the following lemma.
\begin{lemma}\label{gradient}
   With $\lambda$ given by Theorem \ref{mainthm}, the following event 
\begin{align*}
    \mathcal{A}_1 = \left\{\left\|\nabla \mathcal{L}_{\lambda}\left(\tb^*\right)\right\|_{2} \leq C_0\sqrt{\frac{n^{2} p \log n}{L}}\right\}
\end{align*}
happens with probability exceeding $1-O(n^{-11})$ for some $C_0>0$ which only depend on $c_\lambda$.
\end{lemma}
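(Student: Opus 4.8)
The starting point is the decomposition $\nabla\mathcal{L}_\lambda(\tb^*) = \nabla\mathcal{L}(\tb^*) + \lambda\tb^*$, which reduces the claim to controlling the two summands at the target order $\sqrt{n^2 p\log n/L}$. The ridge term is handled directly: since $\kappa_3 = \|\tb^*\|_2/\sqrt{n(d+1)}$, we have $\|\tb^*\|_2 = \kappa_3\sqrt{n(d+1)}$, and the prescribed choice $\lambda \le c_\lambda(\kappa_3\sqrt{d+1})^{-1}\sqrt{np\log n/L}$ yields $\lambda\|\tb^*\|_2 \le c_\lambda\sqrt{n^2 p\log n/L}$ deterministically. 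Thus everything rests on showing $\|\nabla\mathcal{L}(\tb^*)\|_2 \lesssim \sqrt{n^2 p\log n/L}$ with probability $1 - O(n^{-11})$.

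For the stochastic term I would use the gradient formula \eqref{formulagradient}. Because $\E[y_{j,i}] = e^{\tx_i^\top\tb^*}/(e^{\tx_i^\top\tb^*}+e^{\tx_j^\top\tb^*})$, writing $A_{ij}$ for the (independent, $\mathrm{Bernoulli}(p)$) edge indicator and $\zeta_{ij} := \E[y_{j,i}] - y_{j,i}$ gives
\[
\nabla\mathcal{L}(\tb^*) \;=\; \sum_{i>j} A_{ij}\,\zeta_{ij}\,(\tx_i - \tx_j),
\]
a sum of independent, mean-zero random vectors in $\mathbb{R}^{n+d}$. I would bound its Euclidean norm with the matrix Bernstein inequality applied to these vectors (viewed as single-column matrices). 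The key quantitative inputs are: (i) each $\zeta_{ij}$ is the centered average of $L$ i.i.d.\ $\{0,1\}$ comparisons, hence obeys a Bernstein moment bound with variance proxy $\mathrm{Var}(\zeta_{ij}) \le 1/(4L)$ and, crucially, scale $O(1/L)$; and (ii) $\|\tx_i - \tx_j\|_2^2 = 2 + \|\bx_i - \bx_j\|_2^2 = O(1)$ after the rescaling $\|\bx_i\|_2 \le \sqrt{(d+1)/n}$. Consequently the per-edge increments have effective scale $R = O(1/L)$, while the variance proxy is deterministic: the scalar part is $\sum_{i>j} p\,\mathrm{Var}(\zeta_{ij})\|\tx_i-\tx_j\|_2^2 \le \tfrac{p}{4L}\tr(\boldsymbol{\Sigma}) = O(n^2 p/L)$ (using $\tr(\boldsymbol{\Sigma}) = O(n^2)$), and the matrix part is $\big\|\tfrac{p}{L}\sum_{i>j}\mathrm{Var}(\zeta_{ij})(\tx_i-\tx_j)(\tx_i-\tx_j)^\top\big\| \le \tfrac{p}{4L}\|\boldsymbol{\Sigma}\| \le \tfrac{c_1 np}{4L}$ by Assumption \ref{ass1}. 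Hence the matrix variance is $v = O(n^2 p/L)$.

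With these bounds, choosing $t = C\sqrt{n^2 p\log n/L}$ for a large absolute $C$, matrix Bernstein gives a failure probability $\le (n+d+1)\exp\!\big(-\tfrac{t^2/2}{v + Rt/3}\big)$. The ratio $Rt/v = O\big(\sqrt{\log n/(L\,n^2 p)}\big) = o(1)$ uniformly over the admissible regime, so the variance term dominates the denominator and the exponent is $\gtrsim C^2\log n$; taking $C$ large makes the probability $O(n^{-11})$ after absorbing the dimensional prefactor $n+d+1 \le 2n$. Combining with the ridge bound and setting $C_0 \asymp C + c_\lambda$ establishes $\mathcal{A}_1$.

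The step I expect to be the crux is obtaining a bound valid for \emph{all} $L \le c_4 n^{c_5}$, including very large $L$. A naive application of matrix Bernstein using the almost-sure per-edge bound $|\zeta_{ij}| \le 1$ would give scale $R = O(1)$, for which $Rt/v \asymp \sqrt{L\log n/(n^2 p)}$ blows up once $L \gtrsim n^2 p/\log n$. The resolution is to exploit that $\zeta_{ij}$ is an $L$-fold average and therefore satisfies a Bernstein moment condition with scale $O(1/L)$ rather than $O(1)$ --- equivalently, one may condition on $\mathcal{G}$ and expand $\zeta_{ij}$ into its $L$ per-trial increments, each bounded by $O(1/L)$, before invoking the bounded matrix Bernstein inequality. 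The remaining computations (the estimates $\tr(\boldsymbol{\Sigma}) = O(n^2)$ and $\|\boldsymbol{\Sigma}\| \le c_1 n$, together with the constant chasing) are routine.
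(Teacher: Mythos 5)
Your proposal is correct and follows essentially the same route as the paper: the paper likewise splits off the ridge term $\lambda\tb^*$ (bounded by $c_\lambda\sqrt{n^{2}p\log n/L}$ via the definition of $\kappa_3$) and applies matrix Bernstein to the per-trial increments $z_{i,j}^{(l)}/L$, each of norm at most $\sqrt{6}/L$, with variance statistic $\max\{L^{-1}\|\boldsymbol{L}_{\mathcal{G}}\|,\,6|\mathcal{E}|/L\}\lesssim n^{2}p/L$ — i.e., exactly the ``condition on $\mathcal{G}$ and expand $\zeta_{ij}$ into its $L$ increments'' variant you identify at the end as the way to get scale $O(1/L)$. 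The only cosmetic difference is that the paper controls the variance proxy through high-probability bounds on $\|\boldsymbol{L}_{\mathcal{G}}\|$ and $|\mathcal{E}|$ rather than deterministically through $\boldsymbol{\Sigma}$, which changes nothing in the conclusion.
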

\begin{proof}
For the proof of Lemma \ref{gradient}, we refer to \S\ref{proofgradient} for more details. 
\end{proof}
Next, we proceed to analyzing the Hessian matrix $\nabla^2 \mathcal{L}_\lambda(\tb)$. First, we consider $\LG = \sum_{(i,j)\in\mathcal{E}, i>j}(\tx_i-\tx_j)(\tx_i-\tx_j)^\top$ and study  its eigenvalues in Lemma \ref{eigenlem}.
\begin{lemma}\label{eigenlem}
Suppose $pn > c_p\log n$ for some $c_p>0$. The following event
\begin{align*}
    \mathcal{A}_2 = \left\{\frac{1}{2}c_2pn\leq \lambda_{\text{min}, \perp}(\LG)\leq \Vert \LG\Vert\leq 2c_1pn \right\}
\end{align*}
happens with probability exceeding $1-O(n^{-11})$ when $n$ is large enough.
\end{lemma}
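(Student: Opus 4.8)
Write $\LG=\sum_{i>j}\eta_{ij}(\tx_i-\tx_j)(\tx_i-\tx_j)^\top$, where $\eta_{ij}=\mathbf{1}\{(i,j)\in\mathcal{E}\}$ are i.i.d.\ $\mathrm{Bernoulli}(p)$ variables, so that $\mathbb{E}\,\LG=p\boldsymbol\Sigma$ with $\boldsymbol\Sigma=\sum_{i>j}(\tx_i-\tx_j)(\tx_i-\tx_j)^\top$ the complete-graph matrix of Assumption~\ref{ass1}. The plan is to show that $\LG$ concentrates around its mean $p\boldsymbol\Sigma$ in operator norm via the matrix Bernstein inequality, and then to pass to the eigenvalue statements through Weyl's inequality. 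A key structural remark is that each summand $(\tx_i-\tx_j)(\tx_i-\tx_j)^\top$, and hence $\LG-p\boldsymbol\Sigma$, is symmetric with range contained in $\Theta=\mathrm{span}\{\tx_i-\tx_j\}$; consequently it annihilates $\Theta^\perp$ and its operator norm is attained on $\Theta$, the subspace relevant for $\lambda_{\mathrm{min},\perp}$.

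Next I would bound $\|\LG-p\boldsymbol\Sigma\|$. Set the centered summands $W_{ij}=(\eta_{ij}-p)(\tx_i-\tx_j)(\tx_i-\tx_j)^\top$, which are independent, mean zero, and symmetric. For the uniform bound, $|\eta_{ij}-p|\le1$ gives $\|W_{ij}\|\le\|\tx_i-\tx_j\|_2^2=2+\|\bx_i-\bx_j\|_2^2$; since the covariates are rescaled so that $\|\bx_i\|_2\le\sqrt{(d+1)/n}\le1$, we get $\max_{i>j}\|W_{ij}\|\lesssim1=:L_{\max}$. For the variance proxy, $\mathbb{E}W_{ij}^2=p(1-p)\|\tx_i-\tx_j\|_2^2\,(\tx_i-\tx_j)(\tx_i-\tx_j)^\top\preceq Cp\,(\tx_i-\tx_j)(\tx_i-\tx_j)^\top$ for a constant $C$, so summing and invoking $\|\boldsymbol\Sigma\|\le c_1 n$ from Assumption~\ref{ass1} yields $\big\|\sum_{i>j}\mathbb{E}W_{ij}^2\big\|\le Cp\|\boldsymbol\Sigma\|\le Cc_1\,pn=:V$.

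Applying matrix Bernstein on the $(n+d)$-dimensional ambient space with $t=\tfrac12 c_2\,pn$, the exponent is $\tfrac{t^2/2}{V+L_{\max}t/3}\gtrsim\tfrac{(pn)^2}{pn}\gtrsim pn$, so with $d<n$,
\[
\mathbb{P}\!\left(\|\LG-p\boldsymbol\Sigma\|\ge\tfrac12 c_2\,pn\right)\le 2(n+d)\exp(-c'\,pn)
\]
for a constant $c'$ depending only on $c_1,c_2$. Choosing $c_p$ large enough that $c'c_p\ge12$, the hypothesis $pn>c_p\log n$ forces this probability to be $O(n^{-11})$. On the complementary event, Weyl's inequality restricted to $\Theta$ gives $\lambda_{\mathrm{min},\perp}(\LG)\ge\lambda_{\mathrm{min},\perp}(p\boldsymbol\Sigma)-\|\LG-p\boldsymbol\Sigma\|\ge c_2 pn-\tfrac12 c_2 pn=\tfrac12 c_2 pn$, while $\|\LG\|\le\|p\boldsymbol\Sigma\|+\|\LG-p\boldsymbol\Sigma\|\le c_1 pn+\tfrac12 c_2 pn\le 2c_1 pn$ (using $c_2\le c_1$), which is exactly the event $\mathcal{A}_2$.

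The computations are routine; the step requiring the most care is the passage to the \emph{restricted} minimal eigenvalue $\lambda_{\mathrm{min},\perp}$. One must verify that the deviation matrix lives in $\Theta$, so that its operator norm controls the quadratic form uniformly over $z\in\Theta$ (rather than over all of $\mathbb{R}^{n+d}$, where the minimal eigenvalue of the PSD matrix $\LG$ is merely $0$), and one must feed in the correct constants $c_1,c_2$ from Assumption~\ref{ass1} together with the rescaling $\|\bx_i\|_2\le\sqrt{(d+1)/n}$ to keep both $V$ and $L_{\max}$ at the right order. Tracking these constants is what makes the Bernstein exponent scale as $pn$, which is precisely what lets the sparse regime $pn\gtrsim\log n$ suffice for the $O(n^{-11})$ bound.
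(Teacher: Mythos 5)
Your proof is correct, but it takes a genuinely different route from the paper. The paper does not center the sum: it fixes an orthonormal basis $\bO$ of $\Theta$, writes $\bO\LG\bO^\top$ as a sum of independent PSD matrices $\bX_{i,j}=\bO(\tx_i-\tx_j)(\tx_i-\tx_j)^\top\bO^\top\mathbf{1}((i,j)\in\mathcal{E})$ with $\Vert\bX_{i,j}\Vert\le 6$, and applies the matrix Chernoff inequality to get $\lambda_{\min}\ge\tfrac12\lambda_{\min}(\mathbb{E})$ and $\lambda_{\max}\le\tfrac32\lambda_{\max}(\mathbb{E})$ directly, with failure probability $(n+d)\cdot 0.8^{c_2pn/6}$. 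Your version centers the summands and uses matrix Bernstein plus a Weyl-type perturbation step. Both give an exponent of order $pn$, hence both survive the sparse regime $pn\gtrsim\log n$; the Chernoff route is slightly leaner because the multiplicative form dispenses with the variance-proxy computation and with any transfer from the deviation bound back to eigenvalues, while your additive bound $\Vert\LG-p\boldsymbol{\Sigma}\Vert\lesssim pn$ (and, with a smaller $t$, even $\sqrt{pn\log n}$) is a somewhat stronger statement that could be reused elsewhere. Both proofs share the same mild caveat that the constant $c_p$ must be taken large enough (relative to $c_1,c_2$) for the stated $O(n^{-11})$ probability; you state this explicitly, the paper leaves it implicit.

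One remark in your writeup is wrong, though harmlessly so: you claim the summands, and hence $\LG-p\boldsymbol{\Sigma}$, have range contained in $\Theta$ and annihilate $\Theta^\perp$. Under the paper's definition $\Theta=\{\tb:\boldsymbol{Z}^\top\tb=\boldsymbol{0}\}$, the differences $\tx_i-\tx_j$ do \emph{not} lie in $\Theta$ in general, since $\boldsymbol{Z}^\top(\tx_i-\tx_j)=\bar{\bx}_i-\bar{\bx}_j\neq\boldsymbol{0}$; the span of the differences and $\Theta$ are two different $(n-1)$-dimensional subspaces. Fortunately this claim is never needed: for any symmetric $\boldsymbol{E}$ and any unit $\boldsymbol{z}\in\Theta\subset\mathbb{R}^{n+d}$ one has $|\boldsymbol{z}^\top\boldsymbol{E}\boldsymbol{z}|\le\Vert\boldsymbol{E}\Vert$ with the \emph{unrestricted} operator norm, which is all your Weyl step uses. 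So the step you flag as ``requiring the most care'' is in fact automatic, and the justification you give for it should be deleted rather than repaired.
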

\begin{proof}
	See \S\ref{proofeigenlemma} for a detailed proof.
\end{proof}
In the rest of the content, without loss of generality, we assume the conditions stated in Lemma \ref{eigenlem} hold. 
Moreover, with the help of Lemma \ref{eigenlem}, we next analyze the Hessian $\nabla^2 \mathcal{L}_\lambda(\tb)$ and summarize its theoretical properties in Lemma \ref{ub} and Lemma \ref{lb}, respectively.

\begin{lemma}\label{ub}  Suppose event $\mathcal{A}_2$ holds, we obtain
\begin{align*}
    \lambda_{\text{max}}(\nabla^2 \mathcal{L}_\lambda(\tb))\leq \lambda +\frac{1}{2}c_1pn,\quad \forall \tb\in\mathbb{R}^{n+d}.
\end{align*}
\end{lemma}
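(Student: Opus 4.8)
The plan is to reduce the claim to the spectral bound on $\LG$ furnished by event $\mathcal{A}_2$, by showing that the likelihood Hessian is dominated by $\tfrac14\LG$ in the Loewner order. First I would record the trivial splitting $\nabla^2\mathcal{L}_\lambda(\tb)=\nabla^2\mathcal{L}(\tb)+\lambda\boldsymbol{I}$, obtained by differentiating the ridge penalty $\tfrac{\lambda}{2}\|\tb\|_2^2$ twice. Since adding $\lambda\boldsymbol{I}$ shifts every eigenvalue by exactly $\lambda$, the task reduces to controlling $\lambda_{\text{max}}(\nabla^2\mathcal{L}(\tb))$.

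The one substantive observation is a uniform pointwise bound on the scalar weights in \eqref{formulaHessian}. Writing $a=e^{\tx_i^\top\tb}$ and $b=e^{\tx_j^\top\tb}$, each weight equals $ab/(a+b)^2$, and by the AM--GM inequality $(a+b)^2\geq 4ab$, so every weight lies in $[0,\tfrac14]$ for all $\tb\in\mathbb{R}^{n+d}$ and all edges $(i,j)\in\mathcal{E}$. Because each dyad $(\tx_i-\tx_j)(\tx_i-\tx_j)^\top$ is positive semidefinite, replacing each weight by its upper bound $\tfrac14$ only enlarges the matrix in the Loewner order, giving
\[
\nabla^2\mathcal{L}(\tb)=\sum_{(i,j)\in\mathcal{E},\,i>j}\frac{ab}{(a+b)^2}\,(\tx_i-\tx_j)(\tx_i-\tx_j)^\top\ \preccurlyeq\ \frac14\,\LG.
\]

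To conclude I would apply monotonicity of $\lambda_{\text{max}}(\cdot)$ under the Loewner order together with $\lambda_{\text{max}}(\tfrac14\LG)=\tfrac14\|\LG\|$ (valid since $\LG\succcurlyeq 0$), obtaining $\lambda_{\text{max}}(\nabla^2\mathcal{L}(\tb))\leq\tfrac14\|\LG\|$. On the event $\mathcal{A}_2$ of Lemma \ref{eigenlem} we have $\|\LG\|\leq 2c_1pn$, so $\lambda_{\text{max}}(\nabla^2\mathcal{L}(\tb))\leq\tfrac12 c_1pn$; adding back $\lambda\boldsymbol{I}$ then yields $\lambda_{\text{max}}(\nabla^2\mathcal{L}_\lambda(\tb))\leq\lambda+\tfrac12 c_1pn$ uniformly in $\tb$, as claimed.

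I anticipate no genuine obstacle here. The estimate holds for \emph{every} $\tb\in\mathbb{R}^{n+d}$ precisely because the weight bound $ab/(a+b)^2\leq\tfrac14$ is uniform, so no localization of $\tb$ near $\tb^*$ and no condition-number factor $\kappa_1$ are needed; the only probabilistic input is the already-established operator-norm control of $\LG$ on $\mathcal{A}_2$. (By contrast, the matching \emph{lower} eigenvalue bound in the companion Lemma~\ref{lb} is the delicate one, since there the weights must be bounded away from $0$, which does require localizing $\tb$ and invoking $\kappa_1$.)
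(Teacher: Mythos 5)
Your proposal is correct and follows essentially the same route as the paper: the uniform bound $\frac{e^{\tx_i^\top\tb}e^{\tx_j^\top\tb}}{(e^{\tx_i^\top\tb}+e^{\tx_j^\top\tb})^2}\leq\frac14$ gives $\lambda_{\max}(\nabla^2\mathcal{L}_\lambda(\tb))\leq\lambda+\frac14\Vert\LG\Vert$, and the operator-norm bound $\Vert\LG\Vert\leq 2c_1pn$ on $\mathcal{A}_2$ finishes the argument. Your write-up merely makes explicit the Loewner-order and eigenvalue-shift steps that the paper leaves implicit.
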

\begin{proof}
Since $\displaystyle\frac{e^{\tx_i^\top \tb}e^{\tx_j^\top \tb}}{\left(e^{\tx_i^\top \tb}+e^{\tx_j^\top \tb}\right)^2}\leq \frac{1}{4}$, we have 
\begin{align*}
    \lambda_{\text{max}}(\nabla^2 \mathcal{L}_\lambda(\tb))\leq \lambda +\frac{1}{4}\Vert \LG\Vert\leq \lambda +\frac{1}{2}c_1pn,\quad \forall \tb\in\mathbb{R}^{n+d}.
\end{align*}
\end{proof}

\begin{lemma}\label{lb}
Suppose event $\mathcal{A}_2$ happens. Then for all $\tb$ such that $\Vert\ba-\ba^*\Vert_\infty\leq C_1$, $\Vert \bb-\bb^*\Vert_2\leq C_2$, we have
\begin{align*}
    \lambda_{\text{min},\perp}(\nabla^2 \mathcal{L}_\lambda(\tb))\geq\lambda+\frac{c_2pn}{8\kappa_1e^C},
\end{align*}
where $ C = 2C_1+2\sqrt{\frac{c_3(d+1)}{n}}C_2$.
\end{lemma}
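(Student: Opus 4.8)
The plan is to obtain a uniform pointwise lower bound on the edge weights appearing in the Hessian \eqref{formulaHessian}, turn this into a positive-semidefinite comparison between $\nabla^2\cL(\tb)$ and $\LG$, and then invoke the restricted-eigenvalue bound of Lemma \ref{eigenlem}. Since $\nabla^2\cL_\lambda(\tb)=\nabla^2\cL(\tb)+\lambda\boldsymbol{I}$ and the identity contributes exactly $\lambda\|\boldsymbol{z}\|_2^2$ to every Rayleigh quotient over $\Theta$, it suffices to lower bound $\lambda_{\text{min},\perp}(\nabla^2\cL(\tb))$ and add $\lambda$ at the end.

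For the weights, write $t_{ij}=(\tx_i-\tx_j)^\top\tb$, so that each weight equals $\phi(t_{ij})\bigl(1-\phi(t_{ij})\bigr)=e^{t_{ij}}/(e^{t_{ij}}+1)^2$. An elementary inequality, which reduces to $e^{|t|}\ge 1$, gives $e^{t}/(e^{t}+1)^2\ge \tfrac14 e^{-|t|}$ for all $t$, so the task becomes a uniform upper bound on $|t_{ij}|$ over the neighborhood. Splitting $t_{ij}=(\tx_i-\tx_j)^\top\tb^*+(\tx_i-\tx_j)^\top(\tb-\tb^*)$, the first term equals $\log(w_i^*/w_j^*)$ and is bounded by $\log\kappa_1$ by the definition of the condition number; the second term is $(\alpha_i-\alpha_i^*)-(\alpha_j-\alpha_j^*)+(\bx_i-\bx_j)^\top(\bb-\bb^*)$, which is at most $2\|\ba-\ba^*\|_\infty+\|\bx_i-\bx_j\|_2\|\bb-\bb^*\|_2\le 2C_1+2\sqrt{c_3(d+1)/n}\,C_2=C$, using the rescaling $\|\bx_i\|_2\lesssim\sqrt{(d+1)/n}$. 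Hence every weight is at least $\tfrac14 e^{-\log\kappa_1-C}=1/(4\kappa_1 e^{C})$.

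Consequently $\nabla^2\cL(\tb)\succcurlyeq \frac{1}{4\kappa_1 e^C}\LG$ as quadratic forms, so for every $\boldsymbol{z}\in\Theta$,
\[
\boldsymbol{z}^\top\nabla^2\cL(\tb)\boldsymbol{z}\ \ge\ \frac{1}{4\kappa_1 e^C}\,\boldsymbol{z}^\top\LG\boldsymbol{z}\ \ge\ \frac{1}{4\kappa_1 e^C}\cdot\frac{c_2 pn}{2}\,\|\boldsymbol{z}\|_2^2,
\]
where the last step uses $\lambda_{\text{min},\perp}(\LG)\ge \tfrac12 c_2 pn$ from event $\mathcal{A}_2$ in Lemma \ref{eigenlem}. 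Adding the $\lambda\boldsymbol{I}$ term then yields $\lambda_{\text{min},\perp}(\nabla^2\cL_\lambda(\tb))\ge \lambda+c_2 pn/(8\kappa_1 e^C)$, as claimed.

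The argument is essentially routine; the only point requiring care is the uniform control of $|t_{ij}|$. The logarithmic dependence $\log\kappa_1$ on the true-score difference is precisely why $\kappa_1$ enters the final constant, and the covariate contribution must be handled through the rescaling $\|\bx_i\|_2\lesssim\sqrt{(d+1)/n}$ rather than a crude bound, since otherwise the $\sqrt{(d+1)/n}$ factor multiplying $C_2$ would be lost and the exponent $C$ would blow up with $n$. I would therefore isolate the claim $\|\bx_i-\bx_j\|_2\le 2\sqrt{c_3(d+1)/n}$ as the one external input beyond Lemma \ref{eigenlem}, and otherwise treat the positive-semidefinite comparison and the addition of $\lambda\boldsymbol{I}$ as immediate.
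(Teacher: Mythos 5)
Your proposal is correct and follows essentially the same route as the paper's proof: lower-bounding each edge weight by $\tfrac14 e^{-|\tx_i^\top\tb-\tx_j^\top\tb|}$, controlling $|\tx_i^\top\tb-\tx_j^\top\tb|$ by $\log\kappa_1+2C_1+2\sqrt{c_3(d+1)/n}\,C_2$ via the triangle inequality and the covariate rescaling, and then applying the restricted-eigenvalue bound $\lambda_{\text{min},\perp}(\LG)\ge\tfrac12 c_2pn$ from event $\mathcal{A}_2$ before adding the $\lambda$ term. No gaps.
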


\begin{proof}
	The formal proof of this Lemma \ref{lb} can be found in \S\ref{proofLemmaA4}.
\end{proof}

In the following two subsections \S\ref{app:gradient} and \S\ref{leaveoneout}, we understand the statistical rates of the regularized estimator $\tb_\lambda$ via analyzing the iterates in our gradient method.

\subsection{Convergence of Projected Gradient Descent}\label{app:gradient}

In this subsection, we consider a sequence of iterates $\{\tb^t\}_{t = 0,1,\dots}$ which is generated by the following projected gradient descent algorithm with stepsize $\eta = \frac{2}{2\lambda+c_1np}$ and the number of iterations $T = n^5$.
\begin{algorithm}[H]
\caption{Gradient descent for regularized MLE.}
\begin{algorithmic}
\STATE \textbf{Initialize} $\tb^0 = \tb^*$

\FOR{$t=0,1,\dots, T-1$ } 
\STATE {$\zeta = \tb^t-\eta \nabla \mathcal{L}_\lambda (\tb^t)$}
\STATE {$\tb^{t+1} = \mathcal{P}\zeta$}

\ENDFOR
\end{algorithmic}
\end{algorithm}

\noindent 
We initialize at $\tb^*$ and the target loss given in \eqref{MLE} is strongly convex.
{The projected gradient descent is employed (since the likelihood has a linear constraint) to ensure the iterates $\widetilde{\boldsymbol{\beta}}^t$ converge to the $\ell_2$-regularized MLE $\widetilde{\boldsymbol{\beta}}_\lambda$ exponentially. In the following section, using the leave-one-out analysis, we also show $\widetilde{\boldsymbol{\beta}}^t$ at the same time stays close to $\widetilde{\boldsymbol{\beta}}^*$ for all $t\leq \textbf{poly}(n)$. Therefore, combine these two parts together, we are able to conclude that $\widetilde{\boldsymbol{\beta}}_\lambda$ is also close to $\widetilde{\boldsymbol{\beta}}^*$.}


We summarize the theoretical findings in the following Lemma \ref{lem2}, Lemma \ref{lem3} and Lemma \ref{lem5}, respectively.

\begin{lemma}\label{lem2}
Under event $\mathcal{A}_2$, we have
\begin{align*}
    \left\Vert \tb^t-\tb_\lambda\right\Vert_2\leq \rho^t\left\Vert \tb^0-\tb_\lambda\right\Vert_2,
\end{align*}
where $\displaystyle\rho = 1-\frac{2\lambda}{2\lambda+c_1 np}$.
\end{lemma}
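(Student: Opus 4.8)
The plan is to show that a single step of the algorithm contracts the distance to $\tb_\lambda$ by the factor $\rho$, and then to iterate. Three structural facts drive the argument. First, $\Theta$ is a linear subspace and $\mathcal{P}$ is the orthogonal projection onto it, hence non-expansive. Second, every iterate lies in $\Theta$: indeed $\tb^0=\tb^*\in\Theta$ by assumption, and $\tb^{t+1}=\mathcal{P}\zeta\in\Theta$ for every $t\ge 0$, so that $\tb^t-\tb_\lambda\in\Theta$ for all $t$ since $\tb_\lambda\in\Theta$ as well. Third, $\tb_\lambda$ is a fixed point of the projected gradient map: the first-order optimality condition for the constrained minimizer \eqref{MLE} is $\mathcal{P}\nabla\mathcal{L}_\lambda(\tb_\lambda)=\boldsymbol{0}$, whence $\mathcal{P}\big(\tb_\lambda-\eta\nabla\mathcal{L}_\lambda(\tb_\lambda)\big)=\mathcal{P}\tb_\lambda-\eta\,\mathcal{P}\nabla\mathcal{L}_\lambda(\tb_\lambda)=\tb_\lambda$.

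For the one-step estimate I would combine non-expansiveness of $\mathcal{P}$ with this fixed-point identity:
\[
\big\|\tb^{t+1}-\tb_\lambda\big\|_2=\big\|\mathcal{P}(\tb^t-\eta\nabla\mathcal{L}_\lambda(\tb^t))-\mathcal{P}(\tb_\lambda-\eta\nabla\mathcal{L}_\lambda(\tb_\lambda))\big\|_2\le\big\|(\tb^t-\tb_\lambda)-\eta\big(\nabla\mathcal{L}_\lambda(\tb^t)-\nabla\mathcal{L}_\lambda(\tb_\lambda)\big)\big\|_2.
\]
Writing the gradient increment in integral form, $\nabla\mathcal{L}_\lambda(\tb^t)-\nabla\mathcal{L}_\lambda(\tb_\lambda)=\boldsymbol{H}_t(\tb^t-\tb_\lambda)$ with $\boldsymbol{H}_t:=\int_0^1\nabla^2\mathcal{L}_\lambda\big(\tb_\lambda+s(\tb^t-\tb_\lambda)\big)\,ds$, the right-hand side becomes $\big\|(\boldsymbol{I}-\eta\boldsymbol{H}_t)(\tb^t-\tb_\lambda)\big\|_2$. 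It therefore suffices to show that $\boldsymbol{I}-\eta\boldsymbol{H}_t$ acts as a contraction of modulus $\rho$ on the vector $\tb^t-\tb_\lambda$, which belongs to $\Theta$.

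To control the relevant spectrum, I would first observe that $\nabla^2\mathcal{L}(\tb)$ is a nonnegative combination of the rank-one matrices $(\tx_i-\tx_j)(\tx_i-\tx_j)^\top$, whose range lies in $\Theta=\mathrm{span}\{\tx_i-\tx_j\}$; consequently $\nabla^2\mathcal{L}_\lambda=\nabla^2\mathcal{L}+\lambda\boldsymbol{I}$ leaves both $\Theta$ and $\Theta^\perp$ invariant, and so does $\boldsymbol{H}_t$. On $\Theta$ the eigenvalues of $\boldsymbol{H}_t$ are squeezed between the strong-convexity floor $\lambda$ (since $\nabla^2\mathcal{L}\succeq 0$ forces $\nabla^2\mathcal{L}_\lambda\succeq\lambda\boldsymbol{I}$) and the curvature ceiling $\lambda+\tfrac12 c_1 pn$ furnished by Lemma \ref{ub}. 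With the step size $\eta=\frac{2}{2\lambda+c_1 np}$ the two endpoints transform into
\[
1-\eta\Big(\lambda+\tfrac12 c_1 pn\Big)=0\qquad\text{and}\qquad 1-\eta\lambda=1-\frac{2\lambda}{2\lambda+c_1 np}=\rho,
\]
so every eigenvalue of the restriction $(\boldsymbol{I}-\eta\boldsymbol{H}_t)\big|_\Theta$ lies in $[0,\rho]$. Because $\tb^t-\tb_\lambda\in\Theta$ and $\boldsymbol{I}-\eta\boldsymbol{H}_t$ is symmetric with $\Theta$ invariant, this gives $\|(\boldsymbol{I}-\eta\boldsymbol{H}_t)(\tb^t-\tb_\lambda)\|_2\le\rho\|\tb^t-\tb_\lambda\|_2$, and chaining the one-step bound from $t$ down to $0$ yields $\|\tb^t-\tb_\lambda\|_2\le\rho^t\|\tb^0-\tb_\lambda\|_2$.

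The one delicate point — and the step I would write out most carefully — is the subspace bookkeeping: verifying that $\Theta$ is invariant under $\boldsymbol{H}_t$, so that one may legitimately read off the contraction modulus from the spectrum of the restriction $\boldsymbol{H}_t\big|_\Theta$ rather than of the full matrix (whose $\Theta^\perp$-block is the irrelevant $\lambda\boldsymbol{I}$). It is worth stressing that this argument invokes event $\mathcal{A}_2$ only through the \emph{global} upper bound of Lemma \ref{ub}; the sharper, region-dependent lower bound of Lemma \ref{lb} is not needed here, and accordingly the contraction rate $\rho$ depends solely on the regularization level $\lambda$, matching the stated form of $\rho$.
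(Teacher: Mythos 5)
Your argument is, in substance, the paper's: the paper simply notes that $\mathcal{L}_\lambda$ is $\lambda$-strongly convex and $(\lambda+\tfrac12 c_1 np)$-smooth under $\mathcal{A}_2$, concludes that the unprojected gradient step contracts at rate $\rho$, and then applies non-expansiveness of $\mathcal{P}$ --- exactly your chain of inequalities, with your integral Hessian $\boldsymbol{H}_t$ merely making the standard contraction estimate explicit. Your fixed-point identity $\mathcal{P}\bigl(\tb_\lambda-\eta\nabla\mathcal{L}_\lambda(\tb_\lambda)\bigr)=\tb_\lambda$, via $\mathcal{P}\nabla\mathcal{L}_\lambda(\tb_\lambda)=\boldsymbol{0}$, is also the correct way to set this up.

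The one step you single out as delicate, however, rests on a false premise. The paper's constraint set is $\Theta=\{\tb:\boldsymbol{Z}^\top\tb=\boldsymbol{0}\}$, i.e.\ $\{(\balpha,\bbeta):\bar{\bX}^\top\balpha=\boldsymbol{0}\}$; this is \emph{not} $\mathrm{span}\{\tx_i-\tx_j\}$. A difference $\tx_i-\tx_j$ lies in $\Theta$ only when $\bar{\bx}_i=\bar{\bx}_j$, so outside degenerate cases these are two \emph{distinct} $(n-1)$-dimensional complements of $\{\tx_i-\tx_j:i,j\in[n]\}^{\perp}$ --- precisely the distinction the appendix draws between $\Psi$ and $\Theta$ when proving Propositions \ref{oaproposition} and \ref{alphaproposition}. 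Consequently $\boldsymbol{H}_t$ does \emph{not} leave $\Theta$ invariant, and the ``spectrum of the restriction $\boldsymbol{H}_t\big|_\Theta$'' is not well defined; written out carefully, that step would fail. Fortunately it is also unnecessary: $\nabla^2\mathcal{L}\succcurlyeq 0$ everywhere gives $\boldsymbol{H}_t\succcurlyeq\lambda\boldsymbol{I}$ on all of $\mathbb{R}^{n+d}$, and Lemma \ref{ub} gives $\boldsymbol{H}_t\preccurlyeq(\lambda+\tfrac12 c_1 np)\boldsymbol{I}$ globally, so with $\eta=\frac{2}{2\lambda+c_1np}$ one obtains the operator-norm bound $\Vert\boldsymbol{I}-\eta\boldsymbol{H}_t\Vert\leq\max\{1-\eta\lambda,\,\eta(\lambda+\tfrac12 c_1np)-1\}=\rho$, which applies to $\tb^t-\tb_\lambda$ regardless of which subspace it inhabits. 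Delete the invariance discussion, invoke the global bound, and your proof is complete and coincides with the paper's.
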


Next, we prove that the initial point is not far from $\tb_{\lambda}$.
\begin{lemma}\label{lem3}
On the event $\mathcal{A}_1$ happens, it follows that
\begin{align*}
    \left\Vert \tb^0-\tb_\lambda\right\Vert_2 = \left\Vert \tb_\lambda -\tb^*\right\Vert_2 \leq \frac{2C_0}{c_\lambda}\max\left\{\frac{\kappa_2}{\kappa_1}, \kappa_3\sqrt{d+1}\right\}\sqrt{n}.
\end{align*}
\end{lemma}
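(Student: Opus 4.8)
The plan is to exploit the $\lambda$-strong convexity of $\mathcal{L}_\lambda$ together with the first-order optimality of $\tb_\lambda$ over the subspace $\Theta$, reducing the desired bound on $\|\tb_\lambda-\tb^*\|_2$ to the gradient bound supplied by event $\mathcal{A}_1$. Since the algorithm is initialized at $\tb^0=\tb^*$, I first note that $\|\tb^0-\tb_\lambda\|_2=\|\tb_\lambda-\tb^*\|_2$, so it suffices to control the latter quantity.

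The argument rests on two structural facts. First, because $\mathcal{L}_\lambda(\tb)=\mathcal{L}(\tb)+\tfrac{\lambda}{2}\|\tb\|_2^2$ and $\nabla^2\mathcal{L}(\tb)\succcurlyeq 0$ (it is a sum of rank-one PSD matrices by \eqref{formulaHessian}), we have $\nabla^2\mathcal{L}_\lambda\succcurlyeq\lambda\boldsymbol{I}$, so $\mathcal{L}_\lambda$ is $\lambda$-strongly convex. Second, $\Theta$ is a linear subspace containing both $\tb^*$ and $\tb_\lambda$, so the feasible direction $\tb^*-\tb_\lambda$ lies in $\Theta$, and the first-order optimality condition for the constrained minimizer reads $\mathcal{P}\nabla\mathcal{L}_\lambda(\tb_\lambda)=\boldsymbol{0}$. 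Consequently $\langle\nabla\mathcal{L}_\lambda(\tb_\lambda),\,\tb^*-\tb_\lambda\rangle=\langle\mathcal{P}\nabla\mathcal{L}_\lambda(\tb_\lambda),\,\tb^*-\tb_\lambda\rangle=0$, using that $\mathcal{P}$ is a self-adjoint projection and $\tb^*-\tb_\lambda=\mathcal{P}(\tb^*-\tb_\lambda)$.

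With these in hand I would combine two inequalities. Strong convexity expanded at $\tb_\lambda$ and evaluated at $\tb^*$, together with the vanishing inner product, gives $\mathcal{L}_\lambda(\tb^*)\geq\mathcal{L}_\lambda(\tb_\lambda)+\tfrac{\lambda}{2}\|\tb^*-\tb_\lambda\|_2^2$. Plain convexity expanded at $\tb^*$ and evaluated at $\tb_\lambda$ gives $\mathcal{L}_\lambda(\tb_\lambda)\geq\mathcal{L}_\lambda(\tb^*)+\langle\nabla\mathcal{L}_\lambda(\tb^*),\,\tb_\lambda-\tb^*\rangle$. Adding the two cancels the function values and yields $\tfrac{\lambda}{2}\|\tb^*-\tb_\lambda\|_2^2\leq-\langle\nabla\mathcal{L}_\lambda(\tb^*),\,\tb_\lambda-\tb^*\rangle\leq\|\nabla\mathcal{L}_\lambda(\tb^*)\|_2\,\|\tb_\lambda-\tb^*\|_2$ by Cauchy--Schwarz. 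Cancelling one factor of $\|\tb_\lambda-\tb^*\|_2$ leaves $\|\tb_\lambda-\tb^*\|_2\leq\tfrac{2}{\lambda}\|\nabla\mathcal{L}_\lambda(\tb^*)\|_2$; the factor $2$ in the statement arises precisely because the second inequality uses only convexity, discarding the quadratic term.

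Finally I substitute the numerics. On $\mathcal{A}_1$ we have $\|\nabla\mathcal{L}_\lambda(\tb^*)\|_2\leq C_0\,n\sqrt{(p\log n)/L}$, and plugging in $\lambda=c_\lambda\min\{\kappa_1/\kappa_2,\,1/(\kappa_3\sqrt{d+1})\}\sqrt{(np\log n)/L}$, the factors $\sqrt{(p\log n)/L}$ cancel and one power of $\sqrt{n}$ cancels against $\sqrt{np\log n/L}$. Using $1/\min\{a,b\}=\max\{1/a,1/b\}$ then produces the claimed bound $\tfrac{2C_0}{c_\lambda}\max\{\kappa_2/\kappa_1,\,\kappa_3\sqrt{d+1}\}\sqrt{n}$. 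The computation is routine once assembled; the only point demanding care is the constrained geometry, namely using that $\Theta$ is a subspace so that $\tb^*-\tb_\lambda$ is an admissible direction along which the projected gradient at $\tb_\lambda$ vanishes, rather than any genuine analytic difficulty.
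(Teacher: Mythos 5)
Your proposal is correct and follows essentially the same route as the paper: both arguments reduce to the inequality $\tfrac{\lambda}{2}\Vert\tb_\lambda-\tb^*\Vert_2^2\leq-\nabla\mathcal{L}_\lambda(\tb^*)^\top(\tb_\lambda-\tb^*)$, apply Cauchy--Schwarz, and invoke the gradient bound from event $\mathcal{A}_1$ together with the choice of $\lambda$. The only (cosmetic) difference is that the paper obtains this inequality from a single second-order Taylor expansion around $\tb^*$ combined with $\mathcal{L}_\lambda(\tb^*)\geq\mathcal{L}_\lambda(\tb_\lambda)$, whereas you add a strong-convexity inequality centered at $\tb_\lambda$ (using $\mathcal{P}\nabla\mathcal{L}_\lambda(\tb_\lambda)=\boldsymbol{0}$) to a plain convexity inequality centered at $\tb^*$.
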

\begin{proof}
	See \S\ref{proofLemmaA6} for a detailed proof.
\end{proof}
\noindent Combining Lemma \ref{lem2} and Lemma \ref{lem3}, we obtain the following result on the optimization error. 

\begin{lemma}\label{lem5}
	On event $\mathcal{A}_1 \cap \mathcal{A}_2$, there exists some constant $C_7$ such that 
\begin{align*}
    \left\Vert \tb^T-\tb_\lambda\right\Vert_2 \leq C_7\kappa_1\sqrt{\frac{(d+1)\log n}{npL}}.
\end{align*}
\end{lemma}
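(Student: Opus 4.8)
The plan is to chain Lemma \ref{lem2} and Lemma \ref{lem3} together and then exploit the fact that the contraction factor $\rho$ is bounded away from $1$ by at least an inverse polynomial in $n$, so that running $T=n^5$ iterations drives the optimization error far below the statistical rate.

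First, working on the event $\mathcal{A}_1\cap\mathcal{A}_2$, I would instantiate Lemma \ref{lem2} at $t=T=n^5$ and substitute the initial-distance bound from Lemma \ref{lem3}, giving
\begin{align*}
\left\Vert \tb^T-\tb_\lambda\right\Vert_2 \leq \rho^{n^5}\left\Vert \tb^0-\tb_\lambda\right\Vert_2 \leq \rho^{n^5}\cdot\frac{2C_0}{c_\lambda}\max\left\{\frac{\kappa_2}{\kappa_1},\,\kappa_3\sqrt{d+1}\right\}\sqrt{n}.
\end{align*}
Thus everything reduces to showing that $\rho^{n^5}$ is small enough to beat the polynomial prefactor on the right and produce the target rate $\kappa_1\sqrt{(d+1)\log n/(npL)}$.

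Next, I would quantify $\rho=1-\tfrac{2\lambda}{2\lambda+c_1np}$. Since $\lambda=c_\lambda\min\{\kappa_1/\kappa_2,\,1/(\kappa_3\sqrt{d+1})\}\sqrt{np\log n/L}$ satisfies $\lambda\lesssim np$ under the stated conditions, the denominator obeys $2\lambda+c_1np\asymp np$, so that $1-\rho\gtrsim \lambda/(np)\asymp \min\{\kappa_1/\kappa_2,\,1/(\kappa_3\sqrt{d+1})\}\sqrt{\log n/(npL)}$; should $\lambda$ instead exceed $np$, then $1-\rho$ is bounded below by a positive constant, which is even more favorable. Using $np\le n$, $L\le c_4 n^{c_5}$, and the at-most-polynomial growth of the condition numbers, this yields $1-\rho\gtrsim n^{-a}$ for an explicit constant $a$ determined by $c_5$ and the exponents governing $\kappa_2/\kappa_1$ and $\kappa_3\sqrt{d+1}$.

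Then I would apply the elementary bound $\rho^{n^5}\le \exp\{-(1-\rho)n^5\}$. Because $(1-\rho)n^5\gtrsim n^{5-a}$ grows like a positive power of $n$ in the relevant regime, $\rho^{n^5}$ decays faster than any fixed inverse polynomial in $n$. In contrast, the prefactor $\tfrac{2C_0}{c_\lambda}\max\{\kappa_2/\kappa_1,\kappa_3\sqrt{d+1}\}\sqrt{n}$ and the reciprocal of the target rate $\tfrac{1}{\kappa_1}\sqrt{npL/((d+1)\log n)}$ are only polynomially large. Hence for $n$ large enough the super-polynomial smallness of $\rho^{n^5}$ dominates, and the displayed product is bounded by $C_7\kappa_1\sqrt{(d+1)\log n/(npL)}$ for a suitable constant $C_7$, which is exactly the claim. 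The main obstacle is the bookkeeping in this step: I must verify that the exponent $5$ in $T=n^5$ genuinely exceeds the combined polynomial degree $a$ arising from $L\le c_4 n^{c_5}$, the $\sqrt{n}$ factor in the initial distance, and the condition-number ratios, so that $(1-\rho)n^5\to\infty$. This is where the explicit form of $\lambda$ together with the sample-complexity assumptions $np\gtrsim\log n$ and $(d+1)\log n\lesssim np$ must be used carefully; once $(1-\rho)n^5$ is shown to diverge polynomially, the remaining comparison is routine.
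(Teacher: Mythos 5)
Your proposal is correct and follows essentially the same route as the paper: combine Lemma \ref{lem2} with Lemma \ref{lem3}, bound $\rho^{n^5}\le\exp\{-(1-\rho)n^5\}$, and note that the super-polynomial decay of the exponential absorbs the polynomial prefactor and the reciprocal of the target rate when $L\le c_4 n^{c_5}$ and $n$ is large. Your explicit bookkeeping on $1-\rho\gtrsim n^{-a}$ is in fact more careful than the paper's one-line assertion of the same fact.
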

\begin{proof}
	See \S\ref{prooflem5} for a detailed proof.
\end{proof}
In this subsection, we prove that the iterate $\tb^T$ converges to $\tb_{\lambda}$ geometrically and enjoys a good optimization error after $T=n^5$ iterations. In order to prove the distance between $\tb_{\lambda}$ and $\tb^*$, i.e. the statistical error of $\tb_{\lambda}$,  in the next subsection, we leverage the leave-one-out technique and use induction to prove that the iterate $\tb^T$ stays close to our initial point $\tb^0=\tb^*$, even after $T=n^5$ iterations.

\subsection{Analysis of Leave-one-out Sequences}\label{leaveoneout}
In this section, we construct the leave-one-out sequences \citep{ma2018implicit,chen2019spectral,chen2020noisy} and bound the statistical error by induction. To construct the leave-one-out sequence, we consider the following loss function for any $m\in[n]$.
\begin{align*}
    \mathcal{L}^{(m)}(\tb) =& \sum_{(i, j) \in \mathcal{E}, i>j, i\neq m, j\neq m}\left\{-y_{j, i}\left(\tx_i^\top\tb-\tx_j^\top\tb\right)+\log \left(1+e^{\tx_i^\top\tb-\tx_j^\top\tb}\right)\right\}\\
    &+p\sum_{i\neq m}\left\{-\frac{e^{\tx_i^\top\tb^*}}{e^{\tx_i^\top\tb^*}+e^{\tx_m^\top\tb^*}}(\tx_i^\top\tb-\tx_m^\top\tb) + \log \left(1+e^{\tx_i^\top\tb-\tx_m^\top\tb}\right)\right\} ;\\
    \mathcal{L}_\lambda^{(m)}(\tb)=&\mathcal{L}^{(m)}(\tb)+\frac{\lambda}{2}\Vert\tb\Vert_2^2.
\end{align*}
Then for any $m\in[n]$, we construct the leave-one-out sequence $\left\{\tb^{t,(m)}\right\}_{t = 0,1,\dots}$ in the way of Algorithm \ref{alg2}.

\begin{algorithm}[H]
\caption{Construction of leave-one-out sequences.}
	\begin{algorithmic}[1]		
		\STATE \textbf{Initialize} $\tb^{0,(m)} = \tb^*$
        \FOR{$t=0,1,\dots, T-1$ } 
        \STATE {$\zeta = \tb^{t,(m)}-\eta \nabla \mathcal{L}^{(m)}_\lambda \left(\tb^{t,(m)}\right)$} 
        \STATE {$\tb^{t+1,(m)} = \mathcal{P}\zeta$}
        \ENDFOR
	\end{algorithmic}\label{alg2}
\end{algorithm}

With the help of the leave-one-out sequences, we do induction to demonstrate that the iterate $\tb^T$ will not be far away from $\tb^*$ when $T=n^5.$
In specific, we take again $\eta = \frac{2}{2\lambda+c_1 np}$ and $T = n^5$.  With the leave-one-out sequences in hand, we prove the following bounds by induction for $t \leq T$. 
\begin{align}
    \left\Vert \tb^t-\tb^*\right\Vert_2&\leq C_3\kappa_1\sqrt{\frac{\log n}{pL}};\tag{A} \label{inductionA}\\
    \max_{1\leq m\leq n}\left\Vert\tb^t- \tb^{t,(m)}\right\Vert_2&\leq C_4\kappa_1\sqrt{\frac{(d+1)\log n}{npL}}\leq C_4\kappa_1\sqrt{\frac{\log n}{pL}}; \tag{B}\\
    \max_{1\leq m\leq n}|\alpha_m^{t,(m)}-\alpha_m^*|&\leq C_5\kappa_1^2\sqrt{\frac{(d+1)\log n}{npL}};\tag{C} \\
    \Vert \ba^t-\ba^*\Vert_\infty&\leq C_6 \kappa_1^2\sqrt{\frac{(d+1)\log n}{npL}} .\tag{D}\label{inductionD}
\end{align}
For $t = 0$, since $\tb^0 = \tb^{0,(1)} = \tb^{0,(2)} = \dots = \tb^{0,(n)} = \tb^*$, the \eqref{inductionA}$\sim$ \eqref{inductionD} hold automatically. In the following lemmas, we prove the conclusions of \eqref{inductionA}-\eqref{inductionD} for the $(t+1)$-th iteration are true when the results hold for the $t$-th iteration.
\begin{lemma}\label{induction1}
Suppose bounds \eqref{inductionA}$\sim$ \eqref{inductionD} hold for the $t$-th iteration. With probability exceeding $1-O(n^{-11})$ we have
\begin{align*}
    \left\Vert \tb^{t+1}-\tb^*\right\Vert_2&\leq C_3\kappa_1\sqrt{\frac{\log n}{pL}},
\end{align*}
as long as $\displaystyle 0<\eta\leq \frac{2}{2\lambda+c_1np}$, $\displaystyle C_3\geq \frac{20C_0}{c_2}$ and $n$ is large enough.
\end{lemma}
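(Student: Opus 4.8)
The plan is to analyze a single projected-gradient step and show it contracts toward $\tb^*$ up to a noise term governed by the gradient at the truth. Throughout I would work on the event $\mathcal{A}_1\cap\mathcal{A}_2$ of Lemmas \ref{gradient} and \ref{eigenlem}, which has probability at least $1-O(n^{-11})$, so that the conclusion is deterministic on this event. First I would expand the iterate. Since $\mathcal{P}$ is a linear orthogonal projection with $\mathcal{P}\tb^*=\tb^*$ (as $\tb^*\in\Theta$), linearity gives
\begin{align*}
\tb^{t+1}-\tb^*
&=\mathcal{P}\big(\tb^t-\eta\nabla\mathcal{L}_\lambda(\tb^t)\big)-\mathcal{P}\big(\tb^*-\eta\nabla\mathcal{L}_\lambda(\tb^*)\big)-\eta\mathcal{P}\nabla\mathcal{L}_\lambda(\tb^*)\\
&=\mathcal{P}\big[(\tb^t-\tb^*)-\eta\big(\nabla\mathcal{L}_\lambda(\tb^t)-\nabla\mathcal{L}_\lambda(\tb^*)\big)\big]-\eta\mathcal{P}\nabla\mathcal{L}_\lambda(\tb^*).
\end{align*}
By the fundamental theorem of calculus $\nabla\mathcal{L}_\lambda(\tb^t)-\nabla\mathcal{L}_\lambda(\tb^*)=\mathbf{H}(\tb^t-\tb^*)$ with $\mathbf{H}=\int_0^1\nabla^2\mathcal{L}_\lambda\big(\tb^*+s(\tb^t-\tb^*)\big)\,ds$, and since $\tb^t-\tb^*\in\Theta$ so that $\mathcal{P}(\tb^t-\tb^*)=\tb^t-\tb^*$, the first term equals $(\mathbf{I}-\eta\mathcal{P}\mathbf{H}\mathcal{P})(\tb^t-\tb^*)$, i.e. the restriction of $\mathbf{I}-\eta\mathcal{P}\mathbf{H}\mathcal{P}$ to $\Theta$ applied to $\tb^t-\tb^*$.

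Next I would control the contraction factor using the curvature bounds. The induction hypotheses \eqref{inductionA} and \eqref{inductionD} guarantee that every point on the segment $\tb^*+s(\tb^t-\tb^*)$, $s\in[0,1]$, obeys $\|\ba-\ba^*\|_\infty\le C_1$ and $\|\bb-\bb^*\|_2\le C_2$ with $C_1,C_2$ bounded (indeed vanishing as $n\to\infty$ under the stated assumptions), so Lemmas \ref{ub} and \ref{lb} apply to each integrand and hence to $\mathbf{H}$, yielding for all $\bv\in\Theta$
\begin{align*}
\lambda+\frac{c_2pn}{8\kappa_1e^{C}}\;\le\;\frac{\bv^\top\mathbf{H}\bv}{\|\bv\|_2^2}\;\le\;\lambda+\tfrac12 c_1pn,
\qquad C=2C_1+2\sqrt{c_3(d+1)/n}\,C_2.
\end{align*}
Choosing $\eta\le \frac{2}{2\lambda+c_1np}$ forces $\eta\lambda_{\max}(\mathbf{H})\le 1$, so the operator norm of $\mathbf{I}-\eta\mathcal{P}\mathbf{H}\mathcal{P}$ restricted to $\Theta$ is $\rho':=1-\eta\lambda_{\min,\perp}(\mathbf{H})\in[0,1)$, and therefore the first term has $\ell_2$-norm at most $\rho'\|\tb^t-\tb^*\|_2$.

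Finally I would bound the noise term and close the induction. By Lemma \ref{gradient} and $\|\mathcal{P}\|\le 1$ we have $\|\eta\mathcal{P}\nabla\mathcal{L}_\lambda(\tb^*)\|_2\le \eta C_0\sqrt{n^2p\log n/L}$, and combining with hypothesis \eqref{inductionA},
\begin{align*}
\|\tb^{t+1}-\tb^*\|_2\;\le\;\rho'\,C_3\kappa_1\sqrt{\tfrac{\log n}{pL}}+\eta C_0\,n\sqrt{\tfrac{p\log n}{L}}.
\end{align*}
It then suffices to verify $\eta C_0\,n\sqrt{p\log n/L}\le(1-\rho')C_3\kappa_1\sqrt{\log n/(pL)}$. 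Since $1-\rho'=\eta\lambda_{\min,\perp}(\mathbf{H})\ge \eta\,c_2pn/(8\kappa_1e^{C})$, the factor $\eta$ cancels on both sides and the requirement reduces to $C_0\le c_2C_3/(8e^{C})$; taking $n$ large enough that $e^{C}\le 5/2$, this holds under the stated $C_3\ge 20C_0/c_2$, giving \eqref{inductionA} at step $t+1$.

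The crux is the second step: making precise that $\mathbf{I}-\eta\mathcal{P}\mathbf{H}\mathcal{P}$ contracts \emph{on the constrained subspace} $\Theta$, which forces the use of $\lambda_{\min,\perp}$ rather than the global minimum eigenvalue (the latter may vanish off $\Theta$), together with checking via the induction hypotheses that the entire interpolating segment remains inside the region where the lower curvature bound of Lemma \ref{lb} is valid. The cancellation of the step size $\eta$ is the mechanism that frees the estimate from any delicate tuning of $\eta$.
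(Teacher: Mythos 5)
Your proof is correct and follows essentially the same route as the paper's: the same projected one-step expansion with the integrated Hessian $\mathbf{H}$ from the fundamental theorem of calculus, the same use of Lemmas \ref{ub} and \ref{lb} (via the induction hypotheses to keep the interpolating segment in the region of validity) to get the restricted contraction factor on $\Theta$, and the same bound on $\eta\mathcal{P}\nabla\mathcal{L}_\lambda(\tb^*)$ from Lemma \ref{gradient}, with the step size cancelling to yield the requirement $C_3\geq 20C_0/c_2$. No gaps.
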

\begin{proof}See \S\ref{rsc} for a detailed proof.
\end{proof}

\begin{lemma}\label{induction3}
Suppose bounds \eqref{inductionA}$\sim$ \eqref{inductionD} hold for the $t$-th iteration. With probability exceeding $1-O(n^{-11})$ we have
\begin{align*}
    \max_{1\leq m\leq n}\left\Vert\tb^{t+1}- \tb^{t+1,(m)}\right\Vert_2&\leq C_4\kappa_1\sqrt{\frac{(d+1)\log n}{npL}},
\end{align*}
as long as $\displaystyle 0<\eta\leq \frac{2}{2\lambda+c_1np}$, $\displaystyle C_4\gtrsim \frac{1}{c_2}$ and $np\gtrsim (d+1)\log n$.
\end{lemma}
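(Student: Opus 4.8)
The plan is to propagate the leave-one-out proximity bound (B) from iteration $t$ to $t+1$ by writing both $\tb^{t+1}$ and $\tb^{t+1,(m)}$ as a single projected gradient step and showing that this update acts as a contraction plus a fluctuation term that is small precisely because $\tb^{t,(m)}$ is independent of the comparison data attached to item $m$. Since $\mathcal{P}$ is the orthogonal projection onto $\Theta$ and $\tb^t-\tb^{t,(m)}\in\Theta$, the projected-gradient updates defining $\tb^{t+1}$ and $\tb^{t+1,(m)}$ give
\begin{align*}
\tb^{t+1}-\tb^{t+1,(m)} = \bigl(\tb^t-\tb^{t,(m)}\bigr) - \eta\,\mathcal{P}\bigl[\nabla\mathcal{L}_\lambda(\tb^t)-\nabla\mathcal{L}_\lambda^{(m)}(\tb^{t,(m)})\bigr].
\end{align*}
The key algebraic step is to insert $\nabla\mathcal{L}_\lambda(\tb^{t,(m)})$ and split the bracket into a \emph{same-loss} part $\nabla\mathcal{L}_\lambda(\tb^t)-\nabla\mathcal{L}_\lambda(\tb^{t,(m)})$ and a \emph{two-loss} part $\nabla\mathcal{L}(\tb^{t,(m)})-\nabla\mathcal{L}^{(m)}(\tb^{t,(m)})$, where the ridge terms cancel in the latter. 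Differencing the two loss functions at the leave-one-out point $\tb^{t,(m)}$, rather than at $\tb^t$, is exactly what strips item $m$'s randomness out of the fluctuation term.

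For the same-loss part I would invoke the fundamental theorem of calculus to write $\nabla\mathcal{L}_\lambda(\tb^t)-\nabla\mathcal{L}_\lambda(\tb^{t,(m)}) = \boldsymbol{H}\,(\tb^t-\tb^{t,(m)})$ with $\boldsymbol{H}=\int_0^1\nabla^2\mathcal{L}_\lambda\bigl(\tb^{t,(m)}+s(\tb^t-\tb^{t,(m)})\bigr)\,ds$. By the induction hypotheses (A), (B) and (D) both endpoints, hence the whole segment, lie in the region $\{\|\ba-\ba^*\|_\infty\le C_1,\ \|\bb-\bb^*\|_2\le C_2\}$ where Lemmas \ref{ub} and \ref{lb} apply, so that $\mu\|\bz\|_2^2\le\bz^\top\boldsymbol{H}\bz\le(\lambda+\tfrac12 c_1np)\|\bz\|_2^2$ for all $\bz\in\Theta$, with $\mu\gtrsim np/\kappa_1$. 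Because $\tb^t-\tb^{t,(m)}\in\Theta$ and $\eta\le 2/(2\lambda+c_1np)$, the symmetric operator $\mathcal{P}(\boldsymbol{I}-\eta\boldsymbol{H})\mathcal{P}$ has spectral norm at most $\rho:=1-\eta\mu<1$ on $\Theta$, yielding the contraction
\begin{align*}
\bigl\|(\tb^t-\tb^{t,(m)})-\eta\,\mathcal{P}\bigl[\nabla\mathcal{L}_\lambda(\tb^t)-\nabla\mathcal{L}_\lambda(\tb^{t,(m)})\bigr]\bigr\|_2\le \rho\,\bigl\|\tb^t-\tb^{t,(m)}\bigr\|_2.
\end{align*}

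The heart of the argument is the fluctuation term $\eta\,\|\nabla\mathcal{L}(\tb^{t,(m)})-\nabla\mathcal{L}^{(m)}(\tb^{t,(m)})\|_2$. Using \eqref{formulagradient} and cancelling the common edges, only the edges incident to $m$ survive, and with $\phi_{im}:=\phi(\tx_i^\top\tb^{t,(m)}-\tx_m^\top\tb^{t,(m)})$ and $\phi_{im}^*:=\phi(\tx_i^\top\tb^*-\tx_m^\top\tb^*)$ it can be organized as
\begin{align*}
\nabla\mathcal{L}(\tb^{t,(m)})-\nabla\mathcal{L}^{(m)}(\tb^{t,(m)}) = \sum_{i\neq m}\bigl(\mathbbm{1}\{(i,m)\in\mathcal{E}\}-p\bigr)\bigl(\phi_{im}-\phi_{im}^*\bigr)(\tx_i-\tx_m) - \sum_{i\neq m}\mathbbm{1}\{(i,m)\in\mathcal{E}\}\bigl(y_{i,m}-\phi_{im}^*\bigr)(\tx_i-\tx_m).
\end{align*}
Conditional on $\tb^{t,(m)}$ — which is independent of the edge indicators $\mathbbm{1}\{(i,m)\in\mathcal{E}\}$ and of the outcomes $y_{i,m}$ — both sums are sums of independent, mean-zero vectors, so I would apply the matrix Bernstein inequality. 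The decisive quantitative input is the sharp bound
\begin{align*}
\max_{i\neq m}\bigl|\phi_{im}-\phi_{im}^*\bigr|\ \lesssim\ \max_{i\neq m}\bigl|(\tx_i-\tx_m)^\top(\tb^{t,(m)}-\tb^*)\bigr|\ \lesssim\ \kappa_1^2\sqrt{\tfrac{(d+1)\log n}{npL}},
\end{align*}
obtained by splitting $(\tx_i-\tx_m)^\top(\tb^{t,(m)}-\tb^*)$ into its $\ba$-part, controlled in $\ell_\infty$ by (C), (D) and (B), and its covariate part $(\bx_i-\bx_m)^\top(\bb^{t,(m)}-\bb^*)$, controlled through $\|\bx_i-\bx_m\|_2\lesssim\sqrt{(d+1)/n}$ (the rescaling and Assumption \ref{ass0}) together with the $\ell_2$ bound (A). With this, the variance proxy of the first sum is $\lesssim\kappa_1^4(d+1)(\log n)/L$ and that of the second is $\lesssim np/L$, so Bernstein yields, uniformly over $m$ after a union bound, norms $\lesssim\kappa_1^2\sqrt{d+1}\,(\log n)/\sqrt{L}$ and $\lesssim\sqrt{np\log n/L}$ respectively; multiplying by $\eta\asymp(np)^{-1}$ bounds the fluctuation by $\kappa_1^2\sqrt{d+1}\,(\log n)/(np\sqrt{L})+\sqrt{(\log n)/(npL)}$.

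Putting the two pieces together gives the one-step recursion $\|\tb^{t+1}-\tb^{t+1,(m)}\|_2\le\rho\|\tb^t-\tb^{t,(m)}\|_2+\eta\,\|\nabla\mathcal{L}(\tb^{t,(m)})-\nabla\mathcal{L}^{(m)}(\tb^{t,(m)})\|_2$; since $1-\rho=\eta\mu\gtrsim1/\kappa_1$, comparing the fluctuation against $(1-\rho)\,C_4\kappa_1\sqrt{(d+1)\log n/(npL)}$ shows that it stays within the budget once $C_4$ is taken large and $np\gtrsim(d+1)\log n$, which closes the induction for (B). I expect the main obstacle to be exactly this concentration step: one must (i) rely on the leave-one-out independence of $\tb^{t,(m)}$ to even apply Bernstein, and (ii) establish the \emph{$\ell_\infty$-type} control of $\phi_{im}-\phi_{im}^*$ rather than a crude $\ell_2$ bound, since the latter would be larger by a factor of order $\sqrt{n/(d+1)}$ and would destroy the target rate. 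Keeping the powers of $\kappa_1$ and the incoherence constant $c_0$ aligned across the $\ba$- and $\bb$-contributions is the delicate bookkeeping underlying the whole estimate.
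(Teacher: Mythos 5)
Your proposal is correct and follows essentially the same route as the paper: the same insertion of $\nabla\mathcal{L}_\lambda(\tb^{t,(m)})$ to split the update into a contraction (via the fundamental theorem of calculus and Lemmas \ref{ub}--\ref{lb}) plus a fluctuation $\nabla\mathcal{L}_\lambda(\tb^{t,(m)})-\nabla\mathcal{L}_\lambda^{(m)}(\tb^{t,(m)})$, the same decomposition of that fluctuation into an edge-indicator part weighted by $\phi_{im}-\phi_{im}^*$ and an outcome-noise part, and the same crucial $\ell_\infty$-type bound $|\phi_{im}-\phi_{im}^*|\lesssim\kappa_1^2\sqrt{(d+1)\log n/(npL)}$ exploiting the independence of $\tb^{t,(m)}$ from item $m$'s data. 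The only cosmetic difference is that you apply vector Bernstein to the whole sum at once while the paper bounds the $m$-th, covariate, and edge coordinates separately before assembling the $\ell_2$ norm; both yield the same conclusion.
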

\begin{proof}
	See \S\ref{prooflem10} for a detailed proof.
\end{proof}

\begin{lemma}\label{induction2}
Suppose bounds \eqref{inductionA}$\sim$ \eqref{inductionD} hold for the $t$-th iteration. With probability exceeding $1-O(n^{-11})$ we have
\begin{align*}
    \max_{1\leq m\leq n}|\alpha_m^{t+1,(m)}-\alpha_m^*|&\leq C_5\kappa_1^2\sqrt{\frac{(d+1)\log n}{npL}},
\end{align*}
as long as $C_5\geq 30c_0(C_0+c_1C_3+c_1C_4)$, $C_5\geq 7.5(1+2\sqrt{c_3})(C_3+C_4)$, $C_5\geq 30c_\lambda/\sqrt{d+1}$ and $n$ is large enough.
\end{lemma}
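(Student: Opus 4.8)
The plan is to turn the coordinate $\alpha_m^{t+1,(m)}$ into a scalar contraction and to read off the three conditions on $C_5$ from the three driving terms. Write $u_t:=\alpha_m^{t,(m)}-\alpha_m^*$. The starting point is that the projection splits as $\mathcal{P}=\mathrm{diag}(\bI_n-\cP_{\bar{\bX}},\,\bI_d)$, so every iterate keeps its $\ba$-block orthogonal to the columns of $\bar{\bX}$; in particular, writing $\bp_m^\top$ for the $m$-th row of $\cP_{\bar{\bX}}$, we have $\bp_m^\top\ba^{t,(m)}=0$. Feeding the update $\zeta=\tb^{t,(m)}-\eta\nabla\mathcal{L}^{(m)}_\lambda(\tb^{t,(m)})$, $\tb^{t+1,(m)}=\mathcal{P}\zeta$ into the $m$-th coordinate then gives
\[
\alpha_m^{t+1,(m)}=\alpha_m^{t,(m)}-\eta\big[\nabla\mathcal{L}^{(m)}_\lambda(\tb^{t,(m)})\big]_m+\eta\,\bp_m^\top\big[\nabla\mathcal{L}^{(m)}_\lambda(\tb^{t,(m)})\big]_{1:n},
\]
the last summand being the projection correction that would vanish without covariates.

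The second step exploits the defining feature of the leave-one-out loss: coordinate $m$ of its gradient is fed only by the population block, since every genuine-edge term carries $(\be_i-\be_j)_m=0$ for $i,j\neq m$, and the population block has zero gradient at $\tb^*$. A first-order expansion of $\phi(t)=e^t/(e^t+1)$ around $\tb^*$, with $\delta_i=(\tx_i-\tx_m)^\top(\tb^{t,(m)}-\tb^*)=r_i-u_t$ and $r_i=(\alpha_i^{t,(m)}-\alpha_i^*)+(\bx_i-\bx_m)^\top(\bb^{t,(m)}-\bb^*)$, yields
\[
u_{t+1}=\Big(1-\eta\lambda-\eta p\!\sum_{i\neq m}\!\phi'(\xi_i)\Big)u_t+\eta p\!\sum_{i\neq m}\!\phi'(\xi_i)r_i-\eta\lambda\alpha_m^*+\eta\,\bp_m^\top\big[\nabla\mathcal{L}^{(m)}_\lambda(\tb^{t,(m)})\big]_{1:n},
\]
where each $\xi_i$ lies between the current and true score gaps. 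Because the overall scores sit in a band of width $\log\kappa_1$, the induction hypotheses \eqref{inductionA}--\eqref{inductionD} keep $|\xi_i|\lesssim\log\kappa_1$, so $\phi'(\xi_i)\in[\tfrac1{4\kappa_1},\tfrac14]$; together with $\eta=2/(2\lambda+c_1 np)$ this places the contraction factor in $[0,\,1-c/\kappa_1]$ for some $c=c(c_1)>0$, i.e. a contraction gap of order $1/\kappa_1$.

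It remains to show that the three driving terms are an order $\kappa_1$ smaller than the target $C_5\kappa_1^2\sqrt{(d+1)\log n/(npL)}$, so that the gap $c/\kappa_1$ absorbs them. The drift $\eta p\sum\phi'(\xi_i)r_i$ separates into an intercept part, bounded through $\|\ba^{t,(m)}-\ba^*\|_1\le\sqrt n\,\|\tb^{t,(m)}-\tb^*\|_2$ using \eqref{inductionA} and (B), and a covariate part, bounded through $\|\bx_i-\bx_m\|_2\le2\sqrt{c_3(d+1)/n}$ and $\|\bb^{t,(m)}-\bb^*\|_2\le\|\tb^{t,(m)}-\tb^*\|_2$; their sum is at most a constant multiple of $(1+2\sqrt{c_3})(C_3+C_4)\kappa_1\sqrt{(d+1)\log n/(npL)}$, which is exactly why $C_5\ge7.5(1+2\sqrt{c_3})(C_3+C_4)$ is required. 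The bias obeys $\eta\lambda|\alpha_m^*|\le\eta\lambda\kappa_2$, and substituting $\lambda\le c_\lambda(\kappa_1/\kappa_2)\sqrt{np\log n/L}$ collapses it to order $(c_\lambda/\sqrt{d+1})\kappa_1\sqrt{(d+1)\log n/(npL)}$, the source of $C_5\ge30c_\lambda/\sqrt{d+1}$.

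The main obstacle is the projection correction. By the incoherence condition (Assumption \ref{ass0}) $\|\bp_m\|_2\le c_0\sqrt{(d+1)/n}$, so it suffices to control $\|\nabla\mathcal{L}^{(m)}_\lambda(\tb^{t,(m)})\|_2$. I would split this as $\|\nabla\mathcal{L}^{(m)}_\lambda(\tb^*)\|_2+\|\nabla\mathcal{L}^{(m)}_\lambda(\tb^{t,(m)})-\nabla\mathcal{L}^{(m)}_\lambda(\tb^*)\|_2$: the leave-one-out analogue of Lemma \ref{gradient} bounds the first by $\lesssim C_0\sqrt{n^2p\log n/L}$ (the population block contributes nothing at $\tb^*$), while the uniform Hessian bound of Lemma \ref{ub} bounds the second by $(\lambda+\tfrac12c_1pn)\|\tb^{t,(m)}-\tb^*\|_2\lesssim c_1(C_3+C_4)\kappa_1\sqrt{n^2p\log n/L}$. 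Multiplying by $\eta\|\bp_m\|_2$ contracts the factor $\sqrt{n^2p\log n/L}$ down to $\sqrt{(d+1)\log n/(npL)}$ and produces a bound of order $c_0(C_0+c_1C_3+c_1C_4)\kappa_1\sqrt{(d+1)\log n/(npL)}$ (using $\kappa_1\ge1$), which dictates $C_5\ge30c_0(C_0+c_1C_3+c_1C_4)$. Adding the contracted term $(1-c/\kappa_1)|u_t|$ to these three bounds and invoking the three lower bounds on $C_5$ closes the induction; the two delicate points are the uniform lower bound $\phi'(\xi_i)\gtrsim1/\kappa_1$ along the whole trajectory and the full gradient-norm control that feeds the projection term.
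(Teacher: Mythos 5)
Your proposal is correct and follows essentially the same route as the paper's proof in \S\ref{prooflem9}: the same split into the unprojected coordinate update plus the projection correction (the paper's $\mu_1+\mu_2$), the same incoherence-plus-gradient-norm bound on the correction via Lemma \ref{gradientball}, the same mean-value expansion giving a contraction factor $1-\Theta(\eta pn/\kappa_1)$ from the lower bound $\phi'(c_i)\gtrsim 1/\kappa_1$, and the same attribution of the three conditions on $C_5$ to the projection, drift, and ridge-bias terms. The only cosmetic difference is your constant $\phi'(\xi_i)\ge \tfrac{1}{4\kappa_1}$ versus the paper's $\tfrac{1}{5\kappa_1}$ (which accounts for the extra $e^{0.2}$ slack), and this does not affect the argument.
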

\begin{proof}
	See \S\ref{prooflem9} for a detailed proof.
\end{proof}

\begin{lemma}\label{induction4}
Suppose bounds \eqref{inductionA}$\sim$ \eqref{inductionD} hold for the $t$-th iteration. With probability exceeding $1-O(n^{-11})$ we have
\begin{align*}
    \Vert \ba^{t+1}-\ba^*\Vert_\infty&\leq C_6\kappa_1^2\sqrt{\frac{(d+1)\log n}{npL}},
\end{align*}
as long as $C_6\geq C_4+C_5$ and $n$ is large enough.
\end{lemma}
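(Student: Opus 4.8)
The plan is to establish bound \eqref{inductionD} for the $(t+1)$-th iterate by a simple triangle-inequality decomposition, using the leave-one-out iterate $\tb^{t+1,(m)}$ as an intermediary and invoking bounds (B) and (C), which have already been proved for the $(t+1)$-th iterate in Lemmas \ref{induction3} and \ref{induction2}. Since $\Vert\ba^{t+1}-\ba^*\Vert_\infty = \max_{m\in[n]}|\alpha_m^{t+1}-\alpha_m^*|$, it suffices to control each coordinate $m$ separately and then take the maximum. The key idea is that the $m$-th coordinate of the full iterate is close to the $m$-th coordinate of the $m$-th leave-one-out iterate (controlled by (B)), while the latter is close to the truth $\alpha_m^*$ (controlled by (C)); this is precisely the decoupling mechanism that the leave-one-out construction was designed to exploit.

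Concretely, for each fixed $m\in[n]$ I would insert $\alpha_m^{t+1,(m)}$ and write
\[
|\alpha_m^{t+1}-\alpha_m^*| \;\leq\; |\alpha_m^{t+1}-\alpha_m^{t+1,(m)}| \;+\; |\alpha_m^{t+1,(m)}-\alpha_m^*|.
\]
The second term is bounded directly by Lemma \ref{induction2} (bound (C)), which gives $|\alpha_m^{t+1,(m)}-\alpha_m^*|\leq C_5\kappa_1^2\sqrt{(d+1)\log n/(npL)}$. For the first term, I would observe that the deviation of a single coordinate is dominated by the full Euclidean distance between the two vectors, so that
\[
|\alpha_m^{t+1}-\alpha_m^{t+1,(m)}| \;\leq\; \left\Vert \tb^{t+1}-\tb^{t+1,(m)}\right\Vert_2 \;\leq\; C_4\kappa_1\sqrt{\frac{(d+1)\log n}{npL}},
\]
where the last inequality is exactly Lemma \ref{induction3} (bound (B)).

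Combining the two estimates and using that the condition number satisfies $\kappa_1\geq 1$, so that $\kappa_1\leq\kappa_1^2$, I would obtain
\[
|\alpha_m^{t+1}-\alpha_m^*| \;\leq\; (C_4+C_5)\,\kappa_1^2\sqrt{\frac{(d+1)\log n}{npL}}.
\]
Since this bound is uniform in $m$, taking the maximum over $m\in[n]$ and choosing $C_6\geq C_4+C_5$ yields the desired conclusion; the stated probability $1-O(n^{-11})$ is inherited from the intersection of the high-probability events underlying Lemmas \ref{induction3} and \ref{induction2}. There is no substantive analytic obstacle here: all of the real work, namely controlling the leave-one-out perturbation (B) and the self-coordinate deviation (C), has already been carried out in the two preceding lemmas, and this step is essentially a bookkeeping aggregation. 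The only minor point requiring care is the passage from the $\ell_2$-bound on $\tb^{t+1}-\tb^{t+1,(m)}$ to the single-coordinate bound on its $\balpha$-block, together with tracking the $\kappa_1$ versus $\kappa_1^2$ powers so that the two terms combine cleanly into the $\kappa_1^2$ rate with the aggregated constant $C_4+C_5$.
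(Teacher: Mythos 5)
Your proposal is correct and follows essentially the same route as the paper: the paper's proof also inserts $\alpha_m^{t+1,(m)}$, bounds $|\alpha_m^{t+1}-\alpha_m^{t+1,(m)}|$ by $\Vert\tb^{t+1}-\tb^{t+1,(m)}\Vert_2$ via Lemma \ref{induction3}, bounds the second term via Lemma \ref{induction2}, and absorbs the $\kappa_1$ term into $\kappa_1^2$ to get the constant $C_4+C_5$. No gaps.
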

\begin{proof}
	See \S\ref{prooflem11} for a detailed proof.
	\end{proof}
	
With these necessary building blocks at hand, we next prove Theorem \ref{mainthm} by providing statistical rates of the regularized estimator in \eqref{MLE}.

\subsection{Proof of Theorem \ref{mainthm}}\label{proofthm1}
In this subsection, we aim at providing proof for Theorem \ref{mainthm} by combining the results established above. 

For any $m\in[n]$, by Taylor expansion, one obtains
\begin{align*}
\left|e^{\tx_m^\top\tb_\lambda}-e^{\tx_m^\top\tb^*}\right|&\leq e^{\omega_m^*}\left|\tx_m^\top\tb_\lambda-\tx_m^\top\tb^*\right|. \\
&\leq e^{\tx_m^\top\tb^* +\left|\tx_m^\top\tb_\lambda-\tx_m^\top\tb^*\right|}\left|\tx_m^\top\tb_\lambda-\tx_m^\top\tb^*\right|
\end{align*}
where $\omega_m^*$ is some real number between $\tx_m^\top\tb_\lambda$ and $\tx_m^\top\tb^*$. As a result, it holds that
\begin{align*}
\frac{\left\Vert e^{\tX\tb_\lambda}-e^{\tX\tb^*}\right\Vert_\infty}{\left\Vert e^{\tX\tb^*}\right\Vert_\infty}&\leq \frac{\max_{1\leq m\leq n }e^{\tx_m^\top\tb^*}e^{\max\limits_{1\leq m\leq n }\left|\tx_m^\top\tb_\lambda-\tx_m^\top\tb^*\right|}\max_{1\leq m\leq n}\left|\tx_m^\top\tb_\lambda-\tx_m^\top\tb^*\right|}{\max_{1\leq m\leq n }e^{\tx_m^\top\tb^*}} \\
&\leq e^{\max\limits_{1\leq m\leq n }\left|\tx_m^\top\tb_\lambda-\tx_m^\top\tb^*\right|}\max_{1\leq m\leq n}\left|\tx_m^\top\tb_\lambda-\tx_m^\top\tb^*\right|.
\end{align*}
By the Cauchy-Schwartz inequality, we have
\begin{align*}
\left|\tx_m^\top\tb_\lambda-\tx_m^\top\tb^*\right|&\leq |(\wh\alpha_\lambda)_m-\alpha^*_m|+\left|\bx_m^\top\wh\bb_\lambda-\bx_m^\top\bb^*\right| \\
&\leq \Vert\wh\ba_\lambda-\ba^*\Vert_\infty +\Vert\bx_m\Vert_2\Vert\wh\bb_\lambda-\bb^*\Vert_2 \\
&\leq \Vert\ba^T-\ba^*\Vert_\infty+\Vert\wh\ba_\lambda-\ba^T\Vert_\infty + \Vert\bx_m\Vert_2\Vert\bb^T-\bb^*\Vert_2+\Vert\bx_m\Vert_2\Vert\wh\bb_\lambda-\bb^T\Vert_2 \\
&\leq \Vert\ba^T-\ba^*\Vert_\infty+\left\Vert \tb_\lambda-\tb^T\right\Vert_2 + \Vert\bx_m\Vert_2\Vert\bb^T-\bb^*\Vert_2+\Vert\bx_m\Vert_2\left\Vert \tb_\lambda-\tb^T\right\Vert_2 \\
&\leq C_8\kappa_1^2 \sqrt{\frac{(d+1)\log n}{npL}},
\end{align*}
where $\displaystyle C_8\geq C_6+\left(1+\sqrt{c_3}\right)C_7+\sqrt{c_3}C_3$ and $n$ is large enough. The last inequlity holds from the results derived in \S\ref{app:gradient} and \S\ref{leaveoneout}. Then for $n$ and $L$ such that $ C_8\kappa_1^2 \sqrt{\frac{(d+1)\log n}{npL}}\leq 0.6$, we have
\begin{align*}
\frac{\left\Vert e^{\tX\tb_\lambda}-e^{\tX\tb^*}\right\Vert_\infty}{\left\Vert e^{\tX\tb^*}\right\Vert_\infty}
&\leq 2C_8\kappa_1^2 \sqrt{\frac{(d+1)\log n}{npL}}.
\end{align*}
Similarly, we also obtain
\begin{align*}
\Vert\wh\ba_\lambda-\ba^*\Vert_\infty&\leq \Vert\ba^T-\ba^*\Vert_\infty+\Vert\ba_\lambda-\ba^T\Vert_\infty \\
&\leq C_6\kappa_1^2\sqrt{\frac{(d+1)\log n}{npL}}+C_7\kappa_1\sqrt{\frac{(d+1)\log n}{npL}} \\
&\lesssim \kappa_1^2\sqrt{\frac{(d+1)\log n}{npL}};\\
\Vert\wh\bb_\lambda-\bb^*\Vert_2&\leq\Vert\bb^T-\bb^*\Vert_2+\Vert\bb_\lambda-\bb^T\Vert_2\\
&\leq C_3\kappa_1\sqrt{\frac{\log n}{pL}}+C_7\kappa_1\sqrt{\frac{(d+1)\log n}{npL}}
\lesssim \kappa_1\sqrt{\frac{\log n}{pL}}.
\end{align*}

Next, we use the proof ideas and conclusions from Theorem \ref{mainthm} to prove the statistical rate of a non-regularized MLE defined in \eqref{MLEun}.

\subsection{Proof of Theorem \ref{noregularization}}\label{regularizedtoriginal}
With all necessary building blocks at hand, in this subsection, we provide the formal proof for Theorem \ref{noregularization}. In specific, we assume $L = O(n^2)$ in the following proof and it is easy to extend the proof to the regime $L\leq c_4\cdot n^{c_5}$. The way to solve this is changing the power $11$ in Lemma \ref{gradient} and Lemma \ref{eigenlem} to a larger number.
\begin{proof}
	Consider the following MLE
	\begin{align}
	\tb_{\text{con}} := \argmin_{\tb\in\Theta,\Vert\ba-\ba^*\Vert_\infty\leq 0.025,\Vert\bb-\bb^*\Vert_2\leq 0.025\sqrt{n/(2c_3d+2c_3)}} \cL(\tb). \label{conMLE}
	\end{align}
	We choose $c_\lambda$ in the definition of $\lambda$ such that
	\begin{align*}
	\lambda n^3\geq 1\text{ and } \left(\kappa_2 n+\sqrt{n(d+1)}\kappa_3+(C_3+C_7)\kappa_1\sqrt{\frac{\log n}{pL}}\right)\lambda\leq \frac{c_2pn}{20}C_9\sqrt{\frac{\log n}{n^2pL}}
	\end{align*}
	for some $C_9>0$. As long as $L\leq c_4\cdot n^{c_5}$ for some absolute constants $c_4,c_5>0$, the proof of Lemma \ref{lem5} is still valid and Theorem \ref{mainthm} still holds for this $\lambda$. For $n$ large enough such that $\tb_\lambda$ satisfies the constraints in Eq.~\eqref{conMLE}, by the optimality of $\tb_{\text{con}}$ we know that $\cL(\tb_\lambda)\geq \cL(\tb_{\text{con}})$. On the other hand, by Taylor's expansion
	\begin{align*}
	\cL(\tb_{\text{con}}) = \cL(\tb_\lambda)+\nabla\cL(\tb_\lambda)^\top \left(\tb_{\text{con}}-\tb_\lambda\right)+\frac{1}{2}\left(\tb_{\text{con}}-\tb_\lambda\right)^\top \nabla^2\cL(\boldsymbol{c})\left(\tb_{\text{con}}-\tb_\lambda\right),
	\end{align*}
	where $\boldsymbol{c} = \xi \tb_{\text{con}} +(1-\xi)\tb_\lambda $ for some $\xi \in [0,1]$.

	This leads to
	\begin{align}
	\nabla\cL(\tb_\lambda)^\top \left(\tb_{\text{con}}-\tb_\lambda\right)+\frac{1}{2}\left(\tb_{\text{con}}-\tb_\lambda\right)^\top \nabla^2\cL(\boldsymbol{c})\left(\tb_{\text{con}}-\tb_\lambda\right)\leq 0.\label{eq1}
	\end{align}
	Next, we first define the norm $\Vert\cdot\Vert_c$ as
	\begin{align*}
	\left\Vert \tb\right\Vert_c := \max_{i\in [n]}\left|\tb_i\right|+\sqrt{\frac{c_3(d+1)}{n}}\sqrt{\sum_{j = n+1}^{n+d}\tb_j^2}, \qquad\forall\tb\in\mathbb{R}^{n+d}.
	\end{align*}
	
	Therefore, we have 
	\begin{align*}
	\Vert\boldsymbol{c}-\tb^*\Vert_c \leq \max\left\{\Vert \tb_\lambda-\tb^*\Vert_c,\Vert \tb_{\text{con}}-\tb^*\Vert_c\right\}\leq  0.1
	\end{align*}
	as long as 
	\begin{align*}
	\left[2(C_6+C_7)\kappa_1^2+2\sqrt{2c_3}(C_3+C_7)\kappa_1\right]\sqrt{\frac{(d+1)\log n}{npL}}\leq 0.1.
	\end{align*}
As a result, by Lemma \ref{lb} we have
	\begin{align}
	\lambda_{\text{min},\perp}(\nabla^2\cL(\boldsymbol{c}))\geq \lambda+\frac{c_2pn}{8\kappa_1e^C}\geq \frac{c_2 pn}{10\kappa_1}.\label{eq2}
	\end{align}
	Combine Eq.~\eqref{eq1} and Eq.~\eqref{eq2} we have
	\begin{align*}
	\frac{c_2 pn}{20\kappa_1}\left\Vert \tb_{\text{con}}-\tb_\lambda\right\Vert_2^2&\leq \frac{1}{2}\left(\tb_{\text{con}}-\tb_\lambda\right)^\top \nabla^2\cL(\boldsymbol{c})\left(\tb_{\text{con}}-\tb_\lambda\right) \\
	&\leq -\nabla\cL(\tb_\lambda)^\top \left(\tb_{\text{con}}-\tb_\lambda\right) \\
	&\leq \left\Vert \nabla\cL(\tb_\lambda)\right\Vert_2\left\Vert \tb_{\text{con}}-\tb_\lambda\right\Vert_2.
	\end{align*}
	Therefore, after some simple calculations, it holds that 
	\begin{align*}
	\left\Vert \tb_{\text{con}}-\tb_\lambda\right\Vert_2\leq \frac{20\kappa_1}{c_2 pn}\left\Vert \nabla\cL(\tb_\lambda)\right\Vert_2 \leq \frac{20\kappa_1\lambda}{c_2 pn}\left\Vert \tb_\lambda\right\Vert_2\leq C_9\kappa_1\sqrt{\frac{\log n}{n^2pL}}.
	\end{align*}
	And, when $n$ is large enough, we have
	\begin{align}
	&\Vert \wh\ba_{\text{con}}-\ba^*\Vert_\infty\leq \Vert \wh\ba_{\text{con}}-\wh\ba_\lambda\Vert_\infty+\Vert \wh\ba_\lambda-\ba^*\Vert_\infty\leq C_9\kappa_1\sqrt{\frac{\log n}{n^2pL}}+ (C_6+C_7)\kappa_1^2\sqrt{\frac{(d+1)\log n}{npL}}\leq 0.01,\label{eq3} \\ 
	&\Vert \wh\bb_{\text{con}}-\bb^*\Vert_2\leq \Vert \wh\bb_{\text{con}}-\wh\bb_\lambda\Vert_2+\Vert \wh\bb_\lambda-\bb^*\Vert_2\leq C_9\kappa_1\sqrt{\frac{\log n}{n^2pL}}+ (C_3+C_7)\kappa_1\sqrt{\frac{\log n}{pL}}\leq 0.01\sqrt{\frac{n}{2c_3(d+1)}},\label{eq4}
	\end{align}
where $\wh\ba_{\text{con}} := (\tb_{\text{con}})_{1:n}$ and $\wh\bb_{\text{con}} := (\tb_{\text{con}})_{n+1:n+d}$. As a result, $\tb_{\text{con}}$ falls in the interior of the inequality constraints in Eq.~\eqref{conMLE}.   By the convexity of $\cL$ and its strong convexity in $\Theta$,  we have  $\tb_{\text{con}} = \tb_M$.  Therefore, by Eq.~\eqref{eq3} and Eq.~\eqref{eq4} we have
	\begin{align}
	&\Vert \wh\ba_M-\ba^*\Vert_\infty = \Vert \wh\ba_{\text{con}}-\ba^*\Vert_\infty\leq C_{10}\kappa_1^2\sqrt{\frac{(d+1)\log n}{npL}}, \\
	&\Vert \wh\bb_M-\bb^*\Vert_2 = \Vert \wh\bb_{\text{con}}-\bb^*\Vert_2\leq C_{11}\kappa_1\sqrt{\frac{\log n}{pL}}.
	\end{align}
	The result of $\left\Vert\tX^\top\tb_M-\tX^\top\tb^*\right\Vert_\infty$ and $\frac{\left\Vert e^{\tX^\top\tb_M}-e^{\tX^\top\tb^*}\right\Vert_\infty}{\left\Vert e^{\tX^\top\tb^*}\right\Vert_\infty}$ hold based on the same derivations in Section \ref{proofthm1} and we omit the corresponding details.
\end{proof}

\newpage
\section{Proof of Inference Results in Section \ref{Inference}}\label{InferenceOutline}
{In this section we first introduce our extension to top-K testing and provide the proof outlines for Theorem \ref{inferenceresidual}. Then, we prove Theorem \ref{betainference} and Corollaries  \ref{alphainference} and \ref{cor:inferenceb}, based on the results of Theorem \ref{inferenceresidual}. The details of  proofs in this section is deferred to Section \ref{ProofDetails}. We next introduce some building blocks for the proof of main theorems.}
\subsection{Extension to Top-K set hypothesis testing}
{In this section, we extend the method to conduct the top-K test using estimators for overall scores $\theta_i^*:=\alpha_i^*+\xb_i^\top\bbeta^*,i\in[n]$ (for simplicity, here we define the overall scores as $\theta_i^*$ and its estimators as $\hat\theta_i:=\hat\alpha_i+\xb_i^\top\hat\bbeta,i\in [n]$). 

We use the following statistics to construct top-K test simultaneously for all elements in $\cM.$
\begin{align*}
\cT := \max_{m\in \cM}\max_{k\neq m}\frac{\widehat{\theta}_k - \widehat{\theta}_m - (\theta_k^* - \theta_m^*)}{\widehat{\sigma}_{m,k}}
\end{align*}
 It's distribution can be approximated by the bootstrap counterpart
\begin{align*}
\cG:= \max_{m\in \cM}\max_{k\neq m}\sum_{l=1}^L \sum_{(i,j)\in \cE,i>j} \frac{(\eb_m-\eb_k)^\top(\nabla^2\cL(\hat\theta)^{+})(\eb_i-\eb_j)}{\widehat{\sigma}_{m,k}L}(\phi(\hat\theta_i-\hat\theta_j)-y_{j, i}^{(l)})\omega_{j, i}^{(l)}.
\end{align*}
We are able to achieve similar theoretical results by following the similar proof procedure of Theorem 5 in \cite{fan2023spectral} for $\cT$ and $\cG$.
Let $c_{1-\alpha}$ be the $(1-\alpha)$-th quantile of $\cG$, we have the following theorem for the test statistics $\cT$.
\begin{theorem}\label{T23thm}
Under the conditions of Theorem 1, we have
    \begin{align*}
        &\left|\PP(\cT>c_{1-\alpha})-\alpha\right|\rightarrow 0.
    \end{align*}
\end{theorem}

Next, we introduce some applications on constructing (simultaneous) one-sided confidence intervals for out-of-sample ranks via the distribution of $\cT$ in the following two examples.
\begin{example}
    For an item $m$ of interest, and let $K$ be the targeted rank threshold, we are interested in the following testing problem
    \begin{align}
        H_0: r(m)\leq K \quad \text{ versus }\quad  H_1:r(m)>K. \label{ranktest}
    \end{align}
    Let $\widehat{c}_{1-\alpha}$ be the estimated $(1-\alpha)$-th quantile of $\cT$ from the bootstrap samples. As a result, we have
    \begin{align*}
        \PP\left(\theta_k^*-\theta_m^*\geq \widehat{\theta}_k - \widehat{\theta}_m-\widehat{c}_{1-\alpha} \widehat{\sigma}_{m,k}\right) \geq 1-\alpha.
    \end{align*}
    Similarly, this implies 
    \begin{align*}
        \PP\left(r(m)\geq 1+\sum_{k\neq m}\mathbf{1}(\widehat{\theta}_k - \widehat{\theta}_m> \widehat{c}_{1-\alpha} \widehat{\sigma}_{m,k})\right) \geq 1-\alpha.
    \end{align*}
    This yields a critical region at a significance level of $alpha$ for the test \eqref{ranktest}
    \begin{align*}
        \left\{1+\sum_{k\neq m}\mathbf{1}(\widehat{\theta}_k - \widehat{\theta}_m> \widehat{c}_{1-\alpha} \widehat{\sigma}_{m,k})>K\right\}.
    \end{align*}
\end{example}
}
\subsection{Proof Outline}
\begin{lemma}\label{Lilemma}
	For $i\in [n]$, with probability at least $1-O(n^{-10})$ we have
	\begin{itemize}
		\item $\displaystyle\left|\left(\nabla\cL(\tb^*)\right)_i\right|\lesssim\sqrt{\frac{np\log n}{L}}$;
		\item $\displaystyle\sum_{j\neq i}\left(\nabla^2\cL(\tb^*)\right)_{i,j}^2\lesssim np(1+dp)$, \quad $\displaystyle\sum_{k>n}\left(\nabla^2\cL(\tb^*)\right)_{i,k}^2\lesssim  ndp^2$,  \quad $\displaystyle\sum_{j\in [n],j\neq i}\left|\left(\nabla^2\cL(\tb^*)\right)_{i,j}\right|\lesssim np$.
		\item  $ \left|y_{j,i}-\mathbb{E}y_{j,i}\right|\lesssim\sqrt{\frac{\log n}{L}}$, for any $i,j\in[n], i\neq j$.
	\end{itemize}
\end{lemma}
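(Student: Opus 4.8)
The three displayed estimates are of different natures, so I would treat them separately, all on one common high‑probability event that controls the degrees of the comparison graph. Since $\mathcal{G}\sim\mathcal{G}_{n,p}$ with $np\gtrsim\log n$, a Chernoff bound for $\deg(i)\sim\mathrm{Binomial}(n-1,p)$ followed by a union bound over the $n$ vertices shows that $\deg(i)\lesssim np$ for all $i\in[n]$ with probability $1-O(n^{-10})$; call this event $\mathcal{D}$ and condition on $\mathcal{G}$ (hence on all degrees) throughout the first two bullets. The third bullet is the easiest and needs no graph conditioning: since $y_{j,i}=\frac1L\sum_{l=1}^L y_{j,i}^{(l)}$ is an average of $L$ i.i.d.\ Bernoulli variables, Hoeffding's inequality gives $\mathbb{P}(|y_{j,i}-\mathbb{E}y_{j,i}|>t)\le 2e^{-2Lt^2}$; taking $t\asymp\sqrt{(\log n)/L}$ and union bounding over the $O(n^2)$ pairs $(i,j)$ yields the stated rate $\sqrt{(\log n)/L}$.

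For the gradient bound I would first use the mean‑zero structure at $\tb^*$: with the link $\phi$ of the paper one has $\mathbb{E}y_{j,i}=\phi(\tx_i^\top\tb^*-\tx_j^\top\tb^*)$, so each summand $-y_{j,i}+\phi(\tx_i^\top\tb^*-\tx_j^\top\tb^*)$ in $\nabla\cL(\tb^*)$ (see \eqref{formulagradient}) is centered. Because $\tx_i-\tx_j=((\be_i-\be_j)^\top,(\bx_i-\bx_j)^\top)^\top$, the $i$‑th coordinate (with $i\le n$) extracts only the $\be_i-\be_j$ part, so
\[
\left(\nabla\cL(\tb^*)\right)_i=\sum_{j:(i,j)\in\mathcal{E}}\pm\big(\phi(\tx_i^\top\tb^*-\tx_j^\top\tb^*)-y_{j,i}\big)=\frac1L\sum_{j:(i,j)\in\mathcal{E}}\sum_{l=1}^L\pm\big(\phi(\tx_i^\top\tb^*-\tx_j^\top\tb^*)-y_{j,i}^{(l)}\big).
\]
The decisive step is to expand into the $\deg(i)\,L$ individual Bernoulli trials: conditional on $\mathcal{G}$ these are independent, mean‑zero, bounded by $1$, with total variance at most $\deg(i)L/4\lesssim npL$ on $\mathcal{D}$. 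Bernstein's inequality then gives that the inner double sum is $\lesssim\sqrt{npL\log n}$ with probability $1-O(n^{-11})$; here the variance term dominates the large‑deviation term precisely because $npL\gtrsim\log n$. Dividing by $L$ produces the sharp rate $\sqrt{(np\log n)/L}$, and a union bound over $i\in[n]$ finishes. The reason for decomposing into the $L$ trials, instead of treating each averaged increment $\phi(\cdot)-y_{j,i}$ as one bounded term, is exactly that it keeps the Bernstein bounded‑term contribution at order $(\log n)/L$, which is dominated under $np\gtrsim\log n$, rather than order $\log n$, which would force the stronger condition $np\gtrsim L\log n$.

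For the Hessian estimates I would compute entries explicitly from \eqref{formulaHessian}. Writing $h_{ij}:=e^{\tx_i^\top\tb^*}e^{\tx_j^\top\tb^*}/(e^{\tx_i^\top\tb^*}+e^{\tx_j^\top\tb^*})^2\in(0,1/4]$ and using $\tx_i-\tx_j=(\be_i-\be_j;\bx_i-\bx_j)$: for $i\neq j$ both in $[n]$ only the edge $\{i,j\}$ contributes the $\be_i\be_j$ cross term, so $(\nabla^2\cL(\tb^*))_{i,j}=-h_{ij}\,\mathbbm{1}\{(i,j)\in\mathcal{E}\}$, whence $\sum_{j\in[n],j\neq i}|(\nabla^2\cL(\tb^*))_{i,j}|\le\tfrac14\deg(i)\lesssim np$ and likewise $\sum_{j\in[n],j\neq i}(\nabla^2\cL(\tb^*))_{i,j}^2\le\tfrac1{16}\deg(i)\lesssim np$. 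For a covariate coordinate $k=n+s$ with $s\in[d]$, only edges incident to $i$ survive the $\be_i$ factor, giving $(\nabla^2\cL(\tb^*))_{i,n+s}=\sum_{j:(i,j)\in\mathcal{E}}\pm h_{ij}(\bx_i-\bx_j)_s$, so that $\sum_{k>n}(\nabla^2\cL(\tb^*))_{i,k}^2=\lVert\sum_{j:(i,j)\in\mathcal{E}}\pm h_{ij}(\bx_i-\bx_j)\rVert_2^2$; the triangle inequality with $h_{ij}\le\tfrac14$ and the rescaling $\lVert\bx_i\rVert_2\le\sqrt{(d+1)/n}$ bound this by $\big(\tfrac14\deg(i)\cdot 2\sqrt{(d+1)/n}\big)^2\lesssim ndp^2$. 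Finally, the first bullet is the sum of the $[n]$‑block and covariate‑block contributions, $\sum_{j\neq i}(\nabla^2\cL(\tb^*))_{i,j}^2\lesssim np+ndp^2=np(1+dp)$.

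The only genuinely stochastic content is the gradient bound, since the Hessian entries are deterministic once $\mathcal{G}$ is fixed and need only degree concentration and the covariate rescaling. I therefore expect the main (though still routine) obstacle to be calibrating the Bernstein estimate at the correct $L$‑scale — the $L$‑trial decomposition above — so as to extract the sharp $\sqrt{(np\log n)/L}$ rate and to confirm that the variance term dominates under the standing assumption $np\gtrsim\log n$; everything else reduces to elementary algebra with the explicit entry formulas.
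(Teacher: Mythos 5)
Your proposal is correct and follows essentially the same route as the paper's proof: Bernstein's inequality applied to the $\deg(i)\cdot L$ centered Bernoulli increments for the gradient coordinate, explicit entrywise computation of the Hessian combined with degree concentration ($\deg(i)\lesssim np$) and the covariate rescaling $\Vert\bx_i\Vert_2\le\sqrt{(d+1)/n}$ for the second bullet, and Hoeffding plus a union bound for the third. The only cosmetic difference is that you bound the covariate block of the Hessian row by the triangle inequality where the paper uses Cauchy--Schwarz; both give the same $ndp^2$ order.
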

\begin{proof}
	See \S\ref{prooflemb1} for a detailed proof.
	\end{proof}
Recall that we define the norm $\Vert\cdot\Vert_c$ as
\begin{align*}
    \left\Vert \tb\right\Vert_c := \max_{i\in [n]}\left|\tb_i\right|+\sqrt{\frac{c_3(d+1)}{n}}\sqrt{\sum_{j = n+1}^{n+d}\tb_j^2}, \qquad\forall\tb\in\mathbb{R}^{n+d}.
\end{align*}
Then for any $i\in [n]$ and $\tb\in \mathbb{R}^{n+d}$, we have $\displaystyle\left|\tx_i^\top\tb\right|\leq \left\Vert\tb\right\Vert_c$.

\subsection{Proof Outline of Theorem \ref{inferenceresidual}}
In this subsection, we provide the proof outline for Theorem \ref{inferenceresidual}. The following lemma gives a bound for $\left\Vert\Delta\tb\right\Vert_2:=\|\tb_M-\ob\|_2$, which validates the first part of Theorem \ref{inferenceresidual}.
\begin{theorem}\label{inferencemainthm}
Under the assumptions of Theorem \ref{noregularization}, with probability at least $1-O(n^{-6})$ we have
\begin{align*}
    \left\Vert \tb_M-\ob\right\Vert_2\lesssim\kappa_1^4\frac{(d+1)^{0.5}\log n}{\sqrt{n}pL}.
\end{align*}
\end{theorem}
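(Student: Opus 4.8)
The plan is to read off $\tb_M-\ob$ from the two first-order conditions and then invert the Hessian on $\Theta$. By optimality of the constrained MLE we have $\mathcal{P}\nabla\cL(\tb_M)=\boldsymbol 0$, while by construction $\ob$ solves $\mathcal{P}\nabla\cL(\tb^*)+\mathcal{P}\nabla^2\cL(\tb^*)(\ob-\tb^*)=\boldsymbol 0$. Applying the fundamental theorem of calculus to $s\mapsto\nabla\cL(\tb^*+s(\tb_M-\tb^*))$ gives
\begin{align*}
\boldsymbol 0=\mathcal{P}\nabla\cL(\tb_M)=\mathcal{P}\nabla\cL(\tb^*)+\mathcal{P}\nabla^2\cL(\tb^*)(\tb_M-\tb^*)+\mathcal{P}\br,
\end{align*}
where $\br:=\big[\int_0^1\nabla^2\cL(\tb^*+s(\tb_M-\tb^*))\,ds-\nabla^2\cL(\tb^*)\big](\tb_M-\tb^*)$ is the second-order remainder. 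Subtracting the defining equation of $\ob$ and using $\tb_M,\ob\in\Theta$ (so $\tb_M-\ob=\mathcal{P}(\tb_M-\ob)$) yields the backbone identity $\mathcal{P}\nabla^2\cL(\tb^*)\mathcal{P}(\tb_M-\ob)=-\mathcal{P}\br$.

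Next I would invert the restricted Hessian. Since $\tb_M-\ob\in\Theta$, and since Lemma \ref{lb} with $\lambda=0$ combined with the consistency bounds of Theorem \ref{noregularization} (which force the constant $e^C=O(1)$) gives $\lambda_{\text{min},\perp}(\nabla^2\cL(\tb^*))\gtrsim np/\kappa_1$, it follows that
\begin{align*}
\left\Vert\tb_M-\ob\right\Vert_2\leq\frac{1}{\lambda_{\text{min},\perp}(\nabla^2\cL(\tb^*))}\left\Vert\mathcal{P}\br\right\Vert_2\lesssim\frac{\kappa_1}{np}\left\Vert\mathcal{P}\br\right\Vert_2.
\end{align*}
Everything thus reduces to controlling $\Vert\mathcal{P}\br\Vert_2$, which is the heart of the matter and the step I expect to be the main obstacle.

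To bound the remainder, set $\bv:=\tb_M-\tb^*$ and $d_{ij}:=(\tx_i-\tx_j)^\top\bv$. Using the Hessian formula \eqref{formulaHessian} and the fact that the edge weight equals $\phi'(\tx_i^\top\tb-\tx_j^\top\tb)$, one writes $\br=\sum_{(i,j)\in\mathcal{E},\,i>j}w_{ij}(\tx_i-\tx_j)$ with $w_{ij}=\big(\int_0^1[\phi'(\cdot(s))-\phi'(\cdot(0))]\,ds\big)d_{ij}$; because $\phi'$ is globally Lipschitz with an absolute constant, $|w_{ij}|\lesssim d_{ij}^2$. I would then bound $\Vert\mathcal{P}\br\Vert_2=\sup_{\bz\in\Theta,\,\Vert\bz\Vert_2=1}\bz^\top\br$ by duality: writing $z_{ij}=(\tx_i-\tx_j)^\top\bz$ and applying Cauchy--Schwarz,
\begin{align*}
\left|\bz^\top\br\right|\lesssim\sum_{(i,j)}d_{ij}^2|z_{ij}|\leq\Big(\max_{i,j}|d_{ij}|\Big)\Big(\sum_{(i,j)}d_{ij}^2\Big)^{1/2}\Big(\sum_{(i,j)}z_{ij}^2\Big)^{1/2}.
\end{align*}
Here $\max_{i,j}|d_{ij}|\leq2\Vert\tX\tb_M-\tX\tb^*\Vert_\infty$, while $\sum_{(i,j)}d_{ij}^2=\bv^\top\LG\bv\leq\Vert\LG\Vert\,\Vert\bv\Vert_2^2$ and $\sum_{(i,j)}z_{ij}^2=\bz^\top\LG\bz\leq\Vert\LG\Vert$, with $\Vert\LG\Vert\lesssim np$ on the event $\mathcal{A}_2$ of Lemma \ref{eigenlem}. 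This produces $\Vert\mathcal{P}\br\Vert_2\lesssim\Vert\tX\tb_M-\tX\tb^*\Vert_\infty\cdot np\cdot\Vert\bv\Vert_2$.

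Finally I would substitute the rates from Theorem \ref{noregularization}, namely $\Vert\tX\tb_M-\tX\tb^*\Vert_\infty\lesssim\kappa_1^2\sqrt{(d+1)\log n/(npL)}$ and the $\ell_2$ bound $\Vert\tb_M-\tb^*\Vert_2\lesssim\kappa_1\sqrt{\log n/(pL)}$ inherited from \eqref{inductionA}. A direct computation gives $\Vert\mathcal{P}\br\Vert_2\lesssim\kappa_1^3\sqrt{n(d+1)}\,(\log n)/L$, whence
\begin{align*}
\left\Vert\tb_M-\ob\right\Vert_2\lesssim\frac{\kappa_1}{np}\cdot\kappa_1^3\sqrt{n(d+1)}\,\frac{\log n}{L}=\kappa_1^4\frac{(d+1)^{0.5}\log n}{\sqrt n\,pL},
\end{align*}
as claimed. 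The delicate points are that the Lipschitz constant of $\phi'$ is an absolute constant (so no extra $\kappa_1$ enters $w_{ij}$), and that the sharp $\ell_\infty$ score bound and the $\ell_2$ parameter bound must be combined through the split above — relying on either rate alone would lose a factor of order $\kappa_1\sqrt{d+1}$.
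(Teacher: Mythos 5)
Your proposal is correct and follows essentially the same route as the paper's proof in \S\ref{proveinferencemainthm}: the same Newton-type identity $\mathcal{P}\nabla^2\cL(\tb^*)(\ob-\tb_M)=\mathcal{P}\boldsymbol{R}$, the same inversion via $\lambda_{\text{min},\perp}(\nabla^2\cL(\tb^*))\gtrsim np/\kappa_1$, and the same combination of the $\ell_\infty$ score rate with the $\ell_2$ parameter rate and $\Vert\LG\Vert\lesssim np$. The only cosmetic difference is that you bound the remainder by duality and Cauchy--Schwarz over edges, whereas the paper bounds $\int_0^1\Vert\nabla^2\cL(\tb(t))-\nabla^2\cL(\tb^*)\Vert\,dt\lesssim np\Vert\tb_M-\tb^*\Vert_c$ directly via the Lipschitz continuity of $\phi'$; both deliver the identical estimate.
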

\begin{proof}
	See \S\ref{proveinferencemainthm} for a detailed proof.
\end{proof}
Next, we  consider controlling the magnitude of $\Delta\alpha_i = \wh\alpha_{M,i}-\oa_i$ for $i\in [n]$ in order to prove the second part of Theorem \ref{inferenceresidual}. 

Recall that we define the quadratic approximation $\ocL(\cdot)$ to $\cL(\cdot)$ in \eqref{quadratic_expan}, which is also given as below:
\begin{align*}
\ocL(\tb) = \cL(\tb)+\left(\tb-\tb^*\right)^\top\nabla \cL(\tb^*)+\frac{1}{2}\left(\tb-\tb^*\right)^\top\nabla^2 \cL(\tb^*)\left(\tb-\tb^*\right).
\end{align*}

We adopt the following notation for a given vector $\mathbf{x}\in \mathbb{R}^{n+d}$,
\begin{align*}
    \ocL\bigg|_{\mathbf{x}_{-i}}(x_i) = \ocL(\check{\bb})\bigg|_{\check{\bb}_i=x_i, \check{\bb}_{-i}=\mathbf{x}_{-i}},
\end{align*}
which acts as a marginal likelihood function for the $i$-th coordinate $x_i$ given the other coordinates $\mathbf{x}_{-i}$ fixed. Accroding to this definition, we have the following proposition. The proof of Proposition \ref{oaproposition} is included in \S \ref{proofpro1pro2}.

\begin{proposition}\label{oaproposition}
For $i\in [n]$, $\oa_i$ is the minimizer of the univariate function $\ocL|_{\ob_{-i}}$.
\end{proposition}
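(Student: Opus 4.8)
The plan is to reduce the proposition to a single first-order identity together with a convexity check. Fix $i\in[n]$ and set $g_i(x_i):=\ocL|_{\ob_{-i}}(x_i)$. Since $\ocL$ is the quadratic in \eqref{quadratic_expan}, $g_i$ is a one-dimensional quadratic whose second derivative is $(\nabla^2\cL(\tb^*))_{ii}$ and whose first derivative at $x_i=\ob_i$ equals $(\nabla\ocL(\ob))_i$, where $\nabla\ocL(\tb)=\nabla\cL(\tb^*)+\nabla^2\cL(\tb^*)(\tb-\tb^*)$. By \eqref{formulaHessian}, $(\nabla^2\cL(\tb^*))_{ii}=\sum_{j:(i,j)\in\mathcal{E}}\frac{e^{\tx_i^\top\tb^*}e^{\tx_j^\top\tb^*}}{(e^{\tx_i^\top\tb^*}+e^{\tx_j^\top\tb^*})^2}>0$ whenever item $i$ appears in at least one comparison, which holds on the connectivity event of Lemma \ref{eigenlem}. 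Thus $g_i$ is strictly convex, and since $\oa_i=\ob_i$, it suffices to show that $\ob_i$ is the stationary point of $g_i$, i.e. $(\nabla\ocL(\ob))_i=0$. I will in fact establish the stronger statement $\nabla\ocL(\ob)=\boldsymbol{0}$.

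Two structural facts about $\nabla\ocL(\ob)$ come into play. First, optimality of $\ob$ over the subspace $\Theta$ — that is, the linear system defining $\ob$ in the main text — gives $\mathcal{P}\nabla\ocL(\ob)=\boldsymbol{0}$, so $\nabla\ocL(\ob)\in\Theta^{\perp}=\mathrm{range}(\boldsymbol{Z})$, and every such vector has the form $(\bar{\bX}\boldsymbol{v},\boldsymbol{0})$ with $\boldsymbol{v}\in\mathbb{R}^{d+1}$. Second, by \eqref{formulagradient} and \eqref{formulaHessian}, $\nabla\cL(\tb^*)$ and all columns of $\nabla^2\cL(\tb^*)$ are linear combinations of the differences $\tx_i-\tx_j$; hence $\nabla\ocL(\ob)$ belongs to $V:=\mathrm{span}\{\tx_i-\tx_j:i,j\in[n]\}$.

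It then remains to prove the transversality $V\cap\Theta^{\perp}=\{\boldsymbol{0}\}$. Let $\boldsymbol{z}=(\bar{\bX}\boldsymbol{v},\boldsymbol{0})\in\Theta^{\perp}$ and suppose also $\boldsymbol{z}\in V$. The orthogonal complement $V^{\perp}$ consists of the vectors $(\boldsymbol{w}_\alpha,\boldsymbol{w}_\beta)$ for which $(\boldsymbol{w}_\alpha)_i+\boldsymbol{w}_\beta^\top\bx_i$ is constant in $i$ (this is exactly $\boldsymbol{w}^\top(\tx_i-\tx_j)=0$ for all $i,j$), which forces $\boldsymbol{w}_\alpha=c\mathbf{1}-\mathbf{X}\boldsymbol{w}_\beta$; thus the first $n$ coordinates of elements of $V^{\perp}$ range exactly over $\mathrm{range}(\bar{\bX})$. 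In particular $V^{\perp}$ contains a vector whose first $n$ coordinates equal $\bar{\bX}\boldsymbol{v}$; pairing it with $\boldsymbol{z}$ and using that the last $d$ coordinates of $\boldsymbol{z}$ vanish yields $\|\bar{\bX}\boldsymbol{v}\|_2^2=0$. As $\bar{\bX}$ has full column rank, $\bar{\bX}\boldsymbol{v}=\boldsymbol{0}$ and hence $\boldsymbol{z}=\boldsymbol{0}$. Combining this with the previous paragraph gives $\nabla\ocL(\ob)\in V\cap\Theta^{\perp}=\{\boldsymbol{0}\}$, so $(\nabla\ocL(\ob))_i=0$, and strict convexity of $g_i$ identifies $\oa_i=\ob_i$ as its unique minimizer.

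The crux — and the only genuinely non-routine point — is the upgrade from the constrained optimality $\mathcal{P}\nabla\ocL(\ob)=\boldsymbol{0}$ to the unconstrained $\nabla\ocL(\ob)=\boldsymbol{0}$. This is not automatic: the constraint space $\Theta$ and the range space $V$ in which the gradient and Hessian live are distinct $(n-1)$-dimensional subspaces of $\mathbb{R}^{n+d}$, so a priori a $\Theta$-stationary point need not be globally stationary. What rescues the argument is the transversality $V\cap\Theta^{\perp}=\{\boldsymbol{0}\}$, which itself rests entirely on the full column rank of $\bar{\bX}$, i.e. on the non-degeneracy of the identifiability constraint. Once this is in place, the remaining steps are elementary.
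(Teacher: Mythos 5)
Your proof is correct, but it proceeds quite differently from the paper's. The paper argues by contradiction: writing $\Psi=\{\tx_i-\tx_j:i,j\in[n]\}$, it observes that $\ocL$ is invariant under translation by any vector of $\Psi^\perp$, proves (from Assumption \ref{ass1}) that every point of $\mathbb{R}^{n+d}$ can be moved into $\Theta$ by such a translation, and then transports any hypothetical point beating $\oa_i$ in the marginal back into $\Theta$ to contradict the optimality of $\ob$. You instead upgrade the projected stationarity $\mathcal{P}\nabla\ocL(\ob)=\boldsymbol{0}$ to full stationarity $\nabla\ocL(\ob)=\boldsymbol{0}$ via the transversality $\mathrm{span}(\Psi)\cap\Theta^{\perp}=\{\boldsymbol{0}\}$, and then read off the marginal first-order condition coordinate by coordinate. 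The two hinge facts are exact orthogonal-complement duals of each other ($\Theta+\Psi^{\perp}=\mathbb{R}^{n+d}$ if and only if $\mathrm{span}(\Psi)\cap\Theta^{\perp}=\{\boldsymbol{0}\}$), so neither route does more linear algebra; but you derive yours directly from the full column rank of $\bar{\bX}$ rather than through Assumption \ref{ass1}, and you get the strictly stronger conclusion that the entire gradient vanishes, which handles all coordinates at once and would transfer with the same argument to Proposition \ref{alphaproposition} for $\cL$. What the paper's route buys is that it uses only the translation invariance of the objective—no convexity, smoothness, or quadratic structure—so the identical wording covers both $\ocL$ and $\cL$; it also quietly covers the degenerate case of an isolated vertex $i$, where $(\nabla^2\cL(\tb^*))_{i,i}=0$ and your strict-convexity step would fail (the stationarity $(\nabla\ocL(\ob))_i=0$ still holds there, so the marginal is constant and $\oa_i$ remains a minimizer, just not the unique one). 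That caveat aside, your argument is complete and sound.
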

By changing the $n+d-1$ coordinates that we fix, we define $\oa_i'$ as the minimizer of $\ocL|_{\tb_{M,-i}}(x_i)$. 
 Here we fix $\tb_{M,-i}$ and optimize $\ocL$ based on this fixed parameter. Explicitly, the minimizer is calculated as
\begin{align*}
    \oa_i' = \alpha_i^*-\frac{\left(\nabla\cL(\tb^*)\right)_i+\sum\limits_{j\neq i}\left(\tb_{M,j}-\tb^*_j\right)\left(\nabla^2\cL(\tb^*)\right)_{i,j}}{\left(\nabla^2\cL(\tb^*)\right)_{i,i}}.
\end{align*}

\noindent In order to bound $|\wh\alpha_{M,i}-\oa_i|,$ we  bound $\left|\oa_i'-\oa_i\right|$ and $ |\wh\alpha_{M,i}-\oa_i'|$ separately. 

In terms of $|\oa_i'-\oa_i|$, we provide an upper bound for this quantity in Lemma \ref{alphainferencelem1}.
\begin{lemma}\label{alphainferencelem1}
Under the assumptions of Theorem \ref{noregularization}, as long as
$\kappa_1^2\sqrt{(d+1)\log n/{npL} }= \mathcal{O}(1)$, for $i\in [n]$, with probability at least $1-O(n^{-6})$ we have
\begin{align*}
    \left|\oa_i'-\oa_i\right| \lesssim \kappa_1^5 \frac{(d+1)\log n}{npL} +\frac{\kappa_1^4}{np}\sqrt{\frac{(d+1)\log n}{L}}\left(\sqrt{d+1}+\frac{\log n}{\sqrt{np}}\right)+\kappa_1^2\left(\sqrt{\frac{d+1}{np}}+\frac{\log n}{np}\right)\left\Vert\wh\ba_M-\oba\right\Vert_\infty.
\end{align*}
\end{lemma}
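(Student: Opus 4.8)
The plan is to reduce the lemma to a single weighted linear form in $\Delta\tb=\tb_M-\ob$ and then to bound that form blockwise. Since $\oa_i$ and $\oa_i'$ minimize the \emph{same} quadratic $\ocL$ in the $i$-th coordinate and differ only in where the remaining coordinates are frozen (at $\ob_{-i}$ versus $\tb_{M,-i}$), subtracting the two closed-form one-dimensional minimizers cancels the common pieces $\alpha_i^*$, $(\nabla\cL(\tb^*))_i$ and the diagonal Hessian entry, leaving
\begin{align*}
\oa_i'-\oa_i=-\frac{\sum_{j\neq i}\big(\nabla^2\cL(\tb^*)\big)_{i,j}\,\Delta\tb_j}{\big(\nabla^2\cL(\tb^*)\big)_{i,i}}.
\end{align*}
First I would lower-bound the denominator: the $(i,i)$ entry equals $\sum_{j:(i,j)\in\mathcal{E}}\tfrac{e^{\tx_i^\top\tb^*}e^{\tx_j^\top\tb^*}}{(e^{\tx_i^\top\tb^*}+e^{\tx_j^\top\tb^*})^2}$, each summand is at least $1/(4\kappa_1)$, and the degree of $i$ concentrates at $\asymp np$ on the event $\mathcal{A}_2$, so $\big(\nabla^2\cL(\tb^*)\big)_{i,i}\gtrsim np/\kappa_1$. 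It then remains to control the numerator, which I split into the block $j>n$ (the $\bbeta$-directions) and the block $j\in[n]$ (the $\balpha$-directions; recall $\Delta\ba=\wh\ba_M-\oba$).

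The $\bbeta$-block is the easy one. By Cauchy--Schwarz it is at most $\big(\sum_{k>n}(\nabla^2\cL(\tb^*))_{i,k}^2\big)^{1/2}\|\Delta\bb\|_2$, where Lemma~\ref{Lilemma} gives $\sum_{k>n}(\nabla^2\cL(\tb^*))_{i,k}^2\lesssim ndp^2$ and Theorem~\ref{inferencemainthm} gives $\|\Delta\bb\|_2\le\|\Delta\tb\|_2\lesssim\kappa_1^4(d+1)^{1/2}\log n/(\sqrt n\,pL)$. Dividing by the diagonal lower bound $np/\kappa_1$ reproduces exactly the first asserted term $\kappa_1^5(d+1)\log n/(npL)$.

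The $\balpha$-block is the crux. For $j\in[n]$ one has $\big(\nabla^2\cL(\tb^*)\big)_{i,j}=-h_{ij}A_{ij}$ with $A_{ij}:=\mathbbm{1}\{(i,j)\in\mathcal{E}\}$ and deterministic weight $h_{ij}=\tfrac{e^{\tx_i^\top\tb^*}e^{\tx_j^\top\tb^*}}{(e^{\tx_i^\top\tb^*}+e^{\tx_j^\top\tb^*})^2}\in[1/(4\kappa_1),1/4]$. I would write $A_{ij}=p+(A_{ij}-p)$. For the mean part $-p\sum_j h_{ij}\Delta\alpha_j$ I would exploit the identifiability constraint $\bar{\bX}^\top\Delta\ba=\mathbf 0$ (both $\ob,\tb_M\in\Theta$), i.e. $\Delta\ba\perp\mathrm{col}(\bar{\bX})$, subtracting the best $\mathrm{col}(\bar{\bX})$-fit of the weight vector $(h_{ij})_j$ before pairing with $\Delta\ba$; with the $\ell_2$ bound of Theorem~\ref{inferencemainthm} this contribution is higher order. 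The delicate piece is the fluctuation sum $\sum_j h_{ij}(A_{ij}-p)\Delta\alpha_j$, in which the $\Delta\alpha_j$ are \emph{not} independent of the edges incident to $i$. I would insert an inference-stage analogue of the leave-one-out sequences of Section~\ref{leaveoneout}, replacing $\Delta\alpha_j$ by its leave-$i$-out version $\Delta\alpha_j^{(i)}$ (independent of $\{A_{ij}\}_j$) and bounding the swap $\Delta\alpha_j-\Delta\alpha_j^{(i)}$ by the proximity guarantees of Lemmas~\ref{induction3}--\ref{induction4}. Conditionally on the leave-one-out data the remaining sum is a sum of independent bounded mean-zero terms, so Bernstein's inequality applies; combined with the incoherence condition (Assumption~\ref{ass0}), which injects the covariate dimension, this yields the $\|\Delta\ba\|_\infty$ term with the stated $o(1)$ coefficient $\kappa_1^2(\sqrt{(d+1)/np}+\log n/np)$. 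Because the leave-one-out bounds are only at \emph{estimation} scale $\sqrt{(d+1)\log n/(npL)}$ rather than at the sharper approximation scale of Theorem~\ref{inferencemainthm}, the decoupling and Bernstein remainders inherit this larger scale and assemble into the second term, which is of order $L^{-1/2}$.

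\textbf{Main obstacle.} The entire difficulty lives in the $\balpha$-block: since $\Delta\ba$ depends on the comparison graph precisely through the edges $(i,j)$ being summed, neither a direct concentration inequality nor the trivial row-sum bound $\sum_j|(\nabla^2\cL(\tb^*))_{i,j}|\lesssim np$ (which would leave an unusable $O(\kappa_1)$ coefficient on $\|\Delta\ba\|_\infty$) is admissible. The leave-one-out decoupling is what breaks this dependence and, crucially, what drives the coefficient of $\|\Delta\ba\|_\infty$ down to $o(1)$; only an $o(1)$ coefficient allows this self-referential term to be absorbed to the left-hand side once Lemma~\ref{alphainferencelem1} is combined with the bound on $|\wh\alpha_{M,i}-\oa_i'|$ to close the recursion for $\|\Delta\ba\|_\infty$ in Theorem~\ref{inferenceresidual}.
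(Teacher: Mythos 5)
Your reduction to the linear form $\oa_i'-\oa_i=\sum_{j\neq i}(\nabla^2\cL(\tb^*))_{i,j}\,\Delta\tb_j\,/\,(\nabla^2\cL(\tb^*))_{i,i}$, the denominator bound $\gtrsim np/\kappa_1$, the Cauchy--Schwarz treatment of the $\bbeta$-coordinates via Lemma~\ref{Lilemma} and Theorem~\ref{inferencemainthm}, and the leave-one-out decoupling plus Bernstein for the $\balpha$-coordinates are all exactly the paper's strategy; the paper merely organizes the numerator as the three-way telescope $(\tb_M-\tb_M^{(i)})+(\tb_M^{(i)}-\ob^{(i)})+(\ob^{(i)}-\ob)$ rather than your block and mean/fluctuation split, and the resulting terms coincide.

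There is, however, one step where your sketch points at the wrong tools and thereby hides the actual crux. To decouple you must replace $\Delta\alpha_j=\wh\alpha_{M,j}-\oa_j$ by a leave-$i$-out version, and the swap error has two parts: $\tb_{M,j}-\tb_{M,j}^{(i)}$, which the estimation-stage leave-one-out machinery (Lemmas~\ref{induction3} and \ref{induction4}, packaged as Lemma~\ref{auxlemma1}) does control, and $\ob_j-\ob_j^{(i)}$, which it does not --- $\ob$ is defined by the linear system $\mathcal{P}\nabla\cL(\tb^*)+\mathcal{P}\nabla^2\cL(\tb^*)(\ob-\tb^*)=\boldsymbol{0}$, not by the gradient-descent iterates, so the lemmas you cite say nothing about it. Controlling $\Vert\ob-\ob^{(i)}\Vert_2$ requires a separate perturbation argument comparing this system with its leave-one-out counterpart; there the term $\mathcal{P}\bigl(\nabla^2\cL(\tb^*)-\nabla^2\cL^{(i)}(\tb^*)\bigr)(\ob-\tb^*)$ is bounded by $np\Vert\ob-\tb^*\Vert_c$, and since the $\balpha$-part of $\Vert\tb_M-\ob\Vert_c$ can only be charged to $\Vert\wh\ba_M-\oba\Vert_\infty$, this is precisely where the dominant $\kappa_1^2\sqrt{(d+1)/np}$ piece of the self-referential coefficient originates (the $\kappa_1^2\log n/np$ piece does come from the Bernstein remainder, but even converting the resulting $\Vert\wh\ba_M^{(i)}-\oba^{(i)}\Vert_\infty$ back to $\Vert\wh\ba_M-\oba\Vert_\infty$ again uses $\Vert\ob-\ob^{(i)}\Vert_2$). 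Your final bound and the absorption argument are right, but without this extra perturbation lemma for $\ob-\ob^{(i)}$ the decoupling step does not close.
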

\begin{proof}The detailed proof of Lemma \ref{alphainferencelem1} is given in \S\ref{proofalphainferencelem1}.
	\end{proof}

\noindent On the other hand, for a given $\mathbf{x}\in \mathbb{R}^{n+d},$ we let 
\begin{align}
    \cL \bigg|_{\mathbf{x}_{-i}}(x_i) = \cL(\check{\bb})\bigg|_{\check{\bb}_i=x, \check{\bb}_{-i}=\mathbf{x}_{-i}}.\label{Lconditioned}
\end{align}
Here we consider the marginal loss of $\cL(\cdot)$ in the $i$-th coordinate given other coordinates fixed. Similar to Proposition \ref{oaproposition}, we have the following proposition for $\cL|_{\tb_{M,-i}}(x)$. The proof of Proposition \ref{alphaproposition} is also included in \S \ref{proofpro1pro2}.

\begin{proposition}\label{alphaproposition}
For $i\in [n]$, $\wh\alpha_{M,i}$ is the minimizer of the univariate function $\cL|_{\tb_{M,-i}}(x)$.
\end{proposition}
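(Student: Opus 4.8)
The plan is to derive Proposition~\ref{alphaproposition} from the first-order optimality of the constrained MLE together with the special form of the gradient $\nabla\cL$, and then to promote the resulting stationarity to minimality by convexity. Since $\tb_M=\argmin_{\tb\in\Theta}\cL(\tb)$ and $\Theta$ is a linear subspace, the restriction of $\cL$ to $\Theta$ is stationary at $\tb_M$; equivalently the projected gradient vanishes, $\cP\nabla\cL(\tb_M)=\mathbf 0$, so that $\nabla\cL(\tb_M)$ is orthogonal to $\Theta$. The same argument will simultaneously give Proposition~\ref{oaproposition}.

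The next step is to observe that the gradient itself lives in $\Theta$. By the closed form \eqref{formulagradient}, for every $\tb$ we have $\nabla\cL(\tb)=\sum_{(i,j)\in\cE,\,i>j}c_{ij}(\tx_i-\tx_j)$ with scalars $c_{ij}=\phi(\tx_i^\top\tb-\tx_j^\top\tb)-y_{j,i}$, so $\nabla\cL(\tb)$ is always a linear combination of the comparison directions $\tx_i-\tx_j$ that span $\Theta$; in particular $\nabla\cL(\tb_M)\in\Theta$. Combining with the optimality above, $\nabla\cL(\tb_M)$ is both contained in and orthogonal to $\Theta$, which forces $\nabla\cL(\tb_M)=\mathbf 0$. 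Reading off the $i$-th coordinate, $\be_i^\top\nabla\cL(\tb_M)=0$ for every $i\in[n]$; by the definition \eqref{Lconditioned} of the marginal this is exactly $\frac{d}{dx}\cL|_{\tb_{M,-i}}(x)\big|_{x=\wh\alpha_{M,i}}=0$, i.e.\ $\wh\alpha_{M,i}$ is a stationary point of $x\mapsto\cL|_{\tb_{M,-i}}(x)$.

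It remains to upgrade stationarity to minimality. The univariate map $x\mapsto\cL|_{\tb_{M,-i}}(x)$ is convex, being a restriction of the convex loss $\cL$ to a line; its second derivative equals $\bigl(\nabla^2\cL\bigr)_{i,i}$, which by \eqref{formulaHessian} is a sum of strictly positive terms, one per comparison involving item $i$, and is therefore positive whenever item $i$ participates in at least one comparison, guaranteed with high probability under the connectivity condition $np>c_p\log n$. A strictly convex univariate function has a unique stationary point equal to its global minimizer, so $\wh\alpha_{M,i}=\argmin_x\cL|_{\tb_{M,-i}}(x)$, proving the claim. Proposition~\ref{oaproposition} follows by the identical reasoning applied to $\ob$ and $\ocL$: from \eqref{quadratic_expan} the gradient $\nabla\ocL(\tb)=\nabla\cL(\tb^*)+\nabla^2\cL(\tb^*)(\tb-\tb^*)$ is again a combination of the vectors $\tx_i-\tx_j$ (using \eqref{formulaHessian} for the Hessian term), hence lies in $\Theta$, and $\ocL|_{\ob_{-i}}$ is strictly convex in the same way.

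The main obstacle is the middle step, namely showing that the full gradient vanishes rather than only its projection onto $\Theta$. This is precisely where the identifiability geometry enters: one must know that $\Theta$ is generated by the comparison directions $\{\tx_i-\tx_j\}$, so that a gradient lying in their span cannot carry any component orthogonal to $\Theta$. Granting this structural fact, the coordinatewise reading of $\nabla\cL(\tb_M)=\mathbf 0$ and the convexity argument are routine.
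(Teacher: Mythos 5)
Your overall strategy (first-order optimality plus convexity) is reasonable, but the pivotal structural claim on which it rests is false as stated. You assert that the comparison directions $\tx_i-\tx_j$ span $\Theta$, hence that $\nabla\cL(\tb_M)\in\Theta$, and conclude $\nabla\cL(\tb_M)=\mathbf 0$ from $\nabla\cL(\tb_M)\perp\Theta$. Under the paper's definition $\Theta=\{(\balpha,\bbeta):\bar{\bX}^\top\balpha=\mathbf 0\}$ this is not true: with $\boldsymbol{Z}$ the padded matrix satisfying $\boldsymbol{Z}_{1:n,\cdot}=\bar{\bX}$, one has $\boldsymbol{Z}^\top(\tx_i-\tx_j)=\bar{\bX}^\top(\boldsymbol{e}_i-\boldsymbol{e}_j)=\bar{\bx}_i-\bar{\bx}_j$, which vanishes only when $\bx_i=\bx_j$. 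So $\operatorname{span}\{\tx_i-\tx_j\}$ is an $(n-1)$-dimensional subspace that is \emph{transversal} to $\Theta^{\perp}$ but not equal to $\Theta$, and the step ``$\nabla\cL(\tb_M)\in\Theta$'' fails.

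The conclusion $\nabla\cL(\tb_M)=\mathbf 0$ can be rescued, but by a different argument: $\nabla\cL(\tb_M)$ lies in $\operatorname{span}(\Psi)\cap\Theta^{\perp}$ with $\Psi=\{\tx_i-\tx_j\}$, and this intersection is trivial because Assumption 2 forces $\Theta\cap\Psi^{\perp}=\{\mathbf 0\}$ (a nonzero $\boldsymbol{z}\in\Theta\cap\Psi^{\perp}$ would give $\boldsymbol{z}^\top\boldsymbol{\Sigma}\boldsymbol{z}=0$, contradicting $\lambda_{\text{min},\perp}(\boldsymbol{\Sigma})\geq c_2 n$); a dimension count then gives $\Theta\oplus\Psi^{\perp}=\RR^{n+d}$ and, dually, $\operatorname{span}(\Psi)\cap\Theta^{\perp}=\{\mathbf 0\}$. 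This transversality statement is exactly the content of the auxiliary lemma the paper proves before the two propositions. The paper's own proof avoids gradients altogether: it uses the invariance $\cL(\tb)=\cL(\tb+\boldsymbol{v})$ for $\boldsymbol{v}\in\Psi^{\perp}$ together with the fact that every point of $\RR^{n+d}$ can be translated into $\Theta$ along $\Psi^{\perp}$, and derives a contradiction with the optimality of $\tb_M$ over $\Theta$ if some $z$ satisfied $\cL|_{\tb_{M,-i}}(z)<\cL|_{\tb_{M,-i}}(\wh\alpha_{M,i})$; that route needs no differentiability and no high-probability connectivity side condition (your appeal to item $i$ participating in at least one comparison is in any case unnecessary, since for a convex univariate function any stationary point is already a global minimizer). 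Once you replace the spanning claim with the transversality fact above, your argument goes through.
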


We now describe the intuition of bounding $ |\wh\alpha_{M,i}-\oa_i'|$. 
Since $\alpha_{M,i}$ is the minimizer of $\cL|_{\tb_{M,-i}}(x)$ and $\oa_i'$ is the minimizer of $\ocL|_{\tb_{M,-i}}(x)$. Therefore, as long as $\ocL|_{\tb_{M,-i}}(\cdot)$ and $\cL|_{\tb_{M,-i}}(\cdot)$ are close enough,  the difference between their minimizers $\wh\alpha_{M,i}$ and $\oa_i'$ is small. We summarize this finding in the following lemma \ref{alphainferencelem2}.
\begin{lemma}\label{alphainferencelem2}
Under the assumptions of Theorem \ref{noregularization}, for $i\in [n]$, with probability at least $1-O(n^{-6})$ we have
\begin{align*}
    \left|\wh\alpha_{M,i}-\oa_i'\right| \lesssim\kappa_1^6\frac{(d+1)\log n}{npL}.
\end{align*}
\end{lemma}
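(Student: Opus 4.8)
The plan is to exploit that both $\wh\alpha_{M,i}$ and $\oa_i'$ are minimizers of \emph{univariate} strongly convex functions with the \emph{same} frozen off-coordinates $\tb_{M,-i}$: by Proposition~\ref{alphaproposition}, $\wh\alpha_{M,i}$ minimizes $f(x):=\cL|_{\tb_{M,-i}}(x)$, while by construction $\oa_i'$ minimizes $g(x):=\ocL|_{\tb_{M,-i}}(x)$. Since $f'(\wh\alpha_{M,i})=0=g'(\oa_i')$ and $f$ is strongly convex with $f''\ge\mu$ on the relevant interval, monotonicity of $f'$ gives
\begin{align*}
\left|\wh\alpha_{M,i}-\oa_i'\right|\le\frac{|f'(\oa_i')|}{\mu}=\frac{\left|f'(\oa_i')-g'(\oa_i')\right|}{\mu}.
\end{align*}
This reduces the task to (a) lower-bounding the curvature $\mu$ and (b) bounding the numerator, which is exactly the gap between the gradient of $\cL$ and the gradient of its quadratic surrogate $\ocL$, both evaluated at $\check\bb$ (the vector with $i$-th entry $\oa_i'$ and remaining entries $\tb_{M,-i}$).

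For the numerator, set $\bDelta:=\check\bb-\tb^*$, so that $\bDelta_i=\oa_i'-\alpha_i^*$ and $\bDelta_{-i}=\tb_{M,-i}-\tb^*_{-i}$. Recalling that $\nabla\ocL(\tb)=\nabla\cL(\tb^*)+\nabla^2\cL(\tb^*)(\tb-\tb^*)$, Taylor's theorem in integral form yields
\begin{align*}
f'(\oa_i')-g'(\oa_i')=\left(\int_0^1\left[\nabla^2\cL(\tb^*+s\bDelta)-\nabla^2\cL(\tb^*)\right]\bDelta\,ds\right)_i.
\end{align*}
Using the explicit Hessian \eqref{formulaHessian} together with the fact that the $i$-th coordinate of $\tx_k-\tx_j$ equals $\mathbbm{1}\{k=i\}-\mathbbm{1}\{j=i\}$, only edges incident to $i$ survive, so the right-hand side collapses to $\sum_{j:(i,j)\in\mathcal{E}}\big(\int_0^1[w_{ij}(\tb^*+s\bDelta)-w_{ij}(\tb^*)]\,ds\big)(\tx_i-\tx_j)^\top\bDelta$, where $w_{ij}(\tb)=e^{\tx_i^\top\tb}e^{\tx_j^\top\tb}/(e^{\tx_i^\top\tb}+e^{\tx_j^\top\tb})^2$ depends on $\tb$ only through $(\tx_i-\tx_j)^\top\tb$ and is Lipschitz with an absolute constant. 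Bounding the bracket by a multiple of $|(\tx_i-\tx_j)^\top\bDelta|$ then gives
\begin{align*}
\left|f'(\oa_i')-g'(\oa_i')\right|\lesssim\sum_{j:(i,j)\in\mathcal{E}}\left[(\tx_i-\tx_j)^\top\bDelta\right]^2.
\end{align*}

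It remains to control each summand uniformly and to count the terms. Writing $(\tx_i-\tx_j)^\top\bDelta=(\bDelta_i-\bDelta_j)+(\bx_i-\bx_j)^\top\bDelta_{n+1:n+d}$, I would bound the intrinsic-score part by $2\|\wh\ba_M-\ba^*\|_\infty$ plus the same-order control of $|\oa_i'-\alpha_i^*|$ (obtained from the explicit formula for $\oa_i'$ and Lemma~\ref{Lilemma}), and the covariate part by $2\max_i\|\bx_i\|_2\cdot\|\wh\bb_M-\bb^*\|_2$. Invoking the consistency rates of Theorem~\ref{noregularization} and the scaling $\max_i\|\bx_i\|_2\le\sqrt{(d+1)/n}$, both pieces are $\lesssim\kappa_1^2\sqrt{(d+1)\log n/(npL)}$, uniformly in $j$. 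Since vertex $i$ has degree $\asymp np$ on the Erdős–Rényi graph with high probability, summing the squared terms produces $|f'(\oa_i')-g'(\oa_i')|\lesssim \kappa_1^4(d+1)\log n/L$. For the denominator, the diagonal entry $f''(x)=(\nabla^2\cL(\cdot))_{i,i}=\sum_{j:(i,j)\in\mathcal{E}}w_{ij}$ is bounded below by $\mu\gtrsim np/\kappa_1$ along the interval joining $\wh\alpha_{M,i}$ and $\oa_i'$ (both of which stay near $\alpha_i^*$ by consistency), by bounding each weight $w_{ij}$ from below as in the proof of Lemma~\ref{lb}. Dividing gives $|\wh\alpha_{M,i}-\oa_i'|\lesssim\kappa_1^5(d+1)\log n/(npL)$, within the claimed bound; a union bound over $i\in[n]$ and over the events of Theorem~\ref{noregularization} and Lemma~\ref{Lilemma} delivers the stated $1-O(n^{-6})$ probability.

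The main obstacle is the bookkeeping in the second and third steps: one must verify that the gradient remainder picks up only the $\asymp np$ edges incident to $i$ (rather than all $\binom{n}{2}$ comparisons), since this is precisely what turns the crude Hessian-perturbation bound into the sharp $(d+1)\log n/(npL)$ rate, and one must control $(\tx_i-\tx_j)^\top\bDelta$ uniformly through the $\ell_\infty$ bound on $\wh\ba_M$ and the $\ell_2$ bound on $\wh\bb_M$ simultaneously, while separately pinning down $\oa_i'$ near $\alpha_i^*$ so that the curvature lower bound applies on the whole interval.
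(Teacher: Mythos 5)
Your proposal is correct, but it reaches the bound by a genuinely different route than the paper. The paper writes $\wh\alpha_{M,i}=\alpha_i^*-\bigl(\cL|_{\tb_{M,-i}}\bigr)'(\alpha_i^*)/\bigl(\cL|_{\tb_{M,-i}}\bigr)''(b_1)$ via the mean value theorem and compares this ratio term-by-term with the explicit formula for $\oa_i'$; the error then splits into a denominator mismatch $A_1$ (comparing $\bigl(\cL|_{\tb_{M,-i}}\bigr)''(b_1)$ with $(\nabla^2\cL(\tb^*))_{i,i}$, contributing the $\kappa_1^6$ term) and a numerator mismatch $A_2$ (a second-order Taylor remainder of $\phi$, contributing $\kappa_1^5$). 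You instead use the strong-convexity inequality $|\wh\alpha_{M,i}-\oa_i'|\le|f'(\oa_i')-g'(\oa_i')|/\mu$, which collapses the whole error into a single gradient-difference term and dispenses with the $A_1$-type comparison of second derivatives; with your bookkeeping the final rate is $\kappa_1^5(d+1)\log n/(npL)$, marginally sharper in $\kappa_1$ than the paper's $A_1$ contribution. The remaining steps (the $i$-th Hessian row seeing only the $\asymp np$ edges incident to $i$, the Lipschitz bound on $w_{ij}$, the uniform control of $(\tx_i-\tx_j)^\top\bDelta$ via the $\ell_\infty$/$\ell_2$ rates of Theorem \ref{noregularization}, and the curvature lower bound $np/\kappa_1$) all match ingredients the paper also uses.

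The one place you should not wave your hands is the claim $|\oa_i'-\alpha_i^*|\lesssim\kappa_1^2\sqrt{(d+1)\log n/(npL)}$. If you bound the numerator and denominator of the explicit formula for $\oa_i'$ separately, you get numerator $\lesssim\kappa_1^2\sqrt{(d+1)np\log n/L}$ and denominator $\gtrsim np/\kappa_1$, hence only $\kappa_1^3\sqrt{(d+1)\log n/(npL)}$; feeding that into your quadratic remainder and dividing by $\mu$ would yield $\kappa_1^7(d+1)\log n/(npL)$, which \emph{exceeds} the lemma's $\kappa_1^6$. To recover $\kappa_1^2$ you need the exact identity $\sum_{j\in[n],j\neq i}\bigl|(\nabla^2\cL(\tb^*))_{i,j}\bigr|=(\nabla^2\cL(\tb^*))_{i,i}$ (both equal $\sum_{j:(i,j)\in\mathcal{E}}\phi'(\tx_i^\top\tb^*-\tx_j^\top\tb^*)$), so that the dominant piece of the ratio is bounded by $\Vert\wh\ba_M-\ba^*\Vert_\infty$ with no extra $\kappa_1$, while the $(\nabla\cL(\tb^*))_i$ and covariate pieces each contribute at most $\kappa_1^2\sqrt{(d+1)\log n/(npL)}$ after dividing by $np/\kappa_1$. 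With that observation made explicit, your argument closes and delivers the stated bound.
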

\begin{proof}
The proof of Lemma \ref{alphainferencelem2} is presented in \S\ref{proof_lemb3}.
\end{proof}
\noindent Finally, combining the conclusions of Lemma \ref{alphainferencelem1} and Lemma \ref{alphainferencelem2} we get the following theorem for $\left|\wh\alpha_{M,i}-\oa_i\right|$.

\begin{theorem}\label{inferencemainthm2}
Under the assumptions of Theorem \ref{noregularization}, as long as $\kappa_1^2\sqrt{(d+1)\log n/ npL} = \mathcal{O}(1)$ and $\kappa_1^2\left(\sqrt{(d+1)/np}+\log n/np\right)\leq c$ for some fixed constant $c>0$, with probability at least $1-O(n^{-5})$, for $i\in [n]$ we have
\begin{align*}
    \left|\wh\alpha_{M,i}-\oa_i\right| \lesssim\kappa_1^6\frac{(d+1)\log n}{np L}+\frac{\kappa_1^4}{np}\sqrt{\frac{(d+1)\log n}{L}}\left(\sqrt{d+1}+\frac{\log n}{\sqrt{np}}\right).
\end{align*}
If we further assume $np\gtrsim (\log n)^2$, then with probability at least $1-O(n^{-5})$, for $i\in [n]$ we have
\begin{align*}
    \left|\wh\alpha_{M,i}-\oa_i\right| \lesssim\kappa_1^6\frac{(d+1)\log n}{np L}+\kappa_1^4\frac{d+1}{np}\sqrt{\frac{\log n}{L}}.
\end{align*}
\end{theorem}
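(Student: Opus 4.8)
The plan is to bound $|\wh\alpha_{M,i}-\oa_i|$ through the intermediate quantity $\oa_i'$, the minimizer of the quadratic marginal $\ocL|_{\tb_{M,-i}}(\cdot)$, via the triangle inequality
\begin{align*}
|\wh\alpha_{M,i}-\oa_i| \leq |\oa_i'-\oa_i| + |\wh\alpha_{M,i}-\oa_i'|,
\end{align*}
and then to invoke Lemma \ref{alphainferencelem1} and Lemma \ref{alphainferencelem2} to control the two pieces. Lemma \ref{alphainferencelem2} already gives the clean per-coordinate bound $|\wh\alpha_{M,i}-\oa_i'|\lesssim \kappa_1^6 (d+1)\log n/(npL)$. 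Lemma \ref{alphainferencelem1}, on the other hand, supplies the two desired ``signal'' terms together with an extra contribution proportional to $\|\wh\ba_M-\oba\|_\infty=\|\Delta\ba\|_\infty$, so that its estimate for $|\oa_i'-\oa_i|$ is self-referential in precisely the quantity we are trying to bound.

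The key step is to resolve this self-reference. Since the coefficient multiplying $\|\Delta\ba\|_\infty$ in Lemma \ref{alphainferencelem1} is the global factor $\kappa_1^2(\sqrt{(d+1)/np}+\log n/np)$, which does not depend on $i$, I first take the maximum over $i\in[n]$ on both sides. Writing $E$ for the signal rate $\kappa_1^6 (d+1)\log n/(npL) + (\kappa_1^4/np)\sqrt{(d+1)\log n/L}\,(\sqrt{d+1}+\log n/\sqrt{np})$, the two lemmas together yield
\begin{align*}
\|\Delta\ba\|_\infty = \max_{i\in[n]}|\wh\alpha_{M,i}-\oa_i| \leq C E + C\kappa_1^2\Big(\sqrt{\tfrac{d+1}{np}}+\tfrac{\log n}{np}\Big)\|\Delta\ba\|_\infty
\end{align*}
for a universal constant $C$. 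Under the standing hypothesis $\kappa_1^2(\sqrt{(d+1)/np}+\log n/np)\leq c$, with $c$ calibrated small enough that $Cc\leq 1/2$, the last term is absorbed into the left-hand side, leaving $\|\Delta\ba\|_\infty\lesssim E$, which is exactly the first displayed bound of the theorem.

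On the probabilistic side, Lemma \ref{alphainferencelem1} and Lemma \ref{alphainferencelem2} each hold for fixed $i$ with probability $1-O(n^{-6})$; a union bound over the $n$ coordinates degrades this to $1-O(n^{-5})$, matching the claimed probability. For the second, sharpened bound I would additionally use $np\gtrsim(\log n)^2$: this forces $\log n/\sqrt{np}\lesssim\sqrt{d+1}$ (since $(\log n)^2\lesssim np\leq (d+1)np$), so the factor $\sqrt{d+1}+\log n/\sqrt{np}$ collapses to $\sqrt{d+1}$ and the second term of $E$ simplifies to $\kappa_1^4\frac{d+1}{np}\sqrt{\log n/L}$.

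The main obstacle is the absorption argument: it is essential to take the maximum over $i$ \emph{before} attempting to cancel the $\|\Delta\ba\|_\infty$ term, and to calibrate the constant $c$ in the sample-complexity assumption against the universal constant hidden in Lemma \ref{alphainferencelem1}; everything else is bookkeeping of the explicit rates already furnished by the two lemmas.
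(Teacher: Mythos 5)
Your proposal is correct and follows essentially the same route as the paper: combine Lemma \ref{alphainferencelem1} and Lemma \ref{alphainferencelem2} via the triangle inequality through $\oa_i'$, pass to the supremum over $i\in[n]$, and absorb the self-referential $\Vert\wh\ba_M-\oba\Vert_\infty$ term using the hypothesis $\kappa_1^2\left(\sqrt{(d+1)/np}+\log n/np\right)\leq c$ with $c$ calibrated against the hidden constant. The simplification under $np\gtrsim(\log n)^2$ via $\log n/\sqrt{np}\lesssim 1\leq\sqrt{d+1}$ also matches the paper's (implicit) reasoning.
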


\begin{proof}[Proof of Theorem \ref{inferencemainthm2}]
Combining Lemma \ref{alphainferencelem1} and Lemma \ref{alphainferencelem2}, we know that \begin{align*}
    \left\Vert\wh\ba_M-\oba\right\Vert_\infty \lesssim& \kappa_1^6 \frac{(d+1)\log n}{npL} +\frac{\kappa_1^4}{np}\sqrt{\frac{(d+1)\log n}{L}}\left(\sqrt{d+1}+\frac{\log n}{\sqrt{np}}\right)\\
    &+\kappa_1^2\left(\sqrt{\frac{d+1}{np}}+\frac{\log n}{np}\right)\left\Vert\wh\ba_M-\oba\right\Vert_\infty
\end{align*}
with probability at least $1-O(n^{-5})$. To reveal the constant hidden in the above inequality, we write it as
\begin{align*}
	\left\Vert\wh\ba_M-\oba\right\Vert_\infty \leq& C_{\text{Hidden}}\left(\kappa_1^6 \frac{(d+1)\log n}{npL} +\frac{\kappa_1^4}{np}\sqrt{\frac{(d+1)\log n}{L}}\left(\sqrt{d+1}+\frac{\log n}{\sqrt{np}}\right)\right)\\
	&+C_{\text{Hidden}}\kappa_1^2\left(\sqrt{\frac{d+1}{np}}+\frac{\log n}{np}\right)\left\Vert\wh\ba_M-\oba\right\Vert_\infty
\end{align*}
with probability at least $1-O(n^{-5})$. We choose $c = 1/(2C_{\text{Hidden}})$ in Theorem \ref{inferencemainthm2}. As a result, as long as $\kappa_1^2\left(\sqrt{(d+1)/{np}}+\log n/np\right) \leq c$, we have
\begin{align*}
    \left\Vert\wh\ba_M-\oba\right\Vert_\infty &\leq \frac{C_{\text{Hidden}}}{1-0.5}\left(\kappa_1^6 \frac{(d+1)\log n}{npL} +\frac{\kappa_1^4}{np}\sqrt{\frac{(d+1)\log n}{L}}\left(\sqrt{d+1}+\frac{\log n}{\sqrt{np}}\right)\right) \\
    &\lesssim\kappa_1^6 \frac{(d+1)\log n}{npL} +\frac{\kappa_1^4}{np}\sqrt{\frac{(d+1)\log n}{L}}\left(\sqrt{d+1}+\frac{\log n}{\sqrt{np}}\right)
\end{align*}
with probability at least $1-O(n^{-5})$.
\end{proof}

\subsection{Proof of Theorem \ref{inferenceresidual}}
\begin{proof}[Proof of Theorem \ref{inferenceresidual}]
The conclusion of Theorem \ref{inferenceresidual} follows directly from conclusions of Theorem \ref{inferencemainthm} and Theorem \ref{inferencemainthm2}.
\end{proof}

We next prove Theorem \ref{betainference} based on the results of Theorem \ref{inferenceresidual}.
\subsection{Proof of Theorem \ref{betainference}}
This subsection aims at deriving theoretical proof for Theorem \ref{betainference}. 

\begin{proof}
The following content is conditioned on the event $\mathcal{A}_2$.
	Recall that $\bc$ is the projection of $\boldsymbol{c}$ onto linear space $\Theta$. Therefore, we obtain
	\begin{align*}
	\boldsymbol{c}^\top\ob-\boldsymbol{c}^\top\tb^* = \bc^\top\ob-\bc^\top\tb^*.
	\end{align*}
By Proposition \ref{pro2} we have $\mathcal{P}\nabla\cL(\tb^*)+\mathcal{P}\nabla^2\cL(\tb^*)(\ob-\tb^*) = \boldsymbol{0}$. Since $\ob, \tb^*\in\Theta$, we also have $\mathcal{P}\nabla\cL(\tb^*)+\mathcal{P}\nabla^2\cL(\tb^*)\mathcal{P}(\ob-\tb^*) = \boldsymbol{0}$. Let $\boldsymbol{v} = \left[\mathcal{P}\nabla^2\cL(\tb^*)\mathcal{P}\right]^{+}\bc$. Then we have
\begin{align*}
    0 &= \boldsymbol{v}^\top\left(\mathcal{P}\nabla\cL(\tb^*)+\mathcal{P}\nabla^2\cL(\tb^*)\mathcal{P}(\ob-\tb^*)\right) \\
    &= \boldsymbol{v}^\top\mathcal{P}\nabla\cL(\tb^*) + \bc^\top\ob-\bc^\top\tb^*.
\end{align*}
As a result, we have
	\begin{align*}
	\bc^\top\ob-\bc^\top\tb^* &= -\boldsymbol{v}^\top\mathcal{P}\nabla \cL(\tb^*) \\
	&=\sum_{l=1}^L\sum_{(i,j)\in\mathcal{E}, i>j}\frac{1}{L}\left\{y_{j,i}^{(l)}-\frac{e^{\tx_i^\top \tb}}{e^{\tx_i^\top \tb}+e^{\tx_j^\top \tb}}\right\}\boldsymbol{v}^\top\mathcal{P}(\tx_i-\tx_j).
	\end{align*}
	For $(i,j)$ such that $(i,j)\in\mathcal{E}, i>j$ and $l\in[L]$, we define 
	\begin{align*}
	X_{i,j}^{(l)} = \frac{1}{L}\left\{y_{j,i}^{(l)}-\frac{e^{\tx_i^\top \tb}}{e^{\tx_i^\top \tb}+e^{\tx_j^\top \tb}}\right\}\boldsymbol{v}^\top\mathcal{P}(\tx_i-\tx_j).
	\end{align*}
	Then we have
	\begin{align*}
	\frac{\mathbb{E}\left|X_{i,j}^{(l)}\right|^3}{\mathbb{E}\left(X_{i,j}^{(l)}\right)^2}\leq &= \frac{\left|\boldsymbol{v}^\top\mathcal{P}(\tx_i-\tx_j)\right|}{L}\cdot \left(\phi(\tx_i^\top\tb^*-\tx_i^\top\tb^*)^2+(1-\phi(\tx_i^\top\tb^*-\tx_i^\top\tb^*))^2\right) \\
	&\leq \frac{\left\|\bc\right\|_2\left\|\left[\mathcal{P}\nabla^2\cL(\tb^*)\mathcal{P}\right]^{+}\right\|\left\|\mathcal{P}\left(\tx_i-\tx_j\right)\right\|_2}{L}\lesssim \frac{\left\|\bc\right\|_2}{\lambda_{\text{min},\perp}(\nabla^2\cL(\tb^*))L}\lesssim \frac{\kappa_1}{npL}\left\|\bc\right\|_2
	\end{align*}
	with probability exceeding $1-O(n^{-10})$ (randomness comes from $\mathcal{G}$). As a result, by \cite{berry1941accuracy} we have
	\begin{align*}
	\sup_{x\in\mathbb{R}}\left|\mathbb{P}\left(\frac{\bc^\top\ob-\bc^\top\tb^*}{\sqrt{\text{Var}\left[\bc^\top\ob\mid \mathcal{G}\right]}}\leq x\bigg| \mathcal{G}\right)-\mathbb{P}\left(\mathcal{N}(0,1)\leq x\right)\right|&\leq \frac{\max\limits_{(i,j)\in\mathcal{E},i>j, l\in[L]} \mathbb{E}\left|X_{i,j}^{(l)}\right|^3/\mathbb{E}\left(X_{i,j}^{(l)}\right)^2}{\sqrt{\text{Var}\left[\bc^\top\ob\mid \mathcal{G}\right]}} \\
	&\lesssim\frac{\kappa_1}{npL\sqrt{\text{Var}\left[\bc^\top\ob\mid \mathcal{G}\right]}}\left\|\bc\right\|_2
	\end{align*}
	with probability exceeding $1-O(n^{-10})$ (randomness comes from $\mathcal{G}$). And, we know that $\displaystyle\text{Var}\left[\bc^\top\ob\mid \mathcal{G}\right] = \frac{1}{L} \bc^\top\left[\mathcal{P}\nabla^2\cL(\tb^*)\mathcal{P}\right]^{+}\bc$ and
	\begin{align*}
	\bc^\top\left[\mathcal{P}\nabla^2\cL(\tb^*)\mathcal{P}\right]^{+}\bc\gtrsim \frac{1}{pn}\Vert \bc\Vert_2^2
	\end{align*}
	with probability exceeding $1-O(n^{-10})$. Therefore,
	\begin{align*}
	\sup_{x\in\mathbb{R}}\left|\mathbb{P}\left(\frac{\sqrt{L}\left(\bc^\top\ob-\bc^\top\tb^*\right)}{\sqrt{\bc^\top\left[\mathcal{P}\nabla^2\cL(\tb^*)\mathcal{P}\right]^{+}\bc}}\leq x\bigg| \mathcal{G}\right)-\mathbb{P}\left(\mathcal{N}(0,1)\leq x\right)\right| \lesssim\frac{\kappa_1}{\sqrt{npL}}
	\end{align*}
	with probability exceeding $1-O(n^{-10})$ (randomness comes from $\mathcal{G}$). On the other hand, by Theorem \ref{inferenceresidual} we have
	\begin{align*}
	\left|\frac{\sqrt{L}\left(\boldsymbol{c}^\top\tb_M-\boldsymbol{c}^\top\ob\right)}{\sqrt{\bc^\top\left[\mathcal{P}\nabla^2\cL(\tb^*)\mathcal{P}\right]^{+}\bc}}\right|\lesssim& \left[\kappa_1^6\frac{(d+1)\log n}{\sqrt{np L}}+\kappa_1^4\sqrt{\frac{(d+1)\log n}{np}}\left(\sqrt{d+1}+\frac{\log n}{\sqrt{np}}\right)\right]\frac{\Vert\boldsymbol{c}_{1:n}\Vert_1}{\Vert \bc\Vert_2} \\
	&+\kappa_1^4\frac{(d+1)^{0.5}\log n}{\sqrt{pL}}\frac{\Vert \boldsymbol{c}_{n+1:n+d}\Vert_2}{\Vert \bc\Vert_2}
	\end{align*}
	with probability at least $1-O(n^{-5})$. Therefore, we conclude the first part of Theorem \ref{betainference}.
	
	We next take all randomness into consideration. For simplicity we denote by $$ \Gamma = \left[\kappa_1^6\frac{(d+1)\log n}{\sqrt{np L}}+\kappa_1^4\sqrt{\frac{(d+1)\log n}{np}}\left(\sqrt{d+1}+\frac{\log n}{\sqrt{np}}\right)\right]\frac{\Vert\boldsymbol{c}_{1:n}\Vert_1}{\Vert \bc\Vert_2} +\kappa_1^4\frac{(d+1)^{0.5}\log n}{\sqrt{pL}}\frac{\Vert \boldsymbol{c}_{n+1:n+d}\Vert_2}{\Vert \bc\Vert_2}.$$
	To begin with, for fixed $x$ we have
	\begin{align*}
	&\left|\mathbb{P}\left(\frac{\sqrt{L}\left(\bc^\top\ob-\bc^\top\tb^*\right)}{\sqrt{\bc^\top\left[\mathcal{P}\nabla^2\cL(\tb^*)\mathcal{P}\right]^{+}\bc}}\leq x\right)-\mathbb{P}\left(\mathcal{N}(0,1)\leq x\right)\right| \\
	=& \left|\mathbb{E}_\mathcal{G}\left[\mathbb{P}\left(\frac{\sqrt{L}\left(\bc^\top\ob-\bc^\top\tb^*\right)}{\sqrt{\bc^\top\left[\mathcal{P}\nabla^2\cL(\tb^*)\mathcal{P}\right]^{+}\bc}}\leq x\bigg|\mathcal{G}\right)\right]-\mathbb{P}\left(\mathcal{N}(0,1)\leq x\right)\right| \\
	\leq&\mathbb{E}_\mathcal{G}\left|\mathbb{P}\left(\frac{\sqrt{L}\left(\bc^\top\ob-\bc^\top\tb^*\right)}{\sqrt{\bc^\top\left[\mathcal{P}\nabla^2\cL(\tb^*)\mathcal{P}\right]^{+}\bc}}\leq x\bigg|\mathcal{G}\right)-\mathbb{P}\left(\mathcal{N}(0,1)\leq x\right)\right| \\
	\lesssim &\frac{\kappa_1}{\sqrt{npL}}+\frac{1}{n^{10}}.
	\end{align*}
	As a result, we have
	\begin{align}
	\sup_{x\in\mathbb{R}}\left|\mathbb{P}\left(\frac{\sqrt{L}\left(\bc^\top\ob-\bc^\top\tb^*\right)}{\sqrt{\bc^\top\left[\mathcal{P}\nabla^2\cL(\tb^*)\mathcal{P}\right]^{+}\bc}}\leq x\right)-\mathbb{P}\left(\mathcal{N}(0,1)\leq x\right)\right|\lesssim\frac{\kappa_1}{\sqrt{npL}}+\frac{1}{n^{10}}.\label{eq26}
	\end{align}
	Consider event $\displaystyle A = \left\{\left|\frac{\sqrt{L}\left(\bc^\top\tb_M-\bc^\top\ob\right)}{\sqrt{\bc^\top\left[\mathcal{P}\nabla^2\cL(\tb^*)\mathcal{P}\right]^{+}\bc}}\right|\leq \Lambda \Gamma\right\}$, where $\Lambda>0$ is some constant such that $\mathbb{P}(A^c) = O(n^{-5})$. Then we consider the following three events
	\begin{align*}
	&B_1 = \left\{  \frac{\sqrt{L}\left(\bc^\top\tb_M-\bc^\top\tb^*\right)}{\sqrt{\bc^\top\left[\mathcal{P}\nabla^2\cL(\tb^*)\mathcal{P}\right]^{+}\bc}}\leq x\right\}, B_2 = \left\{\frac{\sqrt{L}\left(\bc^\top\ob-\bc^\top\tb^*\right)}{\sqrt{\bc^\top\left[\mathcal{P}\nabla^2\cL(\tb^*)\mathcal{P}\right]^{+}\bc}}\leq x- \Lambda\Gamma\right\}, \\
	&B_3 = \left\{\frac{\sqrt{L}\left(\bc^\top\ob-\bc^\top\tb^*\right)}{\sqrt{\bc^\top\left[\mathcal{P}\nabla^2\cL(\tb^*)\mathcal{P}\right]^{+}\bc}}\leq x+\Lambda\Gamma\right\}.
	\end{align*}
	Then we have
	\begin{align}
	\left|\mathbb{P}\left(\frac{\sqrt{L}\left(\bc^\top\tb_M-\bc^\top\tb^*\right)}{\sqrt{\bc^\top\left[\mathcal{P}\nabla^2\cL(\tb^*)\mathcal{P}\right]^{+}\bc}}\leq x\right)-\mathbb{P}\left(\mathcal{N}(0,1)\leq x\right)\right|  = &\left|\mathbb{P}(B_1\cap A)+\mathbb{P}(B_1\cap A^c)-\mathbb{P}\left(\mathcal{N}(0,1)\leq x\right)\right| \\
	\lesssim &\left|\mathbb{P}(B_1\cap A)-\mathbb{P}\left(\mathcal{N}(0,1)\leq x\right)\right|+\frac{1}{n^5}.\label{eq27}
	\end{align}
	On the other hand, for $B_1\cap A$ we have
	\begin{align*}
	B_2\cap A \subset B_1\cap A \subset B_3\cap A.
	\end{align*}
	As a result, we know that
	\begin{align}
	\mathbb{P}(B_1\cap A)&\leq \mathbb{P}\left(B_3\cap A\right) \leq \mathbb{P}(B_3).\label{eq28}
	\end{align}
	By Eq.~\eqref{eq26} we have
	\begin{align*}
	\left|\mathbb{P}(B_3)- \mathbb{P}\left(\mathcal{N}(0,1)\leq x+\Lambda \Gamma\right)\right|\lesssim\frac{\kappa_1}{\sqrt{npL}}+\frac{1}{n^{10}}.
	\end{align*}
	On the other hand, we have
	\begin{align*}
	\left|\mathbb{P}\left(\mathcal{N}(0,1)\leq x+\Lambda \Gamma\right)-\mathbb{P}(\mathcal{N}(0,1)\leq x)\right|\leq \Lambda\Gamma.
	\end{align*}
	Therefore, we have
	\begin{align}
	\left|\mathbb{P}(B_3)-\mathbb{P}\left(\mathcal{N}(0,1)\leq x\right)\right|\lesssim \frac{\kappa_1}{\sqrt{npL}}+\Gamma+\frac{1}{n^{10}}.\label{eq29}
	\end{align}
	For $B_1\cap A$ we also have
	\begin{align}
	\mathbb{P}(B_1\cap A)\geq \mathbb{P}(B_2\cap A).\label{eq30}
	\end{align}
	By definition we have
	\begin{align*}
	\left|\mathbb{P}(B_2\cap A)-\mathbb{P}(B_2)\right|\leq \mathbb{P}(A^c)\lesssim \frac{1}{n^5}.
	\end{align*}
	By Eq.~\eqref{eq26} we have
	\begin{align*}
	\left|\mathbb{P}(B_2)- \mathbb{P}\left(\mathcal{N}(0,1)\leq x-\Lambda \Gamma\right)\right|\lesssim\frac{\kappa_1}{\sqrt{npL}}+\frac{1}{n^{10}}.
	\end{align*}
	On the other hand, we have
	\begin{align*}
	\left|\mathbb{P}\left(\mathcal{N}(0,1)\leq x-\Lambda \Gamma\right)-\mathbb{P}(\mathcal{N}(0,1)\leq x)\right|\leq \Lambda \Gamma.
	\end{align*}
	Therefore, we have
	\begin{align}
	\left|\mathbb{P}(B_2\cap A)-\mathbb{P}\left(\mathcal{N}(0,1)\leq x\right)\right|\lesssim \frac{\kappa_1}{\sqrt{npL}}+ \Gamma+\frac{1}{n^{5}}.\label{eq31}
	\end{align}
	Combine Eq.~\eqref{eq27}, Eq.~\eqref{eq28} and Eq.~\eqref{eq30} we know that
	\begin{align*}
	&\left|\mathbb{P}\left(\frac{\sqrt{L}\left(\bc^\top\tb_M-\bc^\top\tb^*\right)}{\sqrt{\bc^\top\left[\mathcal{P}\nabla^2\cL(\tb^*)\mathcal{P}\right]^{+}\bc}}\leq x\right)-\mathbb{P}\left(\mathcal{N}(0,1)\leq x\right)\right|\\ 
	\lesssim& \max\{\left|\mathbb{P}(B_3)-\mathbb{P}\left(\mathcal{N}(0,1)\leq x\right)\right|, \left|\mathbb{P}(B_2\cap A)-\mathbb{P}\left(\mathcal{N}(0,1)\leq x\right)\right|\}+\frac{1}{n^5}.
	\end{align*}
	Therefore, by Eq.~\eqref{eq29}, Eq.~\eqref{eq31} we have
	\begin{align*}
	&\left|\mathbb{P}\left(\frac{\sqrt{L}\left(\bc^\top\tb_M-\bc^\top\tb^*\right)}{\sqrt{\bc^\top\left[\mathcal{P}\nabla^2\cL(\tb^*)\mathcal{P}\right]^{+}\bc}}\leq x\right)-\mathbb{P}\left(\mathcal{N}(0,1)\leq x\right)\right| \\ 
	\lesssim &\frac{\kappa_1}{\sqrt{npL}}+\Gamma+\frac{1}{n^{5}}.
	\end{align*}
	Since the above inequality holds for every $x\in \mathbb{R}$, we prove the desired result.

	Thus, we finally conclude our proof of Theorem \ref{betainference}.

\end{proof}

\subsection{Proof of Corollary \ref{alphainference}}
\begin{proof}[Proof of Corollary \ref{alphainference}]
Let $\left\{\tilde{\boldsymbol{e}}_i\right\}_{i=1}^{n+d}$ be canonical basis vectors of $\mathbb{R}^{n+d}$. By taking $\boldsymbol{c} = \tilde{\boldsymbol{e}}_k$ for $k\in [n]$ in Theorem \ref{betainference}, we only have to show that $\frac{\Vert\boldsymbol{c}_{1:n}\Vert_1}{\Vert \bc\Vert_2}\lesssim 1$, and this is also equivalent to $\Vert \bc\Vert_2\gtrsim 1$. By the definition of $\mathcal{P}$ we have
\begin{align*}
    \Vert \bc\Vert_2^2 &= \left(1-\left(\boldsymbol{Z}\left(\boldsymbol{Z}^\top\boldsymbol{Z}\right)^{-1}\boldsymbol{Z}^\top\right)_{k,k}\right)^2 + \sum_{i\in [n],i\neq k} \left(\boldsymbol{Z}\left(\boldsymbol{Z}^\top\boldsymbol{Z}\right)^{-1}\boldsymbol{Z}^\top\right)_{i,k}^2 \\
    &\geq 1-2\left(\boldsymbol{Z}\left(\boldsymbol{Z}^\top\boldsymbol{Z}\right)^{-1}\boldsymbol{Z}^\top\right)_{k,k} \geq 1-2 \left\Vert \boldsymbol{Z}\left(\boldsymbol{Z}^\top\boldsymbol{Z}\right)^{-1}\boldsymbol{Z}^\top\right\Vert_{2,\infty} \\
    &\geq 1-2c_0\sqrt{\frac{d+1}{n}}\geq 0.1.
\end{align*}
As a result, the first part of Corollary \ref{alphainference} is proved by Theorem \ref{betainference}.

For simplicity we denote by $\displaystyle \Gamma = \kappa_1^6\frac{(d+1)\log n}{\sqrt{np L}}+\kappa_1^4\sqrt{\frac{(d+1)\log n}{np}}\left(\sqrt{d+1}+\frac{\log n}{\sqrt{np}}\right)$. To begin with, for fixed $x$ we have
\begin{align*}
    &\left|\mathbb{P}\left(\frac{\sqrt{L}\left(\oa_k-\alpha^*_k\right)}{\sqrt{\left(\left[\mathcal{P}\nabla^2\cL(\tb^*)\mathcal{P}\right]^{+}\right)_{k,k}}}\leq x\right)-\mathbb{P}\left(\mathcal{N}(0,1)\leq x\right)\right| \\
    =& \left|\mathbb{E}_\mathcal{G}\left[\mathbb{P}\left(\frac{\sqrt{L}\left(\oa_k-\alpha^*_k\right)}{\sqrt{\left(\left[\mathcal{P}\nabla^2\cL(\tb^*)\mathcal{P}\right]^{+}\right)_{k,k}}}\leq x\bigg|\mathcal{G}\right)\right]-\mathbb{P}\left(\mathcal{N}(0,1)\leq x\right)\right| \\
    \leq&\mathbb{E}_\mathcal{G}\left|\mathbb{P}\left(\frac{\sqrt{L}\left(\oa_k-\alpha^*_k\right)}{\sqrt{\left(\left[\mathcal{P}\nabla^2\cL(\tb^*)\mathcal{P}\right]^{+}\right)_{k,k}}}\leq x\bigg|\mathcal{G}\right)-\mathbb{P}\left(\mathcal{N}(0,1)\leq x\right)\right| \\
    \lesssim &\frac{\kappa_1}{\sqrt{npL}}+\frac{1}{n^{10}}.
\end{align*}
As a result, we have
\begin{align}
    \sup_{x\in\mathbb{R}}\left|\mathbb{P}\left(\frac{\sqrt{L}\left(\oa_k-\alpha^*_k\right)}{\sqrt{\left(\left[\mathcal{P}\nabla^2\cL(\tb^*)\mathcal{P}\right]^{+}\right)_{k,k}}}\leq x\right)-\mathbb{P}\left(\mathcal{N}(0,1)\leq x\right)\right|\lesssim\frac{\kappa_1}{\sqrt{npL}}+\frac{1}{n^{10}}.\label{eq19}
\end{align}
Consider event $\displaystyle A = \left\{\left|\frac{\sqrt{L}\left(\wh\alpha_{M,k}-\oa_k\right)}{\sqrt{\left(\left[\mathcal{P}\nabla^2\cL(\tb^*)\mathcal{P}\right]^{+}\right)_{k,k}}}\right|\leq \Lambda \Gamma\right\}$, where $\Lambda>0$ is some constant such that $\mathbb{P}(A^c) = O(n^{-5})$. Then we consider the following three events
\begin{align*}
    B_1 &= \left\{  \frac{\sqrt{L}\left(\wh\alpha_{M,k}-\alpha^*_k\right)}{\sqrt{\left(\left[\mathcal{P}\nabla^2\cL(\tb^*)\mathcal{P}\right]^{+}\right)_{k,k}}}\leq x\right\}, B_2 = \left\{\frac{\sqrt{L}\left(\oa_k-\alpha^*_k\right)}{\sqrt{\left(\left[\mathcal{P}\nabla^2\cL(\tb^*)\mathcal{P}\right]^{+}\right)_{k,k}}}\leq x-\Lambda \Gamma\right\}, \\
    B_3 &= \left\{\frac{\sqrt{L}\left(\oa_k-\alpha^*_k\right)}{\sqrt{\left(\left[\mathcal{P}\nabla^2\cL(\tb^*)\mathcal{P}\right]^{+}\right)_{k,k}}}\leq x+\Lambda \Gamma\right\}.
\end{align*}
Then we have
\begin{align}
    \left|\mathbb{P}\left(\frac{\sqrt{L}\left(\wh\alpha_{M,k}-\alpha^*_k\right)}{\sqrt{\left(\left[\mathcal{P}\nabla^2\cL(\tb^*)\mathcal{P}\right]^{+}\right)_{k,k}}}\leq x\right)-\mathbb{P}\left(\mathcal{N}(0,1)\leq x\right)\right|  = &\left|\mathbb{P}(B_1\cap A)+\mathbb{P}(B_1\cap A^c)-\mathbb{P}\left(\mathcal{N}(0,1)\leq x\right)\right| \\
    \lesssim &\left|\mathbb{P}(B_1\cap A)-\mathbb{P}\left(\mathcal{N}(0,1)\leq x\right)\right|+\frac{1}{n^5}.\label{eq21}
\end{align}
On the other hand, for $B_1\cap A$ we have
\begin{align*}
    B_2\cap A \subset B_1\cap A \subset B_3\cap A.
\end{align*}
As a result, we know that
\begin{align}
    \mathbb{P}(B_1\cap A)&\leq \mathbb{P}\left(B_3\cap A\right) \leq \mathbb{P}(B_3).\label{eq22}
\end{align}
By Eq.~\eqref{eq19} we have
\begin{align*}
    \left|\mathbb{P}(B_3)- \mathbb{P}\left(\mathcal{N}(0,1)\leq x+\Lambda \Gamma\right)\right|\lesssim\frac{\kappa_1}{\sqrt{npL}}+\frac{1}{n^{10}}.
\end{align*}
On the other hand, we have
\begin{align*}
    \left|\mathbb{P}\left(\mathcal{N}(0,1)\leq x+\Lambda \Gamma\right)-\mathbb{P}(\mathcal{N}(0,1)\leq x)\right|\leq \Lambda \Gamma.
\end{align*}
Therefore, we have
\begin{align}
    \left|\mathbb{P}(B_3)-\mathbb{P}\left(\mathcal{N}(0,1)\leq x\right)\right|\lesssim \frac{\kappa_1}{\sqrt{npL}}+\Gamma+\frac{1}{n^{10}}.\label{eq24}
\end{align}
For $B_1\cap A$ we also have
\begin{align}
    \mathbb{P}(B_1\cap A)\geq \mathbb{P}(B_2\cap A).\label{eq23}
\end{align}
By definition we have
\begin{align*}
    \left|\mathbb{P}(B_2\cap A)-\mathbb{P}(B_2)\right|\leq \mathbb{P}(A^c)\lesssim \frac{1}{n^5}.
\end{align*}
By Eq.~\eqref{eq19} we have
\begin{align*}
    \left|\mathbb{P}(B_2)- \mathbb{P}\left(\mathcal{N}(0,1)\leq x-\Lambda \Gamma\right)\right|\lesssim\frac{\kappa_1}{\sqrt{npL}}+\frac{1}{n^{10}}.
\end{align*}
On the other hand, we have
\begin{align*}
    \left|\mathbb{P}\left(\mathcal{N}(0,1)\leq x-\Lambda \Gamma\right)-\mathbb{P}(\mathcal{N}(0,1)\leq x)\right|\leq \Lambda \Gamma.
\end{align*}
Therefore, we have
\begin{align}
    \left|\mathbb{P}(B_2\cap A)-\mathbb{P}\left(\mathcal{N}(0,1)\leq x\right)\right|\lesssim \frac{\kappa_1}{\sqrt{npL}}+\Gamma+\frac{1}{n^{5}}.\label{eq25}
\end{align}
Combine Eq.~\eqref{eq21}, Eq.~\eqref{eq22} and Eq.~\eqref{eq23} we know that
\begin{align*}
    &\left|\mathbb{P}\left(\frac{\sqrt{L}\left(\wh\alpha_{M,k}-\alpha^*_k\right)}{\sqrt{\left(\left[\mathcal{P}\nabla^2\cL(\tb^*)\mathcal{P}\right]^{+}\right)_{k,k}}}\leq x\right)-\mathbb{P}\left(\mathcal{N}(0,1)\leq x\right)\right| \\ \lesssim& \max\{\left|\mathbb{P}(B_3)-\mathbb{P}\left(\mathcal{N}(0,1)\leq x\right)\right|, \left|\mathbb{P}(B_2\cap A)-\mathbb{P}\left(\mathcal{N}(0,1)\leq x\right)\right|\}+\frac{1}{n^5}.
\end{align*}
Therefore, by Eq.~\eqref{eq24}, Eq.~\eqref{eq25} we have
\begin{align*}
    &\left|\mathbb{P}\left(\frac{\sqrt{L}\left(\wh\alpha_{M,k}-\alpha^*_k\right)}{\sqrt{\left(\left[\mathcal{P}\nabla^2\cL(\tb^*)\mathcal{P}\right]^{+}\right)_{k,k}}}\leq x\right)-\mathbb{P}\left(\mathcal{N}(0,1)\leq x\right)\right| \\ \lesssim &\frac{\kappa_1}{\sqrt{npL}}+\Gamma+\frac{1}{n^{5}}.
\end{align*}
Since the above inequality holds for every $x\in \mathbb{R}$, we prove the desired result.

\end{proof}

\subsection{Proof of Corollary \ref{cor:inferenceb}}
The proof of Corollary \ref{cor:inferenceb} except the refined bound Eq.~\eqref{refinedbetabound} follows directly from the proof of Theorem \ref{betainference}. Therefore, we omit the details and only prove Eq.~\eqref{refinedbetabound} here. We prove the following lemma first.

\begin{lemma}\label{varianceconcentration}
	Consider some fixed constants $a_{i,j}^{(l)}$ for $ (i,j)\in \mathcal{E}, l\in [L]$, and random variable 
	\begin{align}
		X = \sum_{l=1}^L \sum_{(i,j)\in \mathcal{E}} a_{i,j}^{(l)}\left\{y_{j,i}^{(l)}-\frac{e^{\tx_i^\top \tb^*}}{e^{\tx_i^\top \tb^*}+e^{\tx_j^\top \tb^*}}\right\}. \label{standardX}
	\end{align}
	Conditioned on the comparison graph $\mathcal{G}$, with probability exceeding $1-O(n^{-11})$ we have
	\begin{align*}
		|X|\lesssim \sqrt{\kappa_1\text{Var}[X\mid \cG]\cdot\log n}.
	\end{align*}
\end{lemma}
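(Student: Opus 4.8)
The plan is to condition on the comparison graph $\cG$ and view $X$ as a sum of independent, centered, bounded summands, and then to bound it by splitting the index set into a small number of ``large-coefficient'' terms (handled deterministically) and the remaining ``small-coefficient'' terms (handled by Bernstein's inequality). The forced relationship between a coefficient's size and its variance, coming from the condition number $\kappa_1$, is what makes the split possible.

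First I would record the elementary ingredients. Conditioned on $\cG$, the summands $W_{i,j}^{(l)} := a_{i,j}^{(l)}\bigl\{y_{j,i}^{(l)}-\phi(\tx_i^\top\tb^*-\tx_j^\top\tb^*)\bigr\}$ are independent and centered, since $\mathbb{E}[y_{j,i}^{(l)}]=\phi(\tx_i^\top\tb^*-\tx_j^\top\tb^*)$, and satisfy $|W_{i,j}^{(l)}|\le|a_{i,j}^{(l)}|$. The key point is a per-term comparison between coefficient size and variance: writing $p_{ij}=\phi(\tx_i^\top\tb^*-\tx_j^\top\tb^*)$, the definition of $\kappa_1$ gives $e^{\tx_i^\top\tb^*-\tx_j^\top\tb^*}\le\kappa_1$, hence $p_{ij},1-p_{ij}\ge 1/(1+\kappa_1)$ and therefore $p_{ij}(1-p_{ij})\ge \tfrac12\min\{p_{ij},1-p_{ij}\}\ge \tfrac{1}{4\kappa_1}$. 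Consequently $\Var[W_{i,j}^{(l)}\mid\cG]=(a_{i,j}^{(l)})^2 p_{ij}(1-p_{ij})\ge (a_{i,j}^{(l)})^2/(4\kappa_1)$, and summing over all summands yields the budget bound
$$\sum_{l,(i,j)\in\mathcal{E}}\bigl(a_{i,j}^{(l)}\bigr)^2\ \le\ 4\kappa_1\,\Var[X\mid\cG].$$
I abbreviate $\sigma^2:=\Var[X\mid\cG]$, a fixed number once $\cG$ is given (and the degenerate case $\sigma=0$, forcing $X=0$ a.s., is trivial).

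Next comes the splitting argument, which is the heart of the proof. I would split the index set at threshold $\tau:=\sqrt{\kappa_1}\,\sigma/\sqrt{\log n}$ into $S_{\mathrm{big}}=\{(l,i,j):|a_{i,j}^{(l)}|>\tau\}$ and its complement $S_{\mathrm{small}}$, writing $X=X_{\mathrm{big}}+X_{\mathrm{small}}$. For the large part, the deterministic bound $|X_{\mathrm{big}}|\le\sum_{S_{\mathrm{big}}}|a_{i,j}^{(l)}|$ is controlled because each coefficient there exceeds $\tau$ while the total budget is $4\kappa_1\sigma^2$, so $|S_{\mathrm{big}}|\le 4\kappa_1\sigma^2/\tau^2=4\log n$; Cauchy--Schwarz then gives $\sum_{S_{\mathrm{big}}}|a_{i,j}^{(l)}|\le\sqrt{|S_{\mathrm{big}}|\cdot 4\kappa_1\sigma^2}\lesssim\sqrt{\kappa_1\sigma^2\log n}$. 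For the small part I apply Bernstein's inequality to the independent centered summands in $S_{\mathrm{small}}$, whose variances sum to at most $\sigma^2$ and whose magnitudes are at most $\tau$. At deviation level $t=C\sqrt{\kappa_1\sigma^2\log n}$ the range term in Bernstein's denominator is $\tau t/3\asymp\kappa_1\sigma^2$, comparable to the variance term, so the exponent is of order $t^2/(\kappa_1\sigma^2)\asymp\log n$; choosing $C$ large makes $\mathbb{P}(|X_{\mathrm{small}}|\ge t\mid\cG)=O(n^{-11})$. Combining the two parts gives $|X|\lesssim\sqrt{\kappa_1\sigma^2\log n}$ with probability $1-O(n^{-11})$, with no union bound needed since $X$ is a single variable.

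The main obstacle, and the reason a direct Bernstein (or Bennett) bound fails, is that a single coefficient may be as large as $\sqrt{\kappa_1}\,\sigma$, which drives the target deviation $\sqrt{\kappa_1\sigma^2\log n}$ into the exponential regime and would only yield the weaker $\sqrt{\kappa_1\sigma^2}\,\log n$, losing a factor $\sqrt{\log n}$. The threshold split resolves this: the per-term variance lower bound $\Var[W]\ge|a|^2/(4\kappa_1)$ forces the truly large coefficients to be few (at most $O(\log n)$ above $\tau$), so their absolute contribution is tamed deterministically, while the choice $\tau\asymp\sqrt{\kappa_1}\,\sigma/\sqrt{\log n}$ is calibrated exactly so that Bernstein stays in its Gaussian regime at the target level. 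The only point requiring care in the writeup is verifying that this single threshold simultaneously keeps $|S_{\mathrm{big}}|$ at $O(\log n)$ and the Bernstein exponent at $\Omega(\log n)$, which the computation above confirms.
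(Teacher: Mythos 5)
Your proof is correct, and it shares with the paper the one essential ingredient: the per-edge lower bound $\phi'(\tx_i^\top\tb^*-\tx_j^\top\tb^*)=p_{ij}(1-p_{ij})\gtrsim 1/\kappa_1$, which converts $\mathrm{Var}[X\mid\mathcal{G}]$ into the coefficient budget $\sum_{l,(i,j)}\bigl(a_{i,j}^{(l)}\bigr)^2\lesssim\kappa_1\mathrm{Var}[X\mid\mathcal{G}]$. Where you diverge is in the concentration step: the paper simply applies Hoeffding's inequality, whose exponent is $-2t^2/\sum_i c_i^2$ with $c_i$ the \emph{range} of the $i$-th summand (here $c_i=|a_{i,j}^{(l)}|$), so taking $t\asymp\sqrt{\log n\cdot\sum(a_{i,j}^{(l)})^2}\lesssim\sqrt{\kappa_1\mathrm{Var}[X\mid\mathcal{G}]\log n}$ finishes in one line. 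The obstacle you identify --- that a single coefficient as large as $\sqrt{\kappa_1}\,\sigma$ pushes the bound into the Poissonian regime and costs a factor $\sqrt{\log n}$ --- is real for Bernstein, whose range term involves $\max_i|a_{i,j}^{(l)}|$, but it does not arise for Hoeffding, whose bound depends only on the sum of squared ranges. Your threshold split (deterministic control of the $O(\log n)$ large coefficients via Cauchy--Schwarz, Bernstein in its Gaussian regime for the rest) is a correct and carefully calibrated workaround, but it is solving a problem that the simpler tool avoids entirely; its one compensating virtue is that the same decomposition would survive in settings where the summands are unbounded and only a Bernstein-type moment condition is available.
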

\begin{proof}
	Let $X_{i,j}^{(l)} = a_{i,j}^{(l)}\left(y_{j,i}^{(l)}-e^{\tx_i^\top \tb^*}/\left(e^{\tx_i^\top \tb^*}+e^{\tx_j^\top \tb^*}\right)\right)$. Then we know that $\mathbb{E}X_{i,j}^{(l)} = 0$ and $|X_{i,j}^{(l)}|\leq |a_{i,j}^{(l)}|$. As a result, by Hoeffding inequality, with probability at least $1-O(n^{-11})$, we have
	\begin{align}
		\left|\sum_{l=1}^L\sum_{(i,j)\in\mathcal{E}, i>j}X_{i,j}^{(l)}\right|&\lesssim \sqrt{\log n\cdot\sum_{l=1}^L\sum_{(i,j)\in\mathcal{E}, i>j}\left(a_{i,j}^{(l)}\right)^2}.\label{eq36}
	\end{align}
	On the other hand, since $y_{i,j}^{(l)}$ are independent random variables, we know that
	\begin{align*}
		\text{Var}[X\mid \cG] &= \sum_{l=1}^L\sum_{(i,j)\in\mathcal{E}, i>j}\text{Var}[X_{i,j}^{(l)}\mid\cG]=\sum_{l=1}^L\sum_{(i,j)\in\mathcal{E}, i>j}\left(a_{i,j}^{(l)}\right)^2 \phi'(\tx_i^\top\tb^* - \tx_j^\top\tb^*) \\
		&\gtrsim \frac{1}{\kappa_1}\sum_{l=1}^L\sum_{(i,j)\in\mathcal{E}, i>j}\left(a_{i,j}^{(l)}\right)^2.
	\end{align*}
	As a result, we have $\sum_{l=1}^L\sum_{(i,j)\in\mathcal{E}, i>j}\left(a_{i,j}^{(l)}\right)^2\lesssim \kappa_1 \text{Var}[X\mid \cG]$. Therefore, by Eq.~\eqref{eq36} we know that
	\begin{align*}
		|X| = \left|\sum_{l=1}^L\sum_{(i,j)\in\mathcal{E}, i>j}X_{i,j}^{(l)}\right|&\lesssim \sqrt{\kappa_1\text{Var}[X\mid \cG]\log n}
	\end{align*}
	with probability exceeding $1-O(n^{-11})$.
\end{proof}
\begin{proof}[Proof of Eq.~\eqref{refinedbetabound}]
	Conditioned on the comparison graph $\mathcal{G}$, the entries of $\ob$ can be written as the form Eq.~\eqref{standardX}. By Lemma \ref{varianceconcentration} and union bound, conditioned on the comparison graph $\mathcal{G}$, we know that
	\begin{align}
		\left\|\ob_{n+1:n+d}-\bb^*\right\|_2 = \sqrt{\sum_{j=1}^d\left(\overline{\beta}_{n+j}-\beta_j^*\right)^2}&\lesssim \sqrt{\kappa_1\log n\cdot\sum_{j=1}^d\text{Var}\left[\overline{\beta}_{n+j}-\beta_j^*\mid \mathcal{G}\right]} \nonumber \\
		&=\sqrt{\kappa_1\log n\cdot\textbf{tr}\left[\text{Var}\left[\;\ob_{n+1:n+d}-\bb^*\mid \mathcal{G}\right]\right]}    \label{eq37}
	\end{align}
with probability at least $1-O(n^{-10})$. By Proposition \ref{pro2} we know that
\begin{align*}
	\text{Var}\left[\;\ob_{n+1:n+d}-\bb^*\mid \mathcal{G}\right] = \frac{1}{L}\left(\left[\mathcal{P}\nabla^2\cL(\tb^*)\mathcal{P}\right]^{+}\right)_{n+1:n+d,n+1:n+d}.
\end{align*}
By the definition of $\Theta$ we know that
\begin{align*}
	\lambda_{\text{max}}\left(\left(\left[\mathcal{P}\nabla^2\cL(\tb^*)\mathcal{P}\right]^{+}\right)_{n+1:n+d,n+1:n+d}\right)&\leq\lambda_{\text{max}}\left(\left[\mathcal{P}\nabla^2\cL(\tb^*)\mathcal{P}\right]^{+}\right)\\
	&\leq \frac{1}{\lambda_{\text{min},\perp}(\nabla^2\cL(\tb^*))}\lesssim\frac{\kappa_1}{np}
\end{align*}
under event $\mathcal{A}_2$. As a result, we have
\begin{align}
	\textbf{tr}\left[\text{Var}\left[\;\ob_{n+1:n+d}-\bb^*\mid \mathcal{G}\right]\right]&\leq \frac{1}{L}d	\lambda_{\text{max}}\left(\left(\left[\mathcal{P}\nabla^2\cL(\tb^*)\mathcal{P}\right]^{+}\right)_{n+1:n+d,n+1:n+d}\right) \\
	&\lesssim \frac{\kappa_1(d+1)}{npL}   \label{eq38}
\end{align}
with probability at least $1-O(n^{-10})$. Combine Eq.~\eqref{eq37} and Eq.~\eqref{eq38} we know that
\begin{align*}
		\left\|\ob_{n+1:n+d}-\bb^*\right\|_2 \lesssim \kappa_1\sqrt{\frac{(d+1)\log n}{npL}}
\end{align*}
with probability at least $1-O(n^{-10})$. On the other hand, by Theorem \ref{inferenceresidual}, we have
\begin{align*}
	\left\|\ob_{n+1:n+d}-\wh\bb_M\right\|_2\leq\left\|\Delta\tb\right\|_2\lesssim \kappa_1^4 \frac{(d+1)^{0.5}\log n}{\sqrt{n}pL}
\end{align*}
with probability at least $1-O(n^{-5})$. We know that
\begin{align*}
	\left\|\wh\bb_M-\bb^*\right\|_2\leq 	\left\|\ob_{n+1:n+d}-\bb^*\right\|_2+	\left\|\ob_{n+1:n+d}-\wh\bb_M\right\|_2\lesssim \sqrt{\frac{d+1}{n}}\max\left\{\kappa_1^4\frac{\log n}{pL},\kappa_1\sqrt{\frac{\log n}{pL}}\right\}
\end{align*}
with probability at least $1-O(n^{-5})$.
\end{proof}

\newpage 
\section{Proof of Auxiliary Lemmas in Section \ref{EstimationOutline}}\label{ProofDetails}
In this section, we prove detailed proof of aforementioned building blocks.

\subsection{Proof of Proposition \ref{prop_iden}}
\begin{proof}
	Assume that there are two parameter vectors $\tb = (\ba^\top, \bb^\top)^\top\in \Theta$ and $\tb' = (\ba'^{\top}, \bb'^{\top})^\top\in \Theta$ such that 
	\begin{align*}
		\frac{e^{\alpha_j^*+\bx_j^\top \bb^*}}{e^{\alpha_i^*+\bx_i^\top \bb^*}+e^{\alpha_j^*+\bx_j^\top \bb^*}}  = \frac{e^{\alpha_j'^{*}+\bx_j^\top \bb'^{*}}}{e^{\alpha_i'^*+\bx_i^\top \bb'^*}+e^{\alpha_j'^*+\bx_j^\top \bb'^*}} ,\quad \forall 1\leq i\neq j\leq n.
	\end{align*}
	Since $e^b/(e^a+e^b) = 1/(e^{a-b}+1)$, we know that 
	\begin{align*}
		\alpha_i^*+\bx_i^\top\bb^* - (\alpha_j^*+\bx_j^\top\bb^*) = \alpha_i'^*+\bx_i^\top\bb'^* - (\alpha_j'^*+\bx_j^\top\bb'^*),\quad \forall 1\leq i\neq j\leq n.
	\end{align*}
	Let $c =\alpha_1^*+\bx_1^\top\bb^* -(\alpha_1'^*+\bx_1^\top\bb'^*) $ and $\boldsymbol{1}_n$ be a $n$-dimensional vertoc whose entries are all one, then we know that 
	\begin{align*}
		\ba^* +\bX\bb^* = \ba'^*+\bX\bb'^*+c\boldsymbol{1}_n.
	\end{align*}
	Since $\bar{\bX}^\top\ba^* = \bar{\bX}^\top\ba'^*=\boldsymbol{0}$, we know that
	\begin{align*}
		\left\|\ba^*-\ba'^*\right\|_2^2 = (\ba^*-\ba'^*)^\top(\ba^*-\ba'^*) =  (\ba^*-\ba'^*)^\top(\bX\bb'^*+c\boldsymbol{1}_n-\bX\bb^*) = 0.
	\end{align*}
	This implies $\ba^* = \ba'^*$. Therefore, we have $\bX\bb^* = \bX\bb'^*+c\boldsymbol{1}_n$. This is equivalent to
	\begin{align*}
		\bar{\bX} \begin{bmatrix}
			c\\
			\bb^*-\bb'^*
		\end{bmatrix} = \boldsymbol{0}.
	\end{align*}
	Since $\bar{\bX}$ has full column rank, we know that $[c, (\bb^*-\bb'^*)^\top] = \boldsymbol{0}$. As a result, we must have $\tb = \tb'$.
	
\end{proof}

\subsection{Proof of Lemma \ref{gradient} and Its Corollary\label{proofgradient}}
In this subsection, we provide the proof of Lemma \ref{gradient} in the sense that we provide an upper bound for the gradient vector in $\ell_2$-norm.
\begin{proof}
The gradient is calculated as
\begin{align*}
    \nabla \mathcal{L}_\lambda(\tb^*) = \lambda \tb^*+\frac{1}{L}\sum_{(i,j)\in\mathcal{E}, i>j}\sum_{l=1}^L\underbrace{\left\{-y_{j,i}^{(l)}+\frac{e^{\tx_i^\top \tb^*}}{e^{\tx_i^\top \tb^*}+e^{\tx_j^\top \tb^*}}\right\}(\tx_i-\tx_j)}_{:=z_{i,j}^{(l)}}.
\end{align*}
Since $\bE [z_{i,j}^{(l)}] = 0, \Vert z_{i,j}^{(l)}\Vert_2\leq\Vert \tx_i-\tx_j\Vert_2\leq \sqrt{6}$, we have 
\begin{align*}
    \bE[z_{i,j}^{(l)}z_{i,j}^{(l)\top} ] =  \text{Var}[y_{j,i}^{(l)}]&(\tx_i-\tx_j)(\tx_i-\tx_j)^\top \prec(\tx_i-\tx_j)(\tx_i-\tx_j)^\top\\
    \text{and} &\qquad \bE[z_{i,j}^{(l)\top} z_{i,j}^{(l)} ]\leq 6.
\end{align*}
as long as $n/d$ is large enough. Thus, with high probability (with respect to the randomness of $\mathcal{G}$), we have
\begin{align*}
    \left\|\sum_{(i, j) \in \mathcal{E}, i>j} \sum_{l=1}^{L} \mathbb{E}\left[z_{i, j}^{(l)} z_{i, j}^{(l) \top}\right]\right\| \leq L\left\|\sum_{(i, j) \in \mathcal{E}, i>j}\left(\tx_{i}-\tx_{j}\right)\left(\tx_{i}-\tx_{j}\right)^{\top}\right\|=L\left\|\boldsymbol{L}_{\mathcal{G}}\right\| \lesssim L n p 
\end{align*}
and 
\begin{align*}
    \left|\sum_{(i, j) \in \mathcal{E}, i>j} \sum_{l=1}^{L} \mathbb{E}\left[\boldsymbol{z}_{i, j}^{(l) \top} \boldsymbol{z}_{i, j}^{(l)}\right]\right| \leq 6 L\left|\sum_{(i, j) \in \mathcal{E}, i>j} 1\right| \lesssim Ln^{2} p.
\end{align*}
Let $V:=\frac{1}{L^{2}} \max \left\{\left\|\sum_{(i, j) \in \mathcal{E}} \sum_{l=1}^{L} \mathbb{E}\left[\boldsymbol{z}_{i, j}^{(l)} \boldsymbol{z}_{i, j}^{(l) \top}\right]\right\|,\left|\sum_{(i, j) \in \mathcal{E}} \sum_{l=1}^{L} \mathbb{E}\left[\boldsymbol{z}_{i, j}^{(l) \top} \boldsymbol{z}_{i, j}^{(l)}\right]\right|\right\}$ and $B:=\max_{i,j,l}\Vert z_{i,j}^{(l)}\Vert/L$. By matrix Bernstein inequality \citep{tropp2015introduction}, we have
\begin{align*}
    \left\|\nabla \mathcal{L}_{\lambda}(\tb^*)-\mathbb{E}\left[\nabla \mathcal{L}_{\lambda}(\tb^*) \mid \mathcal{G}\right]\right\|_{2} \lesssim \sqrt{V \log (n+d+1)}+B \log (n+d+1) \lesssim \sqrt{\frac{n^{2} p \log n}{L}}+\frac{\log n}{L}
\end{align*}
with probability exceeding $1-O(n^{-11})$ as long as $d< n$ and $npL\gtrsim \log n$. On the other hand, we have 
\begin{align*}
    \left\|\mathbb{E}\left[\nabla \mathcal{L}_{\lambda}(\tb^*) \mid \mathcal{G}\right]\right\|_{2} = \lambda\left\|\tb^*\right\|_{2}\lesssim \frac{1}{\kappa_3\sqrt{d+1}}\sqrt{\frac{np\log n}{L}}\kappa_3\sqrt{(d+1)n}\lesssim \sqrt{\frac{n^2 p\log n}{L}}.
\end{align*}
To summarize, we have 
\begin{align*}
    \left\|\nabla \mathcal{L}_{\lambda}(\tb^*)\right\|_2\lesssim\sqrt{\frac{n^2 p\log n}{L}}
\end{align*}
with probability exceeding $1-O(n^{-11})$.
\end{proof}

Once Lemma \ref{gradient} is established, we have the following lemma which can be viewed as a direct corollary of Lemma \ref{gradient}. 
\begin{lemma}\label{gradientball}
   With $\lambda$ given by Theorem \ref{mainthm}, the following event 
\begin{align*}
    \mathcal{A}_3 = \left\{\left\|\nabla \mathcal{L}_{\lambda}\left(\tb\right)\right\|_{2} \leq C_0\sqrt{\frac{n^{2} p \log n}{L}}+ \left(\lambda+\frac{1}{2}c_1pn\right)r,\; \forall \tb \text{ s.t. }\left\Vert\tb-\tb^*\right\Vert_2\leq r \right\}
\end{align*}
is contained by the event $\mathcal{A}_1\cap\mathcal{A}_2$. As a result, $\mathcal{A}_3$ happens with probability exceeding $1-O(n^{-11})$.
\end{lemma}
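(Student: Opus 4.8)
The plan is to prove the set containment $\mathcal{A}_1\cap\mathcal{A}_2\subseteq\mathcal{A}_3$ deterministically, after which the probability bound follows by a one-line union bound. First I would fix an arbitrary $\tb$ with $\left\Vert\tb-\tb^*\right\Vert_2\le r$ and apply the fundamental theorem of calculus to the map $s\mapsto\nabla\mathcal{L}_\lambda\big(\tb^*+s(\tb-\tb^*)\big)$, giving
\begin{align*}
\nabla\mathcal{L}_\lambda(\tb)=\nabla\mathcal{L}_\lambda(\tb^*)+\left(\int_0^1\nabla^2\mathcal{L}_\lambda\big(\tb^*+s(\tb-\tb^*)\big)\,ds\right)(\tb-\tb^*).
\end{align*}
Taking $\ell_2$-norms and using the triangle inequality reduces the claim to bounding the two summands separately.

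For the first summand, on the event $\mathcal{A}_1$ Lemma \ref{gradient} gives $\left\Vert\nabla\mathcal{L}_\lambda(\tb^*)\right\Vert_2\le C_0\sqrt{n^2p\log n/L}$. For the second, I would use that every Hessian $\nabla^2\mathcal{L}_\lambda(\cdot)$ is positive semidefinite (a nonnegative combination of the rank-one terms $(\tx_i-\tx_j)(\tx_i-\tx_j)^\top$ plus $\lambda\boldsymbol{I}$), so the integral is itself positive semidefinite and its operator norm is at most $\sup_{s\in[0,1]}\lambda_{\max}\big(\nabla^2\mathcal{L}_\lambda(\tb^*+s(\tb-\tb^*))\big)$. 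The crucial input is that Lemma \ref{ub} provides a bound that is uniform over all arguments on the event $\mathcal{A}_2$, namely $\lambda_{\max}(\nabla^2\mathcal{L}_\lambda(\tb))\le\lambda+\tfrac12 c_1pn$ for every $\tb\in\mathbb{R}^{n+d}$. Consequently the operator norm of the integral is at most $\lambda+\tfrac12 c_1pn$, and
\begin{align*}
\left\Vert\nabla\mathcal{L}_\lambda(\tb)-\nabla\mathcal{L}_\lambda(\tb^*)\right\Vert_2\le\left(\lambda+\tfrac12 c_1pn\right)\left\Vert\tb-\tb^*\right\Vert_2\le\left(\lambda+\tfrac12 c_1pn\right)r.
\end{align*}

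Combining the two estimates through the triangle inequality yields, for every such $\tb$,
\begin{align*}
\left\Vert\nabla\mathcal{L}_\lambda(\tb)\right\Vert_2\le C_0\sqrt{\frac{n^2p\log n}{L}}+\left(\lambda+\tfrac12 c_1pn\right)r,
\end{align*}
which is exactly the defining inequality of $\mathcal{A}_3$; since $\tb$ was arbitrary in the ball of radius $r$, this establishes $\mathcal{A}_1\cap\mathcal{A}_2\subseteq\mathcal{A}_3$. Finally, Lemmas \ref{gradient} and \ref{eigenlem} each hold with probability at least $1-O(n^{-11})$, so a union bound gives $\mathbb{P}(\mathcal{A}_1\cap\mathcal{A}_2)\ge 1-O(n^{-11})$, and monotonicity of probability yields $\mathbb{P}(\mathcal{A}_3)\ge\mathbb{P}(\mathcal{A}_1\cap\mathcal{A}_2)\ge 1-O(n^{-11})$.

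There is no genuine analytic obstacle here; the lemma is a soft consequence of the mean-value/FTC argument. The only point requiring care is the uniformity of the Hessian bound: because the segment joining $\tb^*$ and $\tb$ may leave any fixed small neighborhood when $r$ is large, I must invoke the \emph{global} upper bound of Lemma \ref{ub} (valid for all arguments) rather than the localized lower bound of Lemma \ref{lb}, and I must note that the positive semidefinite structure is what lets me pass the supremum through the integral. Everything else is the triangle inequality and a union bound.
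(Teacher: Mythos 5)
Your proof is correct and follows essentially the same route as the paper: the fundamental theorem of calculus applied to $\nabla\mathcal{L}_\lambda$ along the segment from $\tb^*$ to $\tb$, the bound on $\left\Vert\nabla\mathcal{L}_\lambda(\tb^*)\right\Vert_2$ from Lemma \ref{gradient} on $\mathcal{A}_1$, and the global operator-norm bound on the Hessian from Lemma \ref{ub} on $\mathcal{A}_2$, combined by the triangle inequality. The only difference is your explicit remark about positive semidefiniteness, which the paper leaves implicit when identifying the operator norm with $\lambda_{\max}$; this changes nothing of substance.
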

\begin{proof}
By the fundamental theorem of calculus we know that
\begin{align*}
    \nabla\cL_\lambda(\tb) = \nabla\cL_\lambda(\tb^*)+\int_0^1\nabla^2\cL_\lambda(\tb(\tau))(\tb-\tb^*)d\tau,
\end{align*}
where $\tb(\tau) = \tb^*+\tau(\tb-\tb^*)$. So for all $\tb$ such that $\Vert\tb-\tb^*\Vert_2\leq r$, we have
\begin{align*}
     \left\Vert\nabla\cL_\lambda(\tb)\right\Vert_2 &\leq  \left\Vert\nabla\cL_\lambda(\tb^*)\right\Vert_2+\int_0^1\left\Vert\nabla^2\cL_\lambda(\tb(\tau))(\tb-\tb^*)\right\Vert_2 d\tau \\
     &\leq \left\Vert\nabla\cL_\lambda(\tb^*)\right\Vert_2+\left\Vert\tb-\tb^*\right\Vert_2\int_0^1\left\Vert\nabla^2\cL_\lambda(\tb(\tau))\right\Vert d\tau  \\
     &\leq \left\Vert\nabla\cL_\lambda(\tb^*)\right\Vert_2+r\int_0^1\left\Vert\nabla^2\cL_\lambda(\tb(\tau))\right\Vert d\tau .
\end{align*}
Under event $\mathcal{A}_1$, we have
\begin{align*}
    \left\|\nabla \mathcal{L}_{\lambda}\left(\tb^*\right)\right\|_{2} \leq C_0\sqrt{\frac{n^{2} p \log n}{L}}.
\end{align*}
And, under event $\mathcal{A}_2$, we have $\Vert \nabla^2\cL_\lambda(\tb(\tau))\Vert\leq \lambda+c_1pn/2$. As a result, $\mathcal{A}_3$ is contained by $\mathcal{A}_1\cap\mathcal{A}_2$.
\end{proof}

\subsection{Proof of Lemma \ref{eigenlem}}\label{proofeigenlemma}
In this subsection, we prove Lemma \ref{eigenlem} by demonstrating
\begin{align*}
\mathcal{A}_2 = \left\{\frac{1}{2}c_2pn\leq \lambda_{\text{min}, \perp}(\LG)\leq \Vert \LG\Vert\leq 2c_1pn \right\}
\end{align*}
holds with high probability.
\begin{proof}
Let $\bO$ be any $r \times (n+d)$ matrix with orthonormal rows such that the row space of $\bO$ is $\Theta$, where $r$ is the dimension of $\Theta$. Then, it holds that
\begin{align*}
    \Vert \LG\Vert &= \Vert \bO\LG\bO^\top\Vert;   \qquad 
    \lambda_{\text{min},\perp}(\LG) = \lambda_{\text{min}}(\bO\LG\bO^\top).
\end{align*}
Let $\bX_{i,j} = \bO(\tx_i-\tx_j)(\tx_i-\tx_j)^\top\bO^\top\textbf{1}((i,j)\in\mathcal{E})$ for $i>j$. Then we have $\displaystyle\bO\LG\bO^\top = \sum_{i>j}\bX_{i,j}$ and
\begin{align*}
    \bX_{i,j}\succeq \boldsymbol{0},\quad \Vert\bX_{i,j}\Vert\leq \Vert \tx_i-\tx_j\Vert_2^2\leq 6,
\end{align*}
as long as $n/d$ is large enough.  Furthermore, 
\begin{align*}
    \lambda_{\text{min}}\left(\mathbb{E}\sum_{i>j}\bX_{i,j}\right) &= \lambda_{\text{min}}\left(p\bO\boldsymbol{\Sigma}\bO^\top\right) = p\lambda_{\text{min}, \perp}(\boldsymbol{\Sigma}) \geq c_2pn; \\
    \lambda_{\text{max}}\left(\mathbb{E}\sum_{i>j}\bX_{i,j}\right) &= \lambda_{\text{max}}\left(p\bO\boldsymbol{\Sigma}\bO^\top\right) = p\lambda_{\text{max}}(\boldsymbol{\Sigma}) \leq c_1pn.
\end{align*}
By the matrix Chernoff inequality \citep{tropp2012user}, we have
\begin{align*}
    \mathbb{P}\left(\lambda_{\text{min}}\left(\sum_{i>j}\bX_{i,j}\right)\leq \frac{1}{2}\lambda_{\text{min}}\left(\mathbb{E}\sum_{i>j}\bX_{i,j}\right)\right)&\leq (n+d)\cdot 0.8^{c_2pn/6}; \\
    \mathbb{P}\left(\lambda_{\text{max}}\left(\sum_{i>j}\bX_{i,j}\right)\geq \frac{3}{2}\lambda_{\text{max}}\left(\mathbb{E}\sum_{i>j}\bX_{i,j}\right)\right)&\leq (n+d)\cdot 0.8^{c_2pn/6}. 
\end{align*}
As a result, if $pn >c_p\log n$ for some $c_p>0$, we have
\begin{align*}
    \mathbb{P}(\mathcal{A}_2)\geq\mathbb{P}\left(\frac{1}{2}c_2pn\leq \lambda_{\text{min}}\left(\sum_{i>j}\bX_{i,j}\right)\leq \lambda_{\text{max}}\left(\sum_{i>j}\bX_{i,j}\right)
    \leq \frac{3}{2}c_1pn \right)\geq 1-O(n^{-11}).
\end{align*}
This concludes the proof of Lemma \ref{eigenlem}.
\end{proof}

\subsection{Proof of Lemma \ref{lb} \label{proofLemmaA4}}

In this subsection, we provide the proof of Lemma \ref{lb} by providing a lower bound for $\lambda_{\text{min},\perp}(\nabla^2 \mathcal{L}_\lambda(\tb)).$
\begin{proof}
For pair $(i,j)$, without loss of generality we assume $\tx_i^\top\tb\leq \tx_j^\top\tb$,  we then obtain
\begin{align*}
    \frac{e^{\tx_i^\top \tb}e^{\tx_j^\top \tb}}{\left(e^{\tx_i^\top \tb}+e^{\tx_j^\top \tb}\right)^2} = \frac{e^{\tx_i^\top \tb-\tx_j^\top \tb}}{\left(1+e^{\tx_i^\top \tb-\tx_j^\top \tb}\right)^2} = \frac{e^{-|\tx_i^\top \tb-\tx_j^\top \tb|}}{\left(1+e^{-|\tx_i^\top \tb-\tx_j^\top \tb|}\right)^2}\geq \frac{1}{4}e^{-|\tx_i^\top \tb-\tx_j^\top \tb|}.
\end{align*}
One the other hand, it holds that
\begin{align*}
    |\tx_i^\top \tb-\tx_j^\top \tb|&\leq |\tx_i^\top \tb^*-\tx_j^\top \tb^*|+|\tx_i^\top \tb^*-\tx_i^\top \tb|+|\tx_j^\top \tb^*-\tx_j^\top \tb|. \\
    &\leq \log(\kappa_1) + |\alpha_i-\alpha_i^*|+|\bx_i^\top\bb^*-\bx_i^\top\bb|+|\alpha_j-\alpha_j^*|+|\bx_j^\top\bb^*-\bx_j^\top\bb| \\
    &\leq \log(\kappa_1) + 2\|\ba-\ba^*\|_{\infty}+2\sqrt{\frac{c_3(d+1)}{n}}\|\bb^*-\bb\|_2 \\
    &\leq \log(\kappa_1) + 2C_1+2\sqrt{\frac{c_3(d+1)}{n}}C_2.
\end{align*}
Therefore, we obtain
\begin{align*}
    \frac{e^{\tx_i^\top \tb}e^{\tx_j^\top \tb}}{\left(e^{\tx_i^\top \tb}+e^{\tx_j^\top \tb}\right)^2}\geq \frac{1}{4\kappa_1e^C},
\end{align*}
where $\displaystyle C = 2C_1+2\sqrt{\frac{c_3(d+1)}{n}}C_2$.
As a result, for the Hessian $\nabla^2 \mathcal{L}_\lambda(\tb)$ we have
\begin{align*}
    \lambda_{\text{min},\perp}(\nabla^2 \mathcal{L}_\lambda(\tb))\geq \lambda+ \frac{1}{4\kappa_1e^C} \lambda_{\text{min},\perp}(\LG)\geq \lambda+\frac{c_2pn}{8\kappa_1e^C}.
\end{align*}
This completes the proof of Lemma \ref{lb}.
\end{proof}

\subsection{Proof of Lemma \ref{lem2}}
\begin{proof}
Since $\mathcal{L}_\lambda(\cdot)$ is $\lambda$-strongly convex and $\displaystyle\lambda+\frac{1}{2}c_1np$-smooth on the event $\mathcal{A}_2$, we know that 
\begin{align*}
    \left\Vert \tb^t-\eta\nabla_\lambda(\tb_\lambda)-\tb^*\right\Vert_2\leq \rho \left\Vert \tb^t - \tb_\lambda\right\Vert_2.
\end{align*}
As a result, when event $\mathcal{A}_2$ happens, we have
\begin{align*}
   \left\Vert\tb^{t+1}-\tb^*\right\Vert_2 =  \left\Vert \mathcal{P}\left(\tb^t-\eta\nabla_\lambda(\tb_\lambda)-\tb^*\right)\right\Vert_2\leq \left\Vert \tb^t-\eta\nabla_\lambda(\tb_\lambda)-\tb^*\right\Vert_2\leq \rho \left\Vert \tb^t - \tb_\lambda\right\Vert_2.
\end{align*}
Therefore, under event $\mathcal{A}_2$, we have
\begin{align*}
    \left\Vert\tb^{t}-\tb^*\right\Vert_2\leq \rho^t\left\Vert\tb^{0}-\tb^*\right\Vert_2.
\end{align*}
\end{proof}

\subsection{Proof of Lemma \ref{lem3} \label{proofLemmaA6}}
In this section, we prove $\tb_{\lambda}$ is not far from the initial point $\tb^*$.
\begin{proof}
Since $\tb_\lambda$ is the minimizer, we have that $\mathcal{L}_\lambda(\tb^*)\geq \mathcal{L}_\lambda(\tb_\lambda)$. By the mean value theorem, for some $\tb'$ between $\tb^*$ and $\tb_\lambda$, we have
\begin{align*}
    \mathcal{L}_\lambda(\tb_\lambda) 
     = \mathcal{L}_\lambda(\tb^*)+ \nabla \mathcal{L}_\lambda(\tb^*)^\top (\tb_\lambda-\tb^*)+\frac{1}{2}(\tb_\lambda-\tb^*)^\top \nabla^2 \mathcal{L}_\lambda(\tb')(\tb_\lambda-\tb^*).
\end{align*}
As a result, we have
\begin{align*}
    \mathcal{L}_\lambda(\tb^*)&\geq \mathcal{L}_\lambda(\tb^*)+ \nabla \mathcal{L}_\lambda(\tb^*)^\top (\tb_\lambda-\tb^*)+\frac{1}{2}(\tb_\lambda-\tb^*)^\top \nabla^2 \mathcal{L}_\lambda(\tb^*)(\tb_\lambda-\tb^*) \\
    & \geq \mathcal{L}_\lambda(\tb^*)+ \nabla \mathcal{L}_\lambda(\tb^*)^\top (\tb_\lambda-\tb^*)+\frac{\lambda}{2}\left\Vert\tb_\lambda-\tb^*\right\Vert_2^2.
\end{align*}
Therefore, we get
\begin{align*}
    \frac{\lambda}{2}\left\Vert\tb_\lambda-\tb^*\right\Vert_2^2 & \leq -\nabla \mathcal{L}_\lambda(\tb^*)^\top (\tb_\lambda-\tb^*)\\
    &\leq \left\Vert \nabla \mathcal{L}_\lambda(\tb^*)\right\Vert_2\left\Vert \tb_\lambda-\tb^*\right\Vert_2.
\end{align*}
As a result, on event $\mathcal{A}_1$ we have
\begin{align*}
    \left\Vert \tb_\lambda-\tb^*\right\Vert_2\leq \frac{2\left\Vert \nabla \mathcal{L}_\lambda(\tb^*)\right\Vert_2}{\lambda}\leq \frac{2C_0}{c_\lambda}\max\left\{\frac{\kappa_2}{\kappa_1}, \kappa_3\sqrt{d+1}\right\}\sqrt{n}.
\end{align*}
We conclude the proof of Lemma \ref{lem3}.
\end{proof}

\subsection{Proof of Lemma \ref{lem5}}\label{prooflem5}
\begin{proof}
Combine Lemma \ref{lem2} and Lemma \ref{lem3} we have
\begin{align*}
\left\Vert \tb^T-\tb_\lambda\right\Vert_2&\leq \rho^T\left\Vert \tb^0-\tb_\lambda\right\Vert_2 \\
&\leq \left(1-\frac{2\lambda}{2\lambda+c_1 np}\right)^{n^5}\frac{2C_0}{c_\lambda}\max\left\{\frac{\kappa_2}{\kappa_1}, \kappa_3\sqrt{d+1}\right\}\sqrt{n}\\
&\leq \frac{2C_0}{c_\lambda}\max\left\{\frac{\kappa_2}{\kappa_1}, \kappa_3\sqrt{d+1}\right\}\sqrt{n}\exp\left(-\frac{2\lambda n^5}{2\lambda+c_1 np}\right) \\
&\leq C_7 \kappa_1\sqrt{\frac{(d+1)\log n}{npL}}
\end{align*}
for $L \leq c_4\cdot n^{c_5}$ and $n$ which is large enough.
\end{proof}

\subsection{Proof of Lemma \ref{induction1}} \label{rsc}
\begin{proof}
By definition we know that 
\begin{align*}
\tb^{t+1}-\tb^* = \mathcal{P}\left(\tb^t-\eta\nabla\mathcal{L}_\lambda(\tb^t)\right)-\tb^* = \mathcal{P}\left(\tb^t-\eta\nabla\mathcal{L}_\lambda(\tb^t)-\tb^*\right).
\end{align*}
Consider $\tb(\tau) = \tb^*+\tau\left(\tb^t-\tb^*\right)$. By the fundamental theorem of calculus we have
\begin{align*}
     \tb^t-\eta\nabla\mathcal{L}_\lambda(\tb^t)-\tb^* 
    & = \tb^t-\eta\nabla\mathcal{L}_\lambda(\tb^t)-\left[\tb^*-\eta\nabla\mathcal{L}_\lambda(\tb^*)\right]-\eta\nabla\mathcal{L}_\lambda(\tb^*) \\
    & = \left\{\boldsymbol{I}_{n+d}-\eta\int^1_0\nabla^2\cL_\lambda(\tb(\tau))d\tau\right\}\left(\tb^t-\tb^*\right)-\eta \nabla\cL_\lambda(\tb^*).
\end{align*}
Let $npL$ be large enough such that 
\begin{align*}
    2C_6\kappa_1^2\sqrt{\frac{(d+1)\log n}{npL}}\leq 0.1,\quad 2C_3\kappa_1\sqrt{c_3}\sqrt{\frac{(d+1)\log n}{npL}}\leq 0.1.
\end{align*}
In addition, by the assumption of induction, we have 
\begin{align*}
    \Vert \ba(\tau)-\ba^*\Vert_\infty\leq 0.05,\quad \Vert \bb(\tau)-\bb^*\Vert_2\leq 0.05\sqrt{\frac{n}{c_3(d+1)}}. 
\end{align*}
Then by Lemma \ref{lb}, we have
\begin{align*}
    \lambda_{\text{min}, \perp}\left(\nabla^2\cL_\lambda(\tb(\tau))\right)\geq \lambda+\frac{c_2pn}{8\kappa_1e^{0.2}}\geq \lambda+\frac{c_2pn}{10\kappa_1}, \quad \forall 0\leq \tau\leq 1.
\end{align*}
On the other hand, by Lemma \ref{ub}, we have
\begin{align*}
    \lambda_{\text{max}}\left(\nabla^2\cL_\lambda(\tb(\tau))\right)\leq \lambda+\frac{1}{2}c_1pn.
\end{align*}
Let $\displaystyle\boldsymbol{A} = \int^1_0\nabla^2\cL_\lambda(\tb(\tau))d\tau$, then we have
\begin{align*}
    \lambda+\frac{c_2pn}{10\kappa_1}\leq \lambda_{\text{min}, \perp}(A)\leq \lambda_{\text{max}}(A)\leq \lambda+\frac{1}{2}c_1pn.
\end{align*}
Since $\tb^t-\tb^*\in \Theta$, it holds that
\begin{align*}
    \left\Vert \mathcal{P}(\boldsymbol{I}_{n+d}-\eta\boldsymbol{A})(\tb^t-\tb^*)\right\Vert_2&\leq \max\{|1-\eta\lambda_{\text{min}, \perp}(A), 1-\eta\lambda_{\text{max}}(A)|\}\left\Vert \tb^t-\tb^*\right\Vert_2 \\
    &\leq \left(1-\frac{c_2}{20\kappa_1}\eta pn\right)\left\Vert\tb^t-\tb^*\right\Vert_2.
\end{align*}
Therefore, on the event $\mathcal{A}_1$, we have
\begin{align*}
    \left\Vert\tb^{t+1}-\tb^*\right\Vert_2&\leq \left\Vert \mathcal{P}\left(\boldsymbol{I}_{n+d}-\eta\boldsymbol{A}\right)(\tb^t-\tb^*)-\eta \mathcal{P}\nabla\cL_\lambda(\tb^*)\right\Vert_2\\
    &\leq \left\Vert \mathcal{P}\left(\boldsymbol{I}_{n+d}-\eta\boldsymbol{A}\right)(\tb^t-\tb^*)\right\Vert_2+\eta \left\Vert\nabla\cL_\lambda(\tb^*)\right\Vert_2 \\
    &\leq \left(1-\frac{c_2}{20\kappa_1}\eta pn\right)\left\Vert\tb^t-\tb^*\right\Vert_2 +C_0\eta \sqrt{\frac{n^2p\log n}{L}} \\
    &\leq \left(1-\frac{c_2}{20\kappa_1}\eta pn\right)C_3 \kappa_1\sqrt{\frac{\log n}{pL}} +C_0\eta \sqrt{\frac{n^2p\log n}{L}} \\
    &\leq C_3 \kappa_1\sqrt{\frac{\log n}{pL}},
\end{align*}
as long as $\displaystyle C_3\geq \frac{20C_0}{c_2}$.
\end{proof}

\subsection{Proof of Lemma \ref{induction3}}\label{prooflem10}
\begin{proof}
For any $m\in[n]$, by definition we have
\begin{align*}
    \tb^{t+1}-\tb^{t+1,(m)} &=  \mathcal{P}\left(\tb^t-\eta\nabla\mathcal{L}_\lambda(\tb^t)-\left[\tb^{t,(m)}-\eta\nabla\mathcal{L}_\lambda^{(m)}(\tb^{t,(m)})\right]\right).
\end{align*}
We consider $\tb(\tau) = \tb^{t,(m)}+\tau\left(\tb^t-\tb^{t,(m)}\right)$. By the fundamental theorem of calculus we have
\begin{align}
    & \tb^t-\eta\nabla\mathcal{L}_\lambda(\tb^t)-\left[\tb^{t,(m)}-\eta\nabla\mathcal{L}_\lambda^{(m)}(\tb^{t,(m)})\right] \\
    =& \tb^t-\eta\nabla\mathcal{L}_\lambda(\tb^t)-\left[\tb^{t,(m)}-\eta\nabla\mathcal{L}_\lambda(\tb^{t,(m)})\right]-\eta \left(\nabla\mathcal{L}_\lambda(\tb^{t,(m)})-\nabla\mathcal{L}_\lambda^{(m)}(\tb^{t,(m)})\right) \\
    =& \left\{\boldsymbol{I}_{n+d}-\eta\int^1_0\nabla^2\cL_\lambda(\tb(\tau))d\tau\right\}\left(\tb^{t}-\tb^{t,(m)}\right)-\eta \left(\nabla\mathcal{L}_\lambda(\tb^{t,(m)})-\nabla\mathcal{L}_\lambda^{(m)}(\tb^{t,(m)})\right).\label{induction3decompose}
\end{align}
From \eqref{inductionA}$\sim$ \eqref{inductionD} we know that
\begin{align*}
    \Vert \ba^{t,(m)}-\ba^*\Vert_{\infty}&\leq \Vert \ba^{t}-\ba^*\Vert_{\infty}+\max_{1\leq m\leq n}\left\Vert \tb^{t,(m)}-\tb^t\right\Vert_2 \leq (C_4+C_6)\kappa_1^2\sqrt{\frac{(d+1)\log n}{npL}}; \\
    \Vert \bb^{t,(m)}-\bb^*\Vert_{2}&\leq \left\Vert \tb^{t}-\tb^*\right\Vert_{2}+\max_{1\leq m\leq n}\left\Vert \tb^{t,(m)}-\tb^t\right\Vert_2 \leq  (C_3+C_4)\kappa_1\sqrt{\frac{\log n}{pL}};\\
    \Vert \ba^{t}-\ba^*\Vert_{\infty}&\leq C_6\kappa_1^2\sqrt{\frac{(d+1)\log n}{npL}}; \\
    \Vert \bb^{t}-\bb^*\Vert_{2}&\leq \left\Vert \tb^{t}-\tb^*\right\Vert_{2}\leq C_3\kappa_1\sqrt{\frac{\log n}{pL}}.
\end{align*}
Consider $npL$ which is large enough such that
\begin{align*}
     2(C_4+C_6)\kappa_1^2\sqrt{\frac{(d+1)\log n}{npL}},\; 2(C_3+C_4)\kappa_1\sqrt{c_3}\sqrt{\frac{(d+1)\log n}{npL}}\leq 0.1.
\end{align*}
Then we also have
\begin{align*}
    2C_6\kappa_1^2\sqrt{\frac{(d+1)\log n}{npL}},\;2C_3\kappa_1\sqrt{c_3}\sqrt{\frac{(d+1)\log n}{npL}}\leq 0.1.
\end{align*}
Use the same approach derived in \S\ref{rsc}, we have
\begin{align}
    \left\|\mathcal{P}\left\{\boldsymbol{I}_{n+d}-\eta\int^1_0\nabla^2\cL_\lambda(\tb(\tau))d\tau\right\}\left(\tb^{t}-\tb^{t,(m)}\right)\right\|_2\leq \left(1-\frac{c_2}{20\kappa_1}\eta pn\right)\left\Vert\tb^t-\tb^{t,(m)}\right\Vert_2,\label{induction3sc}
\end{align}
as long as $\displaystyle 0<\eta\leq \frac{2}{2\lambda+c_1np}$.

On the other hand, since $\left\|\mathcal{P}(  \nabla\mathcal{L}_\lambda(\tb^{t,(m)})-\nabla\mathcal{L}_\lambda^{(m)}(\tb^{t,(m)}))\right\|_2\leq \left\|  \nabla\mathcal{L}_\lambda(\tb^{t,(m)})-\nabla\mathcal{L}_\lambda^{(m)}(\tb^{t,(m)})\right\|_2$, it remains to bound $\left\|  \nabla\mathcal{L}_\lambda(\tb^{t,(m)})-\nabla\mathcal{L}_\lambda^{(m)}(\tb^{t,(m)})\right\|_2.$ By definition, we have
\begin{align*}
    &\nabla\mathcal{L}_\lambda(\tb^{t,(m)})-\nabla\mathcal{L}_\lambda^{(m)}(\tb^{t,(m)}) \\
    =& \sum_{i\neq m}\left\{\left(-y_{m,i}+\frac{e^{  \tx_i^\top\tb^{t,(m)}}}{e^{\tx_i^\top\tb^{t,(m)}}+e^{\tx_m^\top\tb^{t,(m)}}} \right)\textbf{1}((i,m)\in\mathcal{E})-p\left(-y_{m,i}^*+\frac{e^{  \tx_i^\top\tb^{t,(m)}}}{e^{\tx_i^\top\tb^{t,(m)}}+e^{\tx_m^\top\tb^{t,(m)}}}\right)\right\}(\tx_i-\tx_m)\\
    =&\underbrace{\sum_{i\neq m}\left\{\left(-\frac{e^{  \tx_i^\top\tb^{*}}}{e^{\tx_i^\top\tb^*}+e^{\tx_m^\top\tb^*}}+\frac{e^{  \tx_i^\top\tb^{t,(m)}}}{e^{\tx_i^\top\tb^{t,(m)}}+e^{\tx_m^\top\tb^{t,(m)}}} \right)\left(\textbf{1}((i,m)\in\mathcal{E})-p\right)\right\}(\tx_i-\tx_m)}_{:=\boldsymbol{u}^m} \\
    &+\underbrace{\frac{1}{L}\sum_{(i,m)\in\mathcal{E}}\sum^L_{l=1}\left(-y^{(l)}_{m,i}+\frac{e^{  \tx_i^\top\tb^*}}{e^{\tx_i^\top\tb^*}+e^{\tx_m^\top\tb^*}}\right)(\tx_i-\tx_m)}_{:=\boldsymbol{v}^m}.
\end{align*}

By definition, we also have
\begin{align*}
    v_{j}^{m}= \begin{cases}\frac{1}{L} \sum_{l=1}^{L}\left(-y_{m, j}^{(l)}+\frac{e^{  \tx_j^\top\tb^*}}{e^{\tx_j^\top\tb^*}+e^{\tx_m^\top\tb^*}}\right), & \text { if }(j, m) \in \mathcal{E} \\ \frac{1}{L} \sum_{i:(i, m) \in \mathcal{E}} \sum_{l=1}^{L}\left(y_{m, i}^{(l)}-\frac{e^{  \tx_i^\top\tb^*}}{e^{\tx_i^\top\tb^*}+e^{\tx_m^\top\tb^*}}\right), & \text { if } j=m ; \\ 
    \frac{1}{L} \sum_{i:(i, m) \in \mathcal{E}} \sum_{l=1}^{L}\left(-y_{m, i}^{(l)}+\frac{e^{  \tx_i^\top\tb^*}}{e^{\tx_i^\top\tb^*}+e^{\tx_m^\top\tb^*}}\right)((\tx_i)_j-(\tx_m)_j), & \text { if } j>n ; \\ 
    0, & \text { else. }\end{cases}
\end{align*}
Consider random variable $M = |\left\{i:(i,m)\in \mathcal{E}\right\}|$. By Chernoff bound \citep{tropp2012user}, we know that
\begin{align*}
\mathbb{P}(M\geq 2pn)\leq (e/4)^{pn}\leq O(n^{-11}),
\end{align*}
as long as $np>c_p\log n$ for some $c_p>0$. As long as $\Vert \bx_i-\bx_m\Vert_2\leq 2\sqrt{c_3(d+1)/n}\leq 1$, we have $|(\tx_i)_j-(\tx_m)_j)|\leq 1$ for $j>n$. Since $\left|-y^{(l)}_{m,i}+\frac{e^{  \tx_i^\top\tb^*}}{e^{\tx_i^\top\tb^*}+e^{\tx_m^\top\tb^*}}\right|\leq 1$, by Hoeffding's inequality and union bound, we get
\begin{align*}
    |v_j^m|&\lesssim \sqrt{\frac{M\log n}{L}}, \text{ if } j=m \text{ or } j>n; \\
    |v_j^m|&\lesssim \sqrt{\frac{\log n}{L}}, \text{ if } (j,m)\in\mathcal{E}.
\end{align*}
with probability exceeding $1-O(n^{-11})$ conditioning on $\mathcal{E}$ as long as $d< n$. On the other hand, since $M\leq 2pn$ with probability exceeding $1-O(n^{-11})$, we have
\begin{align*}
    \Vert\boldsymbol{v}^m\Vert_2^2\lesssim (d+1)\frac{2pn\log n}{L} +2pn\frac{\log n}{L}\lesssim \frac{pn(d+1)\log n}{L}
\end{align*}
with probability exceeding $1-O(n^{-11})$. 

On the other hand, for $\boldsymbol{u}^m$ we have
\begin{align*}
    u_{j}^{m}= \begin{cases}\xi_j(1-p), & \text { if }(j, m) \in \mathcal{E} \\ 
    -\sum_{i:(i, m) \in \mathcal{E}} \xi_i\left(\textbf{1}((i,m)\in\mathcal{E})-p\right), & \text { if } j=m ; \\ 
     \sum_{i:(i, m) \in \mathcal{E}} \xi_i\left(\textbf{1}((i,m)\in\mathcal{E})-p\right)((\tx_i)_j-(\tx_m)_j), & \text { if } j>n ; \\ 
    -\xi_j p, & \text { else, }\end{cases}
\end{align*}
where $$\displaystyle \xi_j = -\frac{e^{  \tx_j^\top\tb^{*}}}{e^{\tx_j^\top\tb^*}+e^{\tx_m^\top\tb^*}}+\frac{e^{  \tx_j^\top\tb^{t,(m)}}}{e^{\tx_j^\top\tb^{t,(m)}}+e^{\tx_m^\top\tb^{t,(m)}}} = -\frac{1}{1+e^{\tx_m^\top\tb^*-\tx_j^\top\tb^*}}+\frac{1}{1+e^{\tx_m^\top\tb^{t,(m)}-\tx_j^\top\tb^{t,(m)}}}.$$
Consider $\displaystyle g(x) = \frac{1}{1+e^x}$. Since $\displaystyle\left|g'(x)\right|\leq 1$, we have that
\begin{align*}
    |\xi_j| &= \left|g(\tx_m^\top\tb^{t,(m)}-\tx_j^\top\tb^{t,(m)})-g(\tx_m^\top\tb^*-\tx_j^\top\tb^*)\right| \\
    &\leq \left|(\tx_m^\top\tb^{t,(m)}-\tx_j^\top\tb^{t,(m)})-(\tx_m^\top\tb^*-\tx_j^\top\tb^*)\right|\\
    &\leq \left|\tx_m^\top\tb^{t,(m)}-\tx_m^\top\tb^*\right|+\left|\tx_j^\top\tb^{t,(m)}-\tx_j^\top\tb^*\right| \\
    &\leq \left|\alpha_m^{t,(m)}-\alpha_m^*\right|+\left|\bx_m^\top\bb^{t,(m)}-\bx_m^\top\bb^*\right|+\left|\alpha_j^{t,(m)}-\alpha_j^*\right|+\left|\bx_j^\top\bb^{t,(m)}-\bx_j^\top\bb^*\right| \\
    &\leq 2 \left\|\ba^{t,(m)}-\ba^*\right\|_\infty + 2\sqrt{c_3(d+1)/n}\left\|\bb^{t,(m)}-\bb^*\right\|_2 \\
    &\leq \left[2(C_4+C_6)\kappa_1^2+2(C_3+C_4)\kappa_1\sqrt{c_3}\right]\sqrt{\frac{(d+1)\log n}{npL}} := \tC_1\sqrt{\frac{(d+1)\log n}{npL}}.
\end{align*}
By Bernstein inequality we know that
\begin{align*}
    |u_j^m|&\lesssim \sqrt{\left(p\sum^n_{i=1}\xi_i^2\right)\log n}+\max_{1\leq i\leq n}|\xi_i|\log n\\
    &\leq \left(\sqrt{np\log n}+\log n\right)\tC_1\sqrt{\frac{(d+1)\log n}{npL}},\text{ if } j=m \text{ or } j>n.
\end{align*}
As a result, for $\boldsymbol{u}^m$ we have
\begin{align*}
    \Vert \boldsymbol{u}^m\Vert_2^2 &= (u^m_m)^2 +\sum_{j>n}(u^m_j)^2+\sum_{j:(j,m)\in \mathcal{E}}(u^m_j)^2 +\sum_{j:(j,m)\notin \mathcal{E},j\neq m, j\leq n}(u^m_j)^2 \\
    &\lesssim (d+1)\left(\sqrt{np\log n}+\log n\right)^2\tC_1^2\frac{(d+1)\log n}{npL}+np\tC_1^2\frac{(d+1)\log n}{npL}+p^2n\tC_1^2\frac{(d+1)\log n}{npL} \\
    &\lesssim pn(d+1)\log n \tC_1^2\frac{(d+1)\log n}{npL}.
\end{align*}
In summary,  there exists constants $D_1, D_2$ which are independent of $C_i, i\geq 0$ such that
\begin{align}
     \Vert\boldsymbol{v}^m\Vert_2\leq D_1\sqrt{\frac{pn(d+1)\log n}{L}},\quad \Vert\boldsymbol{u}^m\Vert_2\leq D_2\tC_1 (d+1)\log n\sqrt{\frac{1}{L}}\label{induction3uv}
\end{align}
with probability exceeding $1-O(n^{-11})$. Combining  Eq.~\eqref{induction3decompose}, Eq.~\eqref{induction3sc} and Eq.~\eqref{induction3uv} we have
\begin{align*}
    \left\Vert \tb^{t+1}-\tb^{t+1,(m)}\right\Vert_2\leq& \left(1-\frac{c_2}{20\kappa_1}\eta pn\right)\left\Vert\tb^t-\tb^{t,(m)}\right\Vert_2   \\
    &+ \eta \left(D_1\sqrt{\frac{pn(d+1)\log n}{L}}+D_2\tC_1 (d+1)\log n\sqrt{\frac{1}{L}}\right) \\
    \leq& \left(1-\frac{c_2}{20\kappa_1}\eta pn\right)C_4\kappa_1\sqrt{\frac{(d+1)\log n}{npL}} \\
    &+ \eta \left(D_1\sqrt{\frac{pn(d+1)\log n}{L}}+D_2\tC_1 (d+1)\log n\sqrt{\frac{1}{L}}\right) \\
    \leq& C_4\kappa_1\sqrt{\frac{(d+1)\log n}{npL}},
\end{align*}
as long as $\displaystyle C_4\geq \frac{40D_1}{c_2}$ and $n$ is large enough such that $\displaystyle C_4\geq \frac{40D_2}{c_2}\tC_1\sqrt{\frac{(d+1)\log n}{np}}$.

\end{proof}

\subsection{Proof of Lemma \ref{induction2}}\label{prooflem9}
\begin{proof}
For $m\in[n]$, we have
\begin{align*}
     \alpha_m^{t+1,(m)}-\alpha_m^{*} &= \left[\mathcal{P}\left(\tb^{t,(m)}-\eta\nabla\cL_\lambda^{(m)}\left(\tb^{t,(m)}\right)-\tb^*\right)\right]_m \\
     & = \alpha_m^{t,(m)}-\eta\left[\mathcal{P}\nabla\cL_\lambda^{(m)}\left(\tb^{t,(m)}\right)\right]_m - \alpha_m^* \\
     & = \underbrace{\alpha_m^{t,(m)}-\eta\left[\nabla\cL_\lambda^{(m)}\left(\tb^{t,(m)}\right)\right]_m-\alpha_m^*}_{\mu_1} + \underbrace{\eta \left[(I-\mathcal{P})\nabla\cL_\lambda^{(m)}\left(\tb^{t,(m)}\right)\right]_m}_{\mu_2}
\end{align*}
For $\mu_2$, we have
\begin{align*}
    \left|\mu_2\right|&\leq \eta\left\Vert(I-\mathcal{P})\nabla\cL_\lambda^{(m)}\left(\tb^{t,(m)}\right)\right\Vert_\infty \\
    &\leq \eta \left\Vert I-\mathcal{P}\right\Vert_{2,\infty}\left\Vert\nabla\cL_\lambda^{(m)}\left(\tb^{t,(m)}\right)\right\Vert_2 \\
    &\leq c_0\eta\sqrt{\frac{d+1}{n}}\left\Vert\nabla\cL_\lambda^{(m)}\left(\tb^{t,(m)}\right)\right\Vert_2.
\end{align*}
By Lemma \ref{gradientball}, with probability at least $1-O(n^{-11})$ we have
\begin{align*}
    \left\Vert\nabla\cL_\lambda^{(m)}\left(\tb^{t,(m)}\right)\right\Vert_2\leq (C_0+c_1C_3+c_1C_4)\kappa_1\sqrt{\frac{n^2p\log n}{L}},
\end{align*}
as long as 
\begin{align*}
    npL\geq \frac{4c_\lambda^2}{c_1^2}\min\left\{\frac{\kappa_1^2}{\kappa_2^2},\frac{1}{(d+1)\kappa_3^2}\right\}\log n.
\end{align*}
In this case, for $\mu_2$ we have
\begin{align}
    \left|\mu_2\right|\leq  c_0(C_0+c_1C_3+c_1C_4)\eta\kappa_1\sqrt{\frac{(d+1)np\log n}{L}}:=\tC_2\eta\kappa_1\sqrt{\frac{(d+1)np\log n}{L}}.\label{mu2}
\end{align}
On the other hand, for $\mu_1$ we have
\begin{align}
    \mu_1 &= \alpha_m^{t,(m)}-\eta\left[\nabla\cL_\lambda^{(m)}\left(\tb^{t,(m)}\right)\right]_m-\alpha_m^* \\
    &= \alpha_m^{t,(m)}-\alpha_m^*-\eta p\sum_{i\neq m}\left\{\frac{e^{\tx_i^\top\tb^*}}{e^{\tx_i^\top\tb^*}+e^{\tx_m^\top\tb^*}}-\frac{e^{\tx_i^\top\tb^{t, (m)}}}{e^{\tx_i^\top\tb^{t, (m)}}+e^{\tx_m^\top\tb^{t, (m)}}}\right\}-\eta\lambda\alpha_m^{t,(m)}. \label{eqa}
\end{align}
Normalizing the numerators below to 1 and by the mean value theorem,  there exists some $c_i$ between $\tx_m^\top\tb^* - \tx_i^\top\tb^*$ and $\tx_m^\top\tb^{t, (m)} - \tx_i^\top\tb^{t, (m)}$ such that
\begin{align}
    \frac{e^{\tx_i^\top\tb^*}}{e^{\tx_i^\top\tb^*}+e^{\tx_m^\top\tb^*}}-\frac{e^{\tx_i^\top\tb^{t, (m)}}}{e^{\tx_i^\top\tb^{t, (m)}}+e^{\tx_m^\top\tb^{t, (m)}}} =& -\frac{e^{c_i}}{(1+e^{c_i})^2}\left[\tx_m^\top\tb^* - \tx_i^\top\tb^*-\tx_m^\top\tb^{t, (m)} + \tx_i^\top\tb^{t, (m)}\right] \nonumber \\
    =&-\frac{e^{c_i}}{(1+e^{c_i})^2}\left[\alpha_m^* - \alpha_i^*-\alpha_m^{t,(m)}+\alpha_i^{t,(m)}\right]  \nonumber \\ &\;-\frac{e^{c_i}}{(1+e^{c_i})^2}\left[\bx_m^\top\bb^* - \bx_i^\top\bb^*-\bx_m^\top\bb^{t, (m)} + \bx_i^\top\bb^{t, (m)}\right].
    \label{eqb}
\end{align}
Combining Eq.~\eqref{eqa} and Eq.~\eqref{eqb}, we have
\begin{align*}
    \mu_1 =& \left(1-\eta p\sum_{i\neq m}\frac{e^{c_i}}{(1+e^{c_i})^2}\right)\left(\alpha_m^{t,(m)}-\alpha_m^{*}\right)\\
    &+\eta p\sum_{i\neq m}\frac{e^{c_i}}{(1+e^{c_i})^2}\left[\alpha_i^{t,(m)}-\alpha_i^{*}+(\bx_m-\bx_i)^\top\left(\bb^*-\bb^{t,(m)}\right)\right]-\eta\lambda\alpha_m^{t,(m)} \\
    =& \left(1-\eta\lambda-\eta p\sum_{i\neq m}\frac{e^{c_i}}{(1+e^{c_i})^2}\right)\left(\alpha_m^{t,(m)}-\alpha_m^{*}\right)\\
    &+\eta p\sum_{i\neq m}\frac{e^{c_i}}{(1+e^{c_i})^2}\left[\alpha_i^{t,(m)}-\alpha_i^{*}+(\bx_m-\bx_i)^\top\left(\bb^*-\bb^{t,(m)}\right)\right]-\eta\lambda\alpha_m^{*}.
\end{align*}
By taking absolute value on both side, we get
\begin{align*}
    |\mu_1|\leq& \left| 1-\eta\lambda-\eta p\sum_{i\neq m}\frac{e^{c_i}}{(1+e^{c_i})^2}\right||\alpha_m^{t,(m)}-\alpha_m^{*}| \\
    &+ \frac{\eta p}{4} \sum_{i\neq m}\left[|\alpha_i^{t,(m)}-\alpha_i^{*}|+\Vert\bx_m-\bx_i\Vert_2\Vert\bb^*-\bb^{t,(m)}\Vert_2\right]+\eta\lambda |\alpha_m^*| \\
    \leq& \left| 1-\eta\lambda-\eta p\sum_{i\neq m}\frac{e^{c_i}}{(1+e^{c_i})^2}\right||\alpha_m^{t,(m)}-\alpha_m^{*}| \\
    &+ \frac{\eta p}{4}\left[\sqrt{n}\|\ba^{t,(m)}-\ba^{*}\|_2+n\cdot 2\sqrt{\frac{c_3(d+1)}{n}}\Vert\bb^*-\bb^{t,(m)}\Vert_2\right]+\eta\lambda \|\ba^*\|_{\infty}\\
    \leq& \left| 1-\eta\lambda-\eta p\sum_{i\neq m}\frac{e^{c_i}}{(1+e^{c_i})^2}\right||\alpha_m^{t,(m)}-\alpha_m^{*}| \\
    &+ \frac{\eta p}{4}\sqrt{n}\left(1+2\sqrt{c_3(d+1)}\right)\left\Vert\tb^*-\tb^{t,(m)}\right\Vert_2+\eta\lambda \|\ba^*\|_{\infty}.
\end{align*}
Since $\displaystyle 1-\eta\lambda-\eta p\sum_{i\neq m}\frac{e^{c_1}}{(1+e^{c_i})^2}\geq 1-\eta\lambda-\eta p\frac{n}{4}\geq 0$, we have
\begin{align*}
    \left| 1-\eta\lambda-\eta p\sum_{i\neq m}\frac{e^{c_i}}{(1+e^{c_i})^2}\right| &= 1-\eta\lambda-\eta p\sum_{i\neq m}\frac{e^{c_i}}{(1+e^{c_i})^2} \\
    &\leq 1-\eta p(n-1)\min_{i\neq m}\frac{e^{c_i}}{(1+e^{c_i})^2}.
\end{align*}
By the defintion of $c_i,$ we have
\begin{align*}
    \max_{i\neq m}|c_i|\leq& \max_{i\neq m} |\tx_m^\top\tb^* - \tx_i^\top\tb^*|+\max_{i\neq m}\left|\tx_m^\top\tb^* - \tx_i^\top\tb^*-\tx_m^\top\tb^{t, (m)} + \tx_i^\top\tb^{t, (m)}\right| \\
    \leq& \log \kappa_1+\max_{i\neq m}|\alpha^*_m-\alpha_i^*-\alpha_m^{t,(m)}+\alpha_i^{t,(m)}|+\max_{i\neq m}\left|(\bx_m-\bx_i)^\top\left(\bb^*-\bb^{t, (m)}\right)\right| \\
    \leq& \log \kappa_1+2\Vert \ba^{t,(m)}-\ba^*\Vert_{\infty}+2\sqrt{\frac{c_3(d+1)}{n}}\Vert \bb^*-\bb^{t, (m)}\Vert_2.
\end{align*}
Consider $npL$ which is large enough such that
\begin{align*}
    2(C_4+C_6)\kappa_1^2\sqrt{\frac{(d+1)\log n}{npL}}\leq 0.1,\quad 2(C_3+C_4)\kappa_1\sqrt{c_3}\sqrt{\frac{(d+1)\log n}{npL}}\leq 0.1.
\end{align*}
Then we have
\begin{align*}
    \max_{i\neq m}|c_i|\leq& \log \kappa_1+2\Vert \ba^{t,(m)}-\ba^*\Vert_{\infty}+2\sqrt{\frac{c_3(d+1)}{n}}\Vert \bb^*-\bb^{t, (m)}\Vert_2 \\
    \leq& \log \kappa_1 + 2\left(\Vert \ba^{t}-\ba^*\Vert_{\infty}+\max_{1\leq m\leq n}\left\Vert \tb^t-\tb^{t,(m)}\right\Vert_2\right) \\
    &+2\sqrt{\frac{c_3(d+1)}{n}}\left(\left\Vert \tb^t-\tb^{*}\right\Vert_2+\max_{1\leq m\leq n}\left\Vert \tb^t-\tb^{t,(m)}\right\Vert_2\right) \\
    \leq &\log \kappa_1+0.2.
\end{align*}
Then it holds that,
\begin{align*}
    \min_{i\neq m}\frac{e^{c_i}}{(1+e^{c_i})^2} = \min_{i\neq m}\frac{e^{-|c_i|}}{(1+e^{-|c_i|})^2}\geq \min_{i\neq m}\frac{e^{-|c_i|}}{4} = \frac{e^{-\max_{i\neq m}|c_i|}}{4}\geq\frac{1}{4\kappa_1e^{0.2}} \geq \frac{1}{5\kappa_1}.
\end{align*}
Using $\displaystyle n-1\geq \frac{n}{2}$ for $n\geq 2$, we have
\begin{align*}
    |\mu_1|\leq& \left(1-\frac{1}{10\kappa_1}\eta pn\right)|\alpha_m^{t,(m)}-\alpha_m^{*}| \\ &+\frac{1+2\sqrt{c_3(d+1)}}{4}\eta p\sqrt{n}\left(\left\Vert \tb^t-\tb^{*}\right\Vert_2+\max_{1\leq m\leq n}\left\Vert \tb^t-\tb^{t,(m)}\right\Vert_2\right) + \eta\lambda \kappa_2 \\
    \leq& \left(1-\frac{1}{10\kappa_1}\eta pn\right)C_5\kappa_1^2\sqrt{\frac{(d+1)\log n}{npL}} \\ &+\frac{1+2\sqrt{c_3(d+1)}}{4}\eta p\sqrt{n}(C_3+C_4)\kappa_1\sqrt{\frac{\log n}{pL}} + \eta\lambda \kappa_2.
\end{align*} 
Combine this result with Eq.~\eqref{mu2}, we get
\begin{align*}
    \left| \alpha_m^{t+1,(m)} - \alpha_m^*\right| \leq& \left|\mu_1\right|+\left|\mu_2\right|  \\
    \leq& \tC_2\eta\kappa_1\sqrt{\frac{(d+1)np\log n}{L}}+\left(1-\frac{1}{10\kappa_1}\eta pn\right)C_5\kappa_1^2\sqrt{\frac{(d+1)\log n}{npL}} \\
    &+\frac{1+2\sqrt{c_3(d+1)}}{4}\eta p\sqrt{n}(C_3+C_4)\kappa_1\sqrt{\frac{\log n}{pL}} + \eta\lambda \kappa_2 \\
    \leq & C_5\kappa_1^2\sqrt{\frac{(d+1)\log n}{npL}}.
\end{align*}
as long as $C_5\geq 30\tC_2$, $C_5\geq 7.5(1+2\sqrt{c_3})(C_3+C_4)$ and $C_5\geq 30c_\lambda/\sqrt{d+1}$. This concludes our proof for Lemma \ref{induction2}.
\end{proof}

\subsection{Proof of Lemma \ref{induction4}}\label{prooflem11}
\begin{proof}
For any $m\in[n]$, we have
\begin{align*}
    |\alpha_m^{t+1}-\alpha_m^*|&\leq |\alpha_m^{t+1}-\alpha_m^{t+1,(m)}|+|\alpha_m^{t+1,(m)}-\alpha_m^{*}| \\
    &\leq \left\|\tb_m^{t+1}-\tb_m^{t+1,(m)}\right\|_2+|\alpha_m^{t+1,(m)}-\alpha_m^{*}| \\
    &\leq C_4\kappa_1\sqrt{\frac{(d+1)\log n}{npL}}+C_5\kappa_1^2\sqrt{\frac{(d+1)\log n}{npL}} \\
    &\leq (C_4+C_5)\kappa_1^2\sqrt{\frac{(d+1)\log n}{npL}}.
\end{align*}
As a result, we have
\begin{align*}
    \left\|\ba^{t+1}-\ba^*\right\|_\infty\leq C_6\kappa_1^2\sqrt{\frac{(d+1)\log n}{npL}},
\end{align*}
as long as $C_6\geq C_4+C_5$.
\end{proof}

\newpage
\section{Proof of Auxiliary Lemmas in Section \ref{InferenceOutline}}
\subsection{Proof of Lemma \ref{Lilemma} and Two Propositions (Propositions \ref{pro2} and \ref{pro3})}\label{prooflemb1}
\begin{proof}[Proof of Lemma \ref{Lilemma}]
(1) By definition for $i\in [n]$ we have
\begin{align*}
    \left(\nabla\cL(\tb^*)\right)_i &= \sum_{j\neq i, (i,j)\in\mathcal{E}}\left\{-y_{j,i}+\phi(\tx_i^\top\tb^*-\tx_j^\top\tb^*)\right\} \\
    &=\frac{1}{L}\sum_{j\neq i, (i,j)\in\mathcal{E}}\sum_{l=1}^L\left\{-y^{(l)}_{j,i}+\phi(\tx_i^\top\tb^*-\tx_j^\top\tb^*)\right\}.
\end{align*}
Since $\left|-y^{(l)}_{j,i}+\phi(\tx_i^\top\tb^*-\tx_j^\top\tb^*)\right|\leq 1$, by Bernstein inequality we have
\begin{align*}
    \left|\left(\nabla\cL(\tb^*)\right)_i-\mathbb{E}\left[\left(\nabla\cL(\tb^*)\right)_i\bigg|\mathcal{G}\right]\right|&\lesssim\frac{1}{L}\left(\sqrt{\log n \left(\sum_{j\neq i, (i,j)\in\mathcal{E}}1\right)L}+\log n\right) \\
    &\lesssim\sqrt{\frac{np\log n}{L}}
\end{align*}
with probability exceeding $1-O(n^{-10})$, as long as $npL\gtrsim \log n$. On the other hand, since $\mathbb{E}\left[-y^{(l)}_{j,i}+\phi(\tx_i^\top\tb^*-\tx_j^\top\tb^*)\right]=0$, we know that $\mathbb{E}\left[\left(\nabla\cL(\tb^*)\right)_i\bigg|\mathcal{G}\right]=0$. As a result, we have
\begin{align*}
    \left|\left(\nabla\cL(\tb^*)\right)_i\right|\lesssim\sqrt{\frac{np\log n}{L}}.
\end{align*}

(2) By definition we have 
\begin{align*}
    \sum_{j\neq i}\left(\nabla^2\cL(\tb^*)\right)_{i,j}^2 &= \left\Vert\left(\sum_{j\neq i, (i,j)\in \mathcal{E}}\phi'(\tx_i^\top\tb^*-\tx_j^\top\tb^*)\left(\tx_i-\tx_j\right)\right)_{-i} \right\Vert^2_2 \\
    &=\sum_{j\neq i, (i,j)\in \mathcal{E}} 1 +\left\Vert\sum_{j\neq i, (i,j)\in \mathcal{E}}\phi'(\tx_i^\top\tb^*-\tx_j^\top\tb^*)\left(\bx_i-\bx_j\right) \right\Vert^2_2 \\
    &\leq \sum_{j\neq i, (i,j)\in \mathcal{E}} 1 +\left(\sum_{j\neq i, (i,j)\in \mathcal{E}} 1\right)\sum_{j\neq i, (i,j)\in \mathcal{E}} \Vert \bx_i-\bx_j\Vert_2^2 \\
    &\lesssim np+np\cdot dp
\end{align*}
with probability at least $1-O(n^{-10})$. Similarly, we have
\begin{align*}
    \sum_{k>n}\left(\nabla^2\cL(\tb^*)\right)_{i,k}^2 &= \left\Vert\sum_{j\neq i, (i,j)\in \mathcal{E}}\phi'(\tx_i^\top\tb^*-\tx_j^\top\tb^*)\left(\bx_i-\bx_j\right) \right\Vert^2_2 \\
    &\leq \left(\sum_{j\neq i, (i,j)\in \mathcal{E}} 1\right)\sum_{j\neq i, (i,j)\in \mathcal{E}} \Vert \bx_i-\bx_j\Vert_2^2 \\
    &\lesssim np\cdot dp
\end{align*}
and 
\begin{align*}
    \sum_{j\in [n],j\neq i}\left|\left(\nabla^2\cL(\tb^*)\right)_{i,j}\right| = \sum_{j\in [n],j\neq i}\left|\phi'(\tx_i^\top\tb^*-\tx_j^\top\tb^*)\right|\textbf{1}((i,j)\in \mathcal{E})\leq  \sum_{j\in [n],j\neq i}\textbf{1}((i,j)\in \mathcal{E})\lesssim np
\end{align*}
with probability at least $1-O(n^{-10})$.

(3) For $i,j\in [n], i\neq j$, by definition we know that $\displaystyle y_{j,i} = \frac{1}{L}\sum_{l=1}^L y_{j,i}^{(l)}$ is the average of $L$ independent Bernoulli random variables. By Hoeffding's inequality we know that 
\begin{align*}
    \left|y_{j,i}-\mathbb{E}y_{j,i}\right|\lesssim \sqrt{\frac{\log n}{L}}
\end{align*}
with probability at least $1-O(n^{-12})$. As a result, by union bound we know that 
\begin{align*}
    \left|y_{j,i}-\mathbb{E}y_{j,i}\right|\lesssim \sqrt{\frac{\log n}{L}}
\end{align*}
holds for all $i,j\in [n], i\neq j$ with probability at least $1-O(n^{-10})$.
\end{proof}

We also include here two propositions which are also involved in the later proofs.

\begin{proposition}\label{pro2}
	$\tb$ is the solution of the following linear equations
	\begin{align*}
		\left\{ \begin{array}{ll}
			&\mathcal{P}\nabla\cL(\tb^*)+\mathcal{P}\nabla^2\cL(\tb^*)\left(\ob-\tb^*\right) = \boldsymbol{0};  \\
			&\mathcal{P}\ob = \ob.
		\end{array}
		\right.
	\end{align*}
\end{proposition}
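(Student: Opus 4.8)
The plan is to read the two displayed equations as, respectively, the feasibility condition and the first-order stationarity condition of the constrained quadratic program $\ob := \argmin_{\tb\in\Theta}\ocL(\tb)$. Since $\tb^*\in\Theta$ and $\Theta$ is a linear subspace, I would first observe that every feasible point is of the form $\tb=\tb^*+\boldsymbol{d}$ with $\boldsymbol{d}\in\Theta$, so that the feasibility $\ob\in\Theta$ is precisely the second equation $\mathcal{P}\ob=\ob$ (recall $\mathcal{P}$ is the orthogonal projector onto $\Theta$). The remaining work is to identify the first equation with the stationarity of $\ocL$ on $\Theta$.

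For the first equation, I would compute the gradient of the quadratic expansion, $\nabla\ocL(\tb)=\nabla\cL(\tb^*)+\nabla^2\cL(\tb^*)(\tb-\tb^*)$, and impose the condition that at a minimizer over the subspace $\Theta$ the gradient has no component inside $\Theta$, i.e. $\mathcal{P}\nabla\ocL(\ob)=\boldsymbol{0}$; this is literally $\mathcal{P}\nabla\cL(\tb^*)+\mathcal{P}\nabla^2\cL(\tb^*)(\ob-\tb^*)=\boldsymbol{0}$. To make the stationarity argument rigorous I would introduce a matrix $\boldsymbol{U}\in\mathbb{R}^{(n+d)\times r}$ whose columns form an orthonormal basis of $\Theta$, so that $\boldsymbol{U}^\top\boldsymbol{U}=\boldsymbol{I}_r$ and $\mathcal{P}=\boldsymbol{U}\boldsymbol{U}^\top$. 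Writing $\ob-\tb^*=\boldsymbol{U}\boldsymbol{s}$ turns the problem into the unconstrained minimization of $g(\boldsymbol{s})=\boldsymbol{s}^\top\boldsymbol{U}^\top\nabla\cL(\tb^*)+\tfrac12\boldsymbol{s}^\top\bigl(\boldsymbol{U}^\top\nabla^2\cL(\tb^*)\boldsymbol{U}\bigr)\boldsymbol{s}$ up to an additive constant, whose stationarity $\nabla g(\boldsymbol{s})=\boldsymbol{0}$ reads $\boldsymbol{U}^\top\nabla\cL(\tb^*)+\boldsymbol{U}^\top\nabla^2\cL(\tb^*)\boldsymbol{U}\boldsymbol{s}=\boldsymbol{0}$. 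Left-multiplying by $\boldsymbol{U}$ and using $\boldsymbol{U}\boldsymbol{s}=\ob-\tb^*$ together with $\mathcal{P}=\boldsymbol{U}\boldsymbol{U}^\top$ then recovers the first displayed equation exactly.

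Finally I would establish uniqueness, so that $\ob$ is \emph{the} solution of the linear system and not merely \emph{a} solution. On the event $\mathcal{A}_2$, Lemma \ref{lb} applied with $\lambda=0$ and evaluated at $\tb^*$ (where $C_1=C_2=0$) gives $\lambda_{\text{min},\perp}(\nabla^2\cL(\tb^*))>0$; consequently the reduced Hessian $\boldsymbol{U}^\top\nabla^2\cL(\tb^*)\boldsymbol{U}$ is positive definite, $g$ is strictly convex, and its stationary point $\boldsymbol{s}$ is unique. Since any solution $\boldsymbol{x}$ of the linear system satisfies $\boldsymbol{x}\in\Theta$ (by $\mathcal{P}\boldsymbol{x}=\boldsymbol{x}$) and hence is of the form $\boldsymbol{x}=\tb^*+\boldsymbol{U}\boldsymbol{s}$ with $\boldsymbol{U}^\top\nabla^2\cL(\tb^*)\boldsymbol{U}\boldsymbol{s}=-\boldsymbol{U}^\top\nabla\cL(\tb^*)$, the solution is unique and coincides with $\ob$. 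I do not expect any genuine obstacle in this argument: it is essentially an unwinding of the constrained optimality condition, and the only point demanding care is the positive-definiteness of the restricted Hessian, which is needed for uniqueness and is already furnished by the estimation analysis.
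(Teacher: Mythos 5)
Your proof is correct and follows essentially the same route as the paper, which simply notes that the linear system is the first-order optimality condition of the constrained quadratic program \eqref{eq11} defining $\ob$ and treats the claim as immediate from that definition. Your explicit reduction via an orthonormal basis of $\Theta$ and your uniqueness argument from $\lambda_{\text{min},\perp}(\nabla^2\cL(\tb^*))>0$ on $\mathcal{A}_2$ merely spell out details the paper leaves implicit.
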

Proposition \ref{pro2} follows from Eq.~\eqref{eq11}, which gives the definition of $\ob$.
\begin{proposition}\label{pro3}
	Under event $\mathcal{A}_2$, we have
	\begin{align*}
		\text{Var}\left[\;\ob\mid\mathcal{G}\right] = \frac{1}{L}\left[\mathcal{P}\nabla^2\cL(\tb^*)\mathcal{P}\right]^{+}.
	\end{align*}
\end{proposition}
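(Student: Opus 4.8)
The plan is to derive a closed-form expression for $\ob-\tb^*$ from the linear system in Proposition \ref{pro2}, and then propagate the conditional covariance of the score $\nabla\cL(\tb^*)$ through it. Write $\bH := \mathcal{P}\nabla^2\cL(\tb^*)\mathcal{P}$ for brevity; this is symmetric, and since $\tb^*\in\Theta$ gives $\mathcal{P}\tb^*=\tb^*$, the constraint $\mathcal{P}\ob=\ob$ forces $\ob-\tb^*\in\Theta$, i.e. $\mathcal{P}(\ob-\tb^*)=\ob-\tb^*$. Substituting this into the first equation of Proposition \ref{pro2} turns it into $\bH(\ob-\tb^*)=-\mathcal{P}\nabla\cL(\tb^*)$. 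Under event $\mathcal{A}_2$, Lemma \ref{eigenlem} (applied with $\lambda=0$) shows $\nabla^2\cL(\tb^*)$ is strictly positive on $\Theta$, so $\bH$ has range exactly $\Theta$; consequently $\bH^{+}\bH=\mathcal{P}$ and $\bH^{+}\mathcal{P}=\bH^{+}$ (the pseudoinverse of a symmetric matrix annihilates $\Theta^{\perp}$ and has row space $\Theta$). Applying $\bH^{+}$ to both sides therefore yields the key identity $\ob-\tb^*=-\bH^{+}\nabla\cL(\tb^*)$.

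First I would note that $\mathbb{E}[\nabla\cL(\tb^*)\mid\mathcal{G}]=\boldsymbol 0$, since each summand in \eqref{formulagradient} has conditional mean $-\mathbb{E}[y_{j,i}\mid\mathcal{G}]+\phi(\tx_i^\top\tb^*-\tx_j^\top\tb^*)=0$; hence $\mathbb{E}[\ob\mid\mathcal{G}]=\tb^*$ and $\mathrm{Var}[\ob\mid\mathcal{G}]=\bH^{+}\,\mathrm{Var}[\nabla\cL(\tb^*)\mid\mathcal{G}]\,\bH^{+}$ by the symmetry of $\bH^{+}$. The next step is to evaluate the conditional covariance of the score. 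Conditioned on $\mathcal{G}$, the outcomes $y_{j,i}$ are independent across distinct pairs $(i,j)\in\mathcal{E}$, and each is an average of $L$ i.i.d. Bernoulli trials, so $\mathrm{Var}[y_{j,i}\mid\mathcal{G}]=\tfrac1L\phi(\tx_i^\top\tb^*-\tx_j^\top\tb^*)\big(1-\phi(\tx_i^\top\tb^*-\tx_j^\top\tb^*)\big)=\tfrac1L\phi'(\tx_i^\top\tb^*-\tx_j^\top\tb^*)$, using $\phi'=\phi(1-\phi)$. Summing the rank-one contributions gives
\[
\mathrm{Var}[\nabla\cL(\tb^*)\mid\mathcal{G}]=\frac1L\sum_{(i,j)\in\mathcal{E},\,i>j}\phi'(\tx_i^\top\tb^*-\tx_j^\top\tb^*)(\tx_i-\tx_j)(\tx_i-\tx_j)^\top=\frac1L\nabla^2\cL(\tb^*),
\]
which is precisely $\tfrac1L$ times the Hessian in \eqref{formulaHessian}.

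Finally I would combine the two facts: $\mathrm{Var}[\ob\mid\mathcal{G}]=\tfrac1L\bH^{+}\nabla^2\cL(\tb^*)\bH^{+}$. Using $\bH^{+}=\bH^{+}\mathcal{P}=\mathcal{P}\bH^{+}$ to reinsert the projectors and then the Moore--Penrose identity $\bH^{+}\bH\bH^{+}=\bH^{+}$, this collapses to
\[
\mathrm{Var}[\ob\mid\mathcal{G}]=\frac1L\bH^{+}\big(\mathcal{P}\nabla^2\cL(\tb^*)\mathcal{P}\big)\bH^{+}=\frac1L\bH^{+}\bH\bH^{+}=\frac1L\bH^{+}=\frac1L\big[\mathcal{P}\nabla^2\cL(\tb^*)\mathcal{P}\big]^{+},
\]
as claimed. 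The only genuinely delicate point is the passage from the singular linear system to the pseudoinverse formula $\ob-\tb^*=-\bH^{+}\nabla\cL(\tb^*)$: this rests on verifying that under $\mathcal{A}_2$ the range of $\bH$ is exactly $\Theta$, so that $\bH^{+}\bH$ and $\mathcal{P}$ coincide and the various projector/pseudoinverse commutations are legitimate; everything else is a routine variance computation.
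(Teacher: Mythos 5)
Your proof is correct, but it takes a genuinely more direct route than the paper's. You solve the defining linear system explicitly, writing $\ob-\tb^*=-\left[\mathcal{P}\nabla^2\cL(\tb^*)\mathcal{P}\right]^{+}\nabla\cL(\tb^*)$, and then push the conditional covariance of the score through this formula; the two ingredients are the information-type identity $\mathrm{Var}[\nabla\cL(\tb^*)\mid\mathcal{G}]=\tfrac1L\nabla^2\cL(\tb^*)$ and the fact that under $\mathcal{A}_2$ the matrix $\bH=\mathcal{P}\nabla^2\cL(\tb^*)\mathcal{P}$ has range and row space exactly $\Theta$, so that $\bH^{+}\bH=\bH\bH^{+}=\mathcal{P}$ and $\bH^{+}\mathcal{P}=\mathcal{P}\bH^{+}=\bH^{+}$. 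The paper never writes the explicit solution: it takes variances and covariances of the two implicit equations $\mathcal{P}\nabla\cL(\tb^*)+\mathcal{P}\nabla^2\cL(\tb^*)(\ob-\tb^*)=\boldsymbol{0}$ and $\mathcal{P}\ob=\ob$, eliminates the cross term $\mathrm{Cov}(\nabla\cL(\tb^*),\ob\mid\mathcal{G})$ by pairing the score with the first equation, arrives at the sandwiched identity $\mathcal{P}\nabla^2\cL(\tb^*)\mathcal{P}\,\mathrm{Var}[\,\ob\mid\mathcal{G}]\,\mathcal{P}\nabla^2\cL(\tb^*)\mathcal{P}=\tfrac1L\mathcal{P}\nabla^2\cL(\tb^*)\mathcal{P}$, and then uses $\lambda_{\text{min},\perp}(\nabla^2\cL(\tb^*))>0$ to peel off the outer factors. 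Your argument is shorter and makes the mechanism transparent (conditioned on $\mathcal{G}$, the approximator is an explicit linear statistic of the data), at the cost of having to verify the pseudoinverse/projector identities carefully, which you do; the only slip is a citation: the strict positivity of $\nabla^2\cL(\tb^*)$ on $\Theta$ follows from Lemma \ref{lb} with $\lambda=0$ and $C_1=C_2=0$, or from Lemma \ref{eigenlem} combined with $\phi'>0$ on the realized gaps, not from Lemma \ref{eigenlem} alone (which concerns $\boldsymbol{L}_{\mathcal{G}}$, not the Hessian). Both proofs ultimately rest on the same two facts and are of comparable rigor.
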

 We next provide the proof of Proposition \ref{pro3} here.
\begin{proof}[Proof of Proposition \ref{pro3}]
	Since $\mathcal{P}\nabla\cL(\tb^*)+\mathcal{P}\nabla^2\cL(\tb^*)(\ob-\tb^*) = \boldsymbol{0}$, by taking variance (conditioned on $\mathcal{G}$) on the both sides we have
	\begin{align}
		\mathcal{P}\text{Var}\left[\nabla\cL(\tb^*)\mid \mathcal{G}\right]\mathcal{P}+\mathcal{P}\nabla^2\cL(\tb^*)\text{Var}\left[\;\ob\mid\mathcal{G}\right]\nabla^2\cL(\tb^*)\mathcal{P}+2\mathcal{P}\text{Cov}\left(\nabla\cL(\tb^*),\ob\mid \mathcal{G}\right)\nabla^2\cL(\tb^*)\mathcal{P} = \boldsymbol{0}.\label{var0}
	\end{align}
	On the other hand, by considering the covariance (conditioned on $\mathcal{G}$) of $\nabla\cL(\tb^*)$ and $\mathcal{P}\nabla\cL(\tb^*)+\mathcal{P}\nabla^2\cL(\tb^*)(\ob-\tb^*)$ we get
	\begin{align*}
		\text{Var}\left[\nabla\cL(\tb^*)\mid \mathcal{G}\right]\mathcal{P}+\text{Cov}\left(\nabla\cL(\tb^*),\ob\mid \mathcal{G}\right)\nabla^2\cL(\tb^*)\mathcal{P} = \boldsymbol{0}.
	\end{align*}
	As a result, we know that
	\begin{align}
		\mathcal{P}\text{Cov}\left(\nabla\cL(\tb^*),\ob\mid \mathcal{G}\right)\nabla^2\cL(\tb^*)\mathcal{P} = -\mathcal{P}\text{Var}\left[\nabla\cL(\tb^*)\mid \mathcal{G}\right]\mathcal{P}.\label{covnablaLob}
	\end{align}
	Combine Eq.~\eqref{var0} and Eq.~\eqref{covnablaLob} we get
	\begin{align}
		\mathcal{P}\nabla^2\cL(\tb^*)\text{Var}\left[\;\ob\mid\mathcal{G}\right]\nabla^2\cL(\tb^*)\mathcal{P} = \mathcal{P}\text{Var}\left[\nabla\cL(\tb^*)\mid \mathcal{G}\right]\mathcal{P} =\frac{1}{L}\mathcal{P}\nabla^2\cL(\tb^*)\mathcal{P} .\label{varob1}
	\end{align}
	By taking variance on the both sides of $\mathcal{P}\ob = \ob$, we have $\mathcal{P}\text{Var}\left[\;\ob\mid\mathcal{G}\right]\mathcal{P} = \text{Var}\left[\;\ob\mid\mathcal{G}\right]$. This also implies $(\boldsymbol{I}-\mathcal{P})\text{Var}\left[\;\ob\mid\mathcal{G}\right] = 0$. As a result, Eq.~\eqref{varob1} can be also written as
	\begin{align*}
		\mathcal{P}\nabla^2\cL(\tb^*)\mathcal{P}\text{Var}\left[\;\ob\mid\mathcal{G}\right]\mathcal{P}\nabla^2\cL(\tb^*)\mathcal{P} = \frac{1}{L}\mathcal{P}\nabla^2\cL(\tb^*)\mathcal{P}.
	\end{align*}
	This immediately implies $\mathcal{P}\nabla^2\cL(\tb^*)\mathcal{P}(\boldsymbol{I} - \text{Var}\left[\;\ob\mid\mathcal{G}\right]\mathcal{P}\nabla^2\cL(\tb^*)\mathcal{P}) = \boldsymbol{0}$. Under event $\mathcal{A}_2$, we have $\lambda_{\text{min},\perp}(\nabla^2\cL(\tb^*))>0$. As a result, for any $\tb\in\Theta$, we have
	\begin{align*}
		&\lambda_{\text{min},\perp}(\nabla^2\cL(\tb^*))\left\Vert \left(\boldsymbol{I} - \text{Var}\left[\;\ob\mid\mathcal{G}\right]\mathcal{P}\nabla^2\cL(\tb^*)\mathcal{P}\right)\tb\right\Vert_2^2\\ 
		\leq& \tb^\top \left(\boldsymbol{I} - \text{Var}\left[\;\ob\mid\mathcal{G}\right]\mathcal{P}\nabla^2\cL(\tb^*)^\top\mathcal{P}\right)^\top\nabla^2\cL(\tb^*)\left(\boldsymbol{I} - \text{Var}\left[\;\ob\mid\mathcal{G}\right]\mathcal{P}\nabla^2\cL(\tb^*)\mathcal{P}\right)\tb \\
		= & \tb^\top \left(\boldsymbol{I} - \text{Var}\left[\;\ob\mid\mathcal{G}\right]\mathcal{P}\nabla^2\cL(\tb^*)^\top\mathcal{P}\right)^\top\mathcal{P}\nabla^2\cL(\tb^*)\mathcal{P}\left(\boldsymbol{I} - \text{Var}\left[\;\ob\mid\mathcal{G}\right]\mathcal{P}\nabla^2\cL(\tb^*)\mathcal{P}\right)\tb= 0.
	\end{align*}
	As a result, we have $(\boldsymbol{I} - \text{Var}\left[\;\ob\mid\mathcal{G}\right]\mathcal{P}\nabla^2\cL(\tb^*)\mathcal{P})\tb = \boldsymbol{0}$ for all $\tb\in\Theta$. Combine this fact with $(\boldsymbol{I}-\mathcal{P})\text{Var}\left[\;\ob\mid\mathcal{G}\right] = \boldsymbol{0}$, we know that 
	\begin{align*}
		\text{Var}\left[\;\ob\mid\mathcal{G}\right] = \left[\mathcal{P}\nabla^2\cL(\tb^*)\mathcal{P}\right]^{+}.
	\end{align*}
\end{proof}

\subsection{Proof of Theorem \ref{inferencemainthm}}\label{proveinferencemainthm}
\begin{proof}
We know that
\begin{align*}
    \boldsymbol{0} = \mathcal{P}\nabla \cL (\tb_M) &= \mathcal{P}\nabla \cL (\tb^*) +\mathcal{P}\int_{0}^1\nabla^2\cL(\tb^*+t(\tb_M-\tb^*))\left(\tb_M-\tb^*\right)dt \\
    &= \mathcal{P}\nabla \cL (\tb^*) +\mathcal{P}\left\{\int_{0}^1\nabla^2\cL(\tb^*+t(\tb_M-\tb^*))dt\right\}\left(\tb_M-\tb^*\right).
\end{align*}
Let 
\begin{align*}
    \boldsymbol{R} = \left\{\int_{0}^1\nabla^2\cL(\tb^*+t(\tb_M-\tb^*))dt\right\}\left(\tb_M-\tb^*\right)-\nabla^2\cL(\tb^*)\left(\tb_M-\tb^*\right),
\end{align*}
we have
\begin{align}
    \boldsymbol{0} = \mathcal{P}\nabla \cL (\tb^*) +\mathcal{P}\nabla^2 \cL (\tb^*)\left(\tb_M-\tb^*\right)+\mathcal{P}\boldsymbol{R}.\label{eq15}
\end{align}
On the other hand, we know that
\begin{align}
    \boldsymbol{0} = \mathcal{P}\nabla \ocL (\ob) &= \mathcal{P}\nabla \cL (\tb^*) +\mathcal{P}\nabla^2\cL(\tb^*)\left(\ob-\tb^*\right).\label{eq16}
\end{align}
Combine Eq.~\eqref{eq15} and Eq.~\eqref{eq16} we have
\begin{align}
    \mathcal{P}\nabla^2 \cL (\tb^*)\left(\ob-\tb_M\right) = \mathcal{P}\boldsymbol{R}.\label{eq17}
\end{align}
For $\boldsymbol{R}$ we have
\begin{align*}
    \Vert \boldsymbol{R}\Vert_2 &\leq \left\Vert \int_{0}^1\nabla^2\cL(\tb^*+t(\tb_M-\tb^*))-\nabla^2\cL(\tb^*)dt\right\Vert\left\Vert\tb_M-\tb^* \right\Vert_2 \\
    &\leq  \int_{0}^1\left\Vert\nabla^2\cL(\tb^*+t(\tb_M-\tb^*))-\nabla^2\cL(\tb^*)\right\Vert dt\left\Vert\tb_M-\tb^* \right\Vert_2.
\end{align*}
By definition we have
\begin{align*}
    &\left\Vert\nabla^2\cL(\tb^*+t(\tb_M-\tb^*))-\nabla^2\cL(\tb^*)\right\Vert  \\
    =&\left\Vert\sum_{(i,j)\in\mathcal{E},i>j}\left(\phi'(t(\tx_i^\top\tb_M-\tx_j^\top\tb_M)+(1-t)(\tx_i^\top\tb^*-\tx_j^\top\tb^*))-\phi'(\tx_i^\top\tb^*-\tx_j^\top\tb^*)\right)(\tx_i-\tx_j)(\tx_i-\tx_j)^\top\right\Vert \\
    \leq&\left\Vert\sum_{(i,j)\in\mathcal{E},i>j}\left|\phi'(t(\tx_i^\top\tb_M-\tx_j^\top\tb_M)+(1-t)(\tx_i^\top\tb^*-\tx_j^\top\tb^*))-\phi'(\tx_i^\top\tb^*-\tx_j^\top\tb^*)\right|(\tx_i-\tx_j)(\tx_i-\tx_j)^\top\right\Vert \\
    \lesssim&\left\Vert\sum_{(i,j)\in\mathcal{E},i>j}\left|t(\tx_i^\top\tb_M-\tx_j^\top\tb_M)-t(\tx_i^\top\tb^*-\tx_j^\top\tb^*)\right|(\tx_i-\tx_j)(\tx_i-\tx_j)^\top\right\Vert \\
    \lesssim &\left\Vert\sum_{(i,j)\in\mathcal{E},i>j}(\tx_i-\tx_j)(\tx_i-\tx_j)^\top\right\Vert \left\Vert\tb_M-\tb^* \right\Vert_c \lesssim \left\Vert \boldsymbol{L}_\mathcal{G}\right\Vert\left\Vert\tb_M-\tb^* \right\Vert_c,\quad \forall t\in [0,1].
\end{align*}
By Lemma \ref{eigenlem}  we know that 
\begin{align*}
\left\Vert\nabla^2\cL(\tb^*+t(\tb_M-\tb^*))-\nabla^2\cL(\tb^*)\right\Vert	\lesssim np\left\Vert\tb_M-\tb^* \right\Vert_c
\end{align*}
holds uniformly for all $t\in [0,1]$ with probability at least $1-O(n^{-11})$. As a result, we get
\begin{align}
    \Vert \boldsymbol{R}\Vert_2 &\lesssim  \int_{0}^1 np\left\Vert\tb_M-\tb^* \right\Vert_c dt\left\Vert\tb_M-\tb^* \right\Vert_2\lesssim np\left\Vert\tb_M-\tb^* \right\Vert_c\left\Vert\tb_M-\tb^* \right\Vert_2\label{eq18}
\end{align}
with probability exceeding $1-O(n^{-11})$. Since $\ob, \tb_M\in \Theta$, we know that
\begin{align*}
    &\lambda_{\text{min}, \perp}\left(\nabla^2\cL(\tb^*)\right)\left\Vert\ob-\tb_M \right\Vert_2^2 \leq  \left(\ob-\tb_M\right)^\top \nabla^2\cL(\tb^*)\left(\ob-\tb_M\right) \\
    =& \left(\ob-\tb_M\right)^\top\mathcal{P} \nabla^2\cL(\tb^*)\left(\ob-\tb_M\right)  = \left(\ob-\tb_M\right)^\top\mathcal{P} \boldsymbol{R}\\
    \leq &\left\Vert\ob-\tb_M \right\Vert_2 \left\Vert\mathcal{P}\boldsymbol{R} \right\Vert_2\leq \left\Vert\ob-\tb_M \right\Vert_2 \left\Vert\boldsymbol{R} \right\Vert_2.
\end{align*}
As a result, we get 
\begin{align}
     \left\Vert \ob-\tb_M\right\Vert_2\leq \frac{\Vert \boldsymbol{R}\Vert_2}{\lambda_{\text{min}, \perp}\left(\nabla^2\cL(\tb^*)\right)}.\label{newton}
\end{align}
By Lemma \ref{lb} we know that
\begin{align}
    \lambda_{\text{min}, \perp}\left(\nabla^2\cL(\tb^*)\right)\gtrsim\frac{pn}{\kappa_1}.\label{eigmin}
\end{align}
with probability exceeding $1-O(n^{-11})$. Therefore, combine Eq.~\eqref{eq18}, Eq.~\eqref{newton} and Eq.~\eqref{eigmin} we get
\begin{align*}
    \left\Vert \ob-\tb_M\right\Vert_2\leq \frac{\Vert \boldsymbol{R}\Vert_2}{\lambda_{\text{min}, \perp}\left(\nabla^2\cL(\tb^*)\right)}\lesssim\kappa_1^4\frac{(d+1)^{0.5}\log n}{\sqrt{n}pL}
\end{align*}
with probability exceeding $1-O(n^{-6})$.
\end{proof}

\subsection{Proof of Proposition \ref{oaproposition} and Proposition \ref{alphaproposition}}\label{proofpro1pro2}
We denote by $\Psi = \{\tx_i-\tx_j:i,j\in [n]\}$. By the definition of $\cL(\cdot)$ and $\ocL(\cdot)$ we know that for any $\tb\in \mathbb{R}^{n+d}$ and $\boldsymbol{v}\in \Psi^\perp$, we have
\begin{align*}
    \cL(\tb)  = \cL(\tb+\boldsymbol{v}) \text{ and }\ocL(\tb)  = \ocL(\tb+\boldsymbol{v}).
\end{align*}
On the other hand, under Assumption \ref{ass1} we have the following lemma.
\begin{lemma}
	 For any $\tb\in \mathbb{R}^{n+d}$, there exists a $\boldsymbol{v}\in\Psi^{\perp}$ such that $\tb+\boldsymbol{v}\in \Theta$.
\end{lemma}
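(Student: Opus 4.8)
The plan is to treat this as a purely linear-algebraic statement and reduce it to a direct-sum decomposition of $\mathbb{R}^{n+d}$. Writing $V := \mathrm{span}(\Psi) = \mathrm{span}\{\tx_i - \tx_j : i,j \in [n]\}$, we have $\Psi^\perp = V^\perp$, so the conclusion ``$\tb + \boldsymbol{v} \in \Theta$ for some $\boldsymbol{v} \in \Psi^\perp$'' is equivalent to $\tb \in \Theta + V^\perp$ (since $V^\perp$ is a subspace, $-\boldsymbol{v}$ ranges over all of it). As this must hold for every $\tb$, the lemma is equivalent to the identity $\Theta + V^\perp = \mathbb{R}^{n+d}$, which I would establish by proving $\Theta \oplus V^\perp = \mathbb{R}^{n+d}$, i.e. that the two subspaces intersect trivially and have complementary dimensions.

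First I would connect Assumption \ref{ass1} to the intersection. Because $\boldsymbol{\Sigma} = \sum_{i>j}(\tx_i - \tx_j)(\tx_i-\tx_j)^\top$, for any vector $\boldsymbol{z}$ we have $\boldsymbol{z}^\top \boldsymbol{\Sigma}\boldsymbol{z} = \sum_{i>j}\big((\tx_i - \tx_j)^\top \boldsymbol{z}\big)^2$, which vanishes precisely when $\boldsymbol{z} \perp (\tx_i - \tx_j)$ for all $i,j$, that is, when $\boldsymbol{z} \in V^\perp$. Assumption \ref{ass1} asserts $\lambda_{\text{min},\perp}(\boldsymbol{\Sigma}) \ge c_2 n > 0$, meaning $\boldsymbol{z}^\top\boldsymbol{\Sigma}\boldsymbol{z} > 0$ for every nonzero $\boldsymbol{z} \in \Theta$; combined with the previous observation this yields $\Theta \cap V^\perp = \{\boldsymbol{0}\}$.

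Next I would check the dimension bookkeeping. The differences $\{\tx_i - \tx_1\}_{i=2}^n$ are linearly independent, since their first $n$ coordinates $\boldsymbol{e}_i - \boldsymbol{e}_1$ already are; hence $\dim V = n-1$ and $\dim V^\perp = (n+d) - (n-1) = d+1$. On the other side, $\Theta = \ker(\boldsymbol{Z}^\top)$ with $\boldsymbol{Z}$ of full column rank $d+1$ (this is exactly what makes the projection $\mathcal{P} = \boldsymbol{I} - \boldsymbol{Z}(\boldsymbol{Z}^\top \boldsymbol{Z})^{-1}\boldsymbol{Z}^\top$ well defined), so $\dim\Theta = (n+d) - (d+1) = n-1$. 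Therefore $\dim\Theta + \dim V^\perp = (n-1) + (d+1) = n+d$, and together with $\Theta \cap V^\perp = \{\boldsymbol{0}\}$ this gives $\Theta \oplus V^\perp = \mathbb{R}^{n+d}$. Finally, given any $\tb$ I would decompose $\tb = \tb_1 + \tb_2$ with $\tb_1 \in \Theta$ and $\tb_2 \in V^\perp$, and set $\boldsymbol{v} := -\tb_2 \in V^\perp = \Psi^\perp$, so that $\tb + \boldsymbol{v} = \tb_1 \in \Theta$, as desired.

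The argument is short; the only point requiring care is the translation of Assumption \ref{ass1} into $\Theta \cap V^\perp = \{\boldsymbol{0}\}$ through the kernel characterization of $\boldsymbol{\Sigma}$, together with verifying that the two subspaces have genuinely complementary dimensions (which in turn rests on the full column rank of $\bar{\bX}$, equivalently $\boldsymbol{Z}$, built into the model's identifiability setup). I expect no serious obstacle beyond this bookkeeping.
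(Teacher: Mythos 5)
Your proof is correct. It rests on the same two ingredients as the paper's: Assumption \ref{ass1} forces $\Theta\cap\Psi^{\perp}=\{\boldsymbol{0}\}$ (via $\boldsymbol{z}^\top\boldsymbol{\Sigma}\boldsymbol{z}=\sum_{i>j}\bigl((\tx_i-\tx_j)^\top\boldsymbol{z}\bigr)^2$), and $\dim\Psi^{\perp}=d+1$. You package them as a direct-sum decomposition $\Theta\oplus\Psi^{\perp}=\mathbb{R}^{n+d}$, whereas the paper argues by contradiction: if $(\tb+\Psi^{\perp})\cap\Theta=\emptyset$ then $\boldsymbol{Z}^\top\Psi^{\perp}\subsetneq\mathbb{R}^{d+1}$, so $\boldsymbol{Z}^\top$ restricted to the $(d+1)$-dimensional space $\Psi^{\perp}$ has a nontrivial kernel, which produces a nonzero element of $\Psi^{\perp}\cap\Theta$ and contradicts Assumption \ref{ass1}. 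The practical difference is that your route additionally needs $\dim\Theta=n-1$, i.e.\ $\rank(\boldsymbol{Z})=\rank(\bar{\bX})=d+1$; this is harmless, since full column rank of $\bar{\bX}$ is already built into the definition of $\mathcal{P}$ and into Assumption \ref{ass0}, and you flag it explicitly, but the paper's contrapositive version gets by with only $\dim\Psi^{\perp}=d+1$ and is marginally lighter on hypotheses. In exchange, your version records the slightly stronger fact that the decomposition $\tb=\tb_1+\tb_2$ is unique, which the paper's existence-only argument does not give.
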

\begin{proof}
We only have to show that for any $\tb\in\mathbb{R}^{n+d}$, $\tb+\Psi^\perp\bigcap\Theta\neq\emptyset$. Assume there exists a $\tb\in \mathbb{R}^{n+d}$ such that $\tb+\Psi^\perp\bigcap\Theta = \emptyset$. First of all, we must have $\boldsymbol{Z}^\top\tb\notin \boldsymbol{Z}^\top\Psi^\perp$. Since $\boldsymbol{Z}^\top\Psi^\perp\subset \mathbb{R}^{d+1}$, we must have $\text{dim}(\boldsymbol{Z}^\top\Psi^\perp)\leq d$ as $\boldsymbol{Z}^\top\Psi^\perp\neq \mathbb{R}^{d+1}$. Since $\text{dim}(\Psi^\perp) = d+1$, we know that there exists a non-zero vector $\boldsymbol{v}\in\Psi^\perp$ such that $\boldsymbol{Z}^\top\boldsymbol{v}=\boldsymbol{0}$. By the definition of $\Theta$, we know that $\boldsymbol{v}\in \Theta$. Recall the definition of $\boldsymbol{\Sigma}$ and Assumption \ref{ass1} in \S \ref{Consistency}. Since $\boldsymbol{v}\in \Psi^\perp$, we know that $\boldsymbol{\Sigma}\boldsymbol{v} = 0$, so we must have $\lambda_{\text{min},\perp}(\boldsymbol{\Sigma}) = 0$. As a result, this contradicts to Assumption \ref{ass1} since $c_2 = 0$.
\end{proof}
With this lemma, we then turn to prove Proposition \ref{oaproposition} and \ref{alphaproposition}.
\begin{proof}[Proof of Proposition \ref{oaproposition} and \ref{alphaproposition}]
Assume there exists a $z$ such that $\ocL|_{\ob_{-i}}(z)<\ocL|_{\ob_{-i}}(\oa_i)$. Then we let $\boldsymbol{w}\in \mathbb{R}^{n+d}$ be the vector such that $\boldsymbol{w}_{-i} = \ob_{-i}$ and $w_i = z$. And, let $\boldsymbol{v}$ be the vector in $\Psi^\perp$ such that $\boldsymbol{w}+\boldsymbol{v}\in\Theta$. Then we have
\begin{align*}
    \ocL(\boldsymbol{w}+\boldsymbol{v}) = \ocL(\boldsymbol{w}) = \ocL|_{\ob_{-i}}(z)<\ocL|_{\ob_{-i}}(\oa_i) = \ocL(\ob).
\end{align*}
This contradicts to the definition of $\ob$ Eq.~\eqref{eq11}.

Similarly, if we assume that there exists a $z$ such that $\cL|_{\tb_{M,-i}}(z)<\cL|_{\tb_{M,-i}}(x)(\wh\alpha_{M,i})$. Then we let $\boldsymbol{w}\in \mathbb{R}^{n+d}$ be the vector such that $\boldsymbol{w}_{-i} = \tb_{M,-i}$ and $w_i = z$. And, let $\boldsymbol{v}$ be the vector in $\Psi^\perp$ such that $\boldsymbol{w}+\boldsymbol{v}\in\Theta$. Then we have
\begin{align*}
    \cL(\boldsymbol{w}+\boldsymbol{v}) = \cL(\boldsymbol{w}) = \cL|_{\tb_{M,-i}}(z)<\cL|_{\tb_{M,-i}}(\wh\alpha_{M,i}) = \cL(\tb_M).
\end{align*}
This contradicts to the definition of $\tb_M$ Eq.~\eqref{MLEun}.
\end{proof}

\subsection{Auxiliary Results for Proving Lemma \ref{alphainferencelem1}}

In this section we include two results which are helpful to the proof of Lemma \ref{alphainferencelem1} in \S \ref{proofalphainferencelem1}. These two results are analogies of Theorem \ref{noregularization} and Theorem \ref{inferencemainthm} which we have proven before. The main difference is we replace $\cL(\cdot)$ with $\cL^{(i)}(\cdot)$. As a result, the following two results can be viewed as the leave-one-out version of Theorem \ref{noregularization} and Theorem \ref{inferencemainthm}. To be more specific, we define $\tb_M^{(i)}$ as
\begin{align*}
    \tb_M^{(i)} &= \argmin_{\tb\in \Theta}\cL^{(i)}(\tb)
\end{align*}
and let $\ob^{(i)}$ be the solution of the following equations
\begin{align*}
\left\{ \begin{array}{ll}
&\mathcal{P}\nabla\cL^{(i)}(\tb^*)+\mathcal{P}\nabla^2\cL^{(i)}(\tb^*)\left(\ob^{(i)}-\tb^*\right) = \boldsymbol{0};  \\
&\mathcal{P}\ob^{(i)} = \ob^{(i)}.
\end{array}
\right.
\end{align*}

\begin{lemma}\label{auxlemma1}
Suppose $np > c_p\log n$ for some $c_p>0$ and $d+1< n, (d+1)\log n\lesssim np$. We consider $L \leq  c_4\cdot n^{c_5}$ for any absolute constants $c_4,c_5>0$. Then for every $i\in [n]$ and $\tb_M^{(i)} = (\wh\ba_M^{(i)\top},\wh\bb_M^{(i)\top})^\top$, with probability at least $1-O(n^{-6})$ we have
\begin{align*}
\Vert \wh\ba_M^{(i)}-\ba^*\Vert_\infty\lesssim \kappa_1^2\sqrt{\frac{(d+1)\log n}{npL}}, \left\Vert \wh\bb_M^{(i)}-\bb^*\right\Vert_2\lesssim\kappa_1\sqrt{\frac{\log n}{pL}}\text{ and }\left\Vert \tb_M^{(i)}-\tb_M\right\Vert_2\lesssim \kappa_1\sqrt{\frac{(d+1)\log n}{npL}}.
\end{align*}

\end{lemma}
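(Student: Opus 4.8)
The plan is to exploit the fact that the leave-one-out loss $\cL^{(i)}$ has exactly the same additive pairwise structure as $\cL$: it is a sum of convex edge terms, the only difference being that every comparison involving item $i$ is replaced by its population (``expected'') counterpart weighted by $p$. Consequently every building block used to prove Theorem \ref{noregularization} transfers to $\cL^{(i)}$ with an essentially identical argument. In particular, at $\tb^*$ the population terms in $\nabla\cL^{(i)}(\tb^*)$ cancel, since $e^{\tx_j^\top\tb^*}/(e^{\tx_j^\top\tb^*}+e^{\tx_i^\top\tb^*})=\phi(\tx_j^\top\tb^*-\tx_i^\top\tb^*)$; hence $\nabla\cL^{(i)}(\tb^*)$ is just the sum of the mean-zero edge contributions over edges not touching $i$, and the matrix Bernstein bound of Lemma \ref{gradient} applies verbatim. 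Likewise $\mathbb{E}[\nabla^2\cL^{(i)}(\tb^*)\mid\mathcal{G}]$ stays comparable to $p\boldsymbol{\Sigma}$, so the eigenvalue estimates of Lemmas \ref{eigenlem}, \ref{ub} and \ref{lb} hold for $\cL^{(i)}$ as well.

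Given these analogs, I would first establish the two statistical rates by repeating the regularized-MLE analysis of \S\ref{app:gradient}--\S\ref{leaveoneout} with $\cL^{(i)}_\lambda$ in place of $\cL_\lambda$. This requires constructing an auxiliary leave-one-out sequence relative to $\cL^{(i)}$, i.e. effectively a leave-two-out sequence of the original problem, in order to obtain the $\ell_\infty$ control on the intercepts. I would then apply the interior-point / local-strong-convexity argument of \S\ref{regularizedtoriginal} to pass from the regularized estimator to the unregularized $\tb_M^{(i)}$, yielding $\Vert\wh\ba_M^{(i)}-\ba^*\Vert_\infty\lesssim\kappa_1^2\sqrt{(d+1)\log n/npL}$ and $\Vert\wh\bb_M^{(i)}-\bb^*\Vert_2\lesssim\kappa_1\sqrt{\log n/pL}$.

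For the third bound I would compare the two first-order optimality conditions $\mathcal{P}\nabla\cL(\tb_M)=\boldsymbol{0}$ and $\mathcal{P}\nabla\cL^{(i)}(\tb_M^{(i)})=\boldsymbol{0}$. Writing
\begin{align*}
\mathcal{P}\big[\nabla\cL(\tb_M)-\nabla\cL(\tb_M^{(i)})\big] = \mathcal{P}\big[\nabla\cL^{(i)}(\tb_M^{(i)})-\nabla\cL(\tb_M^{(i)})\big],
\end{align*}
the left-hand side equals $\mathcal{P}\big(\int_0^1\nabla^2\cL(\cdot)\,dt\big)\mathcal{P}(\tb_M-\tb_M^{(i)})$ by the fundamental theorem of calculus, and since $\tb_M,\tb_M^{(i)}\in\Theta$ the restricted strong convexity of Lemma \ref{lb} gives $\Vert\tb_M-\tb_M^{(i)}\Vert_2\lesssim\frac{\kappa_1}{pn}\Vert\nabla\cL(\tb_M^{(i)})-\nabla\cL^{(i)}(\tb_M^{(i)})\Vert_2$. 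The difference $\nabla\cL-\nabla\cL^{(i)}$ only involves the edges touching item $i$, and crucially $\tb_M^{(i)}$ is independent of the randomness on those edges (both the graph indicators $\mathbf{1}((i,j)\in\mathcal{E})$ and the outcomes $y_{i,j}^{(l)}$). Conditioning on this randomness, the estimate is identical to the bound on $\boldsymbol{u}^m+\boldsymbol{v}^m$ derived in \S\ref{prooflem10}, which gives $\Vert\nabla\cL(\tb_M^{(i)})-\nabla\cL^{(i)}(\tb_M^{(i)})\Vert_2\lesssim\sqrt{pn(d+1)\log n/L}$; substituting yields the claimed $\Vert\tb_M-\tb_M^{(i)}\Vert_2\lesssim\kappa_1\sqrt{(d+1)\log n/npL}$.

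The main obstacle is this last step: making the independence argument rigorous requires treating $\tb_M^{(i)}$ as a fixed vector when conditioning on the edges at item $i$, while simultaneously invoking the just-proved rate bounds for $\tb_M^{(i)}$ (which depend on the other randomness) to control the quantities $\max_j|\xi_j|$ entering the Bernstein step for $\boldsymbol{u}^m$. This is exactly the coupling that Lemma \ref{induction3} already resolves for the iterates, so rather than redo the calculation I would reuse that computation, substituting $\tb_M^{(i)}$ for $\tb^{t,(m)}$ and appealing to the $\ell_\infty$ and $\ell_2$ bounds from the first part of this proof to certify the required smallness of $\max_j|\xi_j|$.
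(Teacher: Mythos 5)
Your proposal is correct and matches the paper's intended argument: the paper omits the proof of this lemma and directs the reader to mimic \S\ref{regularizedtoriginal}, \S\ref{prooflem5} and Lemma \ref{induction3}, which is precisely what you do for the two statistical rates via a regularized leave-one-out (effectively leave-two-out) analysis followed by the interior-point argument. For the third bound you substitute a one-shot comparison of the two first-order optimality conditions plus restricted strong convexity (Lemma \ref{lb}) for the iterate-contraction route, but this rests on exactly the same $\boldsymbol{u}^m+\boldsymbol{v}^m$ gradient-difference estimate and the same independence of $\tb_M^{(i)}$ from the item-$i$ edges as the proof of Lemma \ref{induction3}, so it is the same argument in a marginally cleaner form.
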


\begin{lemma}\label{auxlemma2}
Under the assumptions of Theorem \ref{auxlemma1}, for every $i\in [n]$, with probability at least $1-O(n^{-6})$ we have
\begin{align*}
    \left\Vert \tb_M^{(i)}-\ob^{(i)}\right\Vert_2\lesssim\kappa_1^4\frac{(d+1)^{0.5}\log n}{\sqrt{n}pL}.
\end{align*}
\end{lemma}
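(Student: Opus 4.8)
The plan is to prove Lemma~\ref{auxlemma2} by transcribing the argument for its full-data counterpart Theorem~\ref{inferencemainthm} (\S\ref{proveinferencemainthm}), with the loss $\cL$ replaced everywhere by the leave-one-out loss $\cL^{(i)}$. Since $\tb_M^{(i)}$ minimizes $\cL^{(i)}$ over $\Theta$, I would start from the constrained stationarity condition $\mathcal{P}\nabla\cL^{(i)}(\tb_M^{(i)}) = \boldsymbol{0}$, Taylor-expand the gradient about $\tb^*$, and introduce the remainder
\[
	\boldsymbol{R}^{(i)} = \left\{\int_0^1\nabla^2\cL^{(i)}\big(\tb^*+t(\tb_M^{(i)}-\tb^*)\big)\,dt\right\}(\tb_M^{(i)}-\tb^*) - \nabla^2\cL^{(i)}(\tb^*)(\tb_M^{(i)}-\tb^*).
\]
Subtracting the defining relation $\mathcal{P}\nabla\cL^{(i)}(\tb^*)+\mathcal{P}\nabla^2\cL^{(i)}(\tb^*)(\ob^{(i)}-\tb^*) = \boldsymbol{0}$ of $\ob^{(i)}$ then gives $\mathcal{P}\nabla^2\cL^{(i)}(\tb^*)(\ob^{(i)}-\tb_M^{(i)}) = \mathcal{P}\boldsymbol{R}^{(i)}$; since $\ob^{(i)},\tb_M^{(i)}\in\Theta$, restricted strong convexity yields
\[
	\big\Vert\ob^{(i)}-\tb_M^{(i)}\big\Vert_2 \le \frac{\Vert\boldsymbol{R}^{(i)}\Vert_2}{\lambda_{\text{min},\perp}\big(\nabla^2\cL^{(i)}(\tb^*)\big)}.
\]
It thus remains to lower bound the denominator and upper bound $\Vert\boldsymbol{R}^{(i)}\Vert_2$, exactly as in the full-data proof.

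The one genuinely new ingredient is a leave-one-out analogue of Lemma~\ref{eigenlem}. Writing $\LG^{(i)} = \sum_{(j,k)\in\mathcal{E},\,j>k,\,j\neq i,\,k\neq i}(\tx_j-\tx_k)(\tx_j-\tx_k)^\top + p\sum_{j\neq i}(\tx_j-\tx_i)(\tx_j-\tx_i)^\top$ for the positive semidefinite matrix underlying $\nabla^2\cL^{(i)}$, the key observation is that $\mathbb{E}[\LG^{(i)}] = p\boldsymbol{\Sigma}$ — identical to the mean of $\LG$ — because the deterministic $p$-weighted term restores precisely the expectation of the edges incident to $i$ that were dropped. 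Applying the matrix Chernoff inequality to the random portion (the edges avoiding $i$) as in the proof of Lemma~\ref{eigenlem}, and noting the deterministic portion is positive semidefinite and only strengthens the lower bound, I would obtain $\tfrac12 c_2 pn \le \lambda_{\text{min},\perp}(\LG^{(i)}) \le \Vert\LG^{(i)}\Vert \le 2c_1 pn$ with probability $1-O(n^{-11})$. The leave-one-out version of Lemma~\ref{lb} then follows verbatim: each factor $\phi'(\tx_j^\top\tb^*-\tx_k^\top\tb^*)\ge \tfrac1{4\kappa_1}$, so $\nabla^2\cL^{(i)}(\tb^*)\succeq \tfrac1{4\kappa_1}\LG^{(i)}$ on $\Theta$ and hence $\lambda_{\text{min},\perp}(\nabla^2\cL^{(i)}(\tb^*))\gtrsim pn/\kappa_1$.

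For the remainder I would reuse the Lipschitz estimate for $\phi'$ from \S\ref{proveinferencemainthm} together with $\Vert\LG^{(i)}\Vert\lesssim np$, obtaining $\Vert\nabla^2\cL^{(i)}(\tb^*+t(\tb_M^{(i)}-\tb^*))-\nabla^2\cL^{(i)}(\tb^*)\Vert\lesssim np\,\Vert\tb_M^{(i)}-\tb^*\Vert_c$ uniformly in $t$, and therefore $\Vert\boldsymbol{R}^{(i)}\Vert_2\lesssim np\,\Vert\tb_M^{(i)}-\tb^*\Vert_c\,\Vert\tb_M^{(i)}-\tb^*\Vert_2$. Lemma~\ref{auxlemma1} supplies $\Vert\tb_M^{(i)}-\tb^*\Vert_c = \Vert\wh\ba_M^{(i)}-\ba^*\Vert_\infty + \sqrt{c_3(d+1)/n}\,\Vert\wh\bb_M^{(i)}-\bb^*\Vert_2 \lesssim \kappa_1^2\sqrt{(d+1)\log n/(npL)}$, while the triangle inequality with the bound $\Vert\tb_M^{(i)}-\tb_M\Vert_2\lesssim\kappa_1\sqrt{(d+1)\log n/(npL)}$ of Lemma~\ref{auxlemma1} and the $\ell_2$ bound $\Vert\tb_M-\tb^*\Vert_2\lesssim\kappa_1\sqrt{\log n/(pL)}$ from the proof of Theorem~\ref{noregularization} (cf.\ \eqref{inductionA}) gives $\Vert\tb_M^{(i)}-\tb^*\Vert_2\lesssim\kappa_1\sqrt{\log n/(pL)}$. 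Combining the three estimates,
\[
	\big\Vert\ob^{(i)}-\tb_M^{(i)}\big\Vert_2 \lesssim \kappa_1\cdot\kappa_1^2\sqrt{\frac{(d+1)\log n}{npL}}\cdot\kappa_1\sqrt{\frac{\log n}{pL}} = \kappa_1^4\frac{(d+1)^{0.5}\log n}{\sqrt n\,pL},
\]
which is the claimed rate, the $O(n^{-6})$ failure probability being inherited from Lemma~\ref{auxlemma1} and the concentration events above.

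I expect the only real (and still modest) obstacle to be the leave-one-out eigenvalue lemma of the second paragraph — in particular verifying the identity $\mathbb{E}[\LG^{(i)}] = p\boldsymbol{\Sigma}$ and that the matrix Chernoff bound continues to control the random part when restricted to $\Theta$. Everything downstream is an unmodified copy of the full-data argument in \S\ref{proveinferencemainthm}, so once this spectral control is in place the estimate follows by the same bookkeeping.
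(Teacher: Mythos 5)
Your proposal is correct and is precisely the route the paper intends: the paper's own ``proof'' of Lemma~\ref{auxlemma2} is just the remark that it follows by mimicking \S\ref{proveinferencemainthm}, and your transcription --- stationarity of $\tb_M^{(i)}$, the remainder $\boldsymbol{R}^{(i)}$, restricted strong convexity, and the bookkeeping $\kappa_1\cdot\kappa_1^2\sqrt{(d+1)\log n/(npL)}\cdot\kappa_1\sqrt{\log n/(pL)}$ --- reproduces the stated rate, with the leave-one-out spectral bound being exactly the one detail the paper glosses over. One small repair there: applying matrix Chernoff to the random edges alone controls $\lambda_{\text{min},\perp}$ only relative to $p\,\lambda_{\text{min},\perp}\bigl(\sum_{j>k,\,j,k\neq i}(\tx_j-\tx_k)(\tx_j-\tx_k)^\top\bigr)$, which Assumption~\ref{ass1} does not directly bound below (the dropped edges at $i$ contribute order $n$ to $\boldsymbol{\Sigma}$), so instead include the deterministic rank-one terms $p(\tx_j-\tx_i)(\tx_j-\tx_i)^\top$ --- each with operator norm at most $6p\le 6$ --- as degenerate independent summands in the Chernoff bound, so that $\mu_{\min}=p\,\lambda_{\text{min},\perp}(\boldsymbol{\Sigma})\ge c_2pn$ and the conclusion of Lemma~\ref{eigenlem} carries over verbatim.
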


The proof of Lemma \ref{auxlemma1} and Lemma \ref{auxlemma2} are almost the same as the previous results so we omit the proof details here. One can show Lemma \ref{auxlemma1} by mimicing the proof in \S \ref{regularizedtoriginal}, \S \ref{prooflem5} and the results in Lemma \ref{induction3}. In addition, Lemma \ref{auxlemma2} can be proved by mimicing the proof in \S \ref{proveinferencemainthm}.

\subsection{Proof of Lemma \ref{alphainferencelem1}} \label{proofalphainferencelem1}
\begin{proof}
Since $\oa_i$ can be expressed as
\begin{align*}
    \oa_i = \alpha_i^*-\frac{\left(\nabla\cL(\tb^*)\right)_i+\sum\limits_{j\neq i}\left(\ob_j-\tb^*_j\right)\left(\nabla^2\cL(\tb^*)\right)_{i,j}}{\left(\nabla^2\cL(\tb^*)\right)_{i,i}},
\end{align*}
we have
\begin{align}
    \oa_i'-\oa_i = \frac{\sum\limits_{j\neq i}\left(\tb_{M,j}-\ob_j\right)\left(\nabla^2\cL(\tb^*)\right)_{i,j}}{\left(\nabla^2\cL(\tb^*)\right)_{i,i}}.\label{alphaprimealpha}
\end{align}
We decompose Eq.~\eqref{alphaprimealpha} as
\begin{align*}
    \oa_i'-\oa_i &= \underbrace{\frac{\sum\limits_{j\neq i}\left(\tb_{M,j}-\tb_{M,j}^{(i)}\right)\left(\nabla^2\cL(\tb^*)\right)_{i,j}}{\left(\nabla^2\cL(\tb^*)\right)_{i,i}}}_{A_1}+\underbrace{\frac{\sum\limits_{j\neq i}\left(\tb_{M,j}^{(i)}-\ob_j^{(i)}\right)\left(\nabla^2\cL(\tb^*)\right)_{i,j}}{\left(\nabla^2\cL(\tb^*)\right)_{i,i}} }_{A_2} \\
    &+\underbrace{\frac{\sum\limits_{j\neq i}\left(\ob_j^{(i)}-\ob_j\right)\left(\nabla^2\cL(\tb^*)\right)_{i,j}}{\left(\nabla^2\cL(\tb^*)\right)_{i,i}}}_{A_3}.
\end{align*}

Next, we bound $A_1$-$A_3$ one by one. Before proceeding, the denominator $\left(\nabla^2\cL(\tb^*)\right)_{i,i}$ can be bounded as 
\begin{align*}
    \left(\nabla^2\cL(\tb^*)\right)_{i,i} &= \sum_{j\neq i}\phi'(\tx_i^\top\tb^*-\tx_j^\top\tb^*)\textbf{1}((i,j)\in\mathcal{E}) \\
    &\gtrsim \frac{1}{\kappa_1}\sum_{j\neq i}\textbf{1}((i,j)\in\mathcal{E})\gtrsim \frac{np}{\kappa_1}
\end{align*}
with probability at least $1-O(n^{-11})$. 

For $A_1$, by Lemma \ref{auxlemma1} we have
\begin{align*}
    \left\Vert \tb_{M}-\tb_{M}^{(i)}\right\Vert_2 \lesssim \kappa_1\sqrt{\frac{(d+1)\log n}{npL}}.
\end{align*}
So the numerator of $A_1$ can be bounded as
\begin{align*}
    \left|\sum\limits_{j\neq i}\left(\tb_{M,j}-\tb_{M,j}^{(i)}\right)\left(\nabla^2\cL(\tb^*)\right)_{i,j}\right|\leq & \left\Vert\tb_{M}-\tb_{M}^{(i)}\right\Vert_2 \sqrt{\sum_{j\neq i}\left(\nabla^2\cL(\tb^*)\right)_{i,j}^2} \\
    \lesssim & \kappa_1\sqrt{\frac{(d+1)\log n}{npL}}\sqrt{(d+1)np}\lesssim\kappa_1 (d+1)\sqrt{\frac{\log n}{L}}
\end{align*}
with probability at least $1-O(n^{-10})$. As a result, $A_1$ can be bounded as
\begin{align}
    \left|A_1\right|\lesssim \frac{\kappa_1 (d+1)\sqrt{\frac{\log n}{L}}}{np/\kappa_1}\leq \kappa_1^2\frac{d+1}{np}\sqrt{\frac{\log n}{L}}\label{A1}
\end{align}
with probability at least $1-O(n^{-10})$.

When it comes to $A_2$, by definition we know that  $\tb_{M,j}^{(i)}-\ob_j^{(i)}$ is independent with $\left(\nabla^2\cL(\tb^*)\right)_{i,j}$ for all $j\in [n+d]$. As a result, by Bernstein's inequality and Lemma \ref{auxlemma2} we know that
\begin{align}
    &\left|\sum\limits_{j\in [n],j\neq i}\left(\tb_{M,j}^{(i)}-\ob_j^{(i)}\right)\left(\nabla^2\cL(\tb^*)\right)_{i,j} - \mathbb{E}\left[\sum\limits_{j\in [n],j\neq i}\left(\tb_{M,j}^{(i)}-\ob_j^{(i)}\right)\left(\nabla^2\cL(\tb^*)\right)_{i,j}\bigg| \tb^{(i)}_M,\ob^{(i)} \right]\right| \\
    \lesssim & \sqrt{(\log n) \sum\limits_{j\in [n],j\neq i}\left(\tb_{M,j}^{(i)}-\ob_j^{(i)}\right)^2\bE\left(\nabla^2\cL(\tb^*)\right)_{i,j}^2}+ (\log n)\left\Vert\wh\ba_M^{(i)}-\oba^{(i)}\right\Vert_\infty \\
    \lesssim & \sqrt{p\log n}\left\Vert\wh\ba_M^{(i)}-\oba^{(i)}\right\Vert_2+ (\log n)\left\Vert\wh\ba_M^{(i)}-\oba^{(i)}\right\Vert_\infty \leq \sqrt{p\log n}\left\Vert\tb_M^{(i)}-\ob^{(i)}\right\Vert_2+ (\log n)\left\Vert\wh\ba_M^{(i)}-\oba^{(i)}\right\Vert_\infty\\
    \lesssim & \kappa_1^4\frac{(d+1)^{0.5}(\log n)^{1.5}}{\sqrt{np}L}+ (\log n)\left\Vert\wh\ba_M^{(i)}-\oba^{(i)}\right\Vert_\infty \label{A21}
\end{align}
with probability at least $1-O(n^{-6})$. On the other hand, by Lemma \ref{auxlemma2} we have
\begin{align}
    &\left|\mathbb{E}\left[\sum\limits_{j\in [n],j\neq i}\left(\tb_{M,j}^{(i)}-\ob_j^{(i)}\right)\left(\nabla^2\cL(\tb^*)\right)_{i,j}\bigg| \tb^{(i)}_M,\ob^{(i)} \right]\right| \\
    \leq& \sqrt{\sum\limits_{j\in [n],j\neq i}\left(\tb_{M,j}^{(i)}-\ob_j^{(i)}\right)^2}\sqrt{\sum\limits_{j\in [n],j\neq i}\left(\mathbb{E}\left(\nabla^2\cL(\tb^*)\right)_{i,j}\right)^2}  \\
    \lesssim& p\sqrt{n}\left\Vert\tb_M^{(i)}-\ob^{(i)}\right\Vert_2\lesssim \kappa_1^4\frac{(d+1)^{0.5}\log n}{L}\label{A22}
\end{align}
and
\begin{align}
    \left|\sum\limits_{k>i}\left(\tb_{M,k}^{(i)}-\ob_k^{(i)}\right)\left(\nabla^2\cL(\tb^*)\right)_{i,k}\right|\leq \left\Vert\tb_M^{(i)}-\ob^{(i)}\right\Vert_2 \sqrt{\sum_{k>i}\left(\nabla^2\cL(\tb^*)\right)_{i,k}^2}\lesssim \kappa_1^4\frac{(d+1)\log n}{L}\label{A23}
\end{align}
with probability exceeding $1-O(n^{-6})$. Combine Eq.~\eqref{A21}, Eq.~\eqref{A22} and Eq.~\eqref{A23} we finally bound the numerator of $A_2$ as
\begin{align*}
    \left|\sum\limits_{j\neq i}\left(\tb_{M,j}^{(i)}-\ob_j^{(i)}\right)\left(\nabla^2\cL(\tb^*)\right)_{i,j}\right|\lesssim \kappa_1^4 \frac{(d+1)\log n}{L}+(\log n)\left\Vert\wh\ba_M^{(i)}-\oba^{(i)}\right\Vert_\infty
\end{align*}
 with probability at least $1-O(n^{-6})$. As a result, we have
 \begin{align}
     |A_2|\lesssim \kappa_1^5 \frac{(d+1)\log n}{npL}+\kappa_1\frac{\log n}{np}\left\Vert\wh\ba_M^{(i)}-\oba^{(i)}\right\Vert_\infty \label{A2tmp}
 \end{align}
 with probability at least $1-O(n^{-6})$.
 
We finally consider bounding $A_3$. By definition, we know that
\begin{align*}
\left\{ \begin{array}{ll}
&\mathcal{P}\nabla\cL(\tb^*)+\mathcal{P}\nabla^2\cL(\tb^*)\left(\ob-\tb^*\right) = \boldsymbol{0};  \\
&\mathcal{P}\nabla\cL^{(i)}(\tb^*)+\mathcal{P}\nabla^2\cL^{(i)}(\tb^*)\left(\ob^{(i)}-\tb^*\right) = \boldsymbol{0}.
\end{array}
\right.
\end{align*}
Combine the two equations we get
\begin{align}
    \underbrace{\mathcal{P}\left(\nabla\cL(\tb^*)-\nabla\cL^{(i)}(\tb^*)\right)}_{\boldsymbol{w}_1}+\underbrace{\mathcal{P}\left(\nabla^2\cL(\tb^*) - \nabla^2\cL^{(i)}(\tb^*)\right)\left(\ob-\tb^*\right)}_{\boldsymbol{w}_2}+\mathcal{P}\nabla^2\cL^{(i)}(\tb^*)\left(\ob-\ob^{(i)}\right) = \boldsymbol{0}.\label{obobidiff}
\end{align}

For $\boldsymbol{w}_1$, it is easy to see $\Vert\mathcal{P}(\nabla\cL(\tb^*)-\nabla\cL^{(i)}(\tb^*))\Vert_2\leq \Vert\nabla\cL(\tb^*)-\nabla\cL^{(i)}(\tb^*)\Vert_2 $. By definition we have
\begin{align*}
	 \nabla \cL(\tb^*) - \nabla \cL^{(i)}(\tb^*) &= \sum_{j\neq i}(-y_{j,i}+\phi(\tx_i^\top\tb^*-\tx_j^\top\tb^*))\textbf{1}((i,j)\in\mathcal{E})(\tx_i-\tx_j).
\end{align*}
For  $(\nabla \cL(\tb^*) - \nabla \cL^{(i)}(\tb^*))_i$, by Hoeffding inequality, we have
\begin{align*}
	&\left| (\nabla \cL(\tb^*) - \nabla \cL^{(i)}(\tb^*))_i \right| = \frac{1}{L}\left|\sum_{(i,j)\in \mathcal{E}}\sum_{l=1}^L(-y_{j,i}^{(l)}+\phi(\tx_i^\top\tb^*-\tx_j^\top\tb^*))\textbf{1}((i,j)\in\mathcal{E})\right| \\
	\lesssim& \frac{1}{L}\sqrt{L\sum_{j\neq i}\textbf{1}((i,j)\in\mathcal{E})\log n}
\end{align*} 
conditioned on $\mathcal{G}$ with probability at least $1-O(n^{-10})$. On the other hand, since $\sum_{j\neq i, j\in [n]}\textbf{1}((i,j)\in\mathcal{E})\lesssim np$ with probability at least $1-O(n^{-10})$, we have 
\begin{align*}
	\left| (\nabla \cL(\tb^*) - \nabla \cL^{(i)}(\tb^*))_i \right| 
	\lesssim \sqrt{\frac{np\log n}{L}}
\end{align*} 
with probability at least $1-O(n^{-10})$. Furthermore, by Lemma \ref{Lilemma} we have
\begin{align*}
	\sum_{j\neq i, j\in [n]}\left(\nabla \cL(\tb^*) - \nabla \cL^{(i)}(\tb^*)\right)_j^2  &= \sum_{j\neq i, j\in [n]}\left((-y_{j,i}+\phi(\tx_i^\top\tb^*-\tx_j^\top\tb^*))\textbf{1}((i,j)\in\mathcal{E})\right)^2 \\
	&\lesssim \frac{\log n}{L} \sum_{j\neq i}\textbf{1}((i,j)\in\mathcal{E})\lesssim \frac{np\log n}{L}
\end{align*}
with probability at least $1-O(n^{-10})$. For $( \nabla \cL(\tb^*) - \nabla \cL^{(i)}(\tb^*))_{n+1:n+d}$, we have
\begin{align*}
	\left\Vert\left( \nabla \cL(\tb^*) - \nabla \cL^{(i)}(\tb^*)\right)_{n+1:n+d}\right\Vert_2 &= \left\Vert\sum_{j\neq i}(-y_{j,i}+\phi(\tx_i^\top\tb^*-\tx_j^\top\tb^*))\textbf{1}((i,j)\in\mathcal{E})(\bx_i-\bx_j)\right\Vert_2  \\
	&\leq  \sum_{j\neq i}\textbf{1}((i,j)\in\mathcal{E})\left|-y_{j,i}+\phi(\tx_i^\top\tb^*-\tx_j^\top\tb^*)\right|\left\Vert\bx_i-\bx_j\right\Vert_2 \\
	&\lesssim np\sqrt{\frac{\log n}{L}}\sqrt{\frac{d+1}{n}}\leq \sqrt{\frac{(d+1)np\log n}{L}}
\end{align*}
with probability at least $1-O(n^{-10})$. Therefore, for $\boldsymbol{w}_1$ we have
\begin{align}
	\left\Vert \boldsymbol{w}_1\right\Vert_2^2\leq \left\Vert \nabla \cL(\tb^*) - \nabla \cL^{(i)}(\tb^*)\right\Vert_2^2\leq&  \left(\nabla \cL(\tb^*) - \nabla \cL^{(i)}(\tb^*)\right)_i ^2  +\sum_{j\neq i, j\in [n]}\left(\nabla \cL(\tb^*) - \nabla \cL^{(i)}(\tb^*)\right)_j^2\\
	&+	\left\Vert\left( \nabla \cL(\tb^*) - \nabla \cL^{(i)}(\tb^*)\right)_{n+1:n+d}\right\Vert_2^2\lesssim \frac{(d+1)np\log n}{L}\label{w1}
\end{align}
with probability exceeding $1-O(n^{-10})$.

For $\boldsymbol{w}_2$, since $\nabla^2 \cL(\tb^*) - \nabla^2 \cL^{(i)}(\tb^*) = \sum_{j\neq i}(\mathbf{1}((i,j)\in \mathcal{E})-p)\phi'(\tx_i^\top\tb^*-\tx_j^\top\tb^*)(\tx_i-\tx_j)(\tx_i-\tx_j)^\top$, it holds that
\begin{align*}
    &\left(\nabla^2\cL(\tb^*) - \nabla^2\cL^{(i)}(\tb^*)\right)\left(\ob-\tb^*\right) \\
    =& \sum_{j\neq i}(\mathbf{1}((i,j)\in \mathcal{E})-p)\phi'(\tx_i^\top\tb^*-\tx_j^\top\tb^*)\left((\tx_i-\tx_j)^\top\left(\ob-\tb^*\right)  \right)(\tx_i-\tx_j).
\end{align*}
As a result, it follows that
\begin{align*}
    &\left\Vert\mathcal{P}\left(\nabla^2\cL(\tb^*) - \nabla^2\cL^{(i)}(\tb^*)\right)\left(\ob-\tb^*\right)\right\Vert_2 \leq\left\Vert \left(\nabla^2\cL(\tb^*) - \nabla^2\cL^{(i)}(\tb^*)\right)\left(\ob-\tb^*\right)\right\Vert_2 \\
    \leq &\sum_{j\neq i}\left\Vert (p-\mathbf{1}((i,j)\in \mathcal{E}))\phi'(\tx_i^\top\tb^*-\tx_j^\top\tb^*)\left((\tx_i-\tx_j)^\top\left(\ob-\tb^*\right)  \right)(\tx_i-\tx_j)\right\Vert_2 \\
    \leq & \sum_{j\neq i} \left|p-\mathbf{1}((i,j)\in \mathcal{E})\right|\left|(\tx_i-\tx_j)^\top\left(\ob-\tb^*\right)  \right|\left\Vert\tx_i-\tx_j\right\Vert_2 \\
    \lesssim & \sum_{j\neq i} \left|p-\mathbf{1}((i,j)\in \mathcal{E})\right|\left\Vert\ob-\tb^*\right\Vert_c \lesssim np \left\Vert\ob-\tb^*\right\Vert_c
\end{align*}
with probability exceeding $1-O(n^{-10})$. On the other hand, we obtain
\begin{align*}
     \left\Vert\tb^*-\ob\right\Vert_c&\leq \left\Vert\tb^*-\tb_M\right\Vert_c+ \left\Vert\tb_M-\ob\right\Vert_c \\
    &\lesssim \kappa_1^2\sqrt{\frac{(d+1)\log n}{npL}} + \left\Vert\wh\ba_M-\oba\right\Vert_\infty +\sqrt{\frac{d+1}{n}}\kappa_1^4\frac{(d+1)^{0.5}\log n}{\sqrt{n}pL} \\
    &\lesssim \kappa_1^2\sqrt{\frac{(d+1)\log n}{npL}} + \left\Vert\wh\ba_M-\oba\right\Vert_\infty
\end{align*}
with probability at least $1-O(n^{-6})$. That is to say, we have 
\begin{align}
    \left\Vert \boldsymbol{w}_2\right\Vert_2\lesssim \kappa_1^2\sqrt{\frac{(d+1)np\log n}{L}}+np\left\Vert\wh\ba_M-\oba\right\Vert_\infty\label{w2}
\end{align}
with probability at least $1-O(n^{-6})$. 

Combine Eq.~\eqref{w1} and Eq.~\eqref{w2} with Eq.~\eqref{obobidiff}, we know that
\begin{align*}
    \left\Vert \mathcal{P}\nabla^2\cL^{(i)}(\tb^*)\left(\ob-\ob^{(i)}\right)\right\Vert_2 = \left\Vert \boldsymbol{w}_1+\boldsymbol{w}_2\right\Vert_2\leq \left\Vert \boldsymbol{w}_1\right\Vert_2+ \left\Vert \boldsymbol{w}_2\right\Vert_2\lesssim\kappa_1^2\sqrt{\frac{(d+1)np\log n}{L}}+np\left\Vert\wh\ba_M-\oba\right\Vert_\infty
\end{align*}
with probability at least $1-O(n^{-6})$. Since $\ob-\ob^{(i)}\in \Theta$, we have
\begin{align*}
    &\lambda_{\text{min}, \perp}\left(\nabla^2\cL^{(i)}(\tb^*)\right)\left\Vert \ob-\ob^{(i)}\right\Vert_2^2\leq \left( \ob-\ob^{(i)}\right)^\top \nabla^2\cL^{(i)}(\tb^*)\left( \ob-\ob^{(i)}\right) \\
    = & \left( \ob-\ob^{(i)}\right)^\top \mathcal{P}\nabla^2\cL^{(i)}(\tb^*)\left( \ob-\ob^{(i)}\right) \leq  \left\Vert \ob-\ob^{(i)}\right\Vert_2 \left\Vert \mathcal{P}\nabla^2\cL^{(i)}(\tb^*)\left(\ob-\ob^{(i)}\right)\right\Vert_2.
\end{align*}
As a result, we know that
\begin{align}
    \left\Vert \ob-\ob^{(i)}\right\Vert_2\leq\frac{\left\Vert \mathcal{P}\nabla^2\cL^{(i)}(\tb^*)\left(\ob-\ob^{(i)}\right)\right\Vert_2}{\lambda_{\text{min}, \perp}\left(\nabla^2\cL^{(i)}(\tb^*)\right)}\lesssim\kappa_1^3\sqrt{\frac{(d+1)\log n}{npL}} +\kappa_1 \left\Vert\wh\ba_M-\oba\right\Vert_\infty\label{obleaveoneout}
\end{align}
with probability at least $1-O(n^{-6})$. Therefore, for $A_3$ we finally achieve
\begin{align}
    |A_3| &\lesssim \left|\frac{\sum\limits_{j\neq i}\left(\ob^{(i)}_j-\ob_j\right)\left(\nabla^2\cL(\tb^*)\right)_{i,j}}{np/\kappa_1}\right| \leq \frac{\left\Vert\ob^{(i)}-\ob\right\Vert_2 \sqrt{\sum_{j\neq i}\left(\nabla^2\cL(\tb^*)\right)_{i,j}^2}}{np/\kappa_1} \\
    &\lesssim  \kappa_1^4\frac{d+1}{np}\sqrt{\frac{\log n}{L}}+\kappa_1^2\sqrt{\frac{d+1}{np}}\left\Vert\wh\ba_M-\oba\right\Vert_\infty\label{A3}
\end{align}
with probability at least $1-O(n^{-6})$.

And, by Eq.~\eqref{obleaveoneout} we know that
\begin{align*}
    \left\Vert\wh\ba_M^{(i)}-\oba^{(i)}\right\Vert_\infty&\leq \left\Vert\tb_M^{(i)}-\tb_M\right\Vert_2+\left\Vert\wh\ba_M-\oba\right\Vert_\infty+\left\Vert\ob-\ob^{(i)}\right\Vert_2 \\
    &\lesssim \kappa_1^3\sqrt{\frac{(d+1)\log n}{npL}} +\kappa_1 \left\Vert\wh\ba_M-\oba\right\Vert_\infty
\end{align*}
with probability exceeding $1-O(n^{-6})$. Combine with Eq.~\eqref{A2tmp} we have
\begin{align}
    |A_2|\lesssim \kappa_1^5 \frac{(d+1)\log n}{npL} +\kappa_1^4\frac{\log n}{np}\sqrt{\frac{(d+1)\log n}{npL}}+\kappa_1^2\frac{\log n}{np} \left\Vert\wh\ba_M-\oba\right\Vert_\infty\label{A2}
\end{align}
with probability exceeding $1-O(n^{-6})$. Combine Eq.~\eqref{A1}, Eq.~\eqref{A2} and Eq.~\eqref{A3} we know that
\begin{align*}
    |\oa_i'-\oa_i|\lesssim \kappa_1^5 \frac{(d+1)\log n}{npL} +\frac{\kappa_1^4}{np}\sqrt{\frac{(d+1)\log n}{L}}\left(\sqrt{d+1}+\frac{\log n}{\sqrt{np}}\right)+\kappa_1^2\left(\sqrt{\frac{d+1}{np}}+\frac{\log n}{np}\right)\left\Vert\wh\ba_M-\oba\right\Vert_\infty
\end{align*}
with probability at least $1-O(n^{-6})$.

\end{proof}

\subsection{Proof of Lemma \ref{alphainferencelem2}}\label{proof_lemb3}
\begin{proof}
Since $\wh\alpha_{M,i}$ is the minimizer of $\cL\bigg|_{\tb_{M,-i}}(\cdot)$, we know that $\displaystyle\left(\cL\bigg|_{\tb_{M,-i}}\right)'(\wh\alpha_{M,i})=0$. By the mean value theorem we know that
\begin{align*}
    \left(\cL\bigg|_{\tb_{M,-i}}\right)'(\wh\alpha_{M,i}) = \left(\cL\bigg|_{\tb_{M,-i}}\right)'(\alpha^*_{i})+\left(\cL\bigg|_{\tb_{M,-i}}\right)''(b_1)(\wh\alpha_{M,i}-\alpha^*_{i}),
\end{align*}
where $b_1$ is some real number between $\alpha^*_{i}$ and $\wh\alpha_{M,i}$. As a result, we have
\begin{align*}
    \wh\alpha_{M,i} = \alpha^*_{i}-\frac{\left(\cL\bigg|_{\tb_{M,-i}}\right)'(\alpha^*_{i})}{\left(\cL\bigg|_{\tb_{M,-i}}\right)''(b_1)}.
\end{align*}
By the definition Eq.~\eqref{nll} and Eq.~\eqref{Lconditioned}, we have
\begin{align*}
    \left(\cL\bigg|_{\tb_{M,-i}}\right)'(x) &= \sum_{j\neq i,(i,j)\in \mathcal{E}}\left\{-y_{j,i}+\phi(\tx_i^\top\tb_M-\tx_j^\top\tb_M+x-\wh\alpha_{M,i})\right\} \\
    \left(\cL\bigg|_{\tb_{M,-i}}\right)''(x) &= \sum_{j\neq i,(i,j)\in \mathcal{E}}\phi'(\tx_i^\top\tb_M-\tx_j^\top\tb_M+x-\wh\alpha_{M,i}).
\end{align*}
We first estimate the difference $\left(\cL\bigg|_{\tb_{M,-i}}\right)''(b_1)-\left(\nabla^2\cL(\tb^*)\right)_{i,i}$. We have
\begin{align*}
    &\left|\left(\cL\bigg|_{\tb_{M,-i}}\right)''(b_1)-\left(\nabla^2\cL(\tb^*)\right)_{i,i}\right| \\
    =& \left|\sum_{j\neq i,(i,j)\in \mathcal{E}}\phi'(\tx_i^\top\tb_M-\tx_j^\top\tb_M+b_1-\wh\alpha_{M,i})-\phi'(\tx_i^\top\tb^*-\tx_j^\top\tb^*)\right| \\
    \leq &\sum_{j\neq i,(i,j)\in \mathcal{E}}\left|(\tx_i^\top\tb_M-\tx_j^\top\tb_M+b_1-\wh\alpha_{M,i})-(\tx_i^\top\tb^*-\tx_j^\top\tb^*)\right| \\
    \lesssim & \sum_{j\neq i,(i,j)\in \mathcal{E}}\left(\left\Vert \tb_M-\tb^*\right\Vert_c+\left| b_1-\wh\alpha_{M,i} \right|\right)\\
    \lesssim & np\left\Vert \tb_M-\tb^*\right\Vert_c
\end{align*}
with probability at least $1-O(n^{-11})$. On the other hand, we have
\begin{align}
    &\left(\cL\bigg|_{\tb_{M,-i}}\right)'(\tb^*_{i})-\left(\left(\nabla\cL(\tb^*)\right)_i+\sum\limits_{j\neq i}\left(\tb_{M,j}-\tb^*_j\right)\left(\nabla^2\cL(\tb^*)\right)_{i,j}\right) \\
=&\sum_{j\neq i,(i,j)\in \mathcal{E}}\left\{-y_{j,i}+\phi(\tx_i^\top\tb_M-\tx_j^\top\tb_M+\alpha^*_i-\wh\alpha_{M,i})\right\}- \sum_{j\neq i,(i,j)\in \mathcal{E}}\left\{-y_{j,i}+\phi(\tx_i^\top\tb^*-\tx_j^\top\tb^*)\right\} \\
&-\sum\limits_{j\neq i,j\in[n+d]}\left(\tb_{M,j}-\tb^*_j\right)\left(\nabla^2\cL(\tb^*)\right)_{i,j}\\
=& \sum_{j\neq i,(i,j)\in \mathcal{E}}\left\{\phi(\tx_i^\top\tb_M-\tx_j^\top\tb_M+\alpha^*_i-\wh\alpha_{M,i})-\phi(\tx_i^\top\tb^*-\tx_j^\top\tb^*)-\left(\alpha^*_j-\wh\alpha_{M,j}\right)\phi'(\tx_i^\top\tb^*-\tx_j^\top\tb^*)\right\} \\
&-\sum_{k\in[d]}\left(\tb_{M,n+k}-\tb^*_{n+k}\right)\left(\sum_{j\neq i, (i,j)\in\mathcal{E}}\phi'(\tx_i^\top\tb^*-\tx_j^\top\tb^*)\left(\bx_i-\bx_j\right)_k\right) \\
=&\sum_{j\neq i, (i,j)\in \mathcal{E}} r_j,\label{eq13}
\end{align}
where 
\begin{align*}
    r_j =& \phi(\tx_i^\top\tb_M-\tx_j^\top\tb_M+\alpha^*_i-\wh\alpha_{M,i})-\phi(\tx_i^\top\tb^*-\tx_j^\top\tb^*)-\left(\alpha^*_j-\wh\alpha_{M,j}\right)\phi'(\tx_i^\top\tb^*-\tx_j^\top\tb^*) \\
    &- (\bx_i-\bx_j)^\top(\wh\bb_M-\bb^*)\phi'(\tx_i^\top\tb^*-\tx_j^\top\tb^*).
\end{align*}
By Taylor expansion we know that
\begin{align*}
    \phi(\tx_i^\top\tb_M-\tx_j^\top\tb_M+\alpha^*_i-\wh\alpha_{M,i}) =& \phi(\tx_i^\top\tb^*-\tx_j^\top\tb^*) \\
    &+ \phi'(\tx_i^\top\tb^*-\tx_j^\top\tb^*) \left((\bx_i-\bx_j)^\top(\wh\bb_M-\bb^*)+\alpha^*_j-\wh\alpha_{M,j}\right) \\
    &+ \phi''(b_2)\left((\bx_i-\bx_j)^\top(\wh\bb_M-\bb^*)+\alpha^*_j-\wh\alpha_{M,j}\right)^2,
\end{align*}
where $b_2$ is some real number between $\tx_i^\top\tb^*-\tx_j^\top\tb^*$ and $\tx_i^\top\tb_M-\tx_j^\top\tb_M+\alpha^*_i-\wh\alpha_{M,i}$. As a result, we have
\begin{align}
    |r_j|\leq |\phi''(b_2)|\left((\bx_i-\bx_j)^\top(\wh\bb_M-\bb^*)+\alpha^*_j-\wh\alpha_{M,j}\right)^2\lesssim \left\Vert \tb_M-\tb^*\right\Vert_c^2.\label{eq14}
\end{align}
Combine Eq.~\eqref{eq13} and Eq.~\eqref{eq14} we have
\begin{align*}
    \left|\left(\cL\bigg|_{\tb_{M,-i}}\right)'(\tb^*_{i})-\left(\left(\nabla\cL(\tb^*)\right)_i+\sum\limits_{j\neq i}\left(\tb_{M,j}-\tb^*_j\right)\left(\nabla^2\cL(\tb^*)\right)_{i,j}\right)\right|\lesssim np\left\Vert \tb_M-\tb^*\right\Vert_c^2
\end{align*}
with probability exceeding $1-O(n^{-11})$. As a result, we have
\begin{align*}
    \wh\alpha_{M,i}-\oa_i' =& \frac{\left(\nabla\cL(\tb^*)\right)_i+\sum\limits_{j\neq i}\left(\tb_{M,j}-\tb^*_j\right)\left(\nabla^2\cL(\tb^*)\right)_{i,j}}{\left(\nabla^2\cL(\tb^*)\right)_{i,i}}-\frac{\left(\cL\bigg|_{\tb_{M,-i}}\right)'(\alpha^*_{i})}{\left(\cL\bigg|_{\tb_{M,-i}}\right)''(b_1)} \\
     =& \underbrace{\frac{\left(\cL\bigg|_{\tb_{M,-i}}\right)''(b_1)-\left(\nabla^2\cL(\tb^*)\right)_{i,i}}{\left(\cL\bigg|_{\tb_{M,-i}}\right)''(b_1)\cdot \left(\nabla^2\cL(\tb^*)\right)_{i,i}}\left(\left(\nabla\cL(\tb^*)\right)_i+\sum\limits_{j\neq i}\left(\tb_{M,j}-\tb^*_j\right)\left(\nabla^2\cL(\tb^*)\right)_{i,j}\right)}_{A_1} \\
    &+\underbrace{\frac{\left(\cL\bigg|_{\tb_{M,-i}}\right)'(\alpha^*_{i})-\left(\left(\nabla\cL(\tb^*)\right)_i+\sum\limits_{j\neq i}\left(\tb_{M,j}-\tb^*_j\right)\left(\nabla^2\cL(\tb^*)\right)_{i,j}\right)}{\left(\cL\bigg|_{\tb_{M,-i}}\right)''(b_1)}}_{A_2},
\end{align*}
where 
\begin{align*}
    |A_1|&\leq \frac{np\left\Vert \tb_M-\tb^*\right\Vert_c}{\frac{np}{\kappa_1}\left(\frac{np}{\kappa_1}-\left\Vert \tb_M-\tb^*\right\Vert_c\right)}\left|\left(\nabla\cL(\tb^*)\right)_i+\sum\limits_{j\neq i}\left(\tb_{M,j}-\tb^*_j\right)\left(\nabla^2\cL(\tb^*)\right)_{i,j}\right|\\
    &\lesssim\frac{\kappa_1^4}{np}\sqrt{\frac{(d+1)\log n}{npL}}\left|\left(\nabla\cL(\tb^*)\right)_i+\sum\limits_{j\neq i}\left(\tb_{M,j}-\tb^*_j\right)\left(\nabla^2\cL(\tb^*)\right)_{i,j}\right|
\end{align*}
and 
\begin{align*}
    |A_2|\lesssim \frac{1}{\frac{np}{\kappa_1}-\left\Vert \tb_M-\tb^*\right\Vert_c}np\left\Vert \tb_M-\tb^*\right\Vert_c^2\lesssim \kappa_1^5\frac{(d+1)\log n}{npL}
\end{align*}
with probability exceeding $1-O(n^{-6})$. On the other hand, we know that
\begin{align*}
    &\left|\left(\nabla\cL(\tb^*)\right)_i+\sum\limits_{j\neq i}\left(\tb_{M,j}-\tb^*_j\right)\left(\nabla^2\cL(\tb^*)\right)_{i,j}\right| \\
    \leq& \left|\left(\nabla\cL(\tb^*)\right)_i\right|+\left|\sum\limits_{j\neq i}\left(\tb_{M,j}-\tb^*_j\right)\left(\nabla^2\cL(\tb^*)\right)_{i,j}\right| \\
    \leq& \left|\left(\nabla\cL(\tb^*)\right)_i\right|+\Vert\wh\ba_M-\ba^*\Vert_\infty\sum_{j\in [n],j\neq i}\left|\left(\nabla^2\cL(\tb^*)\right)_{i,j}\right|+\Vert\wh\bb_M-\bb^*\Vert_2 \sqrt{\sum_{k>n}\left(\nabla^2\cL(\tb^*)\right)_{i,k}^2}\\
    \lesssim& \sqrt{\frac{np\log n}{L}}+\kappa_1^2\sqrt{\frac{(d+1)\log n}{npL}}np+\kappa_1\sqrt{\frac{\log n}{pL}}\sqrt{dnp^2}\lesssim \kappa_1^2\sqrt{\frac{(d+1)np\log n}{L}}
\end{align*}
with probability at least $1-O(n^{-6})$. To sum up, we have
\begin{align*}
    \left|\alpha_{M,i}-\oa_i'\right|\leq |A_1|+|A_2|&\lesssim \frac{\kappa_1^4}{np}\sqrt{\frac{(d+1)\log n}{npL}}\kappa_1^2\sqrt{\frac{(d+1)np\log n}{L}}+\kappa_1^5\frac{(d+1)\log n}{npL} \\
    &\lesssim\kappa_1^6\frac{(d+1)\log n}{npL}
\end{align*}
with probability at least $1-O(n^{-6})$.
\end{proof}

\section{Details of Real Data Experiments}\label{detail_real_data}
In this section, we provide the details in computing the maximum likelihood estimator $(\wh\ba_M, \wh\bb_M)$. We first generated the variables $\bX$ and comparisons as described in \S \ref{real_data_pokeman}. We standardized each column to make sure they have mean $0$ and standard deviation $1$ and then multiplied by $2/27$ (scale to the order of $\sqrt{d+1/n}$ as mentioned in the main text). In real-world data the numbers of comparisons between each compared pair are not the same, so we denote by $L_{i,j}$ the number of comparisons between pair $(i,j)$. Let $\{ y_{i,j}^{(l)}:(i,j)\in \mathcal{E}, l\in [L_{i,j}]\}$ be all the comparisons we have, then the negative log-likelihood can be written as 
\begin{align*}
		\mathcal{L}(\tb): 
	&=\sum_{(i, j) \in \mathcal{E}, i>j}\sum_{l=1}^{L_{i,j}}\left\{-y_{j, i}^{(l)}\left(\tx_i^\top\tb-\tx_j^\top\tb\right)+\log \left(1+e^{\tx_i^\top\tb-\tx_j^\top\tb}\right)\right\}.
\end{align*}

\vskip 0.2in
\bibliography{sample}

\end{document}